
\documentclass[11pt,a4paper]{article}
\usepackage{amsfonts}
\usepackage[ansinew]{inputenc}
\usepackage{verbatim}
\usepackage{graphicx}
\usepackage{tabularx}
\usepackage{array}
\usepackage{delarray}
\usepackage{dcolumn}
\usepackage{float}
\usepackage{amsmath}
\usepackage{amstext}
\usepackage{amssymb}
\usepackage{hyperref}
\usepackage{url}
\usepackage{xr}
\usepackage{amsthm}

\setcounter{MaxMatrixCols}{10}

\hypersetup{citecolor=true,colorlinks=true,allcolors=blue}
\externaldocument{SBM_20190430_supp}
\RequirePackage[round,authoryear,longnamesfirst]{natbib}
\newtheorem{lem}{Lemma}[section]
\newtheorem{thm}{Theorem}[section]
\newtheorem{cor}{Corollary}[section]

\newtheorem{ass}{Assumption}

\newtheorem*{ex*}{Example}
\newtheoremstyle{named}{}{}{\itshape}{}{\bfseries}{.}{.5em}{#1\thmnote{ #3}}
\theoremstyle{named}

\newtheorem{ex}{Example}[section]

\DeclareMathOperator*{\argmin}{arg\,min}

\textwidth=16.4cm
\oddsidemargin=0cm \evensidemargin=0cm
\topmargin=-20pt
\leftmargin=20pt
\numberwithin{equation}{section}
\baselineskip=100pt
\textheight=22cm

\input{tcilatex}
\begin{document}

\title{Strong Consistency of Spectral Clustering for Stochastic Block Models 
}
\author{Liangjun Su\thanks{
Singapore Management University.\ E-mail~address: ljsu@smu.edu.sg. Su
acknowledges the funding support provided by the Lee Kong Chian Fund for
Excellence.} \and Wuyi Wang\thanks{%
Jinan University. \ E-mail~address: wangwuyi@live.com.} \and Yichong Zhang%
\thanks{
Singapore Management University.\ E-mail~address: yczhang@smu.edu.sg. The
corresponding author.} }
\maketitle

\begin{abstract}
In this paper we prove the strong consistency of several methods based on
the spectral clustering techniques that are widely used to study the
community detection problem in stochastic block models (SBMs). We show that
under some weak conditions on the minimal degree, the number of communities,
and the eigenvalues of the probability block matrix, the K-means algorithm
applied to the eigenvectors of the graph Laplacian associated with its first
few largest eigenvalues can classify all individuals into the true community
uniformly correctly almost surely. Extensions to both regularized spectral
clustering and degree-corrected SBMs are also considered. We illustrate the
performance of different methods on simulated networks. \medskip

\noindent \textbf{Key words and phrases: }Community detection,
degree-corrected stochastic block model, K-means, regularization, strong
consistency.\vspace{2mm}
\end{abstract}

\section{Introduction}

\label{sec:intro} Community detection is one of the fundamental problems in
network analysis, where communities are groups of nodes that are, in some
sense, more similar to each other than to the other nodes. The stochastic
block model (SBM) that was first proposed by \cite{HLL83} is a common tool
for model-based community detection that has been widely studied in the
statistics literature. Within the SBM framework, the most essential task is
to recover the community membership of the nodes from a single observation
of the network. Various procedures have been proposed to solve this problem
in the last decade or so. These include method of moments \citep{BCL11},
modularity maximization \citep{NG04}, semidefinite programming %
\citep{ABH16,CL15}, spectral clustering %
\citep{JY16,LR15,QR13,RCY11,SB15,V18,YP14,YP16}, likelihood methods %
\citep{ACBL13,BC09,CWA12,ZLZ12}, and spectral embedding %
\citep{LSTAP14,STFP12}. \cite{A18} provides an excellent survey on recent
developments on community detection and stochastic block models. Among the
methods mentioned above, spectral clustering is arguably one of the most
widely used methods due to its computational tractability.

\cite{BC09} introduce the notion of strong consistency of community
detection as the number of nodes, $n,$ grows.\footnote{\cite{BC09} use the
terminology \textquotedblleft asymptotic consistency\textquotedblright\ in
place of strong consistency.} By strong consistency, they mean that one can
identify the members of the block model communities perfectly in large
samples. Based on the parameters of the block model, properties of the
modularities, and expected degree of the graph ($\lambda _{n}$), \cite{BC09}
give the sufficient conditions for strong consistency, which is $\lambda
_{n}/\log (n)\rightarrow \infty.$ \cite{ZLZ12} define weak consistency of
community detection, which essentially means that the number of
misclassified nodes is of smaller order than the number of nodes. \cite{BC12}
find that weak consistency requires that $\lambda _{n}\rightarrow \infty $
for the SBM. Similarly, under the conditions that $\lambda _{n}/\log
(n)\rightarrow \infty $ ($\lambda _{n}\rightarrow \infty )$, \cite{ZLZ12}
establish the strong (weak) consistency under both standard SBMs and
degree-corrected SBMs.

If the community detection method is strongly consistent, then it means that
the communities are exactly recoverable. From an information-theory
perspective, \cite{AS15}, \cite{ABH16}, \cite{MNS14}, and \cite{V18} study
the phase transition threshold for exact recovery, which requires $\lambda_n
= \Omega(\log(n))$. It is well known that some methods like the modularity
maximization of \cite{NG04} and the likelihood method of \cite{BC09} yield
strongly consistent community recovery, but they either rely on
combinatorial methods that are computationally demanding or are guaranteed
to be successful only when the starting values are well-chosen. \cite{ABH16}
show that semidefinite programming can achieve exact recovery when there are
two equal-sized communities. \cite{YP14}, \cite{YP16}, and \cite{V18}
establish strong consistency for the variants of spectral method, which
involve graph splitting, trimming, and a final improvement step. The pure
spectral clustering method has been shown to enjoy weak consistency under
standard or degree-corrected SBMs by various researchers; see \cite{JY16}, 
\cite{LR15}, \cite{QR13}, and \cite{RCY11}. Weak consistency here means that
the fraction of misclassified nodes decreases to zero as $n$ grows. Because
the decrease rates established in above papers are usually slower than $n$,
the above weak consistency results imply that the number of misclassified
nodes still increases to infinity as $n$ grows. On the contrary, strong
consistency implies that the number of misclassified nodes is zero for
sufficiently large $n$, which greatly improves upon weak consistency.



The aim of this paper is to formally establish the strong consistency of
spectral clustering for standard/regular SBMs without any extra refinement
steps, under a set of conditions on the minimal degree of nodes ($\mu _{n}$%
), the number of communities ($K$), the minimal value of the nonzero
eigenvalue of the normalized block probability matrix, and some other
parameters of the block model. In the special case where $K$ is fixed and
the normalized block probability matrix has minimal eigenvalue bounded away
from zero in absolute value, we show that $\mu _{n}/\log (n)$ being
sufficiently large can ensure strong consistency. In other words, the
spectral clustering method achieves the optimal rate for exact recovery, as
pointed out in \cite{ABH16} and \cite{AS15}.

As demonstrated by \cite{ACBL13}, the performance of spectral clustering can
be considerably improved via regularization. \cite{JY16} provide an attempt
at quantifying this improvement through theoretical analysis and find that
the typical minimal degree assumption for the consistency of spectral
clustering can potentially be removed with suitable regularization. In this
paper, we also establish the strong consistency of regularized spectral
clustering.

The SBM is limited by its assumption that all nodes within a community are
stochastically equivalent and thus provides a poor fit to real-world
networks with hubs or highly varying node degrees within communities. For
this reason, \cite{KN11} propose a degree-corrected SBM (DC-SBM) to allow
variation in node degrees within a community while preserving the overall
block community structure. The DC-SBM greatly enhances the flexibility of
modeling degree heterogeneity and enables us to fit network data with
varying degree distributions. We also prove the strong consistency of
spectral clustering for regularized DC-SBMs.

Our paper is mostly related to \cite{abbe2017}. \cite{abbe2017} derive the $%
L_{\infty }$ bound for the entrywise eigenvector of random matrices with low
expected rank. Then they apply their general results to SBM with two
communities, where both within- and cross-community probabilities are of
order $\log (n)/n$ and show that classifying nodes based on the sign of the
entries in the second eigenvector can achieve exact recovery. Our paper
complements theirs in the following three aspects. First, we consider the
eigenvectors of normalized graph Laplacian $L$ rather than the adjacency
matrix $A$. Therefore, the entrywise bound of the eigenvectors derived in 
\cite{abbe2017} cannot be directly used in our case. Our proof relies on the
construction of a contraction mapping for the entrywise bound, via which we
can iteratively refine the bound. Such strategy is different from that in 
\cite{abbe2017}.

Second, we consider SBM with a general block probability matrix whereas \cite%
{abbe2017} consider a $2\times 2$ block probability matrix. Even though \cite%
{abbe2017} establish general theories of $L_{\infty }$ bound for the
entrywise eigenvector of random matrices, when applying their theory to
SBMs, they only study the model with the following block probability matrix: 
\begin{equation}
\begin{pmatrix}
\frac{a\log (n)}{n} & \frac{b\log (n)}{n} \\ 
\frac{b\log (n)}{n} & \frac{a\log (n)}{n}%
\end{pmatrix}%
.  \label{eq:bl}
\end{equation}%
Their block probability matrix assumes that there are two groups, the
connection probability within groups are the same for the two groups, and
the within- and cross-group connection probabilities are of the same order
of $\log (n)/n$. In contrast, our paper studies the general SBM with generic 
$K$ groups, where $K$ is allowed to diverge to infinity at a slow rate and
the decay rates for different elements in the block probability matrix can
be different. When there are two communities, \cite{abbe2017} use the sign
of the eigenvector associated with the second largest eigenvalue (in
absolute value) to identify the node's membership. When $K>2$, just checking
the sign is not sufficient to identify all $K$ groups. Our paper shows that
applying the K-means algorithm to the first $K$ eigenvectors can achieve
strong consistency.

Third, we consider SBM with both regularization and degree correction. We
show that, by regularization, the strong consistency is still possible even
when the minimal degree does not diverge at all. For the DC-SBM with
regularization, we also derive the conditions for strong consistency.
Neither regularization nor degree-corrected SBM is discussed in \cite%
{abbe2017}.

In the simulation, we consider both standard SBMs and DC-SBMs. For standard
SBMs, we adopt \cite{JY16}'s regularization method and choose the tuning
parameter $\tau $ according to their recommendation. The results show that
in terms of classification, spectral clustering tends to outperform the
unconditional pseudo-likelihood (UPL) method, which also has the strong
consistency property \citep{ACBL13}. In contrast, for the DC-SBMs our
simulations suggest that the regularized spectral clustering tends to
slightly underperform the conditional pseudo-likelihood (CPL) method even
though both are strongly consistent under some conditions. We also show that
an adaptive procedure helps the regularized spectral clustering to achieve
much better performance than the CPL method.

The rest of the paper is organized as follows. We study the strong
consistency of spectral clustering for the basic SBMs in Section \ref{sec:SC}%
. We consider the extensions to regularized spectral clustering and
degree-corrected SBMs in Section \ref{sec:ext}. Section \ref{sec:sim}
reports the numerical performance of various spectral-clustering-based
methods for a range of simulated networks. Section \ref{sec:strategy}
describes the proof strategy of the key theorem in our paper. Section \ref%
{sec:concl} concludes. The proofs of the main results are relegated to the
mathematical appendix.

\textit{Notation.} Throughout the paper, we use $[M]_{ij}$ and $[M]_{i\cdot
} $ to denote the $(i,j)$-th entry and $i$-th row of matrix $M$,
respectively. Without confusion, we sometimes simplify $[M]_{ij}$ as $M_{ij}$%
. $\Vert M\Vert $ and $\Vert M\Vert _{F}$ denote the spectral norm and
Frobenius norm of $M,$ respectively. Note that $\Vert M\Vert =\Vert M\Vert
_{F}$ when $M$ is a vector. In addition, let $\Vert M\Vert _{2\rightarrow
\infty }=\sup_{i}\Vert \lbrack M]_{i\cdot }\Vert .$ We use $\mathbf{1}%
\left\{ \cdot \right\} $ to denote the indicator function which takes value
1 when $\cdot $ holds and 0 otherwise. $C_1$ and $c_1$ denote specific
absolute constants that remain the same throughout the paper.

\section{Strong consistency of spectral clustering}

\label{sec:SC}

\subsection{Basic setup}

Let $A\in \{0,1\}^{n\times n}$ be the adjacency matrix. By convention, we do
not allow self-connection, i.e., $A_{ii}=0$. Let $\hat{d}_{i}=%
\sum_{j=1}^{n}A_{ij}$ denote the degree of node $i$, $D=\text{diag}(\hat{d}%
_{1},\ldots ,\hat{d}_{n})$, and $L=D^{-1/2}AD^{-1/2}$ be the graph
Laplacian. The graph is generated from a SBM with $K$ communities. We assume
that $K$ is known and potentially depends on the number of nodes $n$. We
omit the dependence of $K$ on $n$ for notation simplicity. If $K$ is
unknown, it can be determined by either \citeauthor{Lei16}'s \citeyear{Lei16}
sequential goodness-of-fit testing procedure, the likelihood-based model
selection method proposed by \cite{WB16}, or the network cross-validation
method proposed by \cite{CL17}. The communities, which represent a partition
of the $n$ nodes, are assumed to be fixed beforehand. Denote these by $%
C_{1},\ldots ,C_{K}$. Let $n_{k}$, for $k=1,\ldots ,K$, be the number of
nodes belonging to each of the clusters.

Given the communities, the edge between nodes $i$ and $j$ are chosen
independently with probability depending on the communities $i$ and $j$
belong to. In particular, for nodes $i$ and $j$ belonging to cluster $%
C_{k_{1}}$ and $C_{k_{2}}$, respectively, the probability of edge between $i$
and $j$ is given by $P_{ij}=B_{k_{1}k_{2}}$, where the \textit{block
probability matrix} $B=\{B_{k_{1}k_{2}}\}$, $k_{1},k_{2}=1,\ldots ,K$, is a
symmetric matrix with each entry between $[0,1]$. The $n\times n$ edge
probability matrix $P=\{P_{ij}\}$ represents the population counterpart of
the adjacency matrix $A$. Frequently we suppress the dependence of matrices
and their elements on $n.$

Denote $Z=\{Z_{ik}\}$ as the $n\times K$ binary matrix providing the cluster
membership of each node, i.e., $Z_{ik}=1$ if node $i$ is in $C_{k}$ and $%
Z_{ik}=0$ otherwise. Then we have $P=ZBZ^{T}.$ Let $\mathcal{D}=\text{diag}%
(d_{1},\ldots ,d_{n})$ where $d_{i}=\sum_{j=1}^{n}P_{ij}$. The population
version of the graph Laplacian is $\mathcal{L}=\mathcal{D}^{-1/2}P\mathcal{D}%
^{-1/2}.$ The standard spectral clustering corresponds to classifying the
eigenvectors of $L$ by K-means algorithm. In this paper, we focus on the
strong consistency of both the standard spectral clustering and its variant.


\subsection{Identification of the group membership}

\label{sec:id} Let $\pi _{kn}=n_{k}/n$, $W_{k}=[B]_{k\cdot }Z^{T}\iota
_{n}/n=\sum_{l=1}^{K}B_{kl}\pi _{ln}$, $\mathcal{D}_{B}=\text{diag}%
(W_{1},\ldots ,W_{K})$, and $B_{0}=\mathcal{D}_{B}^{-1/2}B\mathcal{D}%
_{B}^{-1/2}$, where $\iota _{n}$ is a vector of ones in $\Re ^{n}$. We can
view $W_{k}$ as the weighted average of the $k$-th row of $B$ with weights
given by $\pi _{kn}.$ Similarly, $B_{0}$ is a normalized version of $B$.
Note that $B_{0}$ is symmetric as $B$ is. Let $\Pi _{n}=\text{diag}(\pi
_{1n},\ldots ,\pi _{Kn})$. Throughout the paper, we allow for the elements
in the block probability matrix $B$ to depend on $n$ and decay to zero as $n$
grows, which leads to a sparse graph.

\begin{ass}
\label{ass:id} $B_{0}$ has rank $K$ and the spectral decomposition of $\Pi
_{n}^{1/2}B_{0}\Pi _{n}^{1/2}$ is $S_{n}\Omega _{n}S_{n}^{T}$, in which $%
S_{n}$ is a $K\times K$ matrix such that $S_{n}^{T}S_{n}=I_{K}$ and $\Omega
_{n}=\text{diag}(\omega _{1n},\ldots ,\omega _{Kn})$ such that $|\omega
_{1n}|\geq \cdots \geq |\omega _{Kn}|>0$.
\end{ass}

Assumption \ref{ass:id} implies that $B=\mathcal{D}_{B}^{1/2}\Pi
_{n}^{-1/2}S_{n}\Omega _{n}S_{n}^{T}\Pi _{n}^{-1/2}\mathcal{D}_{B}^{1/2}$
and $B_{0}=\Pi _{n}^{-1/2}S_{n}\Omega _{n}S_{n}^{T}\Pi _{n}^{-1/2}$. The
full-rank assumption is also made in \cite{RCY11}, \cite{LR15}, and \cite%
{JY16} and can be relaxed at the cost of more complicated notation.\footnote{%
The first version of our paper only requires that $B_{0}$ has distinct rows
and rank $K^{*}$, which can be less than $K$. Then, researchers need to
apply K-means algorithm to the first $K^{*}$ eigenvectors. By modifying the
corresponding assumptions accordingly, the strong consistency result in this
paper still holds. We stick to the full rank case mainly for notation
simplicity.} In addition, we allows for the possibility that $K\rightarrow
\infty $ and/or $\omega _{Kn}\rightarrow 0$ as $n\rightarrow \infty $ below.
This also mitigates concern of the full-rank condition. Assumption \ref%
{ass:id} implies that $\mathcal{L}$ has rank $K$ and the following spectral
decomposition: 
\begin{equation*}
\mathcal{L}=U_{n}\Sigma _{n}U_{n}^{T}=U_{1n}\Sigma _{1n}U_{1n}^{T},
\end{equation*}%
where $\Sigma _{n}=\text{diag}(\sigma _{1n},\ldots ,\sigma _{Kn},0,\ldots
,0) $ is a $n\times n$ matrix that contains the eigenvalues of $\mathcal{L}$
such that $|\sigma _{1n}|\geq |\sigma _{2n}|\geq \cdots \geq |\sigma
_{Kn}|>0 $, $\Sigma _{1n}=\text{diag}(\sigma _{1n},\ldots ,\sigma _{Kn})$,
the columns of $U_{n}$ contain the eigenvectors of $\mathcal{L}$ associated
with the eigenvalues in $\Sigma _{n}$, $U_{n}=(U_{1n},U_{2n})$, and $%
U_{n}^{T}U_{n}=I_{n}$. As shown in Theorem \ref{thm:id} below, $\sigma
_{kn}=\omega _{kn}$ for $k=1,\ldots ,K$.

\begin{ass}
\label{ass:nk}There exist some constants $C_1$ and $c_1$ such that 
\begin{equation*}
\infty >C_1 \geq \limsup_{n}\sup_{k}n_{k}K/n\geq
\liminf_{n}\inf_{k}n_{k}K/n\geq c_1>0.
\end{equation*}
\end{ass}

Assumption \ref{ass:nk} implies that the network has balanced communities.
It is commonly assumed in the literature on strong consistency of community
detection; see, e.g., \cite{BC09}, \cite{ZLZ12}, \cite{ACBL13}, and \cite%
{AS15}. %

\begin{thm}
\label{thm:id}Let $z_{i}^{T}=\left[ Z\right] _{i\cdot },$ the $i$-th row of $%
Z $. If Assumptions \ref{ass:id} and \ref{ass:nk} hold, then $\Omega
_{n}=\Sigma _{1n}$, $U_{1n}=Z(Z^{T}Z)^{-1/2}S_{n}$ and 
\begin{equation*}
\sup_{1\leq i\leq n}(n/K)^{1/2}\Vert z_{i}^{T}(Z^{T}Z)^{-1/2}S_{n}\Vert \leq
c_1^{-1/2}.
\end{equation*}%
In addition, for $n$ sufficiently large, if $z_{i}\neq z_{j}$, then 
\begin{equation*}
(n/K)^{1/2}\Vert (z_{i}^{T}-z_{j}^{T})(Z^{T}Z)^{-1/2}S_{n}\Vert \geq
C_1^{-1/2}\sqrt{2}>0.
\end{equation*}
\end{thm}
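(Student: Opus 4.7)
The plan is to unwind the definitions until $\mathcal{L}$ has an explicit low-rank factorization through $Z$, and then read off the spectral decomposition by plugging in Assumption \ref{ass:id}. First I would observe that for any node $i$ belonging to community $k$, the population degree is $d_i = \sum_j P_{ij} = \sum_l n_l B_{kl} = n W_k$, so $d_i$ depends only on the community label. This immediately yields the pivotal identity $\mathcal{D}^{-1/2} Z = n^{-1/2} Z \mathcal{D}_B^{-1/2}$, and substituting into $\mathcal{L} = \mathcal{D}^{-1/2} Z B Z^T \mathcal{D}^{-1/2}$ collapses the Laplacian to the clean factorization $\mathcal{L} = n^{-1} Z B_0 Z^T$.

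Next I would form the candidate eigenvector matrix $\tilde U := Z(Z^T Z)^{-1/2} S_n$, which, using $Z^T Z = n\Pi_n$, equals $n^{-1/2} Z \Pi_n^{-1/2} S_n$. Orthonormality $\tilde U^T \tilde U = S_n^T S_n = I_K$ is immediate. To verify the eigenequation I would apply $\mathcal{L}$ and use Assumption \ref{ass:id} in the form $B_0 = \Pi_n^{-1/2} S_n \Omega_n S_n^T \Pi_n^{-1/2}$ to obtain $B_0 \Pi_n^{1/2} S_n = \Pi_n^{-1/2} S_n \Omega_n$, which gives $\mathcal{L} \tilde U = n^{-1/2} Z \Pi_n^{-1/2} S_n \Omega_n = \tilde U \Omega_n$. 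Since $\mathcal{L}$ has rank at most $K$ (from the factorization $n^{-1} Z B_0 Z^T$ and the full rank of $B_0$), the $K$ orthonormal columns of $\tilde U$ exhaust the nonzero eigenspace, so $\Omega_n = \Sigma_{1n}$ and $U_{1n} = \tilde U$ as claimed.

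For the row bounds, since $S_n$ is a $K\times K$ matrix with $S_n^T S_n = I_K$ it is in fact orthogonal, so its rows are orthonormal. If node $i$ belongs to community $k$, then $z_i = e_k$ and
\begin{equation*}
z_i^T (Z^T Z)^{-1/2} S_n = n^{-1/2} \pi_{kn}^{-1/2} [S_n]_{k\cdot},
\end{equation*}
whose squared norm is $1/n_k$. Multiplying by $n/K$ yields $n/(n_k K) \le c_1^{-1}$ by the lower half of Assumption \ref{ass:nk}, which is the first inequality. For the separation claim, take $z_i \ne z_j$ with $i \in C_{k_1}$, $j \in C_{k_2}$; orthogonality of distinct rows of $S_n$ gives
\begin{equation*}
\bigl\|(z_i^T - z_j^T)(Z^T Z)^{-1/2} S_n\bigr\|^2 = \frac{1}{n_{k_1}} + \frac{1}{n_{k_2}},
\end{equation*}
and scaling by $n/K$ produces $n/(n_{k_1} K) + n/(n_{k_2} K) \ge 2/C_1$ by the upper half of Assumption \ref{ass:nk}, which is exactly the stated lower bound.

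There is no real obstacle in this theorem; the statement is a population-level identification result and the argument is essentially linear algebra. The only care required is bookkeeping of the three normalization matrices $\mathcal{D}_B$, $\Pi_n$ and $Z^T Z = n\Pi_n$ consistently across the similarity transforms, and the observation that the identity $\mathcal{D}^{-1/2} Z = n^{-1/2} Z \mathcal{D}_B^{-1/2}$ (itself a consequence of community-constant expected degrees) is what makes the factorization $\mathcal{L} = n^{-1} Z B_0 Z^T$ available in the first place.
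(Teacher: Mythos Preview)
Your proof is correct and follows essentially the same approach as the paper: both establish the factorization $\mathcal{L}=n^{-1}ZB_0Z^T$, set the candidate $U_{1n}=Z(Z^TZ)^{-1/2}S_n$, and read off the row bounds from $\|[S_n]_{k\cdot}\|=1$ together with Assumption~\ref{ass:nk}. The one minor difference is that the paper passes through $\mathcal{L}^2$ and invokes the squared decomposition $\Pi_n^{1/2}B_0\Pi_n B_0\Pi_n^{1/2}=S_n\Omega_n^2S_n^T$ before concluding ``up to sign normalization,'' whereas you verify the eigenequation $\mathcal{L}\tilde U=\tilde U\Omega_n$ directly from $B_0=\Pi_n^{-1/2}S_n\Omega_n S_n^T\Pi_n^{-1/2}$; your route is marginally cleaner since it avoids the sign ambiguity, but the content is identical.
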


Noting that the $i$th row of $U_{1n}$ is given by $%
z_{i}^{T}(Z^{T}Z)^{-1/2}S_{n}$. Theorem \ref{thm:id} indicates that the rows
of $U_{1n}$ contain the same community information as $Z$ for all nodes in
the network. Therefore, we can infer each node's community membership based
on the eigenvector matrix $U_{1n}$ if $\mathcal{L}$ is observed.

In practice, $\mathcal{L}$ is not observed. But we can estimate it by $L.$
We show below that the eigenvectors of $L$ associated with its $K$ largest
eigenvalues in absolute value consistently estimate those of $\mathcal{L}$
up to an orthogonal matrix so that the rows of the eigenvector matrix of $L$
also contains the useful community information.

\subsection{Uniform bound for the estimated eigenvectors}

\label{sec:unif} To study the upper bound of the eigenvectors of $L$
associated with its $K$ largest eigenvalues, we add the following assumption.

\begin{ass}
\label{ass:rate}Let $\mu _{n}=\min_{i}d_{i}$ and $\rho _{n}=\max
(\sup_{k_{1}k_{2}}[B_{0}]_{k_{1}k_{2}},1)$. Then, for $n$ being sufficiently
large, 
\begin{align*}
\frac{K\rho _{n}\log ^{1/2}(n)}{\mu _{n}^{1/2}\sigma _{Kn}^{2}}%
\left(1+\rho_n + \left(\frac{1}{K} + \frac{\log(5)}{\log(n)}%
\right)^{1/2}\rho_n^{1/2}\right) \leq 10^{-8}C_1^{-1}c_1^{1/2}.
\end{align*}
\end{ass}


Several remarks are in order. First, $\rho _{n}$ is a measure of
heterogeneity of the normalized block probability matrix $B_{0}$. If all the
entries in $B$ are of the same order of magnitude, then $\rho _{n}$ is
bounded. In addition, by Assumption \ref{ass:nk} and the fact that 
\begin{equation*}
(\pi _{k_{1}n}\pi _{k_{2}n})^{1/2}[B_{0}]_{k_{1}k_{2}}=\frac{(\pi
_{k_{1}n}\pi _{k_{2}n})^{1/2}B_{k_{1}k_{2}}}{(\sum_{l=1}^{K}\pi
_{ln}B_{k_{1}l})^{1/2}(\sum_{l=1}^{K}\pi _{ln}B_{k_{2}l})^{1/2}}\leq 1,
\end{equation*}%
we have $\limsup_{n} \rho _{n}\leq c_1^{-1}K$. Therefore, if the number of
blocks is fixed, then $\rho _{n}$ is also bounded.

Second, if $K$ is fixed and $\liminf_{n}|\sigma _{Kn}|$ is bounded away from
zero, then Assumption \ref{ass:rate} reduces to the requirement that $\mu_n
\geq \underline{C}\log(n)$ for some constant $\underline{C}$. Therefore,
Assumption \ref{ass:rate} allows for $\mu_n = \Omega(\log(n))$. Such
condition is the minimal requirement for strong consistency (exact
recovery), as established in \cite{ABH16} and \cite{AS15}. Our results in
Theorem \ref{thm:strong} based on Assumption \ref{ass:rate} imply that, in
the baseline case, the spectral clustering method achieve strong consistency
under this minimal rate requirement.

Third, to provide a more detailed comparison between Assumption \ref%
{ass:rate} and the phase transition threshold, let us consider the special
case where there are two equal sized communities and the block probability
matrix is 
\begin{equation*}
B=%
\begin{pmatrix}
\frac{a\log (n)}{n} & \frac{b\log (n)}{n} \\ 
\frac{b\log (n)}{n} & \frac{a\log (n)}{n}%
\end{pmatrix}%
,
\end{equation*}%
where $a>b.$ In this case, $K=2$, $\Pi _{n}=\text{diag}(0.5,0.5)$, $\mathcal{%
D}_{B}=\text{diag}(\frac{(a+b)\log (n)}{2n},\frac{(a+b)\log (n)}{2n})$, and%
\begin{equation*}
B_{0}=\mathcal{D}_{B}^{-1/2}B\mathcal{D}_{B}^{-1/2}=%
\begin{pmatrix}
\frac{2a}{a+b} & \frac{2b}{a+b} \\ 
\frac{2b}{a+b} & \frac{2a}{a+b}%
\end{pmatrix}%
.
\end{equation*}%
Note that $\mu _{n}=\frac{(a+b)\log (n)}{2}$, $\rho _{n}=\frac{2a}{a+b}\in
(1,2)$, and $\sigma _{2n}$, the second eigenvalue of $\Pi _{n}^{1/2}B_{0}\Pi
_{n}^{1/2}$, is $\frac{a-b}{a+b}$. Then, Assumption \ref{ass:rate} boils
down to 
\begin{equation*}
\left( \frac{2a}{a+b}\right) ^{2}\sqrt{\frac{2}{a+b}}\left( \frac{a+b}{a-b}%
\right) ^{2}\leq \underline{c}
\end{equation*}%
for some small constant $0.0001>\underline{c}>0$. Since $\frac{2a}{a+b}\geq
1 $ and $\frac{a+b}{a-b}>1$, the above condition implies that 
\begin{equation*}
\underline{c}\geq \left( \frac{2a}{a+b}\right) ^{2}\sqrt{\frac{2}{a+b}}%
\left( \frac{a+b}{a-b}\right) ^{2}\geq \frac{\sqrt{2(a+b)}}{a-b}\geq \frac{%
\sqrt{a}+\sqrt{b}}{a-b}=\frac{1}{\sqrt{a}-\sqrt{b}},
\end{equation*}%
or equivalently, 
\begin{equation*}
\sqrt{a}-\sqrt{b}\geq \underline{c}^{-1}>\sqrt{2}.
\end{equation*}%
Because $\sqrt{2}$ is the information-theoretic threshold for exact recovery
established in \cite{ABH16}, Assumption \ref{ass:rate} ensures that the SBM
under our consideration is in the region that exact recovery is solvable.

Fourth, the constants in Assumption \ref{ass:rate}, and thus, $\underline{c}$
in the above remark, are not optimal. We choose these constants purely for
their technical ease. We conjecture that more sophisticated arguments such
as those in \cite{AS15}, \cite{ABH16}, and \cite{abbe2017} are needed to
establish the optimal constant for the exact recovery of spectral clustering
method. On the other hand, although our method cannot show the exact
recovery all the way down to the information-theoretic threshold, it can be
easily extended to handle degree-corrected and/or regularized SBM, as shown
in Section \ref{sec:ext}.

Consider the spectral decomposition 
\begin{equation*}
L=\hat{U}_{n}\widehat{\Sigma }_{n}\hat{U}_{n}^{T},
\end{equation*}%
where $\hat{\Sigma}_{n}=\text{diag}(\hat{\sigma}_{1n},\ldots ,\hat{\sigma}%
_{nn})$ with $|\hat{\sigma}_{1n}|\geq |\hat{\sigma}_{2n}|\geq \cdots \geq |%
\hat{\sigma}_{nn}|\geq 0$, and $\hat{U}_{n}$ is the corresponding
eigenvectors. Let $\hat{\Sigma}_{1n}=\text{diag}(\hat{\sigma}_{1n},\ldots ,%
\hat{\sigma}_{Kn})$, $\hat{\Sigma}_{2n}=\text{diag}(\hat{\sigma}%
_{K+1,n},\ldots ,\hat{\sigma}_{nn})$, and $\hat{U}_{n}=(\hat{U}_{1n},\hat{U}%
_{2n})$, where $\hat{U}_{1n}$ contains the eigenvectors associated with
eigenvalues $\hat{\sigma}_{1n},\ldots ,\hat{\sigma}_{Kn}$. Then, $\hat{U}%
_{1n}^{T}\hat{U}_{1n}=I_{K}$, $\hat{U}_{2n}^{T}\hat{U}_{1n}=0$, and 
\begin{equation*}
L=\hat{U}_{1n}\hat{\Sigma}_{1n}\hat{U}_{1n}^{T}+\hat{U}_{2n}\hat{\Sigma}_{2n}%
\hat{U}_{2n}^{T}.
\end{equation*}%
The following lemma indicates that $L$ and $\hat{U}_{1n}$ are close to their
population counterparts, and up to an orthogonal matrix in the latter case.

\begin{lem}
\label{lem:dk} If Assumptions \ref{ass:id}--\ref{ass:rate} hold, then there
exists a $K\times K$ orthogonal (random) matrix $\hat{O}_{n}$ such that 
\begin{equation*}
\Vert \mathcal{L}-L\Vert \leq 7\log ^{1/2}(n)\mu _{n}^{-1/2}\quad a.s.
\end{equation*}%
and 
\begin{equation*}
\Vert \hat{U}_{1n}\hat{O}_{n}-U_{1n}\Vert \leq 10\log ^{1/2}(n)\mu
_{n}^{-1/2}|\sigma _{Kn}^{-1}|\quad a.s.
\end{equation*}
\end{lem}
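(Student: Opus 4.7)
The plan is to prove the two inequalities in sequence: first the operator-norm concentration $\|\mathcal{L}-L\|\leq 7\log^{1/2}(n)\mu_n^{-1/2}$, then a Davis-Kahan perturbation argument that converts this into the claimed bound on the eigenvector subspace. Throughout I will produce bounds whose failure probabilities are summable in $n$, so that a Borel-Cantelli argument upgrades the high-probability conclusions to almost-sure conclusions for all $n$ large enough.

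For the Laplacian bound, I use the algebraic decomposition
\begin{equation*}
\mathcal{L}-L=\mathcal{D}^{-1/2}(P-A)\mathcal{D}^{-1/2}+(\mathcal{D}^{-1/2}-D^{-1/2})A\mathcal{D}^{-1/2}+D^{-1/2}A(\mathcal{D}^{-1/2}-D^{-1/2}).
\end{equation*}
The first term is controlled by a matrix Bernstein inequality for the sparse symmetric random matrix $A-P$, which yields $\|A-P\|\leq C\sqrt{d_{\max}\vee\log n}$ with probability $1-n^{-c}$ for any prescribed $c$; Assumptions \ref{ass:nk} and \ref{ass:rate} ensure $d_{\max}\asymp\mu_n\gtrsim\log n$, so this piece contributes $O(\mu_n^{-1/2})$. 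The degree discrepancy is handled by a scalar Bernstein bound applied to $\hat{d}_i-d_i=\sum_{j}(A_{ij}-P_{ij})$ together with a union bound over $i$, giving $\max_i|\hat{d}_i-d_i|\leq C\sqrt{\mu_n\log n}$ with failure probability $n^{-c}$. A Taylor expansion of $t\mapsto t^{-1/2}$ then yields $\|\mathcal{D}^{-1/2}-D^{-1/2}\|\lesssim\sqrt{\log n}/\mu_n$, and combining with $\|A\|\leq\|P\|+\|A-P\|=O(\mu_n)$ and $\|\mathcal{D}^{-1/2}\|=O(\mu_n^{-1/2})$ makes each cross term $O(\sqrt{\log n/\mu_n})$. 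Careful bookkeeping of constants produces the stated factor of $7$.

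For the eigenvector bound I invoke a Davis-Kahan $\sin\Theta$ argument. Because ordering is by absolute value, I first pass to $\mathcal{L}^2$ and $L^2$, which share eigenvectors with $\mathcal{L}$ and $L$ respectively but have eigenvalues $\sigma_{kn}^2$ and $\hat{\sigma}_{kn}^2$ that are naturally ordered, and for which $\|L^2-\mathcal{L}^2\|\leq(\|L\|+\|\mathcal{L}\|)\|L-\mathcal{L}\|\leq 2\|L-\mathcal{L}\|$. Assumption \ref{ass:id} forces $\mathcal{L}$ to have exact rank $K$, so the $(K{+}1)$-st eigenvalue of $\mathcal{L}^2$ is zero; by Weyl's inequality and Step~1 the effective eigen-gap is $\sigma_{Kn}^2-\|L^2-\mathcal{L}^2\|\geq\sigma_{Kn}^2/2$ under Assumption \ref{ass:rate}. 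A Yu-Wang-Samworth-type variant of Davis-Kahan then delivers an orthogonal $\hat{O}_n$ with $\|\hat{U}_{1n}\hat{O}_n-U_{1n}\|\leq C\|L-\mathcal{L}\|/|\sigma_{Kn}|$, and substituting Step~1 yields the constant $10$. The main obstacle is this Davis-Kahan step: one must select a formulation compatible with the absolute-value ordering of $\Sigma_n$ and that returns an operator-norm bound with an explicit orthogonal alignment, rather than a bare principal-angle bound. The square-and-apply trick above is the cleanest way to achieve this while keeping the eigen-gap $\sigma_{Kn}^2$ under direct control; once it is in hand, the remainder of the proof is a mechanical assembly of Steps~1 and~2 together with Borel-Cantelli in $n$.
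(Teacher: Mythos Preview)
Your overall plan matches the paper's, but two steps do not go through as written. The more serious gap is in the centered term $\mathcal{D}^{-1/2}(P-A)\mathcal{D}^{-1/2}$: you bound it by $\|A-P\|/\mu_n$ and assert that Assumptions~\ref{ass:nk}--\ref{ass:rate} force $d_{\max}\asymp\mu_n$. They do not. With $K=2$, equal-size communities, and $B=\mathrm{diag}(1/2,\epsilon)$ for $\epsilon$ a large constant times $\log(n)/n$, one gets $B_0=2I_2$, $\rho_n=2$, $\sigma_{Kn}=1$, so Assumption~\ref{ass:rate} holds; yet $d_{\max}/\mu_n\asymp n/\log n\to\infty$ and your bound $\|A-P\|/\mu_n\asymp\sqrt{n}/\log n$ diverges while the target $7\log^{1/2}(n)\mu_n^{-1/2}$ shrinks. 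The same unjustified claim $\|A\|=O(\mu_n)$ contaminates your cross-term estimates. The paper instead applies the matrix Bernstein inequality \emph{directly} to the already-normalized sum $\sum_{i<j}(d_id_j)^{-1/2}(A_{ij}-P_{ij})(e_ie_j^T+e_je_i^T)$: its variance proxy is $\max_i\sum_jP_{ij}/(d_id_j)\leq\mu_n^{-1}$ regardless of $d_{\max}$, giving the stated $O(\log^{1/2}(n)\mu_n^{-1/2})$. For the cross terms the paper writes $\mathcal{D}^{-1/2}A\mathcal{D}^{-1/2}-L$ in terms of $\mathcal{D}^{-1/2}D^{1/2}-I$ and uses $\|L\|\leq 1$ together with $\|\mathcal{D}^{-1/2}D^{1/2}-I\|\leq\max_i|\hat{d}_i/d_i-1|\leq C\log^{1/2}(n)\mu_n^{-1/2}$, again with no $d_{\max}$ anywhere.

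A second, smaller gap is in your Davis--Kahan step. Passing to $L^2$ and $\mathcal{L}^2$ makes the eigen-gap $\sigma_{Kn}^2$, and since $\|L^2-\mathcal{L}^2\|\leq 2\|L-\mathcal{L}\|$ gains no compensating factor of $|\sigma_{Kn}|$, the outcome is $\|\hat{U}_{1n}\hat{O}_n-U_{1n}\|\lesssim\|L-\mathcal{L}\|/\sigma_{Kn}^2$, not the $\|L-\mathcal{L}\|/|\sigma_{Kn}|$ you claim; when $|\sigma_{Kn}|\to 0$ this does not recover the lemma's stated bound. The paper avoids squaring: since $\mathrm{rank}(\mathcal{L})=K$, the eigenvalues of $\mathcal{L}$ on $U_{1n}$ lie in $\{|\lambda|\geq|\sigma_{Kn}|\}$ while by Weyl those of $L$ on $\hat{U}_{1n}^\perp$ lie in $\{|\lambda|\leq\|L-\mathcal{L}\|\}$, and the $\sin\Theta$ theorem with separation $|\sigma_{Kn}|-\|L-\mathcal{L}\|\geq 0.99|\sigma_{Kn}|$ then delivers $\sqrt{2}\cdot 7\log^{1/2}(n)\mu_n^{-1/2}/(0.99|\sigma_{Kn}|)\leq 10\log^{1/2}(n)\mu_n^{-1/2}|\sigma_{Kn}|^{-1}$ directly.
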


Two variants of Lemma \ref{lem:dk} have been derived in \cite{JY16} and \cite%
{QR13} as special cases. The main difference is that we obtain the almost
sure bound for the objects of interest instead of the probability bound in
those papers. As illustrated in \cite{abbe2017}, 
\begin{equation*}
\hat{O}_{n}=\bar{U}\bar{V}^{T},
\end{equation*}%
where $\bar{U}\bar{\Sigma}\bar{V}^{T}$ is the singular value decomposition
of $\hat{U}_{1n}^{T}U_{1n}$. Apparently, $\hat{O}_{n}$ is random.

In order to study the strong consistency, we have to derive the uniform
bound for $\Vert \hat{u}_{1i}^{T}\hat{O}_{n}-u_{1i}^{T}\Vert $, where $\hat{u%
}_{1i}^{T}$ and $u_{1i}^{T}$ are the $i$-th rows of $\hat{U}_{1n}$ and $%
U_{1n}$, respectively.

\begin{thm}
\label{thm:main} If Assumptions \ref{ass:id}--\ref{ass:rate} hold, then 
\begin{equation*}
\sup_{i}\sqrt{n/K}\Vert \hat{u}_{1i}^{T}\hat{O}_{n}-u_{1i}^{T}\Vert \leq C^*%
\frac{\rho _{n}\log ^{1/2}(n)}{\mu _{n}^{1/2}\sigma _{Kn}^{2}}\left(1+\rho_n
+ \left(\frac{1}{K} + \frac{\log(5)}{\log(n)}\right)^{1/2}\rho_n^{1/2}%
\right)\quad a.s.,
\end{equation*}
where $C^*$ is the same absolute constant as in Theorem \ref{thm:main_DC}.
\end{thm}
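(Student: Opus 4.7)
The plan is to upgrade the spectral-norm bound of Lemma \ref{lem:dk} to a row-wise bound by setting up a self-referential inequality for $Y:=\hat U_{1n}\hat O_n-U_{1n}$ and closing it via a contraction/iteration argument. Writing $V:=\hat U_{1n}\hat O_n$, the eigen-relations $L\hat U_{1n}=\hat U_{1n}\hat\Sigma_{1n}$ and $\mathcal L U_{1n}=U_{1n}\Sigma_{1n}$ give, after rearrangement,
\begin{equation*}
Y=(L-\mathcal L)U_{1n}\Sigma_{1n}^{-1}+LY\Sigma_{1n}^{-1}+V(\Sigma_{1n}-\hat O_n^T\hat\Sigma_{1n}\hat O_n)\Sigma_{1n}^{-1}.
\end{equation*}
Reading off the $i$th row decomposes $[Y]_{i\cdot}$ into a noise term, a self-reference $[LY]_{i\cdot}\Sigma_{1n}^{-1}$, and an eigenvalue-rotation remainder; the $\sigma_{Kn}^{-2}$ appearing in the theorem is produced by feeding the spectral bound of Lemma \ref{lem:dk} into the $[LY]_{i\cdot}$ term during the contraction step (equivalently, by iterating the identity once).

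For the noise piece $[(L-\mathcal L)U_{1n}\Sigma_{1n}^{-1}]_{i\cdot}$, since $\|[U_{1n}]_{j\cdot}\|=O(\sqrt{K/n})$ by Theorem \ref{thm:id}, I would apply a scalar Bernstein bound coordinate by coordinate to each of the $K$ entries, separately handling the $\mathcal D^{-1/2}(A-P)\mathcal D^{-1/2}$ part and the $(D^{-1/2}-\mathcal D^{-1/2})$ correction (controlled through concentration of $\hat d_i$ around $d_i$), then taking a union bound over $i$ to obtain an almost-sure bound at the target order once divided by $|\sigma_{Kn}|$. For the eigenvalue-rotation remainder, I would exploit that $\hat O_n=\bar U\bar V^T$ is the Procrustes rotation coming from the SVD of $\hat U_{1n}^T U_{1n}$, so that Weyl/Davis--Kahan gives $\|\hat O_n^T\hat\Sigma_{1n}\hat O_n-\Sigma_{1n}\|\lesssim \|L-\mathcal L\|$; combined with $\|[V]_{i\cdot}\|\leq\|[U_{1n}]_{i\cdot}\|+\|[Y]_{i\cdot}\|$ and Lemma \ref{lem:dk}, this contributes both an additive piece of the target order and a small linear feedback in $\|Y\|_{2\to\infty}$.

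For the self-reference $[LY]_{i\cdot}$, split as $[\mathcal L]_{i\cdot}Y+[(L-\mathcal L)]_{i\cdot}Y$. Using $\mathcal L=U_{1n}\Sigma_{1n}U_{1n}^T$ together with the SVD identity $U_{1n}^T V=\bar V\bar\Sigma\bar V^T$ forces $\|U_{1n}^T Y\|\leq\|Y\|^2/2$, which combined with Lemma \ref{lem:dk} renders the $[\mathcal L]_{i\cdot}Y$ contribution quadratic in $\|Y\|$, hence negligible. The remaining piece $[(L-\mathcal L)]_{i\cdot}Y$ is bounded, via concentration along the $i$th row of $L-\mathcal L$, by a factor times $\|Y\|_{2\to\infty}$; Assumption \ref{ass:rate} is engineered precisely so that after dividing by $|\sigma_{Kn}|$ this coefficient is strictly below one on an almost-sure event. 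Taking $\sup_i$ then yields a linear inequality $\|Y\|_{2\to\infty}\leq \alpha\|Y\|_{2\to\infty}+\beta$ with $\alpha<1$, which inverts to the stated bound.

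The hard part will be verifying that the contraction coefficient $\alpha$ can be made strictly less than one uniformly on a single almost-sure event, while simultaneously keeping the additive term $\beta$ at the announced rate. The peculiar numerical constant ($10^{-8}$) and the intricate form of Assumption \ref{ass:rate} are chosen exactly to secure this contraction, and the proof must handle the $D-\mathcal D$ and $A-P$ fluctuations inside $L-\mathcal L$ separately so that the requisite row-wise concentration arguments apply without cross-contamination from the coupling between $Y$ and $L-\mathcal L$. The spectral bound from Lemma \ref{lem:dk} is what seeds the iteration into the contractive regime, and the linear fixed point then delivers the row-wise bound of the theorem.
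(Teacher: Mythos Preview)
Your overall architecture---set up a self-referential inequality for $Y=\hat U_{1n}\hat O_n-U_{1n}$ and close it by contraction---matches the paper's strategy (which works with $\hat\Lambda-\Lambda$ but is algebraically equivalent). The decomposition you wrote is correct, and your treatment of the pure noise term $[(L-\mathcal L)U_{1n}]_{i\cdot}$ and of the eigenvalue-rotation remainder is essentially what the paper does (the paper uses an $\varepsilon$-net on $S^{K-1}$ rather than coordinate-wise Bernstein, which is where the $\log(5)/\log(n)$ factor comes from, but this is cosmetic).

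The genuine gap is in your handling of $[(L-\mathcal L)]_{i\cdot}Y$. You write that this is ``bounded, via concentration along the $i$th row of $L-\mathcal L$, by a factor times $\|Y\|_{2\to\infty}$''. But $Y$ depends on \emph{every} entry of $A$, including the $i$th row, so the summands $(A_{ij}-P_{ij})\,[Y]_{j\cdot}$ are not independent and Bernstein does not apply. Deterministic surrogates fail too: $\|[(L-\mathcal L)]_{i\cdot}Y\|\le \|[L-\mathcal L]_{i\cdot}\|_1\|Y\|_{2\to\infty}$ requires the row $\ell_1$-norm $\sum_j|A_{ij}-P_{ij}|/\sqrt{\hat d_i\hat d_j}$, which is of order $1$ rather than $o(|\sigma_{Kn}|)$, so no contraction; and $\|[(L-\mathcal L)]_{i\cdot}Y\|\le \|[L-\mathcal L]_{i\cdot}\|_2\|Y\|$ only feeds back the spectral norm of $Y$, not $\|Y\|_{2\to\infty}$, so the self-reference never closes at the row-wise scale. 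Your final paragraph mentions ``cross-contamination from the coupling between $Y$ and $L-\mathcal L$'' but treats it as a matter of separating the $D-\mathcal D$ and $A-P$ fluctuations; that is not the issue---the issue is the dependence of $Y$ itself on $[A]_{i\cdot}$.

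The paper's fix is a leave-one-out construction: replace the $i$th row and column of $A$ by their expectations to form $A^{(i)}$, build the corresponding $L^{(i)}$, eigenvectors $\hat U_{1n}^{(i)}$ and rotation $\hat O_n^{(i)}$, and split
\[
([A]_{i\cdot}-[P]_{i\cdot})\mathcal D^{-1/2}\hat U_{1n}\hat O_n
=([A]_{i\cdot}-[P]_{i\cdot})\mathcal D^{-1/2}\hat U_{1n}^{(i)}\hat O_n^{(i)}
+([A]_{i\cdot}-[P]_{i\cdot})\mathcal D^{-1/2}\bigl(\hat U_{1n}\hat O_n-\hat U_{1n}^{(i)}\hat O_n^{(i)}\bigr).
\]
The first piece now has $\hat U_{1n}^{(i)}\hat O_n^{(i)}$ independent of $[A]_{i\cdot}$, so Bernstein applies conditionally. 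The second piece is controlled by a separate perturbation lemma showing $\|\mathcal N_n(\hat U_{1n}\hat O_n-\hat U_{1n}^{(i)}\hat O_n^{(i)})\|_{2\to\infty}$ is small; this lemma is where the current row-wise bound $\psi_n$ (your $\|Y\|_{2\to\infty}$) re-enters, generating the contraction. Without this decoupling device the iteration cannot be started.
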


We consider the four-parameter SBM studied in \cite{RCY11} to illustrate the
upper bound in Theorem \ref{thm:main}.

\begin{ex}[\textbf{2.1}]
The SBM is parametrized by $K,$ $s,$ $r$ and $p,$ where the $K$ communities
contain $s$ nodes each, and $r$ and $r+p$ denote the probability of a
connection between two nodes in two separate blocks and in the same block,
respectively. For this model, $\rho _{n}=\frac{(p+r)K}{p+rK} $, $\sigma
_{Kn}=\frac{p}{Kr+p}$, and $\mu _{n}=\frac{ n(p+rK)}{K}-\left( p+r\right) $.
Therefore, the probability bound of $\sup_{i}\sqrt{n/K}\Vert \hat{u}%
_{1i}-O_{n}^{T}u_{1i}\Vert $ is of order 
\begin{equation}  \label{eq:ex}
\biggl(\frac{K\log (n)}{n(p+rK)}\biggr)^{1/2}\biggl(\frac{(p+r)^{2}K^{2}}{
p^{2}}\biggr).
\end{equation}
The above display is small if $K^{5}\log \left( n\right) /(np)$ is small and 
$rK/p\rightarrow c\in \left( 0,\infty \right)$, or if $K^{4}\log \left(
n\right) /(nr)$ is small and $r/p\rightarrow c\in \left( 0,\infty \right).$
If we further restrict our attention to the dense SBM with both $r$ and $p$
bounded away from zero, then the displayed item in \eqref{eq:ex} becomes
small as long as $K^{4}\log \left( n\right) /n$ is small.
\end{ex}

Since both $U_{1n}$ and $\hat{U}_{1n}$ have orthonormal columns, they have a
typical element of order $(n/K)^{-1/2}.$ This explains why we need the
normalization constant $(n/K)^{1/2}$ in Theorem \ref{thm:main}. An important
implication of Theorem \ref{thm:main} is that like $U_{1n},$ the rows of $%
\hat{U}_{1n}$ also contain the community membership information. Let $\hat{%
\beta}_{in}=(n/K)^{1/2}\hat{u}_{1i}^{T}.$ Let $g_{i}^{0}\in \{1,\ldots ,K\}$
denote the true community that node $i$ belongs to. Theorems \ref{thm:id}-%
\ref{thm:main} and the fact that $\hat{O}_{n}\hat{O}_{n}^{T}=I_{K}$ imply
that there exist $\beta_{kn} = (K\pi_{kn})^{-1/2}[S_n\hat{O}_n^T]_{k\cdot}$, 
$k = 1,\cdots,K$ such that 
\begin{equation*}
(n/K)^{1/2}u_{1i}^{T}\hat{O}_{n}^{T} = \beta _{g_{i}^{0}n}, \quad
||\beta_{kn}|| \leq c_1^{-1/2},
\end{equation*}
and 
\begin{equation*}
\sup_i\Vert \hat{\beta}_{in}-\beta _{g_{i}^{0}n}\Vert \leq C^*\frac{\rho
_{n}\log ^{1/2}(n)}{\mu _{n}^{1/2}\sigma _{Kn}^{2}}\left(1+\rho_n + \left(%
\frac{1}{K} + \frac{\log(5)}{\log(n)}\right)^{1/2}\rho_n^{1/2}\right) \quad
a.s.
\end{equation*}%
If the distance between $\hat{\beta}_{in}$ and $\beta _{g_{i}^{0}n}$ is much
smaller than that among distinctive $\{\beta_{kn}\}_{k=1}^K$, then K-means
algorithm applying to $\{\hat{\beta}_{in}\}_{i=1}^n$ are expected to recover
the true community memberships. The statistical properties of K-means method
are studied in the next two sections.

\subsection{Strong consistency of the K-means algorithm}

\label{sec:kmeans} With a little abuse of notation, let $\hat{\beta}_{in} \in \Re^K$ be
a generic estimator of $\beta _{g_{i}^{0}n} \in \Re^K$ for $i=1,\ldots ,n.$
To recover the community membership structure (i.e., to estimate $g_{i}^{0}$%
), it is natural to apply the K-means clustering algorithm to $\{\hat{\beta}%
_{in}\}$. Specifically, let $\mathcal{A}=\{\alpha _{1},\ldots ,\alpha _{K}\}$
be a set of $K$ arbitrary $K\times 1$ vectors: $\alpha _{1},\ldots ,\alpha
_{K}$. Define 
\begin{equation*}
\widehat{Q}_{n}(\mathcal{A})=\frac{1}{n}\sum_{i=1}^{n}\min_{1\leq l\leq
K}\Vert \hat{\beta}_{in}-\alpha _{l}\Vert ^{2}
\end{equation*}%
and $\widehat{\mathcal{A}}_{n}=\{\widehat{\alpha }_{1},\ldots ,\widehat{%
\alpha }_{K}\}$, where $\widehat{\mathcal{A}}_{n}=\argmin_{\mathcal{A}}%
\widehat{Q}_{n}(\mathcal{A}).$ Then we compute the estimated cluster
identity as 
\begin{equation*}
\hat{g}_{i}=\argmin_{1\leq l\leq K}\Vert \hat{\beta}_{in}-\widehat{\alpha }%
_{l}\Vert ,
\end{equation*}%
where if there are multiple $l$'s that achieve the minimum, $\hat{g}_{i}$
takes value of the smallest one. Next, we consider the case in which the
estimates $\{\hat{\beta}_{in}\}_{i=1}^{n}$ and the true vectors $\{\beta
_{kn}\}_{k=1}^{K}$ satisfy the following restrictions.

\begin{ass}
\label{ass:theta}

\begin{enumerate}
\item There exists a constant $M$ such that 
\begin{equation*}
\limsup_{n} \sup _{1\leq k\leq K} \Vert \beta _{kn}\Vert \leq M<\infty .
\end{equation*}

\item There exist some deterministic sequences $c_{1n}$ and $c_{2n}$ such
that $\sup_{i}\Vert \hat{\beta} _{in}-\beta _{g_{i}^{0}n}\Vert \leq c_{2n}
\leq M$ a.s. and $\inf_{1\leq k<k^{\prime }\leq K}\Vert \beta _{kn}-\beta
_{k^{\prime }n}\Vert \geq c_{1n}>0$.

\item $(2c_{2n}c_1^{1/2} + 16 K^{3/4} M^{1/2} c^{1/2}_{2n})^2 \leq
c_1c_{1n}^2.$
\end{enumerate}
\end{ass}

Assumption \ref{ass:theta}.1 requires that the centroids are uniformly
bounded. Assumption \ref{ass:theta}.2 requires that the centroids are
well-separated and the vectors to be classified (i.e., $\{\hat{\beta}_{in}\}$%
) are sufficiently close to one of the centroids. Assumption \ref{ass:theta}%
.3 requires that the distance between the estimated vector and the
corresponding centroid is smaller than that among any of the two distinctive
centroids. When the number of clusters $K$ is fixed and the gap $c_{1n}$
between the centroids is bounded away from zero, Assumption \ref{ass:theta}%
.3 holds as long as $c_{2n}$ is sufficiently small. Note here, we do not
necessarily need $c_{2n} = o(1)$, i.e., $\hat{ \beta}_{in}$ is not
necessarily consistent.

Let $H(\cdot ,\cdot )$ denote the Hausdorff distance between two sets and $%
\mathcal{B}_{n}=\{\beta _{1n},\ldots ,\beta _{Kn}\}.$ The following lemma
shows that the K-means algorithm can estimate the true centroids $\{\beta
_{kn}\}_{k=1}^{K}$ up to the rate $O_{a.s.}(c_{2n}^{1/2}K^{3/4}).$

\begin{lem}
\label{lem:kmeans1} Suppose that Assumptions \ref{ass:nk} and \ref{ass:theta}
hold. Then 
\begin{equation*}
H(\widehat{\mathcal{A}}_{n},\mathcal{B} _{n}) \leq
(15M/c_1)^{1/2}c_{2n}^{1/2}K^{3/4} \quad a.s.
\end{equation*}
\end{lem}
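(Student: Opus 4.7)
The plan is to bound $H(\widehat{\mathcal{A}}_n,\mathcal{B}_n)$ in its two pieces, $\sup_k \min_l \|\beta_{kn}-\widehat{\alpha}_l\|$ and $\sup_l \min_k \|\widehat{\alpha}_l-\beta_{kn}\|$, by comparing the K-means objective $\widehat{Q}_n$ evaluated at $\widehat{\mathcal{A}}_n$ against its oracle value at $\mathcal{B}_n$, and then converting the resulting cluster-wise control into a matching between $\widehat{\mathcal{A}}_n$ and $\mathcal{B}_n$.

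For the first piece, I start with an oracle inequality: optimality of $\widehat{\mathcal{A}}_n$ combined with $\min_k \|\hat{\beta}_{in}-\beta_{kn}\|\leq \|\hat{\beta}_{in}-\beta_{g_i^0 n}\|\leq c_{2n}$ from Assumption \ref{ass:theta}.2 yields $\widehat{Q}_n(\widehat{\mathcal{A}}_n) \leq \widehat{Q}_n(\mathcal{B}_n) \leq c_{2n}^2$ a.s. Applying $(a+b)^2 \leq 2a^2+2b^2$ to the triangle bound $\|\beta_{g_i^0 n}-\widehat{\alpha}_l\|\leq \|\beta_{g_i^0 n}-\hat{\beta}_{in}\|+\|\hat{\beta}_{in}-\widehat{\alpha}_l\|$, then taking $\min_l$ and averaging over $i$, I obtain $n^{-1}\sum_i \min_l \|\beta_{g_i^0 n}-\widehat{\alpha}_l\|^2 \leq 4c_{2n}^2$. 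Restricting to $I_k=\{i:g_i^0=k\}$ and invoking the balanced-cluster lower bound $n_k/n \geq c_1/K$ (Assumption \ref{ass:nk}) localizes this to
\begin{equation*}
\min_l \|\beta_{kn}-\widehat{\alpha}_l\|^2 \leq \frac{4Kc_{2n}^2}{c_1}, \qquad k=1,\ldots,K,
\end{equation*}
so that $\sup_k\min_l \|\beta_{kn}-\widehat{\alpha}_l\| \leq 2c_{2n}(K/c_1)^{1/2}$. This is the first half of the Hausdorff bound.

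The reverse direction is the main obstacle, and I would handle it via a bijectivity argument. Define $\phi(k)=\argmin_l \|\beta_{kn}-\widehat{\alpha}_l\|$. If $\phi(k_1)=\phi(k_2)=l^*$ for some $k_1\neq k_2$, two triangle inequalities combined with the previous step force $c_{1n} \leq \|\beta_{k_1 n}-\beta_{k_2 n}\| \leq 4c_{2n}(K/c_1)^{1/2}$. On the other hand, taking square roots in Assumption \ref{ass:theta}.3 and dropping the nonnegative term $2c_{2n}c_1^{1/2}$ gives $c_{1n} \geq 16 K^{3/4} M^{1/2} c_1^{-1/2} c_{2n}^{1/2}$; combining this with $c_{2n}\leq M$ and $K\geq 1$ yields $c_{1n} > 4c_{2n}(K/c_1)^{1/2}$, a contradiction. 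Hence $\phi$ must be a bijection on $\{1,\ldots,K\}$, and inverting it gives $\sup_l\min_k \|\widehat{\alpha}_l-\beta_{kn}\| \leq 2c_{2n}(K/c_1)^{1/2}$ as well.

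Combining the two bounds yields $H(\widehat{\mathcal{A}}_n,\mathcal{B}_n) \leq 2c_{2n}(K/c_1)^{1/2}$ almost surely. Finally, $c_{2n}\leq M$ and $K\geq 1$ imply $4c_{2n}^2K/c_1 \leq 15 M c_{2n} K^{3/2}/c_1$, equivalently $2c_{2n}(K/c_1)^{1/2} \leq (15M/c_1)^{1/2} c_{2n}^{1/2} K^{3/4}$, matching the stated bound. The only delicate step is calibrating the bijectivity argument against the constants in Assumption \ref{ass:theta}.3; everything else is a routine combination of the optimality of the K-means criterion with the triangle inequality.
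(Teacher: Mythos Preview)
Your argument is correct, and in fact delivers a sharper intermediate bound than the paper's proof. The paper proceeds via an M-estimation strategy: it first establishes the uniform approximation
\[
\sup_{\mathcal{A}\in\mathcal{M}}|\widehat{Q}_n(\mathcal{A})-Q_n(\mathcal{A})|\leq (6\sqrt{K}+1)Mc_{2n}
\]
over a bounded parameter set $\mathcal{M}$, verifies that $\widehat{\mathcal{A}}_n\in\mathcal{M}$, and then couples this with a population identifiability bound $\inf_{H(\mathcal{A},\mathcal{B}_n)>\eta}Q_n(\mathcal{A})\geq (c_1/K)\min(\eta^2,c_{1n}^2/2)$, from which the $K^{3/4}c_{2n}^{1/2}$ rate emerges. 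Your route is more elementary: you bypass uniform convergence entirely, feeding the oracle inequality $\widehat{Q}_n(\widehat{\mathcal{A}}_n)\leq c_{2n}^2$ directly through the triangle inequality and the cluster balance $\pi_{kn}\geq c_1/K$ to obtain $\sup_k\min_l\|\beta_{kn}-\widehat{\alpha}_l\|\leq 2c_{2n}(K/c_1)^{1/2}$, then close the Hausdorff distance with a pigeonhole bijection of $\{1,\dots,K\}$. This gives the sharper bound $H(\widehat{\mathcal{A}}_n,\mathcal{B}_n)\leq 2c_{2n}(K/c_1)^{1/2}$, which you then relax to the stated $K^{3/4}c_{2n}^{1/2}$ form. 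The paper's approach is more in the spirit of classical consistency proofs for clustering estimators; yours is tighter and needs fewer ingredients, at the cost of being specific to the uniform control $\sup_i\|\hat{\beta}_{in}-\beta_{g_i^0 n}\|\leq c_{2n}$ already assumed.
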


\begin{thm}
\label{thm:strong} Suppose that Assumptions \ref{ass:nk} and \ref{ass:theta}
hold. Then for sufficiently large $n$, we have 
\begin{equation*}
\sup_{1\leq i\leq n}\mathbf{1}\{\hat{g}_{i}\neq g_{i}^{0}\}=0\quad a.s.
\end{equation*}
\end{thm}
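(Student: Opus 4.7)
The plan is to reduce the K-means consistency statement to a clean triangle-inequality comparison, using Lemma~\ref{lem:kmeans1} to control the estimated centroids and Assumption~\ref{ass:theta} to guarantee that the gap between true centroids is strictly larger than twice the total estimation error. Write $r_n := (15M/c_1)^{1/2} K^{3/4} c_{2n}^{1/2}$, so that Lemma~\ref{lem:kmeans1} gives $H(\widehat{\mathcal{A}}_n,\mathcal{B}_n)\le r_n$ almost surely.

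First I would establish a bijection between estimated and true centroids. From $H(\widehat{\mathcal{A}}_n,\mathcal{B}_n)\le r_n$ there exists a map $\phi:\{1,\ldots,K\}\to\{1,\ldots,K\}$ with $\|\widehat{\alpha}_{\phi(k)}-\beta_{kn}\|\le r_n$ for each $k$. Because $\|\beta_{kn}-\beta_{k'n}\|\ge c_{1n}$ for $k\neq k'$, if $\phi$ were not injective the triangle inequality would force $2r_n\ge c_{1n}$, contradicting Assumption~\ref{ass:theta}.3 (rewritten as $2c_{2n}+16(M/c_1)^{1/2}K^{3/4}c_{2n}^{1/2}\le c_{1n}$, which in particular implies $2r_n<c_{1n}$ since $2\sqrt{15}<16$). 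Hence $\phi$ is a bijection, i.e.\ a permutation of labels.

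Next I would argue that every node is assigned to its correct centroid. Fix $i$ with true membership $k=g_i^0$. By the triangle inequality and Assumption~\ref{ass:theta}.2,
\begin{equation*}
\|\hat{\beta}_{in}-\widehat{\alpha}_{\phi(k)}\|\le \|\hat{\beta}_{in}-\beta_{kn}\|+\|\beta_{kn}-\widehat{\alpha}_{\phi(k)}\|\le c_{2n}+r_n,
\end{equation*}
while for any $k'\neq k$,
\begin{equation*}
\|\hat{\beta}_{in}-\widehat{\alpha}_{\phi(k')}\|\ge \|\beta_{kn}-\beta_{k'n}\|-\|\hat{\beta}_{in}-\beta_{kn}\|-\|\beta_{k'n}-\widehat{\alpha}_{\phi(k')}\|\ge c_{1n}-c_{2n}-r_n.
\end{equation*}
Assumption~\ref{ass:theta}.3, together with the estimate $2r_n<16(M/c_1)^{1/2}K^{3/4}c_{2n}^{1/2}$, gives $2c_{2n}+2r_n\le c_{1n}$, hence $c_{2n}+r_n<c_{1n}-c_{2n}-r_n$. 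Therefore $\widehat{\alpha}_{\phi(k)}$ is the strict minimizer in the K-means assignment, so $\hat{g}_i=\phi(g_i^0)$ uniformly in $i$, almost surely, for $n$ sufficiently large. Modulo the unavoidable label permutation $\phi$ inherent to K-means (which one aligns in the usual way), this gives $\sup_{1\le i\le n}\mathbf{1}\{\hat{g}_i\neq g_i^0\}=0$ a.s.

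There is essentially no hard step left here: the substantive analytic work is already packed into Lemma~\ref{lem:kmeans1}. The only place one must be careful is in matching constants when invoking Assumption~\ref{ass:theta}.3 -- the ``$16$'' in that assumption is exactly what is needed to absorb the $2\sqrt{15}$ coming from $2r_n$, and a looser constant would force a stronger separation requirement. This careful bookkeeping of constants is the only potential obstacle.
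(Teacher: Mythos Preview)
Your proposal is correct and follows essentially the same route as the paper's proof: both invoke Lemma~\ref{lem:kmeans1} to bound the centroid error by $r_n=(15M/c_1)^{1/2}K^{3/4}c_{2n}^{1/2}$, establish a bijection between estimated and true centroids via the separation condition, and then use the triangle inequality together with Assumption~\ref{ass:theta}.3 (rewritten as $2c_{2n}+16(M/c_1)^{1/2}K^{3/4}c_{2n}^{1/2}\le c_{1n}$) to show that the nearest estimated centroid to $\hat{\beta}_{in}$ must be the one matched with $\beta_{g_i^0 n}$. The paper phrases the last step as a contradiction (assume $\hat g_i\neq g_i^0$ and derive $c_{2n}+2r_n\ge c_{1n}/2$), while you argue directly that the correct centroid is the strict minimizer; your constant bookkeeping ($2\sqrt{15}<16$) is exactly what the paper implicitly uses.
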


Theorem \ref{thm:strong} establishes that, under the given conditions, the
K-means algorithm yields perfect classification in large samples.
Intuitively, as long as the estimated vectors $\{\hat{\beta}_{in}\}_{i=1}^n$
are uniformly much closer to the true centroid $\beta_{g_i^0n}$ rather than
others, the K-means algorithm can divide each individual into the right
group. To achieve strong consistency for our SBM, we need the following
condition.

\begin{ass}
\label{ass:ratestrong} For $n$ sufficiently large, 
\begin{align*}
C^*\frac{K^{3/2}\rho _{n}\log ^{1/2}(n)}{\mu _{n}^{1/2}\sigma _{Kn}^{2}}%
\left(1+\rho_n + \left(\frac{1}{K} + \frac{\log(5)}{\log(n)}%
\right)^{1/2}\rho_n^{1/2}\right) \leq \frac{2c_1^{3/2}C_1^{-1}}{257},
\end{align*}
where $C^*$ is the absolute constant in Theorem \ref{thm:main}.
\end{ass}

\begin{cor}
\label{cor:sc} Suppose that Assumptions \ref{ass:id}--\ref{ass:rate} and \ref%
{ass:ratestrong} hold and the K-means algorithm is applied to $\hat{\beta}%
_{in}=(n/K)^{1/2}\hat{u}_{1i}$ and $\beta _{g_{i}^{0}n}=(n/K)^{1/2}\hat{O}%
_{n}u_{1i}$ Then, 
\begin{equation*}
\sup_{1\leq i\leq n}\mathbf{1}\{\hat{g}_{i}\neq g_{i}^{0}\}=0\quad a.s.
\end{equation*}
\end{cor}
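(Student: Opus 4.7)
The plan is to reduce the corollary to Theorem~\ref{thm:strong} by verifying that the pair $(\{\hat\beta_{in}\}_{i=1}^n, \{\beta_{kn}\}_{k=1}^K)$ delivered by spectral clustering meets Assumption~\ref{ass:theta} with explicit constants drawn from Theorems~\ref{thm:id} and~\ref{thm:main}. Once Assumption~\ref{ass:theta} is in place, Theorem~\ref{thm:strong} immediately yields $\sup_i \mathbf{1}\{\hat g_i \neq g_i^0\} = 0$ a.s.\ for $n$ large, which is the conclusion.

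The boundedness, separation, and closeness constants are all already in hand. The discussion following Theorem~\ref{thm:main} gave $\|\beta_{kn}\| \leq c_1^{-1/2}$, so Assumption~\ref{ass:theta}.1 holds with $M = c_1^{-1/2}$, and it also produced the almost sure bound
\[
c_{2n} := C^* \frac{\rho_n \log^{1/2}(n)}{\mu_n^{1/2}\sigma_{Kn}^2} \Bigl(1 + \rho_n + \bigl(\tfrac{1}{K} + \tfrac{\log 5}{\log n}\bigr)^{1/2}\rho_n^{1/2}\Bigr),
\]
which tends to $0$ under Assumption~\ref{ass:rate}, so $c_{2n} \leq M$ eventually. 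For the separation, pick $i,j$ with $g_i^0 = k \neq k' = g_j^0$; then $\beta_{kn} - \beta_{k'n} = (n/K)^{1/2}(z_i - z_j)^T(Z^TZ)^{-1/2}S_n\hat O_n^T$, and since $\hat O_n$ is orthogonal, Theorem~\ref{thm:id} gives
\[
\|\beta_{kn} - \beta_{k'n}\| = (n/K)^{1/2}\|(z_i - z_j)^T(Z^TZ)^{-1/2}S_n\| \geq \sqrt{2}\,C_1^{-1/2}.
\]
I therefore take $c_{1n} = \sqrt{2}\,C_1^{-1/2}$, so that $c_1 c_{1n}^2 = 2c_1/C_1$, and Assumption~\ref{ass:theta}.2 is verified.

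It remains to check the quadratic gap inequality in Assumption~\ref{ass:theta}.3, which, with $M^{1/2} = c_1^{-1/4}$, reduces to $2c_{2n}c_1^{1/2} + 16K^{3/4}c_1^{-1/4}c_{2n}^{1/2} \leq \sqrt{2c_1/C_1}$. Assumption~\ref{ass:ratestrong} is engineered for exactly this: it gives $K^{3/2}c_{2n} \leq 2c_1^{3/2}/(257\,C_1)$, hence
\[
16K^{3/4}c_1^{-1/4}c_{2n}^{1/2} \leq 16\sqrt{\tfrac{2}{257}}\,c_1^{1/2}C_1^{-1/2} = \sqrt{\tfrac{512}{257}}\,c_1^{1/2}C_1^{-1/2},
\]
which is strictly less than $\sqrt{2}\,c_1^{1/2}C_1^{-1/2}$. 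The leftover slack $(\sqrt{2} - \sqrt{512/257})\,c_1^{1/2}C_1^{-1/2}$ is a fixed positive constant, and it absorbs the lower-order term $2c_{2n}c_1^{1/2}$, which is $o(1)$ by Assumption~\ref{ass:rate}, once $n$ is sufficiently large. All three parts of Assumption~\ref{ass:theta} are thereby verified, and Theorem~\ref{thm:strong} finishes the proof. The only real subtlety is arithmetic bookkeeping: the constant $257$ in Assumption~\ref{ass:ratestrong} is chosen precisely so that $16^2 = 256$ just fits under $2\cdot 257$, leaving just enough margin to accommodate the subordinate term involving $c_{2n}$; no further probabilistic argument is needed beyond what is already encapsulated in Theorems~\ref{thm:id} and~\ref{thm:main}.
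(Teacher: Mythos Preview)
Your overall strategy is exactly the paper's: reduce to Theorem~\ref{thm:strong} by identifying $M=c_1^{-1/2}$, $c_{1n}=\sqrt{2}\,C_1^{-1/2}$, and the $c_{2n}$ from Theorem~\ref{thm:main}, and then verify Assumption~\ref{ass:theta}. Parts~1 and~2 are handled correctly.

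There is, however, a real slip in your verification of Assumption~\ref{ass:theta}.3. You assert twice that $c_{2n}\to 0$ under Assumption~\ref{ass:rate}, and you rely on this to absorb the term $2c_{2n}c_1^{1/2}$ into the fixed slack $(\sqrt{2}-\sqrt{512/257})\,c_1^{1/2}C_1^{-1/2}$. But Assumption~\ref{ass:rate} only says the relevant quantity is bounded by the small constant $10^{-8}C_1^{-1}c_1^{1/2}$ for large $n$; it does \emph{not} force it to vanish (think of $\mu_n$ exactly proportional to $\log n$ with fixed $K$, $\rho_n$, $\sigma_{Kn}$). Consequently $c_{2n}$ need only satisfy $c_{2n}\le C^*\cdot 10^{-8}C_1^{-1}c_1^{1/2}\approx 3.5\times 10^{-5}$, which is a fixed number, not $o(1)$. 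Your slack $(\sqrt{2}-\sqrt{512/257})\,c_1^{1/2}C_1^{-1/2}\approx 0.0028\,c_1^{1/2}C_1^{-1/2}$ is also fixed, but it shrinks with $C_1$; for $C_1$ large enough (roughly $C_1\gtrsim 1500$) the slack drops below $2c_{2n}c_1^{1/2}$ and the absorption fails.

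The fix is the paper's: instead of comparing $2c_{2n}c_1^{1/2}$ to the slack, compare it to the dominant term. From Assumption~\ref{ass:rate} and $C^*=3528\,C_1c_1^{-1/2}$ one gets $c_{2n}\le 3.528\times 10^{-5}$, hence $2c_{2n}c_1^{1/2}=2c_1^{1/2}c_{2n}^{1/2}\cdot c_{2n}^{1/2}\le 0.02\,c_{2n}^{1/2}$. Since $K\ge 1$ and $M=c_1^{-1/2}\ge 1$, this gives
\[
2c_{2n}c_1^{1/2}+16K^{3/4}M^{1/2}c_{2n}^{1/2}\le 16.02\,K^{3/4}M^{1/2}c_{2n}^{1/2},
\]
so that $(2c_{2n}c_1^{1/2}+16K^{3/4}M^{1/2}c_{2n}^{1/2})^2\le 257\,K^{3/2}c_1^{-1/2}c_{2n}$, and now Assumption~\ref{ass:ratestrong} delivers the bound $2c_1C_1^{-1}=c_1c_{1n}^2$ directly, uniformly in $C_1$. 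This is why the constant $257$ sits where it does: it matches $16.02^2\approx 256.64$ after the subordinate term has been folded into the coefficient, not after it has been pushed into a residual gap.
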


Corollary \ref{cor:sc} shows that the spectral-clustering-based K-means
algorithm consistently recovers the community membership for all nodes
almost surely in large samples.

\begin{ex}[\textbf{2.1 (cont.)}]
For the four-parameter model in Example 2.1, Assumption \ref{ass:rate} is
equivalent to 
\begin{equation}
\frac{(p+r)^{4}K^{8}\log (n)}{p^{4}n(p+rK)}  \label{eq:fourparameter}
\end{equation}%
being sufficiently small. If $rK/p$ is bounded, then the above display
further reduces to $K^{8}\log (n)/\left( np\right)$, which allows $%
K=O((np/\log (n))^{1/8})$. As long as $p$ decays to zero no faster than $%
\log (n)/n$, Assumption \ref{ass:rate} holds even when $K$ grows slowly to
infinity. On the other hand, if $r/p\rightarrow c\in \left( 0,\infty \right),
$ \eqref{eq:fourparameter} reduces to $K^{7}\log (n)/\left( nr\right) $. In
addition, if both $p$ and $r$ are bounded away from zero, then %
\eqref{eq:fourparameter} requires that $K^{7}\log (n)/n$ is sufficiently
small. In contrast, \cite{RCY11} find that when $K=O\left( n^{1/4}/\log
\left( n\right) \right) $ and $p$ is bounded away from $0,$ the number of
misclassified nodes from the K-means algorithm in the four-parameter SBM is
of order $o\left( K^{3}\log ^{2}\left( n\right) \right) =o\left(
n^{3/4}\right) .$
\end{ex}

\subsection{Strong consistency of the modified K-means algorithm}

\label{sec:med} It is possible to improve the rate requirement for the
number of communities in Assumption \ref{ass:ratestrong} by considering a
modified K-means algorithm: 
\begin{equation*}
\widetilde{Q}_{n}(\mathcal{A})=\frac{1}{n}\sum_{i=1}^{n}\min_{1\leq l\leq
K}\Vert \hat{\beta}_{in}-\alpha _{l}\Vert
\end{equation*}%
and $\widetilde{\mathcal{A}}_{n}=\argmin_{\mathcal{A}}\widetilde{Q}_{n}(%
\mathcal{A})$, where $||\cdot||$ still denote the Euclidean distance. Denote 
$\widetilde{\mathcal{A}}$ as $\{\widetilde{\alpha}_1,\cdots,\widetilde{\alpha%
}_K\}$. Then, we compute the estimated cluster identity as 
\begin{equation*}
\tilde{g}_{i}=\argmin_{1\leq l\leq K}\Vert \hat{\beta}_{in}-\widetilde{%
\alpha }_{l}\Vert ,
\end{equation*}%
where if there are multiple $l$'s that achieve the minimum, $\tilde{g}_{i}$
takes value of the smallest one.

\begin{ass}
\label{ass:theta2}

\begin{enumerate}
\item There exist some deterministic sequences $c_{1n}$ and $c_{2n}$ such
that $\sup_{i}\Vert \hat{\beta} _{in}-\beta _{g_{i}^{0}n}\Vert \leq c_{2n}$
a.s. and $\inf_{1\leq k<k^{\prime }\leq K}\Vert \beta _{kn}-\beta
_{k^{\prime }n}\Vert \geq c_{1n}>0$.

\item $15Kc_{2n} \leq c_1c_{1n}.$
\end{enumerate}
\end{ass}

The following two results parallel Lemma \ref{lem:kmeans1} and Theorem \ref%
{thm:strong}.

\begin{lem}
\label{lem:kmed} Suppose that Assumptions \ref{ass:nk} and \ref{ass:theta2}
hold. Then 
\begin{equation*}
H(\widetilde{\mathcal{A}}_{n},\mathcal{B} _{n}) \leq 3Kc_1^{-1}c_{2n} \quad
a.s.
\end{equation*}
\end{lem}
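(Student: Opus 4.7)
}
The plan is to reduce the problem to two steps: (i) control $\widetilde{Q}_{n}(\widetilde{\mathcal{A}}_{n})$ via the comparison $\widetilde{Q}_{n}(\widetilde{\mathcal{A}}_{n})\le \widetilde{Q}_{n}(\mathcal{B}_{n})$, and (ii) use the balanced-cluster Assumption \ref{ass:nk} to show that for each true centroid $\beta_{kn}$ there is some estimated centroid $\widetilde{\alpha}_{l}$ that is close to it, after which the separation assumption forces the nearest-neighbor matching between $\mathcal{B}_{n}$ and $\widetilde{\mathcal{A}}_{n}$ to be a bijection.

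For step (i), plug $\mathcal{B}_{n}$ into $\widetilde{Q}_{n}$ and use Assumption \ref{ass:theta2}.1 to obtain
\begin{equation*}
\widetilde{Q}_{n}(\widetilde{\mathcal{A}}_{n})\le \widetilde{Q}_{n}(\mathcal{B}_{n})\le \frac{1}{n}\sum_{i=1}^{n}\Vert \hat{\beta}_{in}-\beta _{g_{i}^{0}n}\Vert \le c_{2n}\quad a.s.
\end{equation*}
For step (ii), let $S_{k}=\{i:g_{i}^{0}=k\}$, so $|S_{k}|=n_{k}\ge c_{1}n/K$ by Assumption \ref{ass:nk}. Summing the non-negative contributions to $n\widetilde{Q}_{n}(\widetilde{\mathcal{A}}_{n})$ over $S_{k}$, I get $\sum_{i\in S_{k}}\Vert \hat{\beta}_{in}-\widetilde{\alpha}_{\tilde{g}_{i}}\Vert \le nc_{2n}$, so by pigeonhole there exists $i_{k}\in S_{k}$ with $\Vert \hat{\beta}_{i_{k}n}-\widetilde{\alpha}_{\tilde{g}_{i_{k}}}\Vert \le nc_{2n}/n_{k}\le Kc_{2n}/c_{1}$. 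Combining with $\Vert \hat{\beta}_{i_{k}n}-\beta_{kn}\Vert \le c_{2n}$ via the triangle inequality gives a pointwise bound
\begin{equation*}
\delta_{n}:=\max_{1\le k\le K}\min_{1\le l\le K}\Vert \beta_{kn}-\widetilde{\alpha}_{l}\Vert \le c_{2n}+Kc_{2n}/c_{1}\le 2Kc_{2n}/c_{1},
\end{equation*}
where the last inequality uses $c_{1}\le 1\le K$ (which follows from the definition of $c_{1}$ in Assumption \ref{ass:nk}).

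The remaining step is to upgrade this one-sided closeness to a Hausdorff bound. Define $\sigma(k)\in \argmin_{l}\Vert \beta_{kn}-\widetilde{\alpha}_{l}\Vert$. I claim $\sigma$ is a bijection on $\{1,\ldots ,K\}$: if $\sigma(k)=\sigma(k^{\prime})=l$ for $k\ne k^{\prime}$, then the triangle inequality together with Assumption \ref{ass:theta2}.1 yields
\begin{equation*}
c_{1n}\le \Vert \beta_{kn}-\beta_{k^{\prime}n}\Vert \le 2\delta_{n}\le 4Kc_{2n}/c_{1}\le \tfrac{4}{15}c_{1n},
\end{equation*}
using Assumption \ref{ass:theta2}.2 in the last step, a contradiction. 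Once $\sigma$ is bijective, every $\widetilde{\alpha}_{l}=\widetilde{\alpha}_{\sigma(\sigma^{-1}(l))}$ lies within $\delta_{n}$ of $\beta_{\sigma^{-1}(l)n}\in \mathcal{B}_{n}$, so both $\sup_{l}\inf_{k}\Vert \widetilde{\alpha}_{l}-\beta_{kn}\Vert$ and $\sup_{k}\inf_{l}\Vert \beta_{kn}-\widetilde{\alpha}_{l}\Vert$ are bounded by $\delta_{n}\le 2Kc_{2n}/c_{1}\le 3Kc_{2n}/c_{1}$, giving the claim.

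The main obstacle is the bijectivity argument, since nothing a priori prevents the modified K-means from collapsing several estimated centroids near a single true centroid and ignoring others; it is precisely the quantitative separation $15Kc_{2n}\le c_{1}c_{1n}$ in Assumption \ref{ass:theta2}.2 that rules this out by making the one-sided error $\delta_{n}$ strictly smaller than half the minimum gap $c_{1n}$. The rest of the argument, including the pigeonhole step, is essentially driven by the linear (rather than quadratic) penalty in $\widetilde{Q}_{n}$ and the balanced-cluster bound $n_{k}\ge c_{1}n/K$.
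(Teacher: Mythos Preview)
Your proof is correct and more direct than the paper's. The paper introduces the population objective $Q_{n}(\mathcal{A})=\sum_{k}\pi_{kn}\min_{l}\Vert\beta_{kn}-\alpha_{l}\Vert$, establishes the uniform bound $\sup_{\mathcal{A}}|\widetilde{Q}_{n}(\mathcal{A})-Q_{n}(\mathcal{A})|\le c_{2n}$, and then proves a separation lower bound $\inf_{\mathcal{A}:H(\mathcal{A},\mathcal{B}_{n})>\eta}Q_{n}(\mathcal{A})\ge c_{1}(c_{1n}\wedge\eta)/K$ via a case split (either two $\beta_{kn}$'s share a nearest $\alpha_{l}$, forcing $Q_{n}\ge c_{1}c_{1n}/K$, or the nearest-neighbor map is a bijection, forcing $Q_{n}\ge c_{1}\eta/K$); the conclusion is then obtained by comparing $Q_{n}(\widetilde{\mathcal{A}}_{n})\le 2c_{2n}$ against this lower bound with $\eta=3Kc_{2n}/c_{1}$. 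You bypass the population objective entirely: from $\widetilde{Q}_{n}(\widetilde{\mathcal{A}}_{n})\le c_{2n}$ you extract, by pigeonhole within each cluster, a single witness $i_{k}$ that pins $\beta_{kn}$ near some $\widetilde{\alpha}_{l}$, and then argue bijectivity directly. Your route is shorter and even yields the slightly sharper one-sided bound $\delta_{n}\le 2Kc_{2n}/c_{1}$; the paper's route has the advantage of being a transparent M-estimation template (uniform convergence plus well-separated minimum) that mirrors the argument for Lemma~\ref{lem:kmeans1}.
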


\begin{thm}
\label{thm:strong2} Suppose that Assumptions \ref{ass:nk} and \ref%
{ass:theta2} hold. Then for sufficiently large $n$, we have 
\begin{equation*}
\sup_{1\leq i\leq n}\mathbf{1}\{\tilde{g}_{i}\neq g_{i}^{0}\}=0\quad a.s.
\end{equation*}
\end{thm}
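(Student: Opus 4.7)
\textbf{Proof proposal for Theorem \ref{thm:strong2}.} The plan is to combine Lemma \ref{lem:kmed} with a bijective matching argument between the estimated centroid set $\widetilde{\mathcal{A}}_n = \{\widetilde{\alpha}_1,\ldots,\widetilde{\alpha}_K\}$ and the true centroid set $\mathcal{B}_n=\{\beta_{1n},\ldots,\beta_{Kn}\}$. By Lemma \ref{lem:kmed} and Assumptions \ref{ass:nk} and \ref{ass:theta2}, we have $H(\widetilde{\mathcal{A}}_n,\mathcal{B}_n)\leq \delta_n$ almost surely, where $\delta_n := 3K c_1^{-1}c_{2n}$. Thus for every $k$ there is some index $\sigma(k)\in\{1,\ldots,K\}$ with $\|\widetilde{\alpha}_{\sigma(k)}-\beta_{kn}\|\leq \delta_n$.

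The first step is to verify that the matching $\sigma$ is a bijection on $\{1,\ldots,K\}$. If $\sigma(k_1)=\sigma(k_2)$ for some $k_1\neq k_2$, the triangle inequality would yield $\|\beta_{k_1n}-\beta_{k_2n}\|\leq 2\delta_n=6Kc_1^{-1}c_{2n}$, which contradicts the separation $\|\beta_{k_1n}-\beta_{k_2n}\|\geq c_{1n}$ from Assumption \ref{ass:theta2}.1 once we invoke Assumption \ref{ass:theta2}.2 (since $15Kc_{2n}\leq c_1 c_{1n}$ implies $c_{1n}>2\delta_n$). Hence $\sigma$ is a bijection, and its inverse $\sigma^{-1}$ provides the label relabeling under which we will identify $\widetilde{g}_i$ with $g_i^0$.

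The second step is to show that for every node $i$ with $g_i^0=k$, the modified K-means rule assigns $\widetilde{g}_i=\sigma(k)$. On the one hand, by the triangle inequality and Assumption \ref{ass:theta2}.1,
\[
\|\widehat{\beta}_{in}-\widetilde{\alpha}_{\sigma(k)}\|\leq \|\widehat{\beta}_{in}-\beta_{kn}\|+\|\beta_{kn}-\widetilde{\alpha}_{\sigma(k)}\|\leq c_{2n}+\delta_n.
\]
On the other hand, for any $l\neq \sigma(k)$, setting $k'=\sigma^{-1}(l)\neq k$ and applying the triangle inequality twice,
\[
\|\widehat{\beta}_{in}-\widetilde{\alpha}_{l}\|\geq \|\beta_{kn}-\beta_{k'n}\|-\|\beta_{k'n}-\widetilde{\alpha}_{l}\|-\|\widehat{\beta}_{in}-\beta_{kn}\|\geq c_{1n}-\delta_n-c_{2n}.
\]
Strict inequality $c_{1n}-\delta_n-c_{2n}>c_{2n}+\delta_n$, equivalent to $c_{1n}>2c_{2n}(1+3Kc_1^{-1})$, follows directly from $15Kc_{2n}\leq c_1 c_{1n}$ using $c_1\leq 1$ and $K\geq 1$. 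Therefore $\widetilde{g}_i=\sigma(g_i^0)$ for every $i$ almost surely, which is the desired conclusion up to the permutation $\sigma$ inherent in the K-means labeling.

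The only substantive analytical content in this argument is the Hausdorff bound of Lemma \ref{lem:kmed}, which is available by hypothesis; given that, the remainder is a clean matching-plus-triangle-inequality argument, and the main thing to watch is ensuring that the gap condition in Assumption \ref{ass:theta2}.2 is tight enough to provide both bijectivity of $\sigma$ (margin $c_{1n}>2\delta_n$) and strict separation of the correct centroid from incorrect ones (margin $c_{1n}>2c_{2n}+2\delta_n$); the factor $15$ in Assumption \ref{ass:theta2}.2 is precisely what guarantees both simultaneously.
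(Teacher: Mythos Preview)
Your proposal is correct and follows essentially the same route as the paper: invoke Lemma \ref{lem:kmed} to get $H(\widetilde{\mathcal{A}}_n,\mathcal{B}_n)\leq 3Kc_1^{-1}c_{2n}$, use the separation $c_{1n}$ to establish a bijection between estimated and true centroids, and finish with triangle inequalities together with Assumption \ref{ass:theta2}.2. The paper phrases the final step as a contradiction (if $\tilde g_i\neq g_i^0$ then $c_{2n}+2\tilde R_n\geq c_{1n}/2$), whereas you compare the distance to the correct centroid directly against every wrong one; your version is in fact slightly sharper (it needs $c_{1n}>2c_{2n}+2\delta_n$ rather than $c_{1n}>2c_{2n}+4\delta_n$), but both are covered by $15Kc_{2n}\leq c_1c_{1n}$. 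One small point: you use $c_1\leq 1$ without comment; this does hold under Assumption \ref{ass:nk} since $\sum_k n_kK/n=K$ forces $\inf_k n_kK/n\leq 1$, but it is worth stating.
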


In order to apply the modified K-means algorithm in spectral clustering, we
only need to verify conditions in Assumption \ref{ass:theta2}.

\begin{ass}
\label{ass:ratestrong2} Suppose there exists some constant $c^*$ such that,
for $n$ sufficiently large, 
\begin{align*}
15C^*\frac{K\rho _{n}\log ^{1/2}(n)}{\mu _{n}^{1/2}\sigma _{Kn}^{2}}%
\left(1+\rho_n + \left(\frac{1}{K} + \frac{\log(5)}{\log(n)}%
\right)^{1/2}\rho_n^{1/2}\right)\leq c_1C_1^{-1/2}\sqrt{2},
\end{align*}
where $C^*$ is the absolute constant in Theorem \ref{thm:main}.
\end{ass}

\begin{cor}
\label{cor:sc2} Suppose that Assumptions \ref{ass:id}--\ref{ass:rate} and %
\ref{ass:ratestrong2} hold and the K-means algorithm is applied to $\hat{%
\beta}_{in}=(n/K)^{1/2}\hat{u}_{1i}$ and $\beta _{g_{i}^{0}n}=(n/K)^{1/2}%
\hat{O}_{n}u_{1i}$ Then, 
\begin{equation*}
\sup_{1\leq i\leq n}\mathbf{1}\{\tilde{g}_{i}\neq g_{i}^{0}\}=0\quad a.s.
\end{equation*}
\end{cor}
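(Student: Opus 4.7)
The plan is to reduce the corollary directly to Theorem~\ref{thm:strong2} by exhibiting explicit sequences $c_{1n}$ and $c_{2n}$ for which Assumption~\ref{ass:theta2} holds almost surely along every sample path. Assumption~\ref{ass:nk} is already part of the maintained hypotheses (it sits inside the range Assumptions~\ref{ass:id}--\ref{ass:rate}), so the only thing that really requires verification is Assumption~\ref{ass:theta2}.

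First I would set $c_{2n}$ equal to the uniform bound supplied by Theorem~\ref{thm:main}, namely
\[
c_{2n}=C^{*}\,\frac{\rho_{n}\log^{1/2}(n)}{\mu_{n}^{1/2}\sigma_{Kn}^{2}}\left(1+\rho_{n}+\Bigl(\tfrac{1}{K}+\tfrac{\log(5)}{\log(n)}\Bigr)^{1/2}\rho_{n}^{1/2}\right),
\]
so that $\sup_{i}\Vert\hat{\beta}_{in}-\beta_{g_{i}^{0}n}\Vert\leq c_{2n}$ almost surely with $\hat{\beta}_{in}=(n/K)^{1/2}\hat{u}_{1i}$ and $\beta_{g_{i}^{0}n}=(n/K)^{1/2}\hat{O}_{n}u_{1i}$. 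For the separation I would take the deterministic sequence $c_{1n}=C_{1}^{-1/2}\sqrt{2}$. To justify this, recall from the discussion following Theorem~\ref{thm:main} that $\beta_{kn}=(K\pi_{kn})^{-1/2}[S_{n}\hat{O}_{n}^{T}]_{k\cdot}$; for any two distinct groups $k\neq k'$, choosing nodes $i,j$ with $g_{i}^{0}=k$ and $g_{j}^{0}=k'$ yields
\[
\Vert\beta_{kn}-\beta_{k'n}\Vert=(n/K)^{1/2}\Vert(z_{i}^{T}-z_{j}^{T})(Z^{T}Z)^{-1/2}S_{n}\hat{O}_{n}^{T}\Vert=(n/K)^{1/2}\Vert(z_{i}^{T}-z_{j}^{T})(Z^{T}Z)^{-1/2}S_{n}\Vert
\]
by orthogonal invariance of the Euclidean norm, and the second inequality of Theorem~\ref{thm:id} then delivers the lower bound $C_{1}^{-1/2}\sqrt{2}$ uniformly in $k\neq k'$ for all $n$ large enough.

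The remaining condition in Assumption~\ref{ass:theta2}, namely $15Kc_{2n}\leq c_{1}c_{1n}$, then reads
\[
15\,C^{*}\,\frac{K\rho_{n}\log^{1/2}(n)}{\mu_{n}^{1/2}\sigma_{Kn}^{2}}\left(1+\rho_{n}+\Bigl(\tfrac{1}{K}+\tfrac{\log(5)}{\log(n)}\Bigr)^{1/2}\rho_{n}^{1/2}\right)\leq c_{1}C_{1}^{-1/2}\sqrt{2},
\]
which is exactly Assumption~\ref{ass:ratestrong2}. With both parts of Assumption~\ref{ass:theta2} now in force, Theorem~\ref{thm:strong2} gives $\sup_{1\leq i\leq n}\mathbf{1}\{\tilde{g}_{i}\neq g_{i}^{0}\}=0$ almost surely for $n$ sufficiently large, which is the claim of the corollary.

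I do not anticipate any substantive obstacle: the entire argument is a bookkeeping exercise that chains Theorem~\ref{thm:id} (population-centroid separation) with Theorem~\ref{thm:main} (almost-sure entrywise error bound on the estimated eigenvectors) and Theorem~\ref{thm:strong2} (strong consistency of the modified K-means algorithm). The only mild subtlety worth flagging is that the population centroids $\beta_{g_{i}^{0}n}$ depend on the random rotation $\hat{O}_{n}$; this causes no trouble because the bounds supplied by Theorems~\ref{thm:id} and~\ref{thm:main} are almost sure and the Euclidean norm is invariant under right-multiplication by $\hat{O}_{n}^{T}$, so the verification of Assumption~\ref{ass:theta2} goes through pathwise.
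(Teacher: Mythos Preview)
Your proposal is correct and follows exactly the same route as the paper's proof: set $c_{1n}=C_{1}^{-1/2}\sqrt{2}$ via Theorem~\ref{thm:id}, set $c_{2n}$ equal to the uniform bound of Theorem~\ref{thm:main}, observe that Assumption~\ref{ass:ratestrong2} is precisely the inequality $15Kc_{2n}\leq c_{1}c_{1n}$, and then invoke Theorem~\ref{thm:strong2}. Your additional remarks on the orthogonal invariance of $\hat{O}_{n}$ make explicit a point the paper leaves implicit, but otherwise the arguments coincide.
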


Corollary \ref{cor:sc2} implies that the community memberships estimated by
the modified K-means can recover the truth. Assumption \ref{ass:ratestrong2}
implies a weaker requirement on the rate of $K$ than Assumption \ref%
{ass:ratestrong}, as the exponent for $K$ is reduced from 1.5 in Assumption %
\ref{ass:ratestrong} to 1 in Assumption \ref{ass:ratestrong2}. To derive the
optimal rate for $K$ may be much more difficult. We leave it as one topic
for future research. We investigate the performance of the 
K-means algorithm in Section \ref{sec:sim}.

Like spectral clustering, semidefinite programming (SDP) has also become
very popular in the community detection literature. Numerically, SDP
relaxation enjoys the computational feasibility that spectral clustering
has, and various efficient algorithms have been proposed to solve different
types of SDP. Theoretically, under the ordinary SBM, SDP methods have been
shown to be capable in detecting communities; see, \cite{ABH16}, \cite{A14}, 
\cite{BBV16}, \cite{c12}, \cite{c14}, \cite{CL15}, \cite{HWX16a}, and \cite%
{HWX16b}, among others, and \cite{LCX18} for an excellent survey. In
particular, \cite{ABH16} propose an efficient SDP\ algorithm to solve a
standard SBM with two communities, and show that it succeeds in recovering
the true communities with high probability when certain threshold conditions
are satisfied; \cite{CL15} propose a new SDP-based convex optimization
method for a generalized SBM and show that a SDP relaxation followed by a
K-means clustering can accurately detect the communities with small
misclassification rate and the method is both computationally fast and
robust to different kinds of outliers. In contrast, \cite{CL15} and \cite%
{JY16} show that the standard spectral clustering applied to the graph
Laplacian may not work due to the existence of small and weak clusters. The
possible presence of weak clusters in SBMs motivates the use of
regularization to be studied in the following section.

\section{Extensions}

\label{sec:ext} In this section we consider two extensions of the above
results: regularized spectral clustering of the standard and
degree-corrected SBMs.

\subsection{Regularized spectral clustering analysis for standard SBMs}

\label{sec:ext1} The SBM is the same as considered in the previous section.
Following \cite{ACBL13} and \cite{JY16}, we regularize the adjacency matrix $%
A$ to be $A_{\tau }=A+\tau n^{-1}\iota _{n}\iota _{n}^{T},$ where $\tau \leq
n$ is the regularization parameter and $\iota _{n}$ is the $n\times 1$
vector of ones. Given the regularized adjacency matrix, we can compute the
regularized degree for each node as $\hat{d}_{i}^{\tau }=\hat{d}_{i}+\tau $
and $D_{\tau }=\text{diag}(\hat{d}_{1}+\tau ,\ldots ,\hat{d}_{n}+\tau )$.
The regularized version of $P$ and $\mathcal{D}$ are denoted as $P_{\tau }$
and $\mathcal{D}_{\tau }$ and defined as 
\begin{equation*}
P_{\tau }=P+\tau n^{-1}\iota _{n}\iota _{n}^{T}\quad \text{and}\quad 
\mathcal{D}_{\tau }=\text{diag}(d_{1}+\tau ,\ldots ,d_{n}+\tau ),
\end{equation*}%
respectively. Consequently, the regularized graph Laplacian and its
population counterpart are denoted as $L_{\tau }$ and $\mathcal{L}_{\tau }$
and written as 
\begin{equation*}
L_{\tau }=D_{\tau }^{-1/2}A_{\tau }D_{\tau }^{-1/2}\quad \text{and}\quad 
\mathcal{L}_{\tau }=\mathcal{D}_{\tau }^{-1/2}P_{\tau }\mathcal{D}_{\tau
}^{-1/2},
\end{equation*}%
respectively. Noting that $\iota _{n}=Z\iota _{K},$ we have 
\begin{equation*}
P_{\tau }=P+\tau n^{-1}\iota _{n}\iota _{n}^{T}=ZBZ^{T}+\tau n^{-1}Z\iota
_{K}\iota _{K}^{T}Z^{T}=ZB^{\tau }Z^{T},
\end{equation*}%
where $B^{\tau }=B+\tau n^{-1}\iota _{K}\iota _{K}^{T}$. Apparently, the
block model structure is preserved after regularization. Given $B^{\tau }$,
we can define $B_{0}^{\tau }$, the normalized version of $B^{\tau }$ as in
the previous section. Let $W_{k}^{\tau }=[B^{\tau }]_{k\cdot }Z^{T}\iota
_{n}/n=\sum_{l=1}^{K}[B^{\tau }]_{{kl}}\pi _{ln}$, $\mathcal{D}_{B}^{\tau }=%
\text{diag}(W_{1}^{\tau },\ldots ,W_{K}^{\tau })$, and $B_{0}^{\tau }=(%
\mathcal{D}_{B}^{\tau })^{-1/2}B^{\tau }(\mathcal{D}_{B}^{\tau })^{-1/2}$.

In order to follow the identification analysis in the previous section, we
need to modify Assumption \ref{ass:id} as follows.

\begin{ass}
\label{ass:id2} Suppose $B_{0}^{\tau }$ has rank $K$ and the spectral
decomposition of $\Pi _{n}^{1/2}B_{0}^{\tau }\Pi _{n}^{1/2}$ is $S_{n}^{\tau
}\Omega _{n}^{\tau }(S_{n}^{\tau })^{T}$, in which $S_{n}^{\tau } $ is a $%
K\times K$ matrix such that $(S_{n}^{\tau })^{T}S_{n}^{\tau }=I_{K} $ and $%
\Omega _{n}^{\tau }=\text{diag} (\omega _{1n}^{\tau },\ldots ,\omega
_{Kn}^{\tau })$ such that $|\omega _{1n}^{\tau }|\geq \cdots \geq |\omega
_{Kn}^{\tau }|>0$.
\end{ass}

We consider the eigenvalue decomposition of $\mathcal{L}_{\tau }$ as 
\begin{equation*}
\mathcal{L}_{\tau }=U_{n}^{\tau }\Sigma _{n}^{\tau }(U_{n}^{\tau
})^{T}=U_{1n}^{\tau }\Sigma _{1n}^{\tau }(U_{1n}^{\tau })^{T}
\end{equation*}%
where $\Sigma _{n}^{\tau }=\text{diag}(\sigma _{1n}^{\tau },\ldots ,\sigma
_{Kn}^{\tau },0,\ldots ,0)$ is an $n \times n$ matrix that contains the
eigenvalues of $\mathcal{L}_{\tau }$ such that $|\sigma _{1n}^{\tau }|\geq
|\sigma _{2n}^{\tau }|\geq \cdots \geq |\sigma _{Kn}^{\tau }|>0$, $\Sigma
_{1n}^{\tau }=\text{diag}(\sigma _{1n}^{\tau },\ldots ,\sigma _{Kn}^{\tau })$%
, the columns of $U_{n}^{\tau }$ contain the eigenvectors of $\mathcal{L}%
_{\tau }$ associated with the eigenvalues in $\Sigma _{n}^{\tau }$, $%
U_{n}^{\tau }=(U_{1n}^{\tau },U_{2n}^{\tau })$, and $(U_{n}^{\tau
})^{T}U_{n}^{\tau }=I_{n}$.

The following theorem parallels Theorem \ref{thm:id} in Section \ref{sec:id}.

\begin{thm}
\label{thm:id2}If Assumptions \ref{ass:nk} and \ref{ass:id2} hold, then $%
\Omega _{n}^{\tau }=\Sigma _{n}^{\tau }$, $U_{1n}^{\tau
}=Z(Z^{T}Z)^{-1/2}S_{n}^{\tau }$ and 
\begin{equation*}
\sup_{1\leq i\leq n}(n/K)^{1/2}\Vert z_{i}^{T}(Z^{T}Z)^{-1/2}S_{n}^{\tau
}\Vert \leq c_1^{-1/2}.
\end{equation*}
In addition, there exists a constant $c$ independent of $n$ such that if $%
z_{i}\neq z_{j}$, 
\begin{equation*}
(n/K)^{1/2}\Vert (z_{i}^{T}-z_{j}^{T})(Z^{T}Z)^{-1/2}S_{n}^{\tau }\Vert \geq
C_1^{-1/2}\sqrt{2}>0.
\end{equation*}
\end{thm}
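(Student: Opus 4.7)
The plan is to run the proof of Theorem \ref{thm:id} almost verbatim, exploiting the fact that regularization preserves the block structure: since $\iota_n = Z\iota_K$, we have $P_\tau = P + \tau n^{-1}\iota_n\iota_n^T = Z B^\tau Z^T$ with $B^\tau = B + \tau n^{-1}\iota_K\iota_K^T$. The entire argument reduces to showing $\mathcal{L}_\tau = Z(Z^TZ)^{-1/2}[\Pi_n^{1/2} B_0^\tau \Pi_n^{1/2}](Z^TZ)^{-1/2}Z^T$, after which Assumption \ref{ass:id2} supplies the spectral decomposition and everything else is bookkeeping.

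First I would check that $d_i^\tau = n W_{k_i^0}^\tau$, which is straightforward: $d_i^\tau = d_i + \tau = \sum_l n_l B_{k_i^0 l} + \tau = n\sum_l \pi_{ln}(B_{k_i^0 l} + \tau/n) = n W_{k_i^0}^\tau$. This implies the matrix identity $\mathcal{D}_\tau^{-1/2} Z = n^{-1/2} Z (\mathcal{D}_B^\tau)^{-1/2}$, because the $(i,k)$ entry on both sides equals $n^{-1/2}(W_{k_i^0}^\tau)^{-1/2}\mathbf{1}\{k = k_i^0\}$. Substituting into $\mathcal{L}_\tau = \mathcal{D}_\tau^{-1/2} P_\tau \mathcal{D}_\tau^{-1/2}$ and using $P_\tau = Z B^\tau Z^T$ yields $\mathcal{L}_\tau = n^{-1} Z B_0^\tau Z^T$. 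Since $Z(Z^TZ)^{-1/2}$ has orthonormal columns and $(Z^TZ)^{-1/2} = n^{-1/2}\Pi_n^{-1/2}$, I rewrite this as $\mathcal{L}_\tau = Z(Z^TZ)^{-1/2}[\Pi_n^{1/2} B_0^\tau \Pi_n^{1/2}](Z^TZ)^{-1/2} Z^T$. Plugging in Assumption \ref{ass:id2}'s spectral decomposition $\Pi_n^{1/2} B_0^\tau \Pi_n^{1/2} = S_n^\tau \Omega_n^\tau (S_n^\tau)^T$ identifies the $K$ nonzero eigenvalues of $\mathcal{L}_\tau$ with $\Omega_n^\tau$ and the associated eigenvector matrix with $U_{1n}^\tau = Z(Z^TZ)^{-1/2} S_n^\tau$, because $Z(Z^TZ)^{-1/2} S_n^\tau$ inherits orthonormal columns from the composition of two matrices with orthonormal columns.

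For the two norm bounds, I use that $(Z^TZ)^{-1/2} z_i = n_{k_i^0}^{-1/2} e_{k_i^0}$, so the $i$-th row of $U_{1n}^\tau$ equals $n_{k_i^0}^{-1/2}[S_n^\tau]_{k_i^0 \cdot}$. Because $S_n^\tau$ is a $K\times K$ orthogonal matrix, its rows are also unit-norm, so $(n/K)^{1/2}\|z_i^T(Z^TZ)^{-1/2} S_n^\tau\| = (n/(K n_{k_i^0}))^{1/2} \leq c_1^{-1/2}$ by the lower bound of Assumption \ref{ass:nk}. For the separation statement with $k_i^0 \neq k_j^0$, the rows $[S_n^\tau]_{k_i^0 \cdot}$ and $[S_n^\tau]_{k_j^0 \cdot}$ are orthogonal unit vectors, giving
\begin{equation*}
(n/K)\|(z_i - z_j)^T(Z^TZ)^{-1/2} S_n^\tau\|^2 = (n/K)(n_{k_i^0}^{-1} + n_{k_j^0}^{-1}) \geq 2/C_1,
\end{equation*}
by the upper bound of Assumption \ref{ass:nk}, yielding $C_1^{-1/2}\sqrt{2}$.

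There is no real obstacle here: once one notices that $P_\tau$ has the same low-rank block form as $P$ with $B$ replaced by $B^\tau$, and that $d_i$ is correspondingly replaced by $d_i + \tau$ inside a community, every computation in the proof of Theorem \ref{thm:id} goes through with superscripts ``$\tau$'' attached. The only point to double-check is that the identity $d_i^\tau = n W_{k_i^0}^\tau$ still holds in the regularized setting, which it does because the regularizing rank-one perturbation $\tau n^{-1}\iota_n\iota_n^T$ contributes $\tau/n$ to every entry and hence $\tau$ to every row sum.
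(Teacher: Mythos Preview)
Your proposal is correct and follows exactly the same approach as the paper: the paper's proof is a single sentence noting that $\mathcal{L}_\tau = n^{-1} Z B_0^\tau Z^T$ and then deferring to the proof of Theorem~\ref{thm:id} with $B_0$ and $S_n$ replaced by $B_0^\tau$ and $S_n^\tau$. You have in fact supplied more detail than the paper does, explicitly verifying the identity $d_i^\tau = n W_{k_i^0}^\tau$ and the resulting factorization of $\mathcal{L}_\tau$, but the route is identical.
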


Since $\mathcal{L}_{\tau }=n^{-1}ZB_{0}^{\tau }Z$, the proof of Theorem \ref%
{thm:id2} is exactly the same as that of Theorem \ref{thm:id} with obvious
modifications. Theorem \ref{thm:id2} indicates that we can infer each node's
community membership based on the eigenvector matrix $U_{1n}^{\tau }$ if $%
\mathcal{L}_{\tau }$ is observed.

As before, we consider the spectral decomposition of $L_{\tau }:$ 
\begin{equation*}
L_{\tau }=\hat{U}_{n}^{\tau }\hat{\Sigma}_{n}^{\tau }(\hat{U}_{n}^{\tau
})^{T}=\hat{U}_{1n}^{\tau }\hat{\Sigma}_{1n}^{\tau }(\hat{U}_{1n}^{\tau
})^{T}+\hat{U}_{2n}^{\tau }\hat{\Sigma}_{2n}^{\tau }(\hat{U}_{2n}^{\tau
})^{T}.
\end{equation*}%
where $\hat{\Sigma}_{n}^{\tau }=\text{diag}(\hat{\sigma}_{1n}^{\tau },\ldots
,\hat{\sigma}_{nn}^{\tau })=\text{diag}(\hat{\Sigma}_{1n}^{\tau },\hat{\Sigma%
}_{2n}^{\tau })$ with $|\hat{\sigma}_{1n}^{\tau }|\geq |\hat{\sigma}%
_{2n}^{\tau }|\geq \cdots \geq |\hat{\sigma}_{nn}^{\tau }|\geq 0$, $\hat{%
\Sigma}_{1n}^{\tau }=\text{diag}(\hat{\sigma}_{1n}^{\tau },\ldots ,\hat{%
\sigma}_{Kn}^{\tau })$, and $\hat{\Sigma}_{2n}^{\tau }=\text{diag}(\hat{%
\sigma}_{K+1,n}^{\tau },\ldots ,\hat{\sigma}_{nn}^{\tau })$; $\hat{U}%
_{n}^{\tau }=(\hat{U}_{1n}^{\tau },\hat{U}_{2n}^{\tau })$ is the
corresponding eigenvectors such that $(\hat{U}_{1n}^{\tau })^{T}\hat{U}%
_{1n}=I_{K}$ and $\hat{U}_{2n}^{T}\hat{U}_{1n}=0$. Note that $\hat{U}%
_{1n}^{\tau }$ contains the eigenvectors associated with eigenvalues $\hat{%
\sigma}_{1n}^{\tau },\ldots ,\hat{\sigma}_{Kn}^{\tau }$. To study the
asymptotic properties of $\hat{U}_{1n}^{\tau }$, we modify Assumption \ref%
{ass:rate} as follows.

\begin{ass}
\label{ass:rate2} Denote $\mu _{n}^{\tau }=\min_{i}d_{i}+\tau $ and $\rho
_{n}^{\tau }=\max (\sup_{k_{1}k_{2}}[B_{0}^{\tau }]_{k_{1}k_{2}},1)$. Then,
for $n$ sufficiently large, 
\begin{align*}
\frac{K\rho^\tau _{n}\log ^{1/2}(n)}{(\mu ^\tau_{n})^{1/2}(\sigma^\tau
_{Kn})^{2}}\left(1+\rho^\tau_n + \left(\frac{1}{K} + \frac{\log(5)}{\log(n)}%
\right)^{1/2}(\rho^\tau_n)^{1/2}\right) \leq 10^{-8}C_1^{-1}c_1^{1/2}.
\end{align*}
\end{ass}

The above modification is natural because node $i$'s degree becomes $%
d_{i}^{\tau }\equiv d_{i}+\tau $ after regularization. $\mu _{n}^{\tau }$
can be interpreted as the effective minimum expected degree after
regularization.

Let $(u_{1i}^{\tau })^{T}$ and $(\hat{u}_{1i}^{\tau })^{T}$ be the $i$-th
row of $U_{1n}^{\tau }$ and $\hat{U}_{1n}^{\tau }$, respectively.

\begin{thm}
\label{thm:regularization} Suppose that Assumptions \ref{ass:nk}, \ref%
{ass:id2}, and \ref{ass:rate2} hold. Then there exists a $K\times K$
orthonormal matrix $\hat{O}_{n}^{\tau }$ such that 
\begin{equation*}
\sup_{1\leq i\leq n}\sqrt{n/K}\Vert (\hat{u}_{1i}^{\tau })^{T}\hat{O}%
_{n}^{\tau }-(u_{1i}^{\tau })^{T}\Vert \leq C^{\ast }\frac{\rho _{n}^{\tau
}\log ^{1/2}(n)}{(\mu _{n}^{\tau })^{1/2}(\sigma _{Kn}^{\tau })^{2}}\left(
1+\rho _{n}^{\tau }+\left( \frac{1}{K}+\frac{\log (5)}{\log (n)}\right)
^{1/2}(\rho _{n}^{\tau })^{1/2}\right) \quad a.s.,
\end{equation*}%
where $C^{\ast }$ is the same absolute constant defined in Theorem \ref%
{thm:main}.
\end{thm}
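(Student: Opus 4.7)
The plan is to mirror the proof of Theorem~\ref{thm:main}, replacing every object by its $\tau$-regularized counterpart. The crucial observation is that $A_\tau - P_\tau = A - P$, so the random noise matrix is \emph{unchanged} by regularization---only the deterministic normalization and population objects shift. Consequently the three-step architecture used in the unregularized case transfers essentially verbatim, with $\mu_n$, $\rho_n$, and $\sigma_{Kn}$ replaced throughout by $\mu_n^\tau$, $\rho_n^\tau$, and $\sigma_{Kn}^\tau$, and the same absolute constant $C^\ast$ carried through.

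First I would establish the regularized analog of Lemma~\ref{lem:dk}. Decomposing
\begin{equation*}
L_\tau - \mathcal{L}_\tau = D_\tau^{-1/2}(A-P) D_\tau^{-1/2} + \bigl(D_\tau^{-1/2} - \mathcal{D}_\tau^{-1/2}\bigr) P_\tau D_\tau^{-1/2} + \mathcal{D}_\tau^{-1/2} P_\tau \bigl(D_\tau^{-1/2} - \mathcal{D}_\tau^{-1/2}\bigr),
\end{equation*}
the first term is controlled by the Bandeira--van Handel spectral bound on $A - P$, while the remaining two are controlled by multiplicative Chernoff bounds on each $\hat d_i^\tau / d_i^\tau$, using $d_i^\tau \geq \mu_n^\tau$. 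This yields $\|L_\tau - \mathcal{L}_\tau\| \leq 7 \log^{1/2}(n)(\mu_n^\tau)^{-1/2}$ almost surely. Davis--Kahan, applied with $\hat O_n^\tau = \bar U \bar V^T$ from the SVD of $(\hat U_{1n}^\tau)^T U_{1n}^\tau$, then delivers the $L^2$ bound $\|\hat U_{1n}^\tau \hat O_n^\tau - U_{1n}^\tau\| \leq 10 \log^{1/2}(n)(\mu_n^\tau)^{-1/2}|\sigma_{Kn}^\tau|^{-1}$ almost surely.

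For the row-wise ($\|\cdot\|_{2\to\infty}$) refinement I would exploit the eigenequation $L_\tau \hat U_{1n}^\tau = \hat U_{1n}^\tau \hat\Sigma_{1n}^\tau$, rewriting it as a fixed-point identity of schematic form
\begin{equation*}
\hat U_{1n}^\tau \hat O_n^\tau - U_{1n}^\tau = (L_\tau - \mathcal{L}_\tau) U_{1n}^\tau \bigl(\hat\Sigma_{1n}^\tau\bigr)^{-1}\hat O_n^\tau + \mathcal{L}_\tau\bigl(\hat U_{1n}^\tau \hat O_n^\tau - U_{1n}^\tau\bigr)\bigl(\hat\Sigma_{1n}^\tau\bigr)^{-1}\hat O_n^\tau + \text{l.o.t.}
\end{equation*}
Taking $\|\cdot\|_{2\to\infty}$, the inhomogeneous row-wise term is bounded by a Bernstein bound on each $[A-P]_{i\cdot}$ together with a Chernoff bound on the row of $(D_\tau^{-1/2} - \mathcal{D}_\tau^{-1/2})P_\tau$, producing a residual $r_n$ matching the asserted rate; the bracketed factor $1 + \rho_n^\tau + (1/K + \log 5/\log n)^{1/2}(\rho_n^\tau)^{1/2}$ arises from the bound on $\|[P_\tau \mathcal{D}_\tau^{-1/2}]_{i\cdot}\|$ via $\rho_n^\tau$, combined with a discretization of the unit sphere in $\mathbb R^K$ that yields the $\log(5)/\log(n)$ contribution. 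The homogeneous term contracts with constant $\kappa < 1$ on an event of probability one under Assumption~\ref{ass:rate2}, so Banach iteration gives $\|\hat U_{1n}^\tau \hat O_n^\tau - U_{1n}^\tau\|_{2\to\infty} \leq r_n/(1-\kappa)$; multiplying by $\sqrt{n/K}$ and absorbing $1/(1-\kappa)$ into $C^\ast$ yields the stated uniform bound.

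The main obstacle is bookkeeping: producing exactly the bracketed factor with the same absolute constant $C^\ast$ as in Theorem~\ref{thm:main} requires verifying that the normalized block matrix $B_0^\tau$ inherits the relevant boundedness properties of $B_0$ after adding $\tau n^{-1}\iota_K \iota_K^T$ to $B$, and that the row-wise contribution of the degree-normalization residual is controlled by $\rho_n^\tau$ rather than $\rho_n$. Since the Bernstein and Chernoff constants are unaffected by regularization, the union bound over $i \leq n$ still contributes only the $\log^{1/2}(n)$ factor, allowing the almost-sure conclusion via Borel--Cantelli.
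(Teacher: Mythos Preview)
Your high-level plan---mirror the proof of Theorem~\ref{thm:main} with $B_0,\mu_n,\rho_n,\sigma_{Kn}$ replaced by $B_0^\tau,\mu_n^\tau,\rho_n^\tau,\sigma_{Kn}^\tau$, using that $A_\tau-P_\tau=A-P$---is exactly what the paper does; its proof of Theorem~\ref{thm:regularization} is one sentence invoking Theorem~\ref{thm:main} with Theorem~\ref{thm:id2} in place of Theorem~\ref{thm:id}. At that level your proposal is correct.

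Where your sketch deviates, and has a genuine gap, is in the row-wise refinement. You identify $\mathcal{L}_\tau(\hat U_{1n}^\tau\hat O_n^\tau-U_{1n}^\tau)(\hat\Sigma_{1n}^\tau)^{-1}\hat O_n^\tau$ as the ``homogeneous'' term and claim it contracts in $\|\cdot\|_{2\to\infty}$. It does not: since $\mathcal{L}_\tau=U_{1n}^\tau\Sigma_{1n}^\tau(U_{1n}^\tau)^T$, one only gets $\|\mathcal{L}_\tau M\|_{2\to\infty}\le\|U_{1n}^\tau\|_{2\to\infty}\|\Sigma_{1n}^\tau\|\|M\|$, which uses the \emph{spectral} norm of $M$, not its $2\to\infty$ norm---so no self-map is obtained. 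This term is in fact handled directly by the Davis--Kahan spectral bound, not by iteration. The term that actually drives the contraction in the paper is the one you have buried in ``l.o.t.'': the piece $([A]_{i\cdot}-[P]_{i\cdot})\mathcal{D}_\tau^{-1/2}\hat U_{1n}^\tau\hat O_n^\tau$, where Bernstein cannot be applied directly because $\hat U_{1n}^\tau$ depends on the entire matrix $A$. The paper resolves this via the leave-one-out construction (Lemmas~\ref{lem:V1n3}, \ref{lem:B3}, \ref{lem:looDC}): one introduces $\hat U_{1n}^{(i)}$, the eigenvectors of the matrix with row and column $i$ replaced by their expectations, applies Bernstein with this surrogate, and bounds $\|\hat U_{1n}^{(i)}\hat O_n^{(i)}-\hat U_{1n}^\tau\hat O_n^\tau\|_{2\to\infty}$ in terms of the current iterate $\psi_n$. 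That dependence on $\psi_n$ is precisely what yields the contraction factor (the $0.001$ in the paper's iteration), and it is entirely absent from your outline. Without this device the argument does not close.
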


The following assumption parallels Assumptions \ref{ass:ratestrong} and \ref%
{ass:ratestrong2}. The following theorem parallels Theorem \ref{thm:main}.

\begin{ass}
\label{ass:ratestrong3}

\begin{enumerate}
\item For $n$ sufficiently large, 
\begin{align*}
C^*\frac{K^{3/2}\rho^\tau _{n}\log ^{1/2}(n)}{(\mu
^\tau_{n})^{1/2}(\sigma^\tau _{Kn})^{2}}\left(1+\rho^\tau_n + \left(\frac{1}{%
K} + \frac{\log(5)}{\log(n)}\right)^{1/2}(\rho^\tau_n)^{1/2}\right) \leq 
\frac{2c_1^{3/2}C_1^{-1}}{257},
\end{align*}
where $C^*$ is the absolute constant in Theorem \ref{thm:regularization}.

\item For $n$ sufficiently large, 
\begin{align*}
15C^*\frac{K\rho^\tau _{n}\log ^{1/2}(n)}{(\mu ^\tau_{n})^{1/2}(\sigma^\tau
_{Kn})^{2}}\left(1+\rho^\tau_n + \left(\frac{1}{K} + \frac{\log(5)}{\log(n)}%
\right)^{1/2}(\rho^\tau_n)^{1/2}\right) \leq c_1C_1^{-1/2}\sqrt{2},
\end{align*}
where $C^*$ is the absolute constant in Theorem \ref{thm:regularization}.
\end{enumerate}
\end{ass}

The following theorem parallels Corollaries \ref{cor:sc} and \ref{cor:sc2}
in Section \ref{sec:unif}.

\begin{thm}
\label{thm:regularization2} Suppose that Assumptions \ref{ass:nk}, \ref%
{ass:id2}, and \ref{ass:rate2} hold. If Assumption \ref{ass:ratestrong3}.1
holds and the K-means algorithm defined in Section \ref{sec:kmeans} is
applied to $\hat{\beta}_{in}=\sqrt{n/K}(\hat{u}_{1i}^{\tau })^{T}$ and $%
\beta _{g_{i}^{0}n}=(n/K)^{1/2}\hat{O}_{n}^{\tau }u_{1i}^{\tau }$. Denote
the estimated community identities as $\{\hat{g}_i\}_{i=1}^n$. Then for
sufficiently large $n,$ we have 
\begin{equation*}
\sup_{1\leq i\leq n}1\{\hat{g}_{i}\neq g_{i}^{0}\}=0\quad a.s.
\end{equation*}
If Assumption \ref{ass:ratestrong3}.2 holds and the modified K-means
algorithm defined in Section \ref{sec:med} is applied to $\hat{\beta}_{in}=%
\sqrt{n/K}(\hat{u}_{1i}^{\tau })^{T}$ and $\beta _{g_{i}^{0}n}=(n/K)^{1/2}%
\hat{O}_{n}^{\tau }u_{1i}^{\tau }$. Denote the estimated community
identities as $\{\tilde{g}_i\}_{i=1}^n$. Then, for sufficiently large $n,$
we have 
\begin{equation*}
\sup_{1\leq i\leq n}1\{\tilde{g}_{i}\neq g_{i}^{0}\}=0\quad a.s.
\end{equation*}
\end{thm}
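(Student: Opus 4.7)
The plan is to reduce this theorem to the two general strong-consistency statements for the K-means and modified K-means algorithms, namely Theorems \ref{thm:strong} and \ref{thm:strong2}, by identifying the right choices of $\hat{\beta}_{in}$, $\beta_{g_i^0 n}$, $M$, $c_{1n}$, and $c_{2n}$ and verifying their hypotheses. The inputs are the identification statement for the regularized Laplacian (Theorem \ref{thm:id2}) together with the uniform eigenvector bound for $L_\tau$ (Theorem \ref{thm:regularization}). Since Theorem \ref{thm:id2} mirrors Theorem \ref{thm:id}, and the derivation of centroids following Theorem \ref{thm:main} carries over verbatim with $\tau$-decorated objects, this is mostly a bookkeeping argument.

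First I would define the random centroids $\beta_{kn}^{\tau} = (K\pi_{kn})^{-1/2}[S_n^{\tau}(\hat{O}_n^{\tau})^{T}]_{k\cdot}$ for $k=1,\ldots,K$. By Theorem \ref{thm:id2} we have $U_{1n}^{\tau} = Z(Z^{T}Z)^{-1/2}S_n^{\tau}$, so $(n/K)^{1/2}(u_{1i}^{\tau})^{T}\hat{O}_n^{\tau} = \beta_{g_i^0 n}^{\tau}$. Setting $\hat{\beta}_{in} = \sqrt{n/K}(\hat{u}_{1i}^{\tau})^{T}$ and $\beta_{g_i^0 n} = \beta_{g_i^0 n}^{\tau}$, Theorem \ref{thm:regularization} immediately gives
\begin{equation*}
\sup_{1\leq i\leq n}\Vert \hat{\beta}_{in} - \beta_{g_i^0 n}\Vert \leq c_{2n}^{\tau} := C^{\ast}\frac{\rho_n^{\tau}\log^{1/2}(n)}{(\mu_n^{\tau})^{1/2}(\sigma_{Kn}^{\tau})^{2}}\left(1 + \rho_n^{\tau} + \Bigl(\frac{1}{K}+\frac{\log(5)}{\log(n)}\Bigr)^{1/2}(\rho_n^{\tau})^{1/2}\right)
\end{equation*}
almost surely. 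Next, because $\hat{O}_n^{\tau}$ is orthogonal, the rows of $S_n^{\tau}(\hat{O}_n^{\tau})^{T}$ have the same norms and pairwise distances as the rows of $S_n^{\tau}$. Applying the two bounds in Theorem \ref{thm:id2} to $\beta_{kn}^{\tau}$ therefore yields the boundedness $\sup_k\Vert\beta_{kn}^{\tau}\Vert \leq c_1^{-1/2} =: M$ and the separation $\inf_{k\neq k'}\Vert \beta_{kn}^{\tau} - \beta_{k'n}^{\tau}\Vert \geq C_1^{-1/2}\sqrt{2} =: c_{1n}$. These verify parts 1 and 2 of Assumption \ref{ass:theta} and part 1 of Assumption \ref{ass:theta2}.

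It remains to verify the quantitative third condition of Assumption \ref{ass:theta} (respectively part 2 of Assumption \ref{ass:theta2}) under the corresponding piece of Assumption \ref{ass:ratestrong3}. With $M = c_1^{-1/2}$ and $c_{1n} = C_1^{-1/2}\sqrt{2}$, the inequality $(2c_{2n}^{\tau}c_1^{1/2} + 16K^{3/4}M^{1/2}(c_{2n}^{\tau})^{1/2})^{2} \leq c_1 c_{1n}^{2}$ reduces, after squaring out and bounding the dominant $K^{3/4}(c_{2n}^{\tau})^{1/2}$ term, to an upper bound on $K^{3/2}c_{2n}^{\tau}$ of the form $\leq 2c_1^{3/2}C_1^{-1}/257$, which is exactly Assumption \ref{ass:ratestrong3}.1 (after noting that $c_{2n}^{\tau}$ is itself already small by that assumption, so the leading linear term is absorbed into the $K^{3/4}(c_{2n}^{\tau})^{1/2}$ term). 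Similarly, the condition $15Kc_{2n}^{\tau} \leq c_1 c_{1n} = c_1 C_1^{-1/2}\sqrt{2}$ is precisely Assumption \ref{ass:ratestrong3}.2. Once these checks are in place, Theorem \ref{thm:strong} delivers $\sup_i \mathbf{1}\{\hat{g}_i \neq g_i^0\}=0$ a.s., and Theorem \ref{thm:strong2} delivers $\sup_i \mathbf{1}\{\tilde{g}_i \neq g_i^0\}=0$ a.s.

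The main obstacle is the algebraic verification in the third step — in particular making sure the constants work out so that the bound on $c_{2n}^{\tau}$ imposed by Assumption \ref{ass:ratestrong3}.1 (which controls $K^{3/2}c_{2n}^{\tau}$) dominates the squared inequality in Assumption \ref{ass:theta}.3. Since $c_{2n}^{\tau} \leq M$ automatically for $n$ large (Assumption \ref{ass:ratestrong3}.1 forces $c_{2n}^{\tau} \to 0$), the $16K^{3/4}M^{1/2}(c_{2n}^{\tau})^{1/2}$ piece is the binding one, and the constants $2/257$ and the factor $15$ in Assumption \ref{ass:ratestrong3} are tailored precisely so that these squared/unsquared inequalities go through; beyond this bookkeeping there is no new probabilistic content, since the $L_\infty$ eigenvector bound and identification have already been done upstream.
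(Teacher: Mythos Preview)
Your proposal is correct and matches the paper's approach: the paper likewise reduces Theorem \ref{thm:regularization2} to Theorems \ref{thm:strong} and \ref{thm:strong2} by taking $\beta_{kn}=(K\pi_{kn})^{-1/2}[S_n^{\tau}(\hat{O}_n^{\tau})^{T}]_{k\cdot}$, $M=c_1^{-1/2}$, $c_{1n}=C_1^{-1/2}\sqrt{2}$, and $c_{2n}$ equal to the bound from Theorem \ref{thm:regularization}, then verifying Assumption \ref{ass:theta}.3 (resp.\ \ref{ass:theta2}.2) from Assumption \ref{ass:ratestrong3}.1 (resp.\ .2) via exactly the constant bookkeeping you outline (cf.\ the proofs of Corollaries \ref{cor:sc} and \ref{cor:sc2}). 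One minor remark: Assumption \ref{ass:ratestrong3}.1 does not force $c_{2n}^{\tau}\to 0$, only that it is bounded by a small constant, but that is all you need for $c_{2n}^{\tau}\leq M$ and for the linear term $2c_{2n}^{\tau}c_1^{1/2}$ to be absorbed into the $K^{3/4}(c_{2n}^{\tau})^{1/2}$ term.
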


As in the standard SBM case, $\hat{O}_{n}^{\tau }=\bar{U}^{\tau }(\bar{V}%
^{\tau })^{T},$ where $\bar{U}^{\tau }\bar{\Sigma}^{\tau }(\bar{V}^{\tau
})^{T}$ is the singular value decomposition of $(\hat{U}_{1n}^{\tau
})^{T}U_{1n}^{\tau }.$Theorem \ref{thm:regularization2} indicates that the
regularized spectral clustering, in conjunction with the standard or
modified K-means algorithm, consistently recovers the community membership
for all nodes almost surely in large samples.

To see the effect of regularization, let $K$ be fixed and $|\sigma
_{Kn}^{\tau }|$ be bounded away from zero. Then, Assumption \ref{ass:rate2}
boils down to $\log (n)/\mu _{n}^{\tau }\leq \underline{c}$ for some
sufficiently small $\underline{c}$. Even if $\min_{i}d_{i}$ grows slower
than $\log (n)$ or does not grow to infinity at all, we can still choose $%
\tau $ with $\tau /\log (n)=\Omega (1)$ such that Assumption \ref{ass:rate2}
holds. This implies that we can obtain strong consistency for some SBMs in
which some nodes have very limited number of links.

In addition, regularization introduces a trade-off between $|\sigma
_{Kn}^{\tau }|$ and $\mu _{n}^{\tau }$. As $\tau $ increases, $\mu
_{n}^{\tau }$ increases and the rows of $B_{0}^{\tau }$ become more similar,
which means that $|\sigma _{Kn}^{\tau }|$ decreases. \cite{RCY11} and \cite{JY16} explore such intuition to choose the
regularizer. Following their leads, we choose over a grid of $\tau $ and
find the one that minimizes 
\begin{equation*}
Q(\tau )\equiv ||L_{\tau }-\hat{\mathcal{L}}_{\tau }||/|\hat{\sigma}%
_{Kn}^{\tau }|,
\end{equation*}%
where $\hat{\mathcal{L}}_{\tau }$ is an estimator of $\mathcal{L}_{\tau }$.
We refer to our Section \ref{sec:sim} for more details.

The following is a non-trivial SBM which does not satisfy Assumption \ref%
{ass:rate} but satisfies Assumption \ref{ass:rate2}.

\begin{ex}[3.1]
Consider a SBM with two groups such that $n_{1}=n_{2}=n/2$ and 
\begin{equation*}
B=%
\begin{pmatrix}
0.4 & 2/n \\ 
2/n & 4/n%
\end{pmatrix}%
.
\end{equation*}%
\textit{In this case, }$d_{i}=0.4(\frac{n}{2}-1)+\frac{2}{n}\cdot \frac{n}{2}%
=0.2n+0.6$ for node $i$ in cluster 1 and $d_{i}=\frac{2}{n}\cdot \frac{n}{2}+%
\frac{4}{n}(\frac{n}{2}-1)=3-\frac{4}{n}$ for node $i$ in cluster 2.
Therefore, Assumption \ref{ass:rate} does not hold. However, for some $\tau $
such that $\tau =\Omega (\log (n))$, we have 
\begin{equation*}
B^{\tau }=%
\begin{pmatrix}
0.4+\tau /n & (2+\tau )/n \\ 
(2+\tau )/n & (4+\tau )/n%
\end{pmatrix}%
\end{equation*}%
\textit{and }$d_{i}^{\tau }=0.2n+0.6+\tau (1-n^{-1})$\textit{\ for node }$i$
\ in cluster 1 and $d_{i}^{\tau }=3-4n^{-1}+\tau (1-n^{-1})$ for node $i$ in
cluster 2. In addition, it is easy to see that 
\begin{equation*}
B_{0}^{\tau }=%
\begin{pmatrix}
\frac{0.4+\tau n^{-1}}{0.2+(1+\tau )n^{-1}} & \frac{2+\tau }{[0.2n+(1+\tau
)]^{1/2}(3+\tau )^{1/2}} \\ 
\frac{2+\tau }{[0.2n+(1+\tau )]^{1/2}(3+\tau )^{1/2}} & \frac{4+\tau }{%
3+\tau }%
\end{pmatrix}%
\rightarrow 
\begin{pmatrix}
\frac{0.4+c_{0}}{0.2+c_{0}} & \sqrt{\frac{c_{0}}{0.2+c_{0}}} \\ 
\sqrt{\frac{c_{0}}{0.2+c_{0}}} & 1%
\end{pmatrix}%
,
\end{equation*}%
when $c_{0}=\lim_{n\rightarrow \infty }\tau /n\in \lbrack 0,1).$ Apparently, 
$B_{0}^{\tau }$ has full rank and Assumption \ref{ass:rate2} holds.
Therefore, the strong consistency of the regularized spectral clustering
still holds.

Let $\sigma _{2,n}^{\tau }$ denote the second eigenvalue of $\Pi
_{n}^{1/2}B_{0}^{\tau }\Pi _{n}^{1/2}$. Then as $n\rightarrow \infty ,$ 
\begin{equation*}
\sigma _{2,n}^{\tau }\rightarrow \frac{0.3+c_{0}-\sqrt{%
c_{0}^{2}+0.2c_{0}+0.01}}{2(c_{0}+0.2)}=\frac{0.2}{0.3+c_{0}+\sqrt{%
c_{0}^{2}+0.2c_{0}+0.01}},
\end{equation*}%
where $c_{0}\in \lbrack 0,1)$. The minimal degree $\mu _{n}^{\tau }\asymp
\tau $. Then, $Q(\tau )=O(\frac{1}{\sigma _{2,n}^{\tau }(\mu _{n}^{\tau
})^{1/2}})$ where 
\begin{equation*}
\frac{1}{\sigma _{2,n}^{\tau }(\mu _{n}^{\tau })^{1/2}}\asymp \frac{%
0.3+c_{0}+\sqrt{c_{0}^{2}+0.2c_{0}+0.01}}{0.2\tau ^{1/2}}.
\end{equation*}%
In order to achieve maximal convergence rate, we need $c_{0}\neq 0$. For
simplicity, we just assume $\tau =c_{0}n$. Then, the constant $\frac{%
0.3+c_{0}+\sqrt{c_{0}^{2}+0.2c_{0}+0.01}}{c_{0}^{1/2}}$ achieves minimum on $%
(0,1)$ at $c_{0}=0.2.$
\end{ex}

The previous example illustrates that the regularization works for the case
where one cluster has strong links and the other one has weak links. However,
if both clusters have weak links, it is hard to separate them.

\begin{ex}[3.2]
Consider the above example with $B$ replaced by 
\begin{equation*}
B=%
\begin{pmatrix}
4/n & 2/n \\ 
2/n & 4/n%
\end{pmatrix}%
,
\end{equation*}%
and $\tau /\log (n)=\Omega (1)$. Then we can verify that 
\begin{equation*}
B_{0}^{\tau }=%
\begin{pmatrix}
(4+\tau )/(3+\tau ) & (2+\tau )/(3+\tau ) \\ 
(2+\tau )/(3+\tau ) & (4+\tau )/(3+\tau )%
\end{pmatrix}%
\end{equation*}%
such that $B_{0}^{\tau }$ has two eigenvalues given by $2$ and $2/\left(
3+\tau \right) $. But Assumption \ref{ass:rate2} cannot be satisfied in this
case because $\mu _{n}^{\tau }|\sigma _{Kn}^{\tau }|^{4}/\log (n)$ is
converging to zero at rate $1/(\tau ^{3}\log (n))$. Consequently, we cannot
show that $\sup_{i}\sqrt{n}\Vert (\hat{O}_{n}^{\tau })^{T}\hat{u}_{1i}^{\tau
}-u_{1i}^{\tau }\Vert $ is sufficiently small or prove strong consistency in
this case.
\end{ex}

The above example shows that the regularization may not work for the case in
which we have multiple clusters with weak links.

\subsection{Regularized spectral clustering analysis for degree-corrected
SBMs}

\label{sec:ext2} In this subsection, we extend our early analyses to the
spectral clustering for a degree-corrected stochastic block model (DC-SBM).

\subsubsection{Degree-corrected SBMs}

Since \cite{KN11}, degree-corrected SBMs have become widely used in
communication detection. The major advantage of a DC-SBM lies in the fact
that it allows variation in node degrees within a community while preserving
the overall block community structure. Given the $K$ communities, the edge
between nodes $i$ and $j$ are chosen independently with probability
depending on the communities that nodes $i$ and $j$ belong to. In
particular, for nodes $i$ and $j$ belonging to clusters $C_{k_{1}}$ and $%
C_{k_{2}}$, respectively, the probability of edge between $i$ and $j$ is
given by 
\begin{equation*}
P_{ij}=\theta _{i}\theta _{j}B_{k_{1}k_{2}},
\end{equation*}%
where the block probability matrix $B=\{B_{k_{1}k_{2}}\}$, $%
k_{1},k_{2}=1,\ldots ,K$, is a symmetric matrix with each entry between $%
[0,1]$. The $n\times n$ edge probability matrix $P=\{P_{ij}\}$ represents
the population counterpart of the adjacency matrix $A$. We continue to use $%
Z=\{Z_{ik}\}$ to denote the cluster membership matrix for all $n$ nodes. Let 
$\Theta =\text{diag}(\theta _{1},\ldots ,\theta _{n})$. Then we have 
\begin{equation*}
P=\Theta ZBZ^{T}\Theta ^{T}.
\end{equation*}%
Note $\Theta $ and $B$ are only identifiable up to scale. We adopt the
following normalization rule: 
\begin{equation}
\sum_{i\in C_{k}}\theta _{i}=n_{k},\quad k=1,\ldots ,K.
\label{eq:thetanormalization}
\end{equation}%
Alternatively, one can follow the literature (e.g., \citep{QR13,ZLZ12}) and
apply the following normalization $\sum_{i\in C_{k}}\theta _{i}=1,\quad
k=1,\ldots ,K.$ We use the normalization in \eqref{eq:thetanormalization}
because it nests the standard SBM as a special case when $\theta _{i}=1$ for 
$i=1,\ldots ,n$.

We first observe that, if we regularize both the adjacency matrix $A$ and
the degree matrix $D$, we are unable to preserve the DC-SBM structure unless 
$\Theta $ is homogeneous. To see this, note that when $A$ is regularized to $%
A_{\tau }=A+\tau n^{-1}\iota _{n}\iota _{n}^{T},$ its population counterpart
is 
\begin{equation*}
P_{\tau }=P+\tau n^{-1}\iota _{n}\iota _{n}^{T}=\Theta ZBZ^{T}\Theta +\tau
n^{-1}Z\iota _{k}\iota _{k}^{T}Z.
\end{equation*}%
Since $\Theta $ does not have the block structure, we are unable to find a $%
K\times K$ matrix $B^{\tau }$ and an $n\times n$ diagonal matrix $\Theta
^{\tau }$ such that $P_{\tau }=\Theta ^{\tau }ZB^{\tau }Z^{T}\Theta ^{\tau
}. $ For this reason, we follow the lead of \cite{QR13} and only regularize
the degree matrix $D$ as $D_{\tau }=D+\tau I_{n}$. To differentiate from the
regularized graph Laplacian $L_{\tau }$ considered in \cite{JY16}, we denote
the new regularized graph Laplacian as 
\begin{equation*}
L_{\tau }^{\prime }=D_{\tau }^{-1/2}AD_{\tau }^{-1/2},
\end{equation*}%
and its population counterpart as 
\begin{equation*}
\mathcal{L}_{\tau }^{\prime }=\mathcal{D}_{\tau }^{-1/2}P\mathcal{D}_{\tau
}^{-1/2},
\end{equation*}%
where $P=\Theta ZBZ^{T}\Theta ,$ $\mathcal{D}_{\tau }=\mathcal{D}+\tau
I_{n}, $ and $\mathcal{D}=\text{diag}(d_{1},\ldots ,d_{n})$ with $%
d_{i}=\sum_{j=1}^{n}P_{ij}$.

\subsubsection{Identification of the group membership}

Let $\pi _{kn},$ $W_{k},$ $\mathcal{D}_{B}$ and $B_{0}$ be as defined in
Section \ref{sec:id}. To facilitate the asymptotic study, we assume the
following:

\begin{ass}
\label{ass:id3}

\begin{enumerate}
\item There exists a sequence $\rho _{n}$ such that $\rho _{n}\geq 1$ and $%
B_{0}\leq \rho _{n}$ element-wise.

\item $B_{0}$ has full rank $K$.
\end{enumerate}
\end{ass}

As before, we consider the spectral decomposition of $\mathcal{L}_{\tau
}^{\prime }:$ 
\begin{equation*}
\mathcal{L}_{\tau }^{\prime }=U_{1n}\Sigma _{n}U_{1n}^{T},
\end{equation*}%
where $\Sigma _{n}=\text{diag}(\sigma _{1n},\ldots ,\sigma _{Kn})$ is a $%
K\times K$ matrix that contains the eigenvalues of $\mathcal{L}_{\tau
}^\prime$ such that $|\sigma _{1n}|\geq |\sigma _{2n}|\geq \cdots \geq
|\sigma _{Kn}|>0 $ and $U_{1n}^{T}U_{1n}=I_{K}$. Note that we suppress the
dependence of $U_{1n}$ and $\Sigma _{n}$ on $\tau .$ Let $\Theta _{\tau }=%
\text{diag}(\theta _{1}^{\tau },\ldots ,\theta _{n}^{\tau })$ where $\theta
_{i}^{\tau }=\theta _{i}d_{i}/(d_{i}+\tau )$ for $i=1,\ldots ,n$. Let $%
n_{k}^{\tau }=\sum_{i\in C_{k}}\theta _{i}^{\tau }.$

\begin{thm}
\label{thm:id3} Suppose Assumptions \ref{ass:id3} holds and let $g_{i}^{0}$
and $u_{i}^{T}$ be the node $i$'s true community identity and the $i$-th row
of $U_{1n}$, respectively. Then, (1) there exists a $K\times K$ matrix $%
S_{n}^{\tau }$ such that $U_{1n}=\Theta _{\tau }^{1/2}Z(Z^{T}\Theta _{\tau
}Z)^{-1/2}S_{n}^{\tau }$, (2) $(n_{g_{i}^{0}}^{\tau })^{1/2}(\theta
_{i}^{\tau })^{-1/2}\Vert u_{i}^{T}\Vert =1$, and (3) if $z_{i}=z_{j}$, then 
$\Vert \frac{u_{i}}{\Vert u_{i}\Vert }-\frac{u_{j}}{\Vert u_{j}\Vert } \Vert
=0;$ if $z_{i}\neq z_{j},$ then $\Vert \frac{u_{i}^{T}}{\Vert u_{i}^{T}\Vert 
}-\frac{u_{j}^{T}}{\Vert u_{j}^{T}\Vert }\Vert =\sqrt{2}$.
\end{thm}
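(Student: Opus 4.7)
The plan is to express $\mathcal{L}_\tau'$ in the factored form $MBM^T$, extract an orthonormal factorization through the polar/thin-SVD trick, and then diagonalize the small $K\times K$ block to read off $U_{1n}$.

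\textbf{Step 1 (Factor and use the normalization).} Set $M=\mathcal{D}_\tau^{-1/2}\Theta Z$, so that $\mathcal{L}_\tau'=MBM^T$. The key simplification comes from the identity $d_i=\sum_j P_{ij}=\theta_i\sum_k B_{g_i^0 k}\sum_{j\in C_k}\theta_j=n\theta_i W_{g_i^0}$, where I have used the normalization $\sum_{j\in C_k}\theta_j=n_k$. Comparing diagonal entries, $[\mathcal{D}_\tau^{-1/2}\Theta]_{ii}=\theta_i/(d_i+\tau)^{1/2}=(\theta_i^\tau)^{1/2}\cdot(\theta_i/d_i)^{1/2}=(\theta_i^\tau)^{1/2}\cdot(nW_{g_i^0})^{-1/2}$. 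Because the cluster-dependent scalar $(nW_{g_i^0})^{-1/2}$ multiplies $Z$ from the left column-by-column, I get the clean identity
\begin{equation*}
M=\Theta_\tau^{1/2}\,Z\,\mathcal{D}_B^{-1/2}/\sqrt{n}.
\end{equation*}

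\textbf{Step 2 (Orthonormalize and read off $U_{1n}$).} With $N=Z^T\Theta_\tau Z=\mathrm{diag}(n_k^\tau)$ (diagonal, since $Z$ has disjoint column supports), commutativity of diagonal matrices gives $M^TM=\mathcal{D}_B^{-1/2}N\mathcal{D}_B^{-1/2}/n$ and $(M^TM)^{-1/2}=\sqrt{n}\,\mathcal{D}_B^{1/2}N^{-1/2}$. Hence $\tilde M:=M(M^TM)^{-1/2}=\Theta_\tau^{1/2}Z(Z^T\Theta_\tau Z)^{-1/2}$ has orthonormal columns, and
\begin{equation*}
\mathcal{L}_\tau'=\tilde M\,\tilde B\,\tilde M^T,\qquad \tilde B=(M^TM)^{1/2}B(M^TM)^{1/2}.
\end{equation*}
Since $\tilde B$ is symmetric $K\times K$ and has the same rank as $B$ (full rank $K$ by Assumption \ref{ass:id3}.2), let $\tilde B=S_n^\tau\Sigma_n(S_n^\tau)^T$ be its spectral decomposition with $S_n^\tau\in O(K)$. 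Then $U_{1n}:=\tilde M S_n^\tau$ has orthonormal columns and $\mathcal{L}_\tau'=U_{1n}\Sigma_n U_{1n}^T$, which establishes part (1). (Uniqueness up to sign/orthogonal rotation within eigenspaces is absorbed into the freedom in choosing $S_n^\tau$.)

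\textbf{Step 3 (Row norms and normalized rows).} For part (2), the $i$-th row of $\tilde M$ equals $(\theta_i^\tau)^{1/2}(n_{g_i^0}^\tau)^{-1/2}z_i^T$, where $z_i$ is the standard basis vector indicating $i$'s community. Therefore $u_i^T=(\theta_i^\tau)^{1/2}(n_{g_i^0}^\tau)^{-1/2}z_i^T S_n^\tau$, and since rows of the orthogonal matrix $S_n^\tau$ are unit vectors, $\|u_i^T\|=(\theta_i^\tau/n_{g_i^0}^\tau)^{1/2}$, proving (2). For part (3), the scalar prefactor cancels in the normalization, giving $u_i^T/\|u_i^T\|=z_i^T S_n^\tau=[S_n^\tau]_{g_i^0\cdot}$; so if $z_i=z_j$ the difference is zero, while if $z_i\ne z_j$ the two normalized rows are distinct rows of $S_n^\tau$, which are orthonormal, yielding $\|[S_n^\tau]_{g_i^0\cdot}-[S_n^\tau]_{g_j^0\cdot}\|=\sqrt 2$.

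\textbf{Main obstacle.} There is no deep difficulty here; the whole argument is structural. The one place where one has to be careful is Step 1, where the adopted normalization $\sum_{j\in C_k}\theta_j=n_k$ is essential for reducing $\mathcal{D}_\tau^{-1/2}\Theta Z$ to the product $\Theta_\tau^{1/2}Z\mathcal{D}_B^{-1/2}/\sqrt n$. Under a different normalization (e.g.\ $\sum_{j\in C_k}\theta_j=1$) the factorization would acquire an additional cluster-size factor and the subsequent row-norm calculation of part (2) would be different. Everything else is then a direct consequence of the orthogonality of $S_n^\tau$.
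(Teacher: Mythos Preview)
Your proof is correct and follows essentially the same route as the paper: both factor $\mathcal{L}_\tau'$ as $n^{-1}\Theta_\tau^{1/2}ZB_0Z^T\Theta_\tau^{1/2}$ using the normalization $\sum_{j\in C_k}\theta_j=n_k$, orthonormalize to $\tilde M=\Theta_\tau^{1/2}Z(Z^T\Theta_\tau Z)^{-1/2}$, and diagonalize the resulting $K\times K$ inner block to obtain $S_n^\tau$. The only cosmetic difference is that the paper first squares $\mathcal{L}_\tau'$ and diagonalizes $(\Pi_n^\tau)^{1/2}B_0\Pi_n^\tau B_0(\Pi_n^\tau)^{1/2}$, whereas you diagonalize $\tilde B=(\Pi_n^\tau)^{1/2}B_0(\Pi_n^\tau)^{1/2}$ directly; since these share the same eigenvectors, the resulting $U_{1n}$ and the row-norm computations for parts (2) and (3) are identical.
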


Theorem \ref{thm:id3} follows \citet[][Lemma 3.3]{QR13}. In particular,
Theorem \ref{thm:id3}(3) provides useful facts about the rows of $U_{1n}.$
First, if two nodes $i$ and $j$ belong to the same cluster, then the
corresponding rows of $U_{1n}$ point to the same direction so that $%
u_{i}/\Vert u_{i}\Vert =u_{j}/\Vert u_{j}\Vert .$ Second, if two nodes $i$
and $j$ belong to the different clusters, then the corresponding rows of $%
U_{1n}$ are orthogonal to each other. As a result, we can detect the
community membership based on a feasible version of $\{u_{i}/\Vert
u_{i}\Vert \}.$

\subsubsection{Uniform consistency of the estimated eigenvectors and strong
consistency of the spectral clustering}

To proceed, we add the following assumptions.

\begin{ass}
\label{ass:nk3} There exist two constants $C_1$ and $c_1$ such that 
\begin{equation*}
\infty >C_1\geq \limsup_{n}\sup_{1\leq i\leq n}n_{g_{i}^{0}}^{\tau
}d_{i}^{\tau }K/(nd_{i})\geq \liminf_{n}\inf_{1\leq i\leq
n}n_{g_{i}^{0}}^{\tau }d_{i}^{\tau }K/(nd_{i})\geq c_1>0.
\end{equation*}
\end{ass}

Assumption \ref{ass:nk3} holds for the simplest case in which the degrees
are homogeneous within the same cluster. Note that in this case, $%
n_{g_{i}^{0}}^{\tau }=n_{g_{i}^{0}}d_{i}/d_{i}^{\tau }$, which may be of
smaller order of magnitude of $n/K$ if $d_{i}/\tau \rightarrow 0$. However,
Assumption \ref{ass:nk3} still holds because the factor $d_{i}/d_{i}^{\tau }$
is removed. In general, Assumption \ref{ass:nk3} holds if $d_{i}$ is of the
same order of magnitude for all $i$ in the same cluster.

\begin{ass}
\label{ass:rate3}Denote $\mu _{n}=\min_{i}d_{i}$, $\mu _{n}^{\tau }=\mu
_{n}+\tau $, $\overline{\theta }=\max_{i}\theta _{i}$, and $\underline{%
\theta }=\min_{i}\theta _{i}$. Then, for n sufficiently large,

\begin{enumerate}
\item $\frac{\bar{\theta}^{1/2}\log^{1/2}(n)}{\underline{\theta}%
^{1/2}(\mu_{n}^\tau)^{1/2}\rho_n} \leq 10^{-4},$

\item 
\begin{align*}
\biggl(K\frac{\rho _{n}\log ^{1/2}(n)}{\left( \mu _{n}^{\tau }\right)
^{1/2}\sigma _{Kn}^{2}}\biggr)\biggl(\frac{\left(\frac{1}{K} + \frac{\log(5)%
}{\log(n)}\right)^{1/2}\rho _{n}^{1/2}\overline{\theta }^{1/4}}{\underline{%
\theta }^{1/4}}+\rho _{n}+1\biggr) \leq 10^{-8}C_1^{-1}c_1^{1/2}, \quad 
\text{and}
\end{align*}

\item there exists a positive constant $c$ such that $\underline{\theta }
\geq n^{-c}$.
\end{enumerate}
\end{ass}

Assumption \ref{ass:rate3} specifies conditions on $d_{i},$ $\theta _{i},$
and $\sigma _{Kn}.$ The same remarks after Assumption \ref{ass:rate} apply.
Admittedly, the constants in Assumption \ref{ass:rate3} are not optimal. We
choose them purely for technical ease. If $0<\underline{\theta }\leq 
\overline{\theta }<\infty $, then Assumption \ref{ass:rate3}.1 is nested by
Assumption \ref{ass:rate3}.2, which is similar to Assumption \ref{ass:rate}.
If in addition, $K$ is fixed and $\liminf_{n}|\sigma _{Kn}|>0$, then
Assumption \ref{ass:rate3}.2 further boils down to $\log (n)/\mu _{n}^{\tau
}\leq \underline{c}$ for some sufficiently small $\underline{c}$. This
indicates that even if the minimal degree $\mu _{n}$ is bounded, Assumption %
\ref{ass:rate3}.2 still holds if $\tau =\Omega (\log (n))$.

Consider the spectral decomposition of $L_{\tau }^{\prime }$, the sample
counterpart of $\mathcal{L}_{\tau }^{\prime }$, as 
\begin{equation*}
L_{\tau }^{\prime }=\hat{U}_{n}\hat{\Sigma}_{n}\hat{U}_{n}^{T}=\hat{U}_{1n}%
\hat{\Sigma}_{1n}\hat{U}_{1n}^{T}+\hat{U}_{2n}\hat{\Sigma}_{2n}\hat{U}%
_{2n}^{T},
\end{equation*}%
where $\hat{\Sigma}_{n}=\text{diag}(\hat{\sigma}_{1n},\ldots ,\hat{\sigma}%
_{nn})=\text{diag}(\hat{\Sigma}_{1n},\hat{\Sigma}_{2n})$ with $|\hat{\sigma}%
_{1n}|\geq |\hat{\sigma}_{2n}|\geq \cdots \geq |\hat{\sigma}_{nn}|\geq 0,$ $%
\hat{\Sigma}_{1n}=\text{diag}(\hat{\sigma}_{1n},\ldots ,\hat{\sigma}_{Kn})$, 
$\hat{\Sigma}_{2n}=\text{diag}(\hat{\sigma}_{K+1,n},\ldots ,\hat{\sigma}%
_{nn})$, and $\hat{U}_{n}=(\hat{U}_{1n},\hat{U}_{2n})$ is the corresponding
eigenvectors such that $\hat{U}_{1n}^{T}\hat{U}_{1n}=I_{K}$ and $\hat{U}%
_{2n}^{T}\hat{U}_{1n}=0.$

The following lemma parallels Lemma \ref{lem:dk}.

\begin{lem}
\label{lem:dk3}If Assumptions \ref{ass:id3}--\ref{ass:rate3} hold, then 
\begin{equation*}
\Vert \mathcal{L}_{\tau }^{\prime }-L_{\tau }^{\prime }\Vert \leq 7(\log
(n)/\mu _{n}^{\tau })^{1/2}\quad a.s.
\end{equation*}%
and 
\begin{equation*}
\Vert \hat{U}_{1n}\hat{O}_{n}-U_{1n}\Vert \leq 10(\log (n)/\mu _{n}^{\tau
})^{1/2}|\sigma _{Kn}|^{-1}\quad a.s.,
\end{equation*}%
where $\hat{O}_{n}=\bar{U}\bar{V}^{T}$ is a $K\times K$ orthogonal matrix
and $\bar{U}\bar{\Sigma}\bar{V}^{T}$ for some diagonal matrix $\bar{\Sigma}$
is the singular value decomposition of $\hat{U}_{1n}^{T}U_{1n}.$
\end{lem}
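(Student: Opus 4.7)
The plan is to mimic the proof of Lemma \ref{lem:dk} for the standard SBM, with the two modifications demanded by the degree-corrected, degree-only regularization setup: (i) the adjacency matrix $A$ is not regularized, so concentration of $A$ around $P$ must be controlled on its own scale; and (ii) the degrees $d_i$ are now $\theta_i\sum_j \theta_j B_{g_i g_j}$ rather than pure block sums, so Assumption \ref{ass:rate3} (in particular parts 1 and 3) is what keeps the relative fluctuations of $\hat d_i$ around $d_i$ negligible compared with $(\mu_n^\tau)^{1/2}$.

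First I would establish two independent concentration statements. Step 1 controls $\|A-P\|$ by a matrix Bernstein / Lei-Rinaldo style inequality for symmetric matrices with independent Bernoulli entries: since $\max_{ij}P_{ij}\le \bar\theta^{2}\rho_n d_{\max}/n$ and the row sums are bounded by $d_{\max}$, one obtains $\|A-P\|\le c\sqrt{d_{\max}\log n}$ with probability at least $1-n^{-r}$ for any fixed $r$, and by Assumption \ref{ass:nk3} the degrees are of the same order, so $d_{\max}\asymp \mu_n\le \mu_n^\tau$. Step 2 controls each row sum by a scalar Bernstein inequality, yielding $|\hat d_i-d_i|\le c\sqrt{d_i\log n}$ uniformly in $i$, up to a union bound cost of $n$. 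Taking $r$ large and invoking Borel--Cantelli upgrades both statements to almost-sure bounds.

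Step 3 converts the degree fluctuations into an operator-norm bound on $D_\tau^{-1/2}-\mathcal D_\tau^{-1/2}$: applying the identity $a^{-1/2}-b^{-1/2}=(b-a)/\bigl(\sqrt{ab}(\sqrt a+\sqrt b)\bigr)$ coordinate-wise with $a=\hat d_i+\tau$, $b=d_i+\tau$, together with Assumption \ref{ass:rate3}.1, gives $\|D_\tau^{-1/2}-\mathcal D_\tau^{-1/2}\|\le c\sqrt{\log n}/(\mu_n^\tau)^{3/2}$ a.s.\ (one may need to verify first that $\hat d_i+\tau\asymp d_i+\tau$ on the good event). Step 4 then decomposes
\begin{equation*}
L_\tau'-\mathcal L_\tau' = D_\tau^{-1/2}(A-P)D_\tau^{-1/2} + (D_\tau^{-1/2}-\mathcal D_\tau^{-1/2})P D_\tau^{-1/2} + \mathcal D_\tau^{-1/2}P(D_\tau^{-1/2}-\mathcal D_\tau^{-1/2}),
\end{equation*}
and bounds the three pieces separately using $\|P\|\le d_{\max}$. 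The first term contributes the dominant $(\log n/\mu_n^\tau)^{1/2}$; the cross terms, using $\|P\|/\mu_n^\tau = O(1)$ and the degree estimate of Step 3, contribute terms of the same or smaller order, and the constants are chosen so that the total is at most $7(\log n/\mu_n^\tau)^{1/2}$.

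Step 5 is the eigenvector bound, which follows directly from the Davis--Kahan $\sin\Theta$ theorem (in its Yu--Wang--Samworth form, which is the one used for Lemma \ref{lem:dk}) applied to the perturbation $L_\tau'-\mathcal L_\tau'$: since $\mathcal L_\tau'$ has exactly $K$ nonzero eigenvalues with minimum modulus $|\sigma_{Kn}|$, and the rest are $0$, the eigengap is $|\sigma_{Kn}|$, so there exists a $K\times K$ orthogonal $\hat O_n$ with $\|\hat U_{1n}\hat O_n - U_{1n}\|\le \sqrt{2}\,\|L_\tau'-\mathcal L_\tau'\|/|\sigma_{Kn}|$, and loosening the numerical constant absorbs the $\sqrt 2$ into the stated $10$. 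The main obstacle throughout is purely numerical bookkeeping: tracking the explicit constants in the two Bernstein steps and the Davis--Kahan step so that the final bounds come out as the announced $7$ and $10$, and verifying that the small-probability events in Steps 1--2 are summable so that Borel--Cantelli really delivers the a.s.\ statement; no new analytic idea beyond those already used for Lemma \ref{lem:dk} is needed, since Assumption \ref{ass:rate3}.1 precisely ensures the relative degree fluctuation is $o(1)$ and Assumption \ref{ass:rate3}.3 keeps the union bound manageable.
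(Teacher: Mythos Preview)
Your high-level plan (matrix Bernstein + scalar Bernstein + Davis--Kahan) is right, but the specific decomposition you propose in Step~4 has a genuine gap in the DC-SBM setting. You bound $\|A-P\|$ in operator norm first and only afterwards multiply by $\|D_\tau^{-1/2}\|^2$; this gives at best
\[
\bigl\|D_\tau^{-1/2}(A-P)D_\tau^{-1/2}\bigr\|\;\le\;\frac{c\sqrt{d_{\max}\log n}}{\mu_n^\tau},
\]
which is $O\bigl(\sqrt{\log n/\mu_n^\tau}\bigr)$ only if $d_{\max}=O(\mu_n^\tau)$. Your justification that ``by Assumption~\ref{ass:nk3} the degrees are of the same order'' is not correct: Assumption~\ref{ass:nk3} controls $n_{g_i^0}^\tau d_i^\tau/(d_i)$, not $d_{\max}/d_{\min}$, and in a DC-SBM (or even a standard SBM with heterogeneous row sums in $B$, cf.\ Example~3.1) the ratio $d_{\max}/\mu_n^\tau$ can diverge polynomially in $n$. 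The same defect appears in your cross terms, where you invoke ``$\|P\|/\mu_n^\tau=O(1)$'': since $\|P\|$ is of order $d_{\max}$, this again requires balanced degrees. (Your Step~3 bound $\sqrt{\log n}/(\mu_n^\tau)^{3/2}$ is also too strong---the best one gets entrywise is $\sqrt{\log n}/\mu_n^\tau$---but that is a secondary issue.)

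The paper avoids this by normalizing \emph{before} applying matrix Bernstein. It introduces $\tilde L_\tau=\mathcal D_\tau^{-1/2}A\mathcal D_\tau^{-1/2}$ (population-degree normalization, hence deterministic scaling) and splits $\|L_\tau'-\mathcal L_\tau'\|\le\|\tilde L_\tau-\mathcal L_\tau'\|+\|L_\tau'-\tilde L_\tau\|$. For the first piece, matrix Bernstein is applied directly to $Y_{ij}=(d_i^\tau d_j^\tau)^{-1/2}(A_{ij}-P_{ij})(e_ie_j^T+e_je_i^T)$, whose variance proxy is $\max_i\sum_j P_{ij}/(d_i^\tau d_j^\tau)\le(\mu_n^\tau)^{-1}$ \emph{regardless of $d_{\max}$}. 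For the second piece, the paper writes $\tilde L_\tau=\mathcal D_\tau^{-1/2}D_\tau^{1/2}\,L_\tau'\,D_\tau^{1/2}\mathcal D_\tau^{-1/2}$ and uses the dimensionless ratio bound $\|\mathcal D_\tau^{-1/2}D_\tau^{1/2}-I\|=\max_i|(\hat d_i^\tau/d_i^\tau)^{1/2}-1|\le c\sqrt{\log n/\mu_n^\tau}$ together with $\|L_\tau'\|\le1$ (Chung), again sidestepping any dependence on $d_{\max}$. Once you adopt this decomposition, your Steps~2 and~5 go through essentially as written.
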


In order to obtain the strong consistency, we need to derive the uniform
bound for $\Vert \hat{u}_{i}^{T}\hat{O}_{n}-u_{i}^{T}\Vert $, where $\hat{u}%
_{i}^{T}$ and $u_{i}^{T}$ are the $i$-th rows of $\hat{U}_{1n}$ and $U_{1n}$%
, respectively.

\begin{thm}
\label{thm:main_DC} If Assumptions \ref{ass:id3}--\ref{ass:rate3} hold, then 
\begin{equation*}
\sup_{i}(n_{g_{i}^{0}}^{\tau })^{1/2}(\theta _{i}^{\tau })^{-1/2}\Vert \hat{u%
}_{i}^{T}\hat{O}_{n}-u_{i}^{T}\Vert \leq C^*\eta _{n}\quad a.s.,
\end{equation*}%
where $C^*$ is an absolute constant specified in the proof and 
\begin{equation*}
\eta _{n}=\biggl(\frac{\rho _{n}\log ^{1/2}(n)}{\left( \mu _{n}^{\tau
}\right) ^{1/2}\sigma _{Kn}^{2}}\biggr)\biggl(\frac{\left(\frac{1}{K} + 
\frac{\log(5)}{\log(n)}\right)^{1/2}\rho _{n}^{1/2}\overline{\theta }^{1/4}}{%
\underline{\theta }^{1/4}}+\rho _{n}+1\biggr).
\end{equation*}
\end{thm}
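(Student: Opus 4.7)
My plan is to obtain the $L_\infty$-type bound by iterating a contraction mapping built from the eigenvalue equation, starting from the spectral-norm bound in Lemma~\ref{lem:dk3}. Writing $\hat{U}_{1n}\hat{\Sigma}_{1n}=L_\tau'\hat{U}_{1n}$ row-wise gives
$[\hat{U}_{1n}]_{i\cdot}=[L_\tau']_{i\cdot}\hat{U}_{1n}\hat{\Sigma}_{1n}^{-1}$,
and analogously $[U_{1n}]_{i\cdot}=[\mathcal{L}_\tau']_{i\cdot}U_{1n}\Sigma_n^{-1}$. Subtracting (after right-multiplying by $\hat{O}_n$) leads to the master decomposition
\[
[\hat{U}_{1n}\hat{O}_n-U_{1n}]_{i\cdot}
=[L_\tau'-\mathcal{L}_\tau']_{i\cdot}\hat{U}_{1n}\hat{\Sigma}_{1n}^{-1}\hat{O}_n
+[\mathcal{L}_\tau']_{i\cdot}\bigl(\hat{U}_{1n}\hat{\Sigma}_{1n}^{-1}\hat{O}_n-U_{1n}\Sigma_n^{-1}\bigr).
\]
The first ``noise'' piece will be bounded by entrywise concentration, and the second ``signal'' piece, after further algebra using $\mathcal{L}_\tau'=U_{1n}\Sigma_n U_{1n}^T$, will be re-expressed as a linear operator in $\hat{U}_{1n}\hat{O}_n-U_{1n}$, producing the contraction.

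For the noise piece, I would bound $\|[L_\tau'-\mathcal{L}_\tau']_{i\cdot}\hat{U}_{1n}\|$ by writing $\hat{U}_{1n}=\hat{U}_{1n}\hat{O}_n\hat{O}_n^T=U_{1n}\hat{O}_n^T+(\hat{U}_{1n}\hat{O}_n-U_{1n})\hat{O}_n^T$. The first term is almost deterministic in $U_{1n}$ and can be handled by a row-wise Bernstein inequality applied to $\sum_j (L_\tau'-\mathcal{L}_\tau')_{ij}[U_{1n}]_{j\cdot}$, using Assumption~\ref{ass:rate3}.1 to keep the normalized-Laplacian scale under control and Assumption~\ref{ass:rate3}.3 so that the Borel--Cantelli step gives an almost-sure rather than in-probability conclusion. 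The second term is coupled to the unknown row-wise error we are trying to bound, but it can be controlled using the spectral-norm bound $\|\hat{U}_{1n}\hat{O}_n-U_{1n}\|\le 10(\log n/\mu_n^\tau)^{1/2}|\sigma_{Kn}|^{-1}$ from Lemma~\ref{lem:dk3} and Assumption~\ref{ass:id3}.1, giving a factor that is small by Assumption~\ref{ass:rate3}.2 and hence strictly less than~$1/2$ when multiplied by $\hat{\Sigma}_{1n}^{-1}\hat{O}_n$ and $[\mathcal{L}_\tau']_{i\cdot}$.

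With these pieces, the master decomposition has the schematic form
$\|[\hat{U}_{1n}\hat{O}_n-U_{1n}]_{i\cdot}\|\le R_i+\kappa\,\sup_j\|[\hat{U}_{1n}\hat{O}_n-U_{1n}]_{j\cdot}\|$,
where $R_i$ is the deterministic/noise remainder and $\kappa<1$ by Assumption~\ref{ass:rate3}.2. Taking supremum over $i$ and solving yields $\sup_i\|\cdot\|\le R/(1-\kappa)$. Converting $R$ back through the various factors $|\sigma_{Kn}|^{-1}$, $\rho_n$, $(\mu_n^\tau)^{-1/2}$, and the $\theta$-dependent normalization of $U_{1n}$ supplied by Theorem~\ref{thm:id3}, one obtains the claimed $\eta_n$ rate and the weighted sup in the statement. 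Normalizing each row by $(n_{g_i^0}^\tau)^{1/2}(\theta_i^\tau)^{-1/2}$ is what turns the intrinsic row scale $\|u_i\|\asymp(\theta_i^\tau/n_{g_i^0}^\tau)^{1/2}$ into the uniform bound, and the extra $(\overline{\theta}/\underline{\theta})^{1/4}$ factor in $\eta_n$ arises from Cauchy--Schwarz when the row-wise Bernstein bound is converted to a sup using $\underline{\theta}\ge n^{-c}$ via a union bound over $n$ nodes.

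The main obstacle, as usual for entrywise eigenvector analysis, is that $\hat{U}_{1n}$ depends on the full random matrix $L_\tau'$, so $[L_\tau'-\mathcal{L}_\tau']_{i\cdot}$ and $\hat{U}_{1n}$ are not independent; a naive Bernstein bound on $\sum_j(L_\tau'-\mathcal{L}_\tau')_{ij}[\hat{U}_{1n}]_{j\cdot}$ is not admissible. The contraction-mapping strategy advertised in the introduction is precisely what sidesteps this: by replacing $\hat{U}_{1n}$ with $U_{1n}\hat{O}_n^T+(\hat{U}_{1n}\hat{O}_n-U_{1n})\hat{O}_n^T$, we isolate a deterministic piece (amenable to Bernstein) and a coupled piece whose prefactor is small, yielding the self-bounding inequality above. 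Verifying that the coupled piece really is small enough to give $\kappa<1$ (under Assumption~\ref{ass:rate3}) is the technically most delicate step, and it is where the tightness of the constants $10^{-8}$, $10^{-4}$ in Assumption~\ref{ass:rate3} is used.
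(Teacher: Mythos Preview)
Your overall architecture---eigenvalue equation plus iteration---matches the paper, but the step you propose for breaking the dependence between $[L_\tau'-\mathcal{L}_\tau']_{i\cdot}$ and $\hat{U}_{1n}$ does not deliver a contraction factor $\kappa<1$. After writing $\hat{U}_{1n}=U_{1n}\hat{O}_n^T+(\hat{U}_{1n}\hat{O}_n-U_{1n})\hat{O}_n^T$, the coupled piece is
\[
(n_{g_i^0}^\tau)^{1/2}(\theta_i^\tau)^{-1/2}\bigl\|[L_\tau'-\mathcal{L}_\tau']_{i\cdot}(\hat{U}_{1n}\hat{O}_n-U_{1n})\bigr\|\,|\hat\sigma_{Kn}|^{-1}.
\]
If you bound it via the spectral norm $\|\hat{U}_{1n}\hat{O}_n-U_{1n}\|$ from Lemma~\ref{lem:dk3}, you are left with the \emph{row} norm $\|[L_\tau'-\mathcal{L}_\tau']_{i\cdot}\|$, which is only $O((\mu_n^\tau)^{-1/2})$, not $O(\gamma_n)$; after the weight $(n_{g_i^0}^\tau)^{1/2}(\theta_i^\tau)^{-1/2}\asymp (n/(K\theta_i))^{1/2}$ this blows up. If instead you try to feed back the row-wise sup by $\|\sum_j E_{ij}\Delta_{j\cdot}\|\le \Gamma_n\cdot \sum_j|E_{ij}|(\theta_j/\theta_i)^{1/2}$, the weighted $\ell_1$ row sum of $E=L_\tau'-\mathcal{L}_\tau'$ is of order $\rho_n$ (the $P_{ij}$ part alone already gives $\rho_n$), so $\kappa\approx \rho_n/|\sigma_{Kn}|\ge 1$. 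Either way you do not get a contraction.

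The paper resolves this with a leave-one-out construction rather than the decomposition you propose: it replaces $\hat{U}_{1n}\hat{O}_n$ by $\hat{U}_{1n}^{(i)}\hat{O}_n^{(i)}$, the eigenvectors of the matrix $L_\tau^{(i)}$ obtained by substituting the $i$th row and column of $A$ by their expectations. Then $\hat{U}_{1n}^{(i)}\hat{O}_n^{(i)}$ is genuinely independent of $[A]_{i\cdot}-[P]_{i\cdot}$, so Bernstein applies cleanly (Lemma~\ref{lem:V1n3}); and the leftover $\hat{U}_{1n}\hat{O}_n-\hat{U}_{1n}^{(i)}\hat{O}_n^{(i)}$ is shown (Lemma~\ref{lem:looDC}) to satisfy $\|\mathcal{N}_n(\hat{U}_{1n}\hat{O}_n-\hat{U}_{1n}^{(i)}\hat{O}_n^{(i)})\|_{2\to\infty}\lesssim \gamma_n|\sigma_{Kn}|^{-1}(\psi_n+\rho_n+\cdots)$, where $\psi_n$ is the current bound on $\Gamma_n$. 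The contraction factor on $\psi_n$ then becomes $\rho_n\gamma_n\sigma_{Kn}^{-2}$, which \emph{is} small by Assumption~\ref{ass:rate3}.2, and iterating from the trivial bound $\delta_n^{(0)}=n^{1/2}\underline{\theta}^{-1/2}$ yields the theorem. In short, the iteration is real, but its small constant comes from the leave-one-out perturbation $\hat{U}_{1n}^{(i)}$, not from replacing $\hat{U}_{1n}$ by $U_{1n}$.
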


Theorem \ref{thm:main_DC} is essential to establish the strong consistency
result. The following Assumption specifies the rate requirement for strong
consistency depending on whether the standard or modified K-means algorithm
is used.

\begin{ass}
\label{ass:ratestrong4}Let $C^{\ast }$ denote the absolute constant in
Theorem \ref{thm:main_DC}. For $n$ sufficiently large we have

\begin{enumerate}
\item $C^{\ast }K^{3/2}\eta _{n}\leq \frac{c_{1}}{257},$

\item $30C^{\ast }K\eta _{n}\leq c_{1}\sqrt{2}$.
\end{enumerate}
\end{ass}

\begin{cor}
\label{cor:DC} If Assumptions \ref{ass:id3}--\ref{ass:rate3} hold, then 
\begin{equation}
\sup_{i}\biggl\Vert\frac{\hat{u}_{i}}{\Vert \hat{u}_{i}\Vert }-\frac{\hat{O}%
_{n}u_{i}}{\Vert \hat{O}_{n}u_{i}\Vert }\biggr\Vert\leq 2C^*\eta_n \quad a.s.
\label{eq:uhat}
\end{equation}%
If Assumption \ref{ass:ratestrong4}.1 holds and the K-means algorithm is
applied to $\hat{\beta}_{in}=\hat{u}_{1i}/\Vert \hat{u}_{1i}\Vert $ and $%
\beta _{g_{i}^{0}n}=\hat{O}_{n}u_{1i}/\Vert u_{1i}\Vert $. Denote the
obtained community memberships as $\{\hat{g}_i\}_{i=1}^n$. Then, 
\begin{equation*}
\sup_{1\leq i\leq n}\mathbf{1}\{\hat{g}_{i}\neq g_{i}^{0}\}=0\quad a.s.
\end{equation*}
If Assumption \ref{ass:ratestrong4}.2 holds and the modified K-means
algorithm is applied to $\hat{\beta}_{in}=\hat{u}_{1i}/\Vert \hat{u}%
_{1i}\Vert $ and $\beta _{g_{i}^{0}n}=\hat{O}_{n}u_{1i}/\Vert u_{1i}\Vert $.
Denote the obtained community memberships as $\{\tilde{g}_i\}_{i=1}^n$.
Then, 
\begin{equation*}
\sup_{1\leq i\leq n}\mathbf{1}\{\tilde{g}_{i}\neq g_{i}^{0}\}=0\quad a.s.
\end{equation*}
\end{cor}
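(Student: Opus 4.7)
The plan is to first convert the entrywise bound of Theorem \ref{thm:main_DC} into a uniform bound on the row-normalized eigenvectors, and then feed that bound into the general K-means consistency results of Theorems \ref{thm:strong} and \ref{thm:strong2}.

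\textbf{Step 1 (bound on normalized rows).} For each $i$, set $v_i = \hat{u}_i$ and $w_i = \hat{O}_n u_i$. Theorem \ref{thm:id3}(2) gives $\|w_i\| = \|u_i\| = (\theta_i^\tau/n_{g_i^0}^\tau)^{1/2}$, while Theorem \ref{thm:main_DC} yields $\|v_i - w_i\| \leq C^* \eta_n (\theta_i^\tau/n_{g_i^0}^\tau)^{1/2}$ uniformly in $i$ almost surely; in particular $\|v_i - w_i\| \leq C^* \eta_n \|w_i\|$. I would then invoke the elementary inequality
\begin{equation*}
\left\|\frac{v}{\|v\|} - \frac{w}{\|w\|}\right\| \leq \frac{2\|v - w\|}{\max(\|v\|,\|w\|)},
\end{equation*}
proved by inserting and subtracting $w/\|v\|$, to deduce \eqref{eq:uhat}. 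Smallness of $\eta_n$ coming from Assumption \ref{ass:rate3} keeps $\|\hat{u}_i\| > 0$, so the normalization is well defined.

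\textbf{Step 2 (identifying the centroids).} By Theorem \ref{thm:id3}(3), the normalized rows $u_i/\|u_i\|$ take only $K$ distinct values, one per cluster; since $\hat{O}_n$ is orthogonal, the same is true of $\beta_{kn} := \hat{O}_n u_i/\|u_i\|$ for any $i$ with $g_i^0 = k$. These $K$ centroids are unit vectors, and Theorem \ref{thm:id3}(3) together with the orthogonality of $\hat{O}_n$ gives $\|\beta_{kn} - \beta_{k'n}\| = \sqrt{2}$ whenever $k \neq k'$. Thus, in the notation of Sections \ref{sec:kmeans} and \ref{sec:med}, I may take $M = 1$, $c_{1n} = \sqrt{2}$, and by Step 1, $c_{2n} = 2 C^* \eta_n$ almost surely, with $\hat{\beta}_{in} = \hat{u}_i/\|\hat{u}_i\|$ and $\beta_{g_i^0 n}$ as above.

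\textbf{Step 3 (invoking the K-means consistency).} For the standard K-means (the second claim), Assumption \ref{ass:theta}.3 reduces to $(2 c_{2n} c_1^{1/2} + 16 K^{3/4} c_{2n}^{1/2})^2 \leq 2 c_1$; the first summand is dominated by the $K^{3/4} c_{2n}^{1/2}$ term under the smallness hypotheses, so this reduces to a bound of the form $C^* K^{3/2} \eta_n \lesssim c_1$, which is exactly what Assumption \ref{ass:ratestrong4}.1 provides. Assumption \ref{ass:nk3} plays the role of Assumption \ref{ass:nk} in controlling the community sizes needed by Lemma \ref{lem:kmeans1} and Theorem \ref{thm:strong}, and the conclusion $\sup_i \mathbf{1}\{\hat{g}_i \neq g_i^0\} = 0$ a.s.\ follows. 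For the modified K-means (the third claim), Assumption \ref{ass:theta2}.2 becomes $30 C^* K \eta_n \leq c_1 \sqrt{2}$, which matches Assumption \ref{ass:ratestrong4}.2 verbatim; Theorem \ref{thm:strong2} then yields $\sup_i \mathbf{1}\{\tilde{g}_i \neq g_i^0\} = 0$ a.s.

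The main obstacle is essentially bookkeeping: threading the small absolute constants (in particular $2$, $C^*$, $15$, $16$, $257$) through the normalization inequality and the K-means separation conditions so that the rate hypotheses of Assumption \ref{ass:ratestrong4} are tight enough to imply Assumptions \ref{ass:theta}.3 and \ref{ass:theta2}.2 with the parameter choices $c_{1n} = \sqrt{2}$, $c_{2n} = 2 C^* \eta_n$, $M = 1$. All substantive analytic content -- the entrywise eigenvector bound and the K-means lemmas -- is already in place.
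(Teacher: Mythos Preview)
Your proposal is correct and matches the paper's argument essentially step for step: the paper also derives \eqref{eq:uhat} by the same insert-and-subtract normalization inequality (written out explicitly rather than cited as a lemma), then applies Theorems \ref{thm:strong} and \ref{thm:strong2} with exactly the choices $M=1$, $c_{1n}=\sqrt{2}$, $c_{2n}=2C^*\eta_n$, verifying Assumptions \ref{ass:theta}.3 and \ref{ass:theta2}.2 via Assumptions \ref{ass:ratestrong4}.1 and \ref{ass:ratestrong4}.2 just as you outline. The only cosmetic difference is that the paper divides by $\|u_i\|$ rather than $\max(\|\hat u_i\|,\|u_i\|)$ in the normalization step, which gives the same bound.
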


Corollary \ref{cor:DC} justifies the use of standard and modified K-means
algorithms on $\hat{u}_{in}/\Vert \hat{u}_{in}\Vert $ provided the bound on
the right hand side of \eqref{eq:uhat} is $O\left( 1/K^{3/2}\right)$ and $%
O(K)$, respectively, which is ensured by Assumptions \ref{ass:ratestrong4}.1
and \ref{ass:ratestrong4}.2, respectively.

\subsubsection{An adaptive procedure}

\label{sec:adaptive} Given the strong consistency of the spectral
clustering, it is possible to consistently estimate $\Theta$ by some
estimator, namely $\hat{\Theta}$. Built upon $\hat{\Theta}$, we propose an
adaptive procedure by spectral clustering a new regularized graph Laplacian
denoted as $L_\tau^{\prime \prime }$, which is defined as 
\begin{equation*}
L_\tau^{\prime \prime } = (D_\tau^{\prime \prime })^{-1/2} A_\tau^{\prime
\prime } (D_\tau^{\prime \prime })^{-1/2},
\end{equation*}
where $A_\tau^{\prime \prime } = A+ \tau n^{-1} \hat{\Theta}\iota_n
\iota_n^T \hat{\Theta}$ and $D_\tau^{\prime \prime } = \text{diag}%
(A_\tau^{\prime \prime } \iota_n)$. The population counterpart of $%
L_\tau^{\prime \prime }$ is denoted as $\mathcal{L}_\tau^{\prime \prime }$
and defined as 
\begin{equation*}
\mathcal{L}_\tau^{\prime \prime } = (\mathcal{D}_\tau^{\prime \prime
})^{-1/2}P_\tau^{\prime \prime } (\mathcal{D}_\tau^{\prime \prime })^{-1/2},
\end{equation*}
where $P_\tau^{\prime \prime } = P + \tau n^{-1} \Theta\iota_n \iota_n^T
\Theta = \Theta Z B_\tau^{\prime \prime }Z^T \Theta$, $B_\tau^{\prime \prime
} = B + \tau n^{-1} \iota_k \iota_k^T$, and $\mathcal{D}_\tau^{\prime \prime
} = \text{diag}(P_\tau^{\prime \prime }\iota_n) = D+\tau \Theta$.

Provided $\hat{\Theta}$ is consistent, we conjecture that one can show the
adaptive procedure is strongly consistent by applying the same proof
strategy as used in the derivation of strong consistency of the spectral
clustering based on $L_{\tau }$ and $L_{\tau }^{\prime }$. We leave this
important extension for future research. In the following, we focus on
establishing the consistency of $\hat{\Theta}$.

Given the estimated group membership $\{\hat{g}_i\}_{i=1}^n$, we follow \cite%
{WSW2016} and estimate $\Theta$ by $\hat{\Theta} = \text{diag}(\hat{\theta}%
_1,\cdots,\hat{\theta}_n)$, where 
\begin{equation}
\hat{\theta}_{i}=\hat{n}_{\hat{g}_{i}}(\sum\nolimits_{j=1}^{n}A_{ij})/(\sum%
\nolimits_{i^{\prime }:\hat{g}_{i^{\prime }}=\hat{g}_{i}}\sum%
\nolimits_{j=1}^{n}A_{i^{\prime }j})  \label{eq:theta}
\end{equation}%
and $\hat{n}_{k}=\#\{i:\hat{g}_{i}=k\}$. Next, we show $\hat{\theta}_{i}
\rightarrow \theta_i$ a.s. uniformly in $i = 1,\cdots,n$.

\begin{ass}
\begin{enumerate}
\item $\limsup_{n}\overline{\theta }<\infty .$

\item $\sup_{1\leq i\leq n}\mathbf{1}\{\hat{g}_{i}\neq g_{i}^{0}\}=0\quad
a.s.$
\end{enumerate}

\label{ass:adaptive_rate}
\end{ass}

Assumption \ref{ass:adaptive_rate}.1 requires that the degree of
heterogeneity is bounded, which is common in practical applications.
Assumption \ref{ass:adaptive_rate}.2 requires the preliminary clustering is
strongly consistent. For instance, this assumption can be verified by
Corollary \ref{cor:DC}. However, we also allow for any other strongly
consistent clustering methods, such as the conditional pseudo likelihood
method proposed by \cite{ACBL13}.

Let $m_{k}=\sum_{j=1}^{n}\theta _{j}B_{kg_{j}^{0}}$ and $\underline{m}%
_{n}=\inf_{k}m_{k}$. Note $m_{k}=\sum_{i^{\prime }\in C_{k}}d_{i^{\prime
}}/n_{k}$ is the average degree of nodes in community $k$ and \underline{$m$}%
$_{n}$ is the minimal average degree.

\begin{thm}
If Assumption \ref{ass:adaptive_rate} holds, then $\sup_{1\leq i\leq n}|\hat{%
\theta}_{i}-\theta _{i}|=O_{a.s.}(\log (n)/\underline{m}_{n})$. \label%
{thm:thetahat}
\end{thm}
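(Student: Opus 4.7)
The plan is to reduce the problem to the oracle case in which the group labels are known, and then to control degree fluctuations for individual nodes and for entire clusters via Bernstein's inequality combined with the Borel--Cantelli lemma.

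By Assumption \ref{ass:adaptive_rate}.2, the event $\{\hat{g}_{i}=g_{i}^{0}\text{ for all }i\}$ holds almost surely for $n$ sufficiently large, so on this event $\hat{n}_{\hat{g}_{i}}=n_{g_{i}^{0}}$ and the inner sum in \eqref{eq:theta} is exactly over $C_{g_{i}^{0}}$, giving
\begin{equation*}
\hat{\theta}_{i}=\frac{n_{g_{i}^{0}}\hat{d}_{i}}{\hat{S}_{g_{i}^{0}}}, \qquad \hat{d}_{i}=\sum_{j}A_{ij}, \qquad \hat{S}_{k}=\sum_{i'\in C_{k}}\hat{d}_{i'}.
\end{equation*}
The normalization \eqref{eq:thetanormalization} together with $d_{i}=\theta_{i}m_{g_{i}^{0}}$ gives $S_{k}:=\sum_{i'\in C_{k}}d_{i'}=n_{k}m_{k}$, and hence $\theta_{i}=n_{g_{i}^{0}}d_{i}/S_{g_{i}^{0}}$, placing $\theta_{i}$ and $\hat{\theta}_{i}$ in directly comparable form. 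With $k=g_{i}^{0}$, elementary algebra yields
\begin{equation*}
\hat{\theta}_{i}-\theta_{i}=\frac{n_{k}}{\hat{S}_{k}S_{k}}\bigl[(\hat{d}_{i}-d_{i})S_{k}-d_{i}(\hat{S}_{k}-S_{k})\bigr],
\end{equation*}
so the task reduces to uniform almost-sure control of $|\hat{d}_{i}-d_{i}|$ and $|\hat{S}_{k}-S_{k}|$, together with the fact that $\hat{S}_{k}$ is of the same order as $S_{k}$ eventually.

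Each $\hat{d}_{i}$ is a sum over $j\neq i$ of independent Bernoulli entries with variance bounded by $d_{i}$; reducing $\hat{S}_{k}$ to the strictly upper-triangular entries $\{A_{i'j}:i'<j\}$ (within-cluster edges being double-counted only yields a bounded multiplicative factor), $\hat{S}_{k}$ is also a sum of independent Bernoulli variables with variance of order $S_{k}$. Bernstein's inequality then gives, for some absolute constants $c,c'$,
\begin{equation*}
\Pr(|\hat{d}_{i}-d_{i}|\geq t)\leq 2\exp\!\bigl(-c\min(t^{2}/d_{i},t)\bigr), \qquad \Pr(|\hat{S}_{k}-S_{k}|\geq t)\leq 2\exp\!\bigl(-c'\min(t^{2}/S_{k},t)\bigr).
\end{equation*}
Taking deviations of order $\sqrt{d_{i}\log n}+\log n$ and $\sqrt{S_{k}\log n}+\log n$, union bounding over $i\in[n]$ and $k\in[K]$, and applying Borel--Cantelli to the polynomially small tails produce, almost surely,
\begin{equation*}
\sup_{i}\frac{|\hat{d}_{i}-d_{i}|}{\sqrt{d_{i}\log n}+\log n}=O(1), \qquad \sup_{k}\frac{|\hat{S}_{k}-S_{k}|}{\sqrt{S_{k}\log n}+\log n}=O(1).
\end{equation*}

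Substituting these bounds into the decomposition and using (i) $d_{i}=\theta_{i}m_{k}\leq\bar{\theta}m_{k}$ with $\bar{\theta}=O(1)$ by Assumption \ref{ass:adaptive_rate}.1, (ii) $n_{k}\geq c_{1}n/K$ by Assumption \ref{ass:nk}, and (iii) $\hat{S}_{k}\geq S_{k}/2$ eventually (a consequence of the second concentration bound, since $S_{k}=n_{k}m_{k}$ dominates its fluctuation), the first summand in the numerator contributes $(\hat{d}_{i}-d_{i})/m_{k}$ up to constants, while the second contributes $d_{i}(\hat{S}_{k}-S_{k})/(n_{k}m_{k}^{2})$, which is smaller by a factor of $n_{k}^{-1/2}$. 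After cancellation, $|\hat{\theta}_{i}-\theta_{i}|\leq C(\sqrt{\log n/m_{k}}+\log n/m_{k})$ almost surely and uniformly in $i$, and in the relevant sparse regime $\underline{m}_{n}=O(\log n)$ the linear term dominates and yields $O_{a.s.}(\log n/\underline{m}_{n})$. The main obstacle will be tracking the transition in Bernstein's bound between the sub-Gaussian regime ($d_{i}\gg\log n$) and the linear regime ($d_{i}=O(\log n)$), since the latter dictates the stated rate, and carefully reducing the double-counted sum $\hat{S}_{k}$ to independent Bernoulli entries without losing the correct variance scale.
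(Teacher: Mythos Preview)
Your approach mirrors the paper's: reduce to the oracle labels via Assumption \ref{ass:adaptive_rate}.2, then control the numerator and denominator of $\hat\theta_i-\theta_i$ by Bernstein plus Borel--Cantelli. The only cosmetic difference is that the paper applies Bernstein directly to the combined numerator $\sum_{j}(A_{ij}n_k-\theta_i\sum_{i'\in C_k}A_{i'j})$ as a single mean-zero sum, whereas you bound $\hat d_i-d_i$ and $\hat S_k-S_k$ separately; there is no cancellation between these two pieces, so both routes yield the same order.

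The obstacle you flag at the end is genuine and is not an artifact of your separate decomposition. The variance of the combined numerator is of order $n_k^2 m_k$ (the dominant contribution is $(n_k-\theta_i)^2\sum_jP_{ij}\approx n_k^2\theta_i m_k$), so the sub-Gaussian branch of Bernstein delivers $\sqrt{\log n/m_k}$, which exceeds $\log n/m_k$ whenever $\underline m_n\gg\log n$. The paper's proof reaches the stated rate $\log n/\underline m_n$ by asserting the variance bound $\sum_j\mathbb E X_j^2\le n_k^2-\theta_i^2 m_kn_k$, which is too small by a factor $m_k$; with the correct variance the paper's Bernstein display gives $\exp(-C\varepsilon_n^2 m_k/(1+\varepsilon_n))$, and this forces $\varepsilon_n\gtrsim\sqrt{\log n/m_k}$ in the dense regime, exactly as in your calculation. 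So your restriction to $\underline m_n=O(\log n)$ is the honest statement of what the argument actually proves; in the dense regime the rate is $\sqrt{\log n/\underline m_n}$, which still gives consistency of $\hat\Theta$ (the paper's stated application) but is weaker than the displayed $O(\log n/\underline m_n)$.
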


In order for $\hat{\Theta}$ to be consistent, we need the average degree for
each community to grow faster than $\log (n)$. In some cases, the average
degree and the minimal degree are of the same order of magnitude. Then we
basically need $\mu _{n}/\log (n)\rightarrow \infty $ for the consistency of 
$\hat{\Theta}$. In our simulation designs, $\mu _{n}/\log (n)\rightarrow 0$,
which is, in some sense, the worst case for the adaptive procedure. However,
even in this case, the performance of the adaptive procedure improves upon
that of the spectral clustering based on $L_{\tau }^{\prime }$.

\section{Numerical Examples on Simulated Networks}

\label{sec:sim} In this section, we consider the finite sample performance
of spectral clustering with two and three communities, i.e., $K=2$ and $K=3$%
. The corresponding numbers of community members have ratio $1:1$ and $1:1:1$
for these two cases, respectively. The number of nodes is given by 50 and
200 for each community, which indicates $n=100$ and $400$ for the case of $%
K=2$ and $150$ and $600$ for the case of $K=3$. We use four variants of
graph Laplacian to conduct the spectral clustering, namely, $L$, $L_{\tau }$%
, $L_{\tau }^{\prime }$, and $L_\tau^{\prime \prime}$ defined in Sections %
\ref{sec:SC} and \ref{sec:ext}.

\begin{enumerate}
\item $L=D^{-1/2}AD^{-1/2}$ where $D=\mathrm{diag}(A\iota _{n})$. It is
possible that for some realizations, the minimum degree is 0, yielding
singular $D$.

\item $L_{\tau }=D_{\tau }^{-1/2}A_{\tau }D_{\tau }^{-1/2}$ where $A_{\tau
}=A+\tau J_{n},$ $D_{\tau }=\mathrm{diag}(A_{\tau }\iota _{n})$, and $%
J_{n}=n^{-1}\iota _{n}\iota _{n}^{T }$.

\item $L_{\tau }^{\prime }=D_{\tau }^{-1/2}AD_{\tau }^{-1/2}$ where $D_{\tau
}=D+\tau I_{n}$ and $I_{n}$ is an $n\times n$ identity matrix.

\item $L_{\tau }^{\prime \prime }=(D_{\tau }^{\prime \prime })^{-1/2}A_{\tau
}^{\prime \prime }(D_{\tau }^{\prime \prime })^{-1/2}$ where $A_{\tau
}^{\prime \prime }=A+\tau n^{-1}\hat{\Theta}\iota _{n}\iota _{n}^{T}\hat{
\Theta}$ and $D_{\tau }^{\prime \prime }=\text{diag}(A_{\tau }^{\prime
\prime }\iota _{n})$.
\end{enumerate}

The theoretical results in Sections \ref{sec:SC} and \ref{sec:ext} suggest
the strong consistency of the spectral clustering with $L_{\tau }$ and $%
L_{\tau }^{\prime }$ for the standard SBM and DC-SBM, respectively under
some conditions. In Sections \ref{sec:simsbm} and \ref{sec:simdcsbm}, we
consider these two cases. In addition, for the DC-SBM, we will also consider
the adaptive procedure introduced in Subsection \ref{sec:adaptive}.
Additional simulation results of spectral clustering with $L$ and $L_{\tau
}^{\prime }$ for the standard SBM and $L$ and $L_{\tau }$ for the DC-SBM can
be found in the supplementary Appendix \ref{sec:addsim}.

For the standard SBM, after obtaining the eigenvectors corresponding to the
largest $K$ eigenvalues of the graph Laplacian ($L$, $L_\tau$ and $%
L_\tau^\prime$), we classify them based on K-means algorithm (Matlab
\textquotedblleft kmedoids\textquotedblright\ function, which is more robust
to noise and outliers than \textquotedblleft kmeans\textquotedblright\
function, with default options). For the DC-SBM, before classification, we
normalize each row of the $n\times K$ eigenvectors so that its $L_{2}$ norm
equals 1. For comparison, we apply the unconditional pseudo-likelihood
method (UPL) and conditional pseudo-likelihood method (CPL) proposed by \cite%
{ACBL13} to detect the communities in the SBM and the DC-SBM, respectively.%
\footnote{\label{foot:scp} As \cite{ACBL13} remark, the UPL and CPL are
correctly fitting the SBM and the DC-SBM, respectively. In both UPL and CPL,
the initial classification is generated by spectral clustering with
perturbations (SCP). The SCP is spectral clustering based on $L_{\tau }$
with $\tau =\bar{d}/4$ and $\bar{d}$ being the average degree.} To evaluate
the classification performance, we consider two criteria: the Correct
Classification Proportion (CCP) and the Normalized Mutual Information (NMI). All the simulation results below are computed using the modified K-means algorithm. The simulation results for the standard K-means algorithm can be found in previous versions of this paper. When the regularizer $\tau$ is small, the modified K-means algorithm can produce slightly more accurate classification while at the optimal $\tau$ selected by our data-driven method explained below, the classification results in terms of CCP and NMI for the two algorithms are basically the same. 

\subsection{The standard SBM\label{sec:simsbm}}

We consider two data generating processes (DGPs).

\noindent \textbf{DGP 1:} Let $K=2$. Each community has $n/2$ nodes. The
matrix $B$ is set as 
\begin{equation*}
B=\frac{2}{n}%
\begin{pmatrix}
\log ^{2}(n) & 0.2\log (n) \\ 
0.2\log (n) & 0.8\log (n)%
\end{pmatrix}%
.
\end{equation*}%
The expected degrees are of order $\log ^{2}(n)$ and $\log (n)$ respectively
for communities 1 and 2.

\noindent \textbf{DGP 2:} Let $K=3$. Each community has $n/3$ nodes. The
matrix $B$ is set as 
\begin{equation*}
B=\frac{3}{n}%
\begin{pmatrix}
n^{1/2} & 0.1\log ^{5/6}(n) & 0.1\log ^{5/6}(n) \\ 
0.1\log ^{5/6}(n) & \log ^{3/2}(n) & 0.1\log ^{5/6}(n) \\ 
0.1\log ^{5/6}(n) & 0.1\log ^{5/6}(n) & 0.8\log ^{5/6}(n)%
\end{pmatrix}%
.
\end{equation*}%
The expected degrees are of order $n^{1/2}$, $\log ^{3/2}(n)$ and $\log
^{5/6}(n)$ respectively for communities 1, 2 and 3. 

We follow \cite{JY16} and select the regularizer $\tau $ that minimizes a
feasible version of 
\begin{equation*}
\Vert L_{\tau }-\mathcal{L}_{\tau }\Vert /|\sigma _{Kn}^{\tau }|.
\end{equation*}%
In particular, for a given $\tau $, we can obtain the community identities $%
\hat{Z}$ based on the spectral clustering of $L_{\tau }$. Given $\hat{Z}$,
we can estimate the block probability matrix $B$ by the fraction of links
between the estimated communities, which is denoted as $\hat{B}$. Let $\hat{P%
}=\hat{Z}\hat{B}\hat{Z}^{T}$, $\hat{P}_{\tau }=\hat{P}+\tau J_{n}$, $\hat{ 
\mathcal{D}}_{\tau }=\mathrm{diag}(\hat{P}_{\tau }\iota _{n})$, $\hat{ 
\mathcal{L}}_{\tau }=\hat{\mathcal{D}}_{\tau }^{-1/2}\hat{P}_{\tau }\hat{ 
\mathcal{D}}_{\tau }^{-1/2}$, and $\hat{\sigma}_{Kn}^{\tau }$ be the $K$-th
largest in absolute value eigenvalue of $\hat{\mathcal{L}}_{\tau }$. Then we
can compute 
\begin{equation*}
Q(\tau )=\Vert L_{\tau }-\hat{\mathcal{L}}_{\tau }\Vert / \vert \hat{\sigma}%
_{Kn}^{\tau } \vert.
\end{equation*}%
We search for some $\tau ^{\mathrm{JY}}$ that minimizes $Q(\tau )$ over a
grid of 20 points, $\tau _{j},$ on the interval $\left[ \tau _{\min },\tau
_{\max }\right] ,$ where $j=1,\ldots ,20,$ $\tau _{\min }=10^{-4}$ and $\tau
_{\max }$ is set to be the expected average degree. We set $\tau _{1}=\tau
_{\min },$ $\tau _{2}=1,$ and $\tau _{j+2}=(\tau _{\max })^{j/18}$ \ for $%
j=1,\ldots ,18.$ \cite{QR13} suggested choosing $\tau $ as the average
degree of nodes, which is approximately equal to the expected average degree.

All results reported here are based on 500 replications. For DGPs 1 and 2,
we report the classification results based on $L_{\tau }=D_{\tau
}^{-1/2}A_{\tau }D_{\tau }^{-1/2}$ in Figures \ref{fig:dgp_1_2} and \ref%
{fig:dgp_2_2}. The results based on $L$ and $L_{\tau }^{\prime }$ are
relegated to the supplementary Appendix \ref{sec:addsim}.


\begin{figure}[]
\centering
\includegraphics[scale = 0.5]{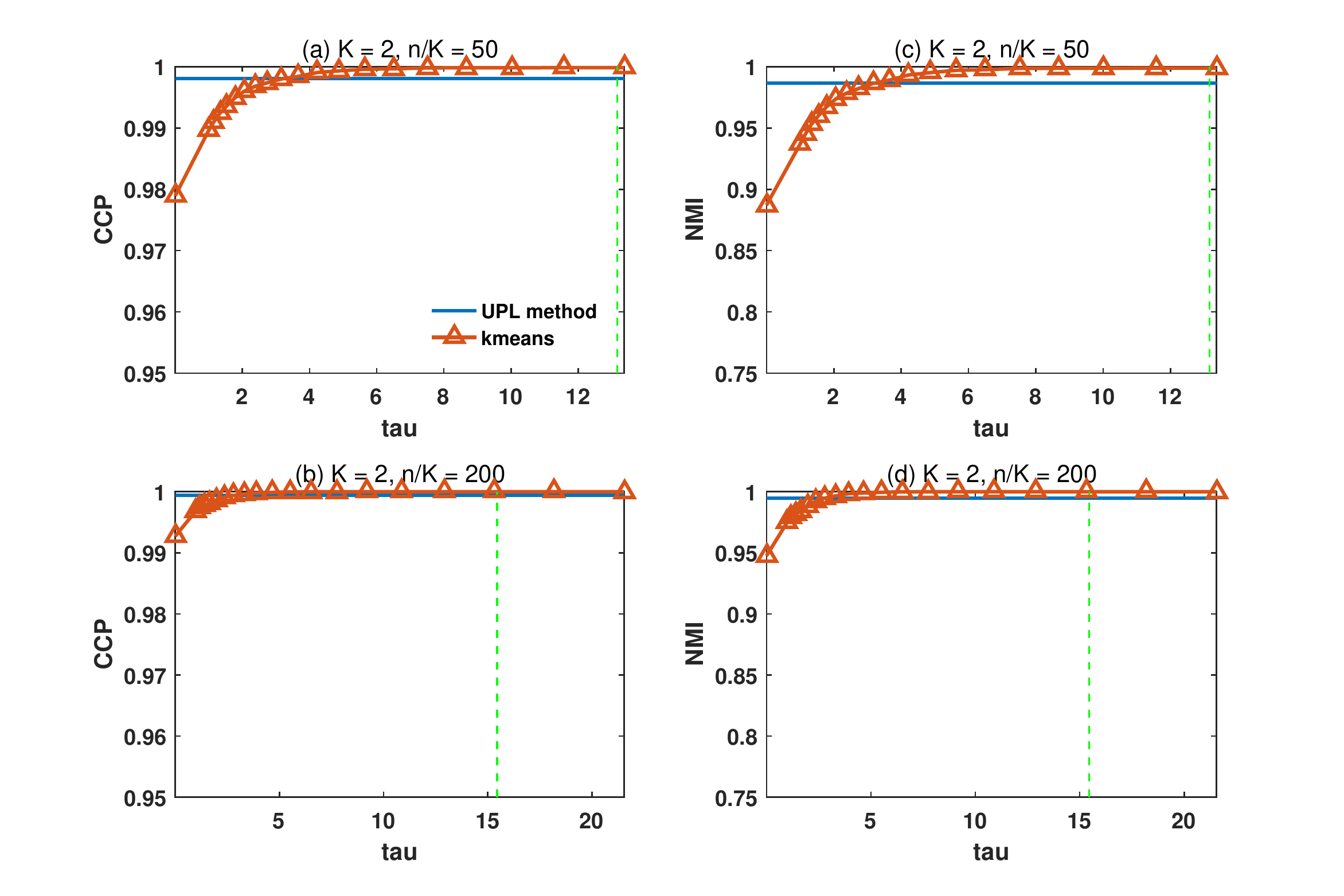}
\caption{Classification results for K-means for DGP 1 ($K=2$) based on $L_{%
\protect\tau }=D_{\protect\tau }^{-1/2}A_{\protect\tau }D_{\protect\tau %
}^{-1/2}$ and for UPL method. The $x $-axis marks $\protect\tau $ values,
and the $y$-axis is either CCP (left column) or NMI (right column). The
green vertical line in each subplot indicates the estimated $\protect\tau $
value by using the method of \protect\cite{JY16}. The first and second rows
correspond to $n/K=50$ and 200, respectively.}
\label{fig:dgp_1_2}
\end{figure}

In Figures \ref{fig:dgp_1_2} and \ref{fig:dgp_2_2}, the first and second
rows correspond to the results with $n=100$ and $n=400$, respectively. For
each replication, we can compute the feasible $\tau ^{\mathrm{JY}}$ as
mentioned above. Their averages across all replications are reported in each
subplot of Figures \ref{fig:dgp_1_2} and \ref{fig:dgp_2_2}. In particular,
the green dashed line represents $\tau ^{\mathrm{JY}}$, which can be easily
compared with the expected average degree, the rightmost vertical border.

We summarize our findings from Figures \ref{fig:dgp_1_2} and \ref%
{fig:dgp_2_2}. First, despite the fact that the minimal degrees for neither
DGP satisfies Assumption \ref{ass:rate} so that the standard spectral
clustering may not be consistent, the regularized spectral clustering
performs quite well in both DGPs. This confirms our theoretical finding that
the regularization can help to relax the requirement on the minimal degree
and to achieve the strong consistency. In addition, when a proper $\tau $ is
used, the spectral clustering based on $L_{\tau }$ outperforms the UPL
method of \cite{ACBL13}. Both results are in line with the theoretical
analysis by \cite{JY16}.

\begin{figure}[]
\centering
\includegraphics[scale = 0.5]{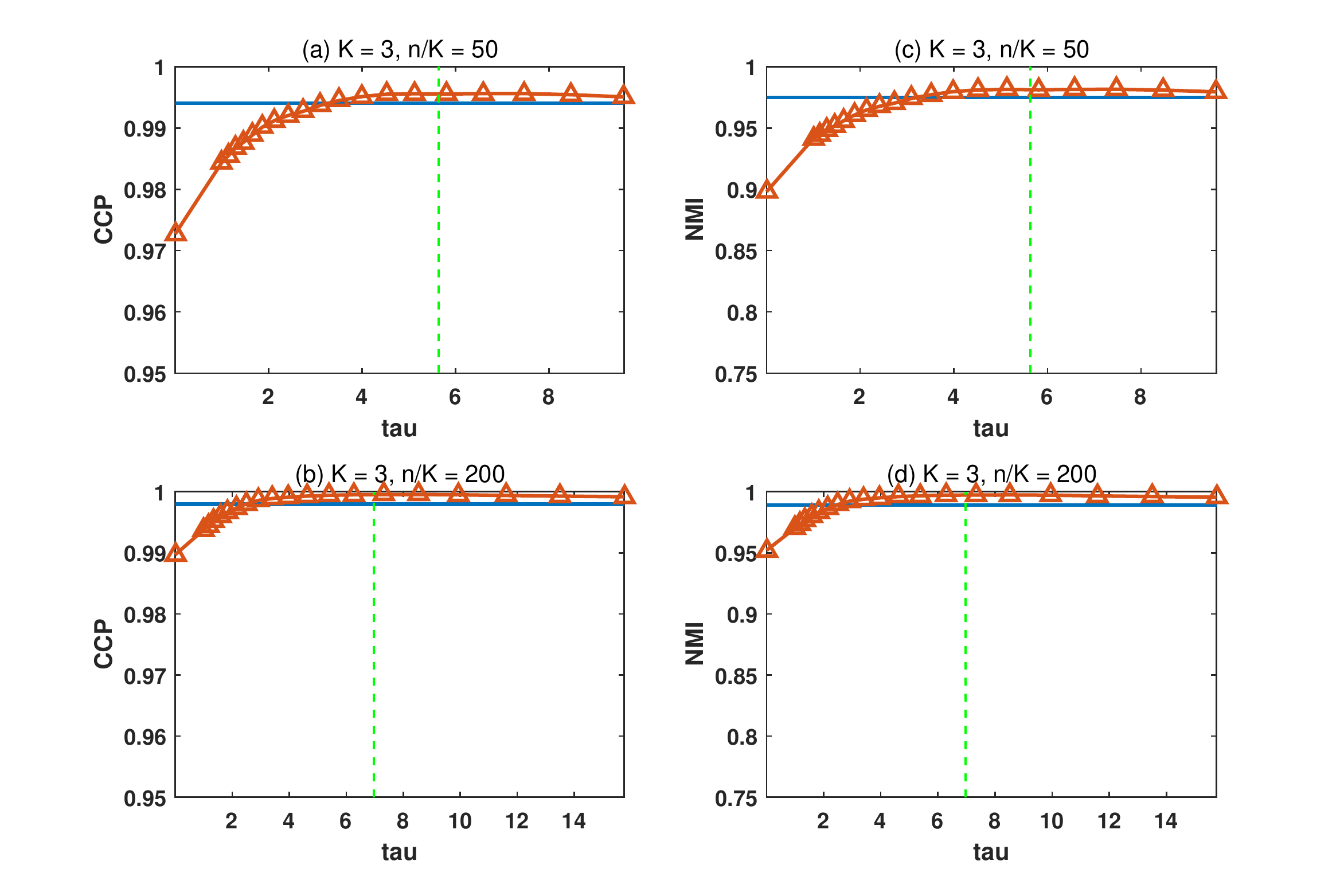}
\caption{Classification results for DGP 2 ($K=3$) based on $L_{\protect\tau %
}=D_{\protect\tau }^{-1/2}A_{\protect\tau }D_{\protect\tau }^{-1/2}$. (See
the explanations in Figure 1.)}
\label{fig:dgp_2_2}
\end{figure}


\subsection{The DC-SBM\label{sec:simdcsbm}}

The next two DGPs consider the degree-corrected SBM.

\noindent \textbf{DGP 3:} This DGP is the same as DGP 1 except that here $%
P=\Theta ZBZ^{T}\Theta ^{T}$, where $\Theta $ is a diagonal matrix with each
diagonal element taking a value from $\{0.5,1.5\}$ with equal probability.

\noindent \textbf{DGP 4:} This one is the same as DGP 2 except that here $%
P=\Theta ZBZ^{T}\Theta ^{T}$ and $\Theta $ is generated as in DGP 3.

To compute the feasible regularizer for the DC-SBM, we modify the previous
procedure to incorporate the degree heterogeneity. In particular, given $%
\tau $, by spectral clustering $L_{\tau }^{\prime }$, we can obtain a
classification $\hat{Z}=(\hat{Z}_{1},\ldots ,\hat{Z}_{n})^{T}$, where $\hat{Z%
}_{i}$ is a $K$ by 1 vector with its $\hat{g}_{i}$th entry being 1 and the
rest being 0 and $\hat{g}_{i}$ is an estimator of node $i$'s community
membership. Let $\hat{n}_{k}=\#\{i:\hat{g}_{i}=k\}$. Then we can estimate
the block probability matrix $B$ and $\Theta $ by $\hat{B}=[\hat{B}%
_{kl}]_{1\leq k,l\leq K}$ and $\hat{\Theta}=\mathrm{diag}(\hat{\theta}%
_{1},\ldots ,\hat{\theta}_{n}),$ where $\hat{\theta}_{i}$ is defined in (\ref%
{eq:theta}) and $\hat{B}_{kl}=(\sum\nolimits_{(i,j):\hat{g}_{i}=k,\hat{g}
_{j}=l}A_{ij})/(\hat{n}_{k}\hat{n}_{l}).$ Let $\hat{P}=\hat{\Theta}\hat{Z}%
\hat{B}\hat{Z}^{T}\hat{\Theta}^{T}$, $\hat{\mathcal{D}}_{\tau }=\mathrm{diag}%
(\hat{P}\iota _{n})+\tau I_{n}$, and $\hat{\mathcal{L}}_{\tau }^{\prime }=%
\hat{\mathcal{D}}_{\tau }^{-1/2}\hat{P}\hat{\mathcal{D}}_{\tau }^{-1/2}$.
Let $\hat{\sigma}_{Kn}^{\prime \tau }$ denote the $K$-th largest eigenvalue
of $\hat{\mathcal{L}}_{\tau }^{\prime }$ (in absolute value). Let 
\begin{equation*}
Q^{\prime }(\tau )=\Vert L_{\tau }^{\prime }-\hat{\mathcal{L}}_{\tau
}^{\prime }\Vert / \vert \hat{\sigma}_{Kn}^{\prime \tau }\vert.
\end{equation*}%
We search for some $\tau ^{\prime \mathrm{JY}}$ that minimizes $Q^{\prime
}(\tau )$ over the same aforementioned grid.

\begin{figure}[]
\centering
\includegraphics[scale = 0.6]{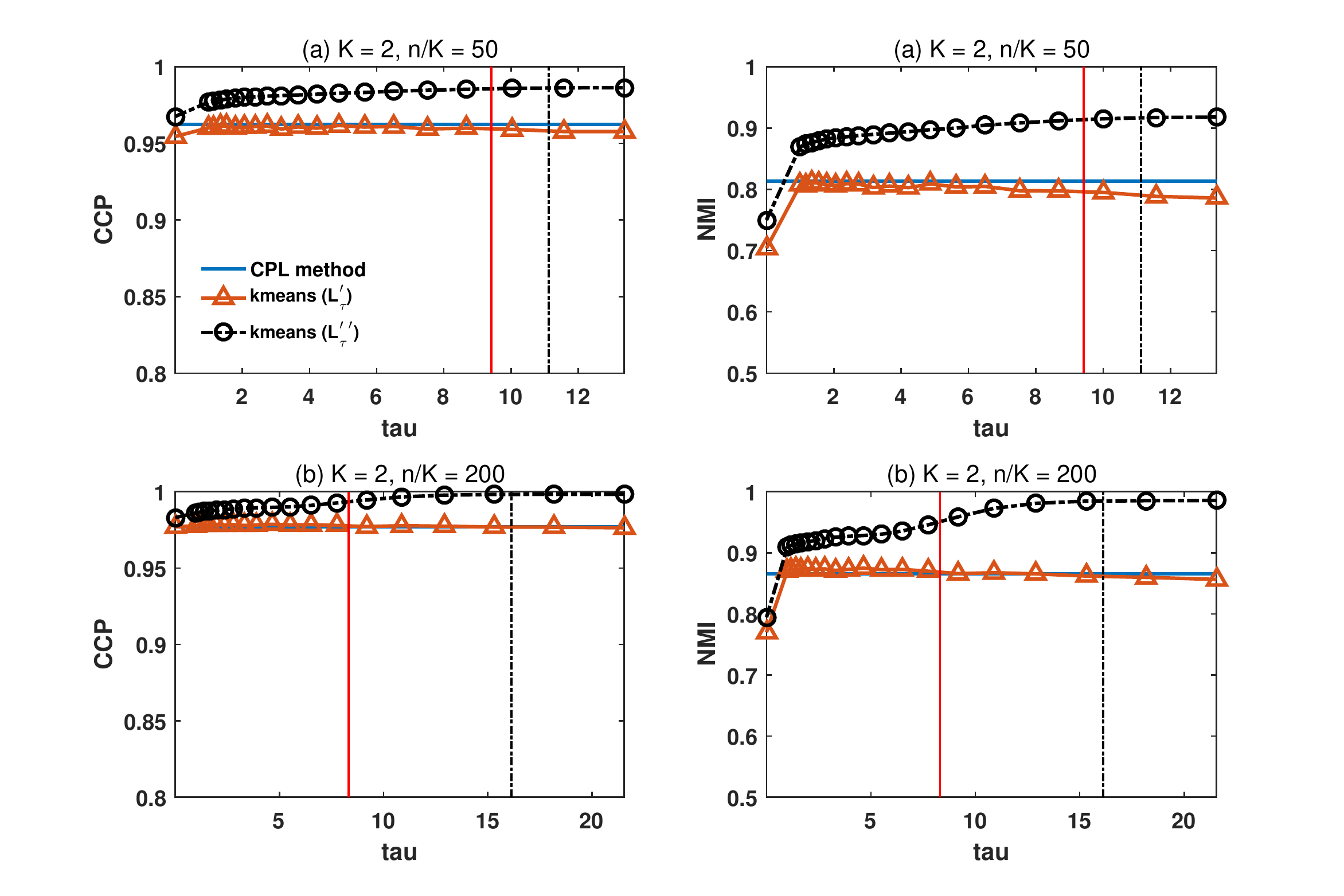}
\caption{Classification results for DGP 3 ($K=2$, degree-corrected) based on 
$L_{\protect\tau }^{\prime }=D_{\protect\tau }^{-1/2}AD_{\protect\tau %
}^{-1/2}$ and $L_{\protect\tau }^{\prime \prime }=D_{\protect\tau }^{-1/2}A_{%
\protect\tau }D_{\protect\tau }^{-1/2}$. The red and black vertical lines
correspond to the optimal regularizers $\protect\tau ^{\prime \mathrm{JY}}$
and $\protect\tau ^{\prime \prime \mathrm{JY}}$, respectively. (See Figure 
\protect\ref{fig:dgp_1_2} for the explanation of other features of the
figure.)}
\label{fig:dgp_3_1}
\end{figure}


For DGPs 3 and 4, we report the classification results based on $L_{\tau
}^{\prime }=D_{\tau }^{-1/2}AD_{\tau }^{-1/2}$ as the orange lines in
Figures \ref{fig:dgp_3_1} and \ref{fig:dgp_4_1}. For each subplot, the
rightmost border line and the red vertical line represent the averages of $%
\bar{d}$ and $\tau ^{\prime \mathrm{JY}}$, respectively. Figures \ref%
{fig:dgp_3_1} and \ref{fig:dgp_4_1} show the regularized spectral clustering
based on $L_\tau^\prime $ is slightly outperformed by CPL in DC-SBMs.
However, $\tau ^{\prime \mathrm{JY}}$ has the close-to-optimal performance
in terms of both CCP and NMI over a range of values for $\tau $.

Table \ref{table:feasible} reports the classification results for the
spectral clustering with $\tau =\tau ^{\mathrm{JY}}$ for DGPs 1--2 (or $\tau
^{\prime \mathrm{JY}}$ for DGPs 3--4) and $\bar{d}$ in comparison with those
for the UPL (or CPL for DGPs 3--4) method over 500 replications. In general,
the spectral clustering with $\tau =\tau ^{\mathrm{JY}}$ outperforms the UPL
method in DGPs 1--2 but slightly underperforms the CPL method for DGPs 3 and
4. In all cases, we observe that the increase of the probability of correct
classification as $n$ increases. This is consistent with the theory because
both the UPL/CPL method and our regularized spectral clustering method are
strongly consistent.

\begin{figure}[]
\centering
\includegraphics[scale = 0.5]{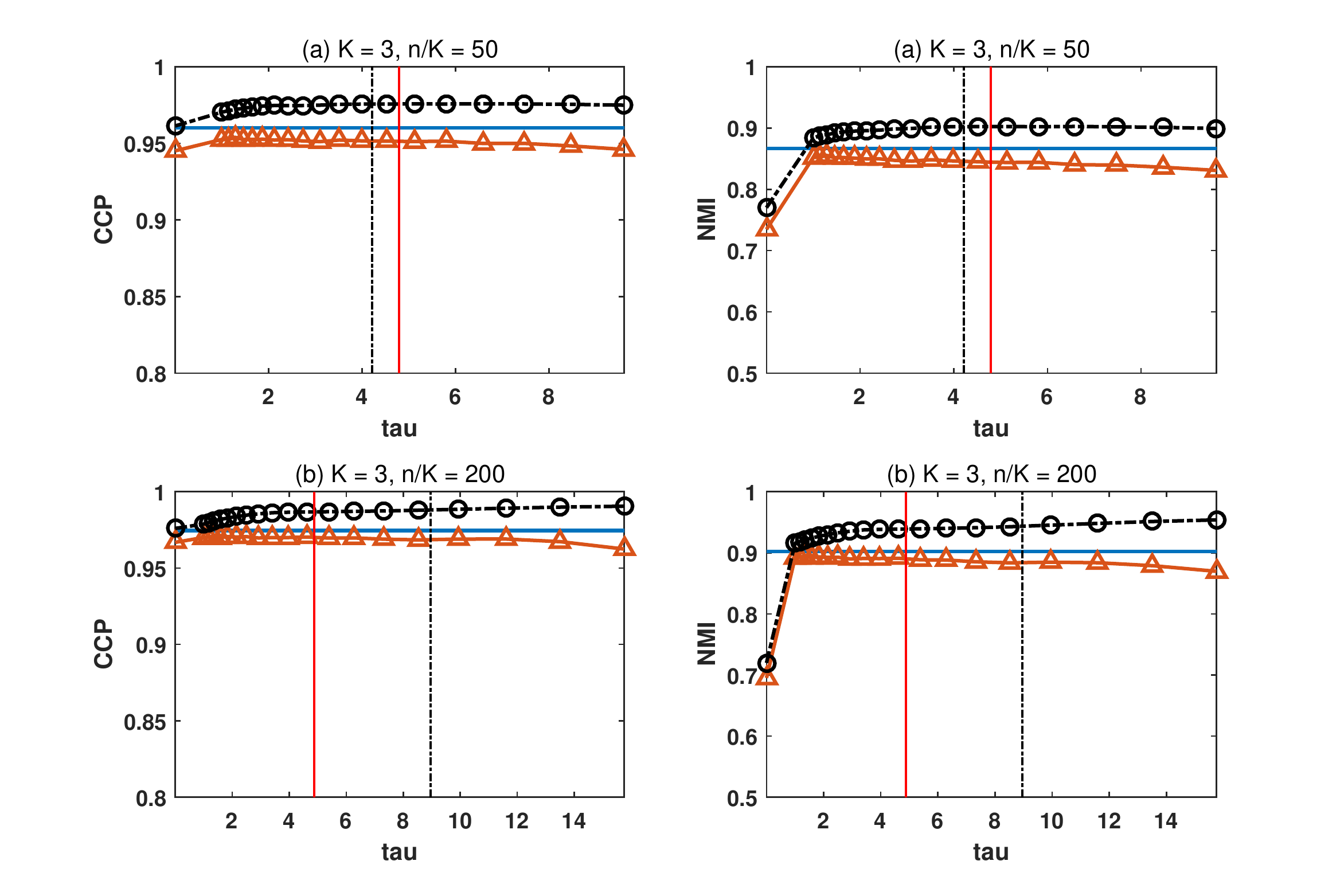}
\caption{Classification results for DGP 4 ($K=3$, degree-corrected) based on 
$L_{\protect\tau }^{\prime }=D_{\protect\tau }^{-1/2}AD_{\protect\tau %
}^{-1/2}$ and $L_{\protect\tau }^{\prime \prime }=D_{\protect\tau }^{-1/2}A_{%
\protect\tau }D_{\protect\tau }^{-1/2}$. The red and black vertical lines
corresponds to the optimal regularizers $\protect\tau ^{\prime \mathrm{JY}}$
and $\protect\tau ^{\prime \prime \mathrm{JY}}$, respectively. (See Figure 
\protect\ref{fig:dgp_1_2} for the explanation of other features of the
figure.)}
\label{fig:dgp_4_1}
\end{figure}


\linespread{1.2}%
\begin{table}[H]%
\caption{Comparison of classification results} \label{table:feasible} %
\centering%
\begin{tabular}{lll|ccc|ccc}
\hline\hline
&  &  & \multicolumn{3}{c}{CCP} & \multicolumn{3}{|c}{NMI} \\ \hline
&  &  & \multicolumn{2}{c}{Spectral clustering} & UPL/CPL & 
\multicolumn{2}{c}{Spectral clustering} & UPL/CPL \\ \hline
DGP & $K$ & $n/K$ & $\bar{d}$ & $\tau ^{\mathrm{JY}}/\tau ^{\prime \mathrm{JY%
}}$ &  & $\bar{d}$ & $\tau ^{\mathrm{JY}}/\tau ^{\prime \mathrm{JY}}$ &  \\ 
\hline
1 & 2 & 50 & 0.9998 & 0.9998 & 0.9980 & 0.9989 & 0.9989 & 0.9865 \\ 
& 2 & 200 & 1.0000 & 1.0000 & 0.9994 & 1.0000 & 1.0000 & 0.9947 \\ 
2 & 3 & 50 & 0.9951 & 0.9956 & 0.9941 & 0.9795 & 0.9812 & 0.9748 \\ 
& 3 & 200 & 0.9992 & 0.9995 & 0.9979 & 0.9954 & 0.9972 & 0.9889 \\ \hline
3 & 2 & 50 & 0.9576 & 0.9596 & 0.9623 & 0.7857 & 0.7964 & 0.8134 \\ 
& 2 & 200 & 0.9764 & 0.9777 & 0.9769 & 0.8564 & 0.8689 & 0.8658 \\ 
4 & 3 & 50 & 0.9460 & 0.9513 & 0.9600 & 0.8308 & 0.8444 & 0.8668 \\ 
& 3 & 200 & 0.9624 & 0.9701 & 0.9745 & 0.8696 & 0.8902 & 0.9022 \\ \hline
\end{tabular}
\end{table}%
\linespread{1.25}%

Figures \ref{fig:dgp_3_1} and \ref{fig:dgp_4_1} also report the
classification results based on $L_{\tau }^{\prime \prime }$, which are
shown as the dark lines. We find the performance of spectral clustering
based on $L_\tau^{\prime \prime}$ is better than those using the CPL method.
In addition, our choice of $\tau ^{\prime \prime \mathrm{JY}}$, marked as
the dark vertical line in each subplot, performs well in both DGPs 3 and 4.

\section{Proof strategy}

\label{sec:strategy} In this section we outline the proof strategies for the
main results in Section \ref{sec:ext2}. First, noting that the regularized
spectral clustering for the DC-SBM nests standard SBM without regularization
by setting $\tau =0$ and $\theta _{i}=1$ $\forall $ $i=1,\cdots ,n$, all the
main results in Section \ref{sec:SC} follow that in Section \ref{sec:ext2}.
Second, based on the results in Section \ref{sec:SC}, the results for the
standard SBM with regularization in Section \ref{sec:ext1} can be derived by
replacing $B_{0}$, $\mu _{n}$, $\rho _{n}$, and $\sigma _{Kn}$ by their
counterparts with regularization, i.e., $B_{0}^{\tau }$, $\mu _{n}^{\tau }$, 
$\rho _{n}^{\tau }$, and $\sigma _{Kn}^{\tau }$, respectively.

Section \ref{sec:ext2} contains Theorems \ref{thm:id3}, \ref{thm:main_DC}
and \ref{thm:thetahat}, Lemma \ref{lem:dk3} and Corollary \ref{cor:DC}.
Since the proofs of Theorems \ref{thm:id3} and \ref{thm:thetahat}, Lemma \ref%
{lem:dk3} and Corollary \ref{cor:DC} are relatively simple, below we focus
on the proof strategy for Theorem \ref{thm:main_DC}.

Theorem \ref{thm:main_DC} aims to establish a uniform upper bound for each
row of the gap between sample and population eigenvectors (up to some
rotation), i.e., $\sup_{i}||\hat{u}_{i}^{T}\hat{O}_{n}-u_{i}^{T}||$, where $%
\hat{u}_{i}^{T}$ and $u_{i}^{T}$ are the $i$-th rows of $\hat{U}_{1n}$ and $%
U_{1n}$, respectively. Let $\hat{\Lambda}=L_{\tau }^{\prime }\hat{U}_{1n}%
\hat{O}_{n}=\hat{U}_{1n}\hat{\Sigma}_{n}\hat{O}_{n}$, $\Lambda =\mathcal{L}%
_{\tau }^{\prime }U_{1n}=U_{1n}\Sigma _{n}$, $\hat{\Lambda}_{i}=\hat{u}%
_{i}^{T}\hat{\Sigma}_{n}\hat{O}_{n}$, and $\Lambda _{i}=u_{i}^{T}\Sigma _{n}$%
. Our proof strategy is to obtain the upper and lower bounds for $%
(n_{g_{i}^{0}}^{\tau })^{1/2}(\theta _{i}^{\tau })^{-1/2}||\hat{ \Lambda}%
_{i}-\Lambda _{i}||$, both of which involve $(n_{g_{i}^{0}}^{\tau
})^{1/2}(\theta _{i}^{\tau })^{-1/2}||\hat{u}_{i}^{T}\hat{O}_{n}-u_{i}^{T}||$%
. The two bounds produce a contraction mapping for $\sup_{i}(n_{g_{i}^{0}}^{%
\tau })^{1/2}(\theta _{i}^{\tau })^{-1/2}||\hat{u}_{i}^{T}\hat{O}%
_{n}-u_{i}^{T}||$. By iterating the contraction mapping sufficiently many
times, we obtain the desired bound.

\textbf{Lower bound.} In order to derive the lower bound for $%
(n_{g_{i}^{0}}^{\tau })^{1/2}||\hat{\Lambda}_{i}-\Lambda _{i}||$, we note
that 
\begin{align}
(n_{g_{i}^{0}}^{\tau })^{1/2}(\theta _{i}^{\tau })^{-1/2}\Vert \hat{\Lambda}%
_{i}-\Lambda _{i}\Vert & =(n_{g_{i}^{0}}^{\tau })^{1/2}(\theta _{i}^{\tau
})^{-1/2}\Vert \hat{u}_{i}^{T}\hat{\Sigma}_{n}\hat{O}_{n}-u_{i}^{T}\Sigma
_{n}\Vert  \notag  \label{eq:lower} \\
& \geq (n_{g_{i}^{0}}^{\tau })^{1/2}(\theta _{i}^{\tau })^{-1/2}\Vert (\hat{u%
}_{i}^{T}\hat{O}_{n}-u_{i}^{T})\hat{\Sigma}_{n}\Vert -(n_{g_{i}^{0}}^{\tau
})^{1/2}(\theta _{i}^{\tau })^{-1/2}\Vert u_{i}^{T}(\hat{\Sigma}_{n}-\Sigma
_{n})\Vert  \notag \\
& \quad \ -(n_{g_{i}^{0}}^{\tau })^{1/2}(\theta _{i}^{\tau })^{-1/2}\Vert 
\hat{u}_{i}^{T}(\hat{\Sigma}_{n}\hat{O}_{n}-\hat{O}_{n}\hat{\Sigma}_{n})\Vert
\notag \\
& \equiv I_{i}-II_{i}-III_{i}.
\end{align}%
Clearly, by the Hoffman-Wielandt inequality, Lemma \ref{lem:dk3}, and Assumption \ref{ass:rate3}.2, 
\begin{align*}
|\hat{	\sigma}_{Kn}| \geq |\sigma_{Kn}| - 7 \left(\frac{\log(n)}{\mu_n^\tau \sigma_{Kn}^2} \right)^{1/2}|\sigma_{Kn}| \geq 0.999|\sigma_{Kn}| \quad a.s.,
\end{align*}
 and thus, 
\begin{equation*}
\sup_{i}I_{i}\geq 0.999|\sigma _{Kn}|\Gamma _{n}\quad a.s.,
\end{equation*}%
where $\Gamma _{n}=\sup_{i}|(n_{g_{i}^{0}}^{\tau })^{1/2}(\theta _{i}^{\tau
})^{-1/2}\Vert \hat{u}_{i}^{T}\hat{O}_{n}-u_{i}^{T}\Vert $. It is the
leading term of the lower bound involving $\Gamma _{n}$. In the online
Appendix \ref{sec:31pf}, we show that $\sup_{i}II_{i}\leq 7(\log (n)/\mu
_{n}^{\tau })^{1/2}\quad a.s.$ and $\sup_{i}III_{i}\leq 34(\log (n)/\mu
_{n}^{\tau })^{1/2}|\sigma _{Kn}|^{-1}(\Gamma _{n}+1)\quad a.s.$ It follows
that 
\begin{align}
\sup_{i}(n_{g_{i}^{0}}^{\tau })^{1/2}(\theta _{i}^{\tau })^{-1/2}\Vert \hat{%
\Lambda}_{i}-\Lambda _{i}\Vert \geq & (0.999|\sigma _{Kn}|-34(\log (n)/\mu
_{n}^{\tau })^{1/2}|\sigma _{Kn}^{-1}|)\Gamma _{n}-41(\log (n)/\mu
_{n}^{\tau })^{1/2}|\sigma _{Kn}^{-1}|  \notag \\
\geq & 0.99|\sigma _{Kn}|\Gamma _{n}-41(\log (n)/\mu _{n}^{\tau
})^{1/2}|\sigma _{Kn}^{-1}|,  \label{LB1}
\end{align}%
where we use the fact that $34(\log (n)/\mu _{n}^{\tau })^{1/2}|\sigma
_{Kn}^{-2}|\leq 0.09.$

\textbf{Upper bound.} To derive the upper bound for $%
\sup_{i}(n_{g_{i}^{0}}^{\tau })^{1/2}(\theta _{i}^{\tau })^{-1/2}\Vert \hat{%
\Lambda}_{i}-\Lambda _{i}\Vert $, we first denote $\tilde{\Lambda}=D_{\tau
}^{-1/2}PD_{\tau }^{-1/2}U_{1n}$ and $\tilde{\Lambda}_{i}=(\hat{d}_{i}^{\tau
})^{-1/2}[P]_{i\cdot }D_{\tau }^{-1/2}U_{1n}$ as the $i$-th row of $\tilde{%
\Lambda}$. Then, we have 
\begin{align}
\sup_{i}(n_{g_{i}^{0}}^{\tau })^{1/2}(\theta _{i}^{\tau })^{-1/2}\Vert \hat{%
\Lambda}_{i}-\Lambda _{i}\Vert & \leq \sup_{i}(n_{g_{i}^{0}}^{\tau
})^{1/2}(\theta _{i}^{\tau })^{-1/2}\Vert \Lambda _{i}-\tilde{\Lambda}%
_{i}\Vert +\sup_{i}(n_{g_{i}^{0}}^{\tau })^{1/2}(\theta _{i}^{\tau
})^{-1/2}\Vert \hat{\Lambda}_{i}-\tilde{\Lambda}_{i}\Vert  \notag \\
& \equiv T_{1}+T_{2}.  \label{eq:12DC3}
\end{align}%
For $T_{2},$ we have 
\begin{align}
T_{2}& =\sup_{i}(n_{g_{i}^{0}}^{\tau })^{1/2}(\theta _{i}^{\tau
})^{-1/2}\Vert (\hat{d}_{i}^{\tau })^{-1/2}[A]_{i\cdot }D_{\tau }^{-1/2}\hat{%
U}_{1n}\hat{O}_{n}-(\hat{d}_{i}^{\tau })^{-1/2}[P]_{i\cdot }D_{\tau
}^{-1/2}U_{1n}\Vert  \notag \\
& \leq \sup_{i}(n_{g_{i}^{0}}^{\tau })^{1/2}(\theta _{i}^{\tau })^{-1/2}(%
\hat{d}_{i}^{\tau })^{-1/2}\Vert \lbrack P]_{i\cdot }D_{\tau }^{-1/2}(\hat{U}%
_{1n}\hat{O}_{n}-U_{1n})\Vert  \notag \\
& +\sup_{i}(n_{g_{i}^{0}}^{\tau })^{1/2}(\theta _{i}^{\tau })^{-1/2}(\hat{d}%
_{i}^{\tau })^{-1/2}\Vert ([A]_{i\cdot }-[P]_{i\cdot })(D_{\tau }^{-1/2}-%
\mathcal{D}_{\tau }^{-1/2})\hat{U}_{1n}\hat{O}_{n}\Vert  \notag \\
& +\sup_{i}(n_{g_{i}^{0}}^{\tau })^{1/2}(\theta _{i}^{\tau })^{-1/2}(\hat{d}%
_{i}^{\tau })^{-1/2}\Vert ([A]_{i\cdot }-[P]_{i\cdot })\mathcal{D}_{\tau
}^{-1/2}\hat{U}_{1n}\hat{O}_{n}\Vert  \notag \\
& \equiv T_{2,1}+T_{2,2}+T_{2,3}.  \label{eq:2DC}
\end{align}%
Lemma \ref{lem:B3} in the online Appendix \ref{sec:lem} provides the upper
bounds for $T_{1}$, $T_{2,1}$, $T_{2,2}$, and $T_{2,3}$. Taking $T_{2,3}$ as
an example, we note that 
\begin{equation*}
T_{2,3}=\sup_{i}\sup_{h=\hat{U}_{1n}\hat{O}_{n}f,f\in
S^{K-1}}(n_{g_{i}^{0}}^{\tau })^{1/2}(\theta _{i}^{\tau
})^{-1/2}\sum_{j=1}^{n}(A_{ij}-P_{ij})(\hat{d}_{i}^{\tau }d_{j}^{\tau
})^{-1/2}h_{j}.
\end{equation*}%
Here, $h_{j}$ denotes the $j$th element of $h.$ Lemma \ref{lem:V1n3} builds
a Bernstein-type concentration inequality to upper bound $T_{2,3}$, which
involves the $l_{\infty }$ and $l_{2}$ norms of $h$, In particular, $%
||h||_{\infty }$ depends on the rough upper bound $\delta _{n}^{(0)}$ for $%
\Gamma _{n}$.\footnote{%
In fact, the upper bound for $||h||_{\infty }$ in the proof, which is
denoted as $\psi _{n}$, is $\delta _{n}^{(0)}+1$.} One of the technical
difficulties is that, due to the correlation between the sample graph
Laplacian and its eigenvectors, the sequence of random variables $%
A_{ij}:j=1,\cdots ,n$ are not independent of $h=\hat{U}_{1n}\hat{O}_{n}f$
for some $f\in S^{K-1}$. To deal with it, we rely on the \textquotedblleft
leave-one-out" technique used in \cite{abbe2017}, \cite{B13}, \cite{JM15},
and \cite{Z18}. The idea is to approximate the eigenvector by a vector which
is independent of one particular row of the sample graph Laplacian. This
helps to restore the independence. Then, the approximation errors are
bounded in Lemma \ref{lem:looDC}, which further calls upon Lemmas \ref%
{lem:Li-LDC} and \ref{lem:DiDC}.

At the end, Lemma \ref{lem:B3} establishes that 
\begin{align}
& \sup_{i}(n_{g_{i}^{0}}^{\tau })^{1/2}(\theta _{i}^{\tau })^{-1/2}\Vert 
\hat{\Lambda}_{i}-\Lambda _{i}\Vert  \notag \\
\leq & 3450C_{1}c_{1}^{-1/2}\rho _{n}\log ^{1/2}(n)(\mu _{n}^{\tau
})^{-1/2}|\sigma _{Kn}^{-1}|\biggl[\delta _{n}^{(0)}+1+\rho _{n}+\frac{%
\left( \frac{1}{K}+\frac{\log (5)}{\log (n)}\right) ^{1/2}\rho _{n}^{1/2}%
\overline{\theta }^{1/4}}{\underline{\theta }^{1/4}}\biggr],\quad a.s.,
\label{UB1}
\end{align}%
where we can choose $\delta _{n}^{(0)}=n^{1/2}\underline{\theta }^{-1/2}$.
Combining the lower and upper bounds in (\ref{LB1}) and (\ref{UB1}) for $%
\sup_{i}(n_{g_{i}^{0}}^{\tau })^{1/2}(\theta _{i}^{\tau })^{-1/2}\Vert \hat{%
\Lambda}_{i}-\Lambda _{i}\Vert $ and applying Assumption \ref{ass:rate3}, we
have 
\begin{equation}
0.001\delta _{n}^{(0)}+3527C_{1}c_{1}^{-1/2}\eta _{n}\geq \Gamma _{n},
\label{UB2}
\end{equation}%
where $\eta _{n}$ is defined in Theorem \ref{thm:main_DC}.

\textbf{Iteration.} (\ref{UB2}) suggests that the initial rough upper bound $%
\delta _{n}^{(0)}$ for $\Gamma _{n}$ can be refined to $\delta
_{n}^{(1)}\equiv 0.001\delta _{n}^{(0)}+3527C_{1}c_{1}^{-1/2}\eta _{n}$.
Then we can take this new upper bound into the previous calculations to
obtain 
\begin{equation*}
0.001\delta _{n}^{(1)}+3527C_{1}c_{1}^{-1/2}\eta _{n}\geq \Gamma _{n}.
\end{equation*}%
Therefore, we have constructed a contraction mapping, through which we can
refine our upper bound for $\Gamma _{n}$ via iterations. We iterate the
above calculation $t$ times for some arbitrary integer $t$, and obtain that 
\begin{equation*}
\Gamma _{n}\leq \delta _{n}^{(t)},\quad \delta _{n}^{(t)}=0.001\delta
_{n}^{(t-1)}+3527C_{1}c_{1}^{-1/2}\eta _{n}.
\end{equation*}%
This implies 
\begin{equation*}
\delta _{n}^{(t)}=\left( 0.001\right) ^{t}\biggl[\delta
_{n}^{(0)}-3527C_{1}c_{1}^{-1/2}\eta _{n}\biggr]+3527C_{1}c_{1}^{-1/2}\eta
_{n}.
\end{equation*}%
Letting $t=n$, we have 
\begin{equation*}
\Gamma _{n}\leq \delta _{n}^{(n)}\leq 1000^{-n}n^{1/2}\underline{\theta }%
^{-1/2}+3527C_{1}c_{1}^{-1/2}\eta _{n}\leq 3528C_{1}c_{1}^{-1/2}\eta _{n},
\end{equation*}%
where we denote $C^{\ast }$ in Theorem \ref{thm:main_DC} as $%
3528C_{1}c_{1}^{-1/2}$ and we use the fact that it is possible to choose $%
\delta _{n}^{(0)}=n^{1/2}\underline{\theta }^{-1/2}$ as the initial rough
bound for $\Gamma _{n}$.

\section{Conclusion}

\label{sec:concl} In this paper, we show that under suitable conditions, the
K-means algorithm applied to the eigenvectors of the graph Laplacian
associated with its first few largest eigenvalues can classify all
individuals into the true community uniformly correctly almost surely in
large samples. In the special case where the number of communities is fixed
and the probability block matrix has minimal eigenvalue bounded away from
zero, the strong consistency essentially requires that the minimal degree
diverges to infinity at least as fast as $\log(n)$, which is the minimal
rate requirement for the strong consistency discussed in \cite{A18}. Similar
results are also established for the regularized DC-SBMs. The simulations
confirm our theoretical findings and indicate that an adaptive procedure can
improve the finite sample performance of the regularized spectral clustering
for DC-SBMs.



\newpage
\appendix

\begin{center}
\huge{
Online Supplement to ``Strong Consistency of Spectral Clustering for
Stochastic Block Models"}
\end{center}

\begin{abstract}
	This supplement is composed of four parts. Sections \ref{sec:2pf} and \ref%
	{sec:31pf} provide the proofs of the main results in Sections \ref{sec:SC}
	and \ref{sec:ext}, respectively. Section \ref{sec:lem} contains some lemmas
	that are used in the proofs of the main results. Section \ref{sec:addsim}
	presents some additional simulation results. \medskip
	
	\noindent \textbf{Key words and phrases: }Community detection,
	degree-corrected stochastic block model, K-means, regularization, strong
	consistency.\vspace{2mm}
\end{abstract}
\setcounter{page}{1} \renewcommand\thesection{\Alph{section}} %

\section{Proofs of the results in Section \protect\ref{sec:SC} \label%
	{sec:2pf}}

In this section, we prove the main results in Section \ref{sec:SC}, viz.,
Theorems \textbf{\ref{thm:id}--\ref{thm:strong}}, Lemmas \textbf{\ref{lem:dk}%
	--\ref{lem:kmeans1}}, and Corollary \textbf{\ref{cor:sc}.} In particular, we
note that the standard SBM is a special case of regularized DC-SBM with
regularizer $\tau =0$ and degree-corrected parameter $\theta _{i}=1$.
Therefore, Lemma \ref{lem:dk} and Theorem \ref{thm:main} follow Lemma \ref%
{lem:dk3} and Theorem \ref{thm:main_DC}, respectively.\bigskip

\begin{proof}[\textbf{Proof of Theorem \protect\ref{thm:id}}]
	By the proof of \citet[][Lemma 3.1]{RCY11}, we have $\mathcal{L}%
	=n^{-1}ZB_{0}Z^{T}$. Therefore, $\mathcal{L}%
	^{2}=n^{-1}ZB_{0}(Z^{T}Z/n)B_{0}Z^{T}.$ Let $\Pi _{n}=Z^{T}Z/n=\text{diag}%
	(\pi _{1n},\ldots ,\pi _{Kn})$. By the spectral decomposition in Assumption %
	\ref{ass:id}, we have 
	\begin{equation}
	\Pi _{n}^{1/2}B_{0}\Pi _{n}B_{0}\Pi _{n}^{1/2}=S_{n}\Omega _{n}^{2}S_{n}^{T},
	\label{eq:sigma}
	\end{equation}%
	where $\Omega _{n}=\text{diag}(\omega _{1n},\ldots ,\omega _{Kn})$ such that 
	$|\omega _{1n}|\geq |\omega _{2n}|\geq \cdots \geq |\omega _{Kn}|>0$ and $%
	S_{n}$ is a $K\times K$ matrix such that $S_{n}^{T}S_{n}=I_{K}.$ Let $%
	U_{1n}^{\ast }=Z(Z^{T}Z)^{-1/2}S_{n}$. Then, we have 
	\begin{equation}
	U_{1n}^{\ast }\Omega _{n}^{2}U_{1n}^{\ast T}=\mathcal{L}^{2}=U_{1n}\Sigma
	_{1n}^{2}U_{1n}^{T}.  \label{eq:sigma2}
	\end{equation}%
	In addition, $U_{1n}^{\ast T}U_{1n}^{\ast }=S_{n}^{T}S_{n}=I_{K}$. Therefore
	the columns of $U_{1n}^{\ast }$ are the eigenvectors of $\mathcal{L}$
	associated with eigenvalues $\sigma _{1n},\ldots ,\sigma _{Kn}$, up to sign
	normalization. Without loss of generality (W.l.o.g.), we can take $%
	U_{1n}=U_{1n}^{\ast }$ and $\Omega _{n}=\Sigma _{1n}$.
	
	Furthermore, if node $i$ is in cluster $C_{k_{1}}$, then $%
	z_{i}^{T}(Z^{T}Z)^{-1/2}S_{n}=n_{k_{1}}^{-1/2}[S_{n}]_{k_{1}\cdot }$, where $%
	[S_{n}]_{k\cdot }$ denotes the $k$-th row of $S_{n}$. Therefore, by
	Assumption \ref{ass:nk} and the fact that $\Vert \lbrack S_{n}]_{k_{1}\cdot
	}\Vert =1$, 
	\begin{equation*}
	(n/K)^{1/2}\Vert z_{i}^{T}(Z^{T}Z)^{-1/2}S_{n}\Vert \leq c_1^{-1/2}\Vert
	\lbrack S_{n}]_{k_{1}\cdot }\Vert =c_1^{-1/2}.
	\end{equation*}%
	Taking $\sup_{i}$ on both sides establishes the first desired result.
	
	Similarly, by Assumption \ref{ass:nk}, we can also establish the lower
	bound: for node $j$ in cluster $C_{k_{2}}$ with $k_{1}\neq k_{2}$ 
	\begin{equation*}
	(n/K)^{1/2}\Vert (z_{i}-z_{j})^{T}(Z^{T}Z)^{-1/2}S_{n}\Vert
	=||n_{k_{1}}^{-1/2}[S_{n}]_{k_{1}\cdot }-n_{k_{2}}^{-1/2}[S_{n}]_{k_{2}\cdot
	}||\geq C_{1}^{-1/2}\sqrt{2}=\underline{c}>0.
	\end{equation*}%
	This concludes the proof. \bigskip
\end{proof}

\begin{proof}[\textbf{Proof of Lemma \protect\ref{lem:dk}}]
	Lemma \ref{lem:dk} is a special case of Lemma \ref{lem:dk3} with $\theta
	_{i}=1$ for $i=1,\cdots ,n$ and $\tau =0$. We prove the general result in
	Lemma \ref{lem:dk3} later.\bigskip
\end{proof}

\begin{proof}[\textbf{Proof of Theorem \protect\ref{thm:main}}]
	Theorem \ref{thm:main} is a special case of Theorem \ref{thm:main_DC} when $%
	\theta _{i}=1$ for $i=1,\cdots ,n$ and $\tau =0$. We prove Theorem \ref%
	{thm:main_DC} with $C^{\ast }=3528C_1c_1^{-1/2}$ later. \bigskip
\end{proof}

\begin{proof}[\textbf{Proof of Lemma \protect\ref{lem:kmeans1}}]
	Let $Q_{n}(\mathcal{A})=\sum_{k=1}^{K}\min_{1\leq l\leq K}\Vert \beta
	_{kn}-\alpha _{l}\Vert ^{2}\pi _{kn}.$ We first derive the convergence rate
	of $\widehat{Q}_{n}(\mathcal{A})-Q_{n}(\mathcal{A})$ uniformly over $%
	\mathcal{A}\in \mathcal{M}=\{(\alpha _{1},\ldots ,\alpha _{K}):\sup_{1\leq
		k\leq K}\Vert \alpha _{k}\Vert \leq 2M\}$ for some constant $M$ independent
	of $n$. Let $R_{n}=\sup_{i}\Vert \hat{\beta}_{in}-\beta _{g_{i}^{0}n}\Vert $
	. Then, by Assumption \ref{ass:theta}.3, 
	\begin{equation}
	R_{n} \leq c_{2n} \leq M\quad a.s.  \label{eq:Rn}
	\end{equation}%
	In addition, 
	\begin{align*}
	\Vert \hat{\beta}_{in}-\alpha _{l}\Vert ^{2}& \geq \Vert \beta
	_{g_{i}^{0}n}-\alpha _{l}\Vert ^{2} - 2|(\beta _{g_{i}^{0}n}-\hat{\beta}%
	_{in})^{T}(\beta _{g_{i}^{0}n}-\alpha _{l})| - \Vert \beta _{g_{i}^{0}n}-%
	\hat{\beta}_{in}\Vert ^{2} \\
	& \geq \Vert \beta _{g_{i}^{0}n}-\alpha _{l}\Vert ^{2} - 2\Vert \beta
	_{g_{i}^{0}n}-\hat{\beta}_{in}\Vert _{1}\Vert \beta _{g_{i}^{0}n}-\alpha
	_{l}\Vert _{\infty } -R_{n}^{2} \\
	& \geq \Vert \beta _{g_{i}^{0}n}-\alpha _{l}\Vert ^{2} - 2\sqrt{K}R_{n}\Vert
	\beta _{g_{i}^{0}n}-\alpha _{l}\Vert - R_{n}^{2} \\
	& \geq \Vert \beta _{g_{i}^{0}n}-\alpha _{l}\Vert ^{2} - 2\sqrt{K}%
	R_{n}(\Vert \beta _{g_{i}^{0}n}\Vert +\Vert \alpha _{l}\Vert )- R_{n}^{2},
	\end{align*}%
	where the third inequality follows the Cauchy–Schwarz inequality with the
	fact that both $\beta_{g_i^0n}$ and $\hat{\beta}_{in}$ are $K \times 1$
	vectors. Taking $\min_{1\leq l\leq K}$ on both sides and averaging over $i$,
	we have 
	\begin{equation*}
	\widehat{Q}_{n}(\mathcal{A})\geq Q_{n}(\mathcal{A})-(6\sqrt{K}+1)M c_{2n}.
	\end{equation*}%
	Similarly, we have $\widehat{Q}_{n}(\mathcal{A})\leq Q_{n}(\mathcal{A})+(6%
	\sqrt{K}+1)M c_{2n}.$ By \eqref{eq:Rn}, 
	\begin{equation*}
	\breve{R}_{n}\equiv \sup_{\mathcal{A}\in \mathcal{M}}|\widehat{Q}_{n}(%
	\mathcal{A})-Q_{n}(\mathcal{A})| \leq (6\sqrt{K}+1)M c_{2n}\quad a.s.
	\end{equation*}
	
	Next, we show $\widehat{\mathcal{A}}_{n}\in \mathcal{M}$. Denote $\widehat{%
		\mathcal{A}}_{n}=\{\widehat{\alpha }_{1},\ldots ,\widehat{\alpha }_{K}\}$.
	By Assumption \ref{ass:theta}.1, 
	\begin{equation*}
	\sup_{i}\Vert \hat{\beta}_{in}\Vert \leq R_{n}+\sup_{1\leq k\leq K}\Vert
	\beta _{kn}\Vert \leq 2M.
	\end{equation*}%
	Denote $I_{n}(k)=\{i:k=\argmin_{1\leq l\leq K}\Vert \hat{\beta}_{in}-%
	\widehat{\alpha }_{l}\Vert \}$ for some $k\leq K$. If $\Vert \widehat{\alpha 
	}_{k}\Vert >2M$ and $I_{n}(k)=\emptyset $, then we can choose 
	\begin{equation*}
	\widehat{\mathcal{A}}_{n}^{\prime }=\{\widehat{\alpha }_{1},\ldots ,\widehat{%
		\alpha }_{k-1},\widehat{\alpha }_{k}^{\prime },\widehat{\alpha }%
	_{k+1},\ldots ,\widehat{\alpha }_{K}\},
	\end{equation*}%
	where $\widehat{\alpha }_{k}^{\prime }=\hat{\beta}_{in}$ for some arbitrary $%
	i\leq n$. Therefore, we have $\Vert \widehat{\alpha }_{k}^{\prime }\Vert
	\leq 2M<\Vert \widehat{\alpha }_{k}\Vert $ and $\widehat{Q}_{n}(\widehat{%
		\mathcal{A}}_{n}^{\prime })<\widehat{Q}_{n}(\widehat{\mathcal{A}}_{n}),$
	which is a contradiction. On the other hand, if $\Vert \widehat{\alpha }%
	_{k}\Vert >2M$ and $I_{n}(k)\neq \emptyset $, then we can choose 
	\begin{equation*}
	\widehat{\mathcal{A}}_{n}^{\prime }=\{\widehat{\alpha }_{1},\ldots ,\widehat{%
		\alpha }_{k-1},\widehat{\alpha }_{k}^{\prime },\widehat{\alpha }%
	_{k+1},\ldots ,\widehat{\alpha }_{K}\},
	\end{equation*}%
	where $\widehat{\alpha }_{k}^{\prime }=\frac{1}{|I_{n}(k)|}\sum_{i\in I_{n}}%
	\hat{\beta}_{in}$ and $|I_{n}(k)|$ is the cardinality of $I_{n}(k)$. This
	means $\Vert \widehat{\alpha }_{k}^{\prime }\Vert \leq 2M<\Vert \widehat{%
		\alpha }_{k}\Vert $ and $\widehat{Q}_{n}(\widehat{\mathcal{A}}_{n}^{\prime
	})<\widehat{Q}_{n}(\widehat{\mathcal{A}}_{n})$, which is a contradiction
	too. Therefore, $\Vert \widehat{\alpha }_{k}\Vert \leq 2M$. Since $k$ is
	arbitrary, $\widehat{\mathcal{A}}_{n}\in \mathcal{M}$.
	
	Third, we show for any $\eta>0$, 
	\begin{equation}  \label{eq:QA}
	\inf_{\mathcal{A}:H(\mathcal{A},\mathcal{B}_{n})>\eta}Q_{n}(\mathcal{A})\geq 
	\frac{c_1}{K}\min (\eta^2,c_{1n}^{2}/2),
	\end{equation}%
	where $\mathcal{B}_{n}=\{\beta _{1n},\ldots ,\beta _{Kn}\}$ and $c_1$ is the
	constant defined in Assumption \ref{ass:theta}.2. If there exist some $%
	l_{0}\in \{1,\ldots ,K\}$ and two indexes $k_{1}$ and $k_{2}$ such that 
	\begin{equation*}
	l_{0}=\argmin_{1\leq l\leq K}\Vert \beta _{k_{1}n}-\alpha _{l}\Vert =\argmin%
	_{1\leq l\leq K}\Vert \beta _{k_{2}n}-\alpha _{l}\Vert ,
	\end{equation*}%
	then by Assumption \ref{ass:theta}.2 
	\begin{align*}
	Q_{n}(\mathcal{A})& \geq \pi _{k_{1}n}\Vert \beta _{k_{1}n}-\alpha
	_{l_{0}}\Vert ^{2}+\pi _{k_{2}n}\Vert \beta _{k_{2}n}-\alpha _{l_{0}}\Vert
	^{2} \\
	& \geq \frac{c_1}{2K}(\Vert \beta _{k_{1}n}-\alpha _{l_{0}}\Vert +\Vert
	\beta _{k_{2}n}-\alpha _{l_{0}}\Vert )^{2}\geq \frac{c_1}{2K}\Vert \beta
	_{k_{1}n}-\beta _{k_{2},n}\Vert ^{2}\geq \frac{c_1c_{1n}^{2}}{2K}.
	\end{align*}%
	On the other hand, if there does not exist such an $l_{0}$, then there is a
	one-to-one mapping $h:\{1,\ldots ,K\}\mapsto \{1,\ldots ,K\}$ such that 
	\begin{equation*}
	h(k)=\argmin_{1\leq l\leq K}\Vert \beta _{kn}-\alpha _{l}\Vert .
	\end{equation*}%
	Therefore, 
	\begin{equation*}
	Q_{n}(\mathcal{A})=\sum_{k=1}^{K}\pi _{kn}\Vert \beta _{kn}-\alpha
	_{h(k)}\Vert ^{2}\geq (\inf_{k}\pi _{kn})H^{2}(\mathcal{A},\mathcal{B}%
	_{n})\geq c_1\eta^2/K.
	\end{equation*}
	
	Last, we show $H(\widehat{\mathcal{A}}_{n},\mathcal{B}_{n}) \leq
	(15M/c_1)^{1/2}c_{2n}^{1/2}K^{3/4}$. For any $\varepsilon >0$ and
	sufficiently large $C_{2}$, 
	\begin{align*}
	\hspace{2em}& \hspace{-2em}P(H(\widehat{\mathcal{A}}_{n},\mathcal{B}%
	_{n})\geq (15M/c_1)^{1/2}c_{2n}^{1/2}K^{3/4}\quad i.o.) \\
	& =P(H(\widehat{\mathcal{A}}_{n},\mathcal{B}_{n})\geq
	(15M/c_1)^{1/2}c_{2n}^{1/2}K^{3/4},Q_{n}(\widehat{\mathcal{A}}_{n})\geq
	Q_{n}(\mathcal{B}_{n}) \\
	& +\min (15Mc_{2n}K^{1/2},c_1c_{1n}^{2}/(2K))\quad i.o.) \\
	& \leq P(\widehat{Q}_{n}(\widehat{\mathcal{A}}_{n})+\breve{R}_{n}\geq 
	\widehat{Q}_{n}(\mathcal{B}_{n})-\breve{R}_{n}+ \min
	(15Mc_{2n}K^{1/2},c_1c_{1n}^{2}/(2K))\quad i.o.) \\
	& =P(2\breve{R}_{n}\geq \widehat{Q}_{n}(\mathcal{B}_{n})-\widehat{Q}_{n}(%
	\widehat{\mathcal{A}}_{n})+\min (15Mc_{2n}K^{1/2},c_1c_{1n}^{2}/(2K))\quad
	i.o.) \\
	& \leq P(2\breve{R}_{n}\geq \min (15Mc_{2n}K^{1/2},c_1c_{1n}^{2}/(2K))\quad
	i.o.) = 0,
	\end{align*}%
	where the first equality holds due to \eqref{eq:QA} and the fact that $Q_n(%
	\mathcal{B}_n) = 0$, the last inequality holds because $\widehat{Q}_{n}(\mathcal{B}%
	_{n})-\widehat{ Q}_{n}(\widehat{\mathcal{A}}_{n})\geq 0$, and the last
	equality holds because, by Assumption \ref{ass:theta}.3, 
	\begin{align*}
	2\breve{R}_{n} \leq 2(6\sqrt{K}+1)M c_{2n} < 15 \sqrt{K} M c_{2n} \leq
	c_1c_{1n}^2/(2K).
	\end{align*}
	
	This concludes the proof. \bigskip
\end{proof}

\begin{proof}[\textbf{Proof of Theorem \protect\ref{thm:strong}}]
	By Lemma \ref{lem:kmeans1} and Assumption \ref{ass:theta}.2 and (iii), for
	each $n$, there is a one-to-one mapping $F_{n}:\{1,\ldots ,K\}\mapsto
	\{1,\ldots ,K\}$, such that 
	\begin{equation*}
	\sup_{k}\Vert \widehat{\alpha }_{kn}-\beta _{F_{n}(k)n}\Vert \leq
	(15M/c_1)^{1/2}c_{2n}^{1/2}K^{3/4} \quad a.s.
	\end{equation*}%
	W.l.o.g., we can assume $F_{n}(k)=k$ such that 
	\begin{equation}
	\tilde{R}_{n}\equiv \sup_{k}\Vert \widehat{\alpha }_{kn}-\beta _{kn}\Vert
	\leq (15M/c_1)^{1/2}c_{2n}^{1/2}K^{3/4} \quad a.s.  \label{eq:alphahat}
	\end{equation}%
	If $\hat{g}_{i}\neq g_{i}^{0}$, then $\Vert \hat{\beta}_{in}-\widehat{\alpha 
	}_{\hat{g}_{i}n}\Vert \leq \Vert \hat{\beta}_{in}-\widehat{\alpha }%
	_{g_{i}^{0}n}\Vert .$ This, in conjunction with the triangle inequality,
	implies that 
	\begin{equation*}
	\Vert \widehat{\alpha }_{\hat{g}_{i}n}-\widehat{\alpha }_{g_{i}^{0}n}\Vert
	-\Vert \hat{\beta}_{in}-\widehat{\alpha }_{g_{i}^{0}n}\Vert \leq \Vert \hat{%
		\beta}_{in}-\widehat{\alpha }_{\hat{g}_{i}n}\Vert \leq \Vert \hat{\beta}%
	_{in}-\widehat{\alpha }_{g_{i}^{0}n}\Vert .
	\end{equation*}%
	It follows that $\Vert \hat{\beta}_{in}-\widehat{\alpha }_{g_{i}^{0}n}\Vert
	\geq \frac{1}{2}\Vert \widehat{\alpha }_{\hat{g}_{i}n}-\widehat{\alpha }%
	_{g_{i}^{0}n}\Vert .$ By \eqref{eq:Rn}, \eqref{eq:alphahat}, and the
	repeated use of the triangle inequality, we have 
	\begin{align*}
	c_{2n}+\tilde{R}_n& \geq \Vert \hat{\beta}_{in}-\beta _{g_{i}^{0}n}\Vert
	+\Vert \beta _{g_{i}^{0}n}-\widehat{\alpha }_{g_{i}^{0}n}\Vert \text{ } \\
	& \geq \Vert \hat{\beta}_{in}-\widehat{\alpha }_{g_{i}^{0}n}\Vert \geq \frac{%
		1}{2}\Vert \widehat{\alpha }_{\hat{g}_{i}n}-\widehat{\alpha }%
	_{g_{i}^{0}n}\Vert \text{ } \\
	& =\frac{1}{2}\Vert (\beta _{\hat{g}_{i}n}-\beta _{g_{i}^{0}n})+(\widehat{%
		\alpha }_{\hat{g}_{i}n}-\beta _{\hat{g}_{i}n})+(\beta _{g_{i}^{0}n}-\widehat{%
		\alpha }_{g_{i}^{0}n})\Vert \\
	& \geq \frac{1}{2}\Vert \beta _{\hat{g}_{i}n}-\beta _{g_{i}^{0}n}\Vert -%
	\tilde{R}_{n}\geq c_{1n}/2-\tilde{R}_{n}.
	\end{align*}
	
	This implies $1\{\hat{g}_{i}\neq g_{i}^{0}\}\leq 1\{R_{n}+2\tilde{R}_{n}\geq
	c_{1n}/2\}.$ Noting that the RHS of the above display is independent of $i$,
	we have 
	\begin{align*}
	P(\sup_{i}1\{\hat{g}_{i}\neq g_{i}^{0}\}>0\quad i.o.)& \leq P(c_{2n}+2\tilde{%
		R}_{n}\geq c_{1n}/2\quad i.o.) \\
	& =P(c_{2n}+2(15M/c_{1})^{1/2}c_{2n}^{1/2}K^{3/4}\geq c_{1n}/2\quad i.o.) \\
	& =0\text{ under Assumption \ref{ass:theta}.3.}
	\end{align*}%
	This concludes the proof.\bigskip
\end{proof}

\begin{proof}[\textbf{Proof of Corollary \protect\ref{cor:sc}}]
	We note that $\beta _{kn}=(K\pi _{kn})^{-1/2}[S_{n}\hat{O}_{n}^{T}]_{k\cdot
	} $, $M=||\beta _{kn}||\leq c_{1}^{-1/2}$, $\hat{\beta}_{in}=(n/K)^{1/2}\hat{%
		u}_{1i}^{T}$, $c_{1n}=C_{1}^{-1/2}\sqrt{2}>0$, and 
	\begin{equation*}
	c_{2n}=C^{\ast }\frac{\rho _{n}\log ^{1/2}(n)}{\mu _{n}^{1/2}\sigma _{Kn}^{2}%
	}\left( 1+\rho _{n}+\left( \frac{1}{K}+\frac{\log (5)}{\log (n)}\right)
	^{1/2}\rho _{n}^{1/2}\right) .
	\end{equation*}%
	Then, by Theorem \ref{thm:strong} and Assumption \ref{ass:rate}, we have 
	\begin{equation*}
	(2c_{2n}c_{1}^{1/2}+16K^{3/4}M^{1/2}c_{2n}^{1/2})^{2}\leq
	16.02^{2}K^{3/2}Mc_{2n}\leq 257K^{3/2}c_{1}^{-1/2}c_{2n}\leq
	2c_{1}C_{1}^{-1},
	\end{equation*}%
	where the first inequality holds because by Assumption \ref{ass:rate} and
	the facts that $C^{\ast }=3528C_1c_1^{-1/2}$ and 
	\begin{equation*}
	2c_{2n}c_{1}^{1/2}\leq 2 (10^{-8}C_1^{-1}c_1^{1/2}C^{\ast
	})^{1/2}c_{2n}^{1/2}\leq 0.02c_{2n}^{1/2}.
	\end{equation*}%
	This verifies Assumption \ref{ass:theta}.3.$\ $\bigskip
\end{proof}

\begin{proof}[\textbf{Proof of Lemma \protect\ref{lem:kmed}}]
	Following the first step in the proof of Lemma \ref{lem:kmeans1}, we can
	show that 
	\begin{align}  \label{eq:rbreve}
	\breve{R}_n \equiv \sup_{\mathcal{A}}|\widetilde{Q}_n(\mathcal{A}) - Q_n(%
	\mathcal{A})| \leq c_{2n}.
	\end{align}
	Suppose $H(\mathcal{A},\mathcal{B}_{n}) \geq \eta$ for any $\eta>0$. Then,
	by Step 3 in the proof of Lemma \ref{lem:kmeans1}, if there exist some $%
	l_{0}\in \{1,\ldots ,K\}$ and two indexes $k_{1}$ and $k_{2}$ such that 
	\begin{equation*}
	l_{0}=\argmin_{1\leq l\leq K}\Vert \beta _{k_{1}n}-\alpha _{l}\Vert =\argmin%
	_{1\leq l\leq K}\Vert \beta _{k_{2}n}-\alpha _{l}\Vert ,
	\end{equation*}%
	then by Assumption \ref{ass:theta}.2 
	\begin{align*}
	Q_{n}(\mathcal{A})& \geq \pi _{k_{1}n}\Vert \beta _{k_{1}n}-\alpha
	_{l_{0}}\Vert +\pi _{k_{2}n}\Vert \beta _{k_{2}n}-\alpha _{l_{0}}\Vert \\
	& \geq \frac{c_1}{K}\Vert \beta _{k_{1}n}-\beta _{k_{2},n}\Vert \geq
	c_1c_{1n}/K.
	\end{align*}%
	On the other hand, if there does not exist such an $l_{0}$, then there is a
	one-to-one mapping $h:\{1,\ldots ,K\}\mapsto \{1,\ldots ,K\}$ such that 
	\begin{equation*}
	h(k)=\argmin_{1\leq l\leq K}\Vert \beta _{kn}-\alpha _{l}\Vert .
	\end{equation*}%
	Therefore, 
	\begin{equation*}
	Q_{n}(\mathcal{A})=\sum_{k=1}^{K}\pi _{kn}\Vert \beta _{kn}-\alpha
	_{h(k)}\Vert\geq (\inf_{k}\pi _{kn})H(\mathcal{A},\mathcal{B}_{n})\geq
	c_1\eta/K
	\end{equation*}
	and 
	\begin{align}  \label{eq:kmed}
	\inf_{\mathcal{A}:H(\mathcal{A},\mathcal{B}_{n})>\eta}Q_{n}(\mathcal{A})
	\geq \frac{c_1 (c_{1n} \wedge \eta)}{K}.
	\end{align}
	
	By Step 4 of the proof of Lemma \ref{lem:kmeans1} and letting $\eta =\frac{%
		3Kc_{2n}}{c_{1}}$, we have 
	\begin{align*}
	& P(H(\widetilde{\mathcal{A}}_{n},\mathcal{B}_{n})\geq \frac{3Kc_{2n}}{c_{1}}%
	\quad i.o.) \\
	=& P(H(\widetilde{\mathcal{A}}_{n},\mathcal{B}_{n})\geq \frac{3Kc_{2n}}{c_{1}%
	},Q_{n}(\widetilde{\mathcal{A}}_{n})\geq Q_{n}(\mathcal{B}_{n})+\min \frac{%
		c_{1}(c_{1n}\wedge \frac{3Kc_{2n}}{c_{1}})}{K}\quad i.o.) \\
	\leq & P(\widetilde{Q}_{n}(\widetilde{\mathcal{A}}_{n})+\breve{R}_{n}\geq 
	\widetilde{Q}_{n}(\mathcal{B}_{n})-\breve{R}_{n}+3c_{2n}\quad i.o.) \\
	\leq & P(2\breve{R}_{n}\geq 3c_{2n}\quad i.o.) \\
	\leq & P(2c_{2n}\geq 3c_{2n}\quad i.o.)=0,
	\end{align*}%
	where the first equality is due to \eqref{eq:kmed}, the first inequality is
	due to Assumption \ref{ass:theta2}.2, the second inequality is because $%
	\widetilde{Q}_{n}(\widetilde{\mathcal{A}}_{n})\leq \widetilde{Q}_{n}(%
	\mathcal{\beta }_{n})$, and the third inequality is due to \eqref{eq:rbreve}%
	.\bigskip
\end{proof}

\begin{proof}[\textbf{Proof of Theorem \protect\ref{thm:strong2}}]
	By Lemma \ref{lem:kmed} and Assumption \ref{ass:theta}.2, for each $n$,
	there is a one-to-one mapping $F_{n}:\{1,\ldots ,K\}\mapsto \{1,\ldots ,K\}$%
	, such that 
	\begin{equation*}
	\sup_{k}\Vert \widetilde{\alpha }_{kn}-\beta _{F_{n}(k)n}\Vert \leq
	3Kc_1^{-1}c_{2n} \quad a.s.
	\end{equation*}%
	W.l.o.g., we can assume $F_{n}(k)=k$ such that 
	\begin{equation}
	\tilde{R}_{n}\equiv \sup_{k}\Vert \widetilde{\alpha }_{kn}-\beta _{kn}\Vert
	\leq 3Kc_1^{-1}c_{2n} \quad a.s.  \label{eq:alphahat1}
	\end{equation}%
	If $\tilde{g}_{i}\neq g_{i}^{0}$, then $\Vert \hat{\beta}_{in}-\widetilde{%
		\alpha }_{\tilde{g}_{i}n}\Vert \leq \Vert \hat{\beta}_{in}-\widetilde{\alpha 
	}_{g_{i}^{0}n}\Vert .$ This, in conjunction with the triangle inequality,
	implies that 
	\begin{equation*}
	\Vert \widetilde{\alpha }_{\tilde{g}_{i}n}-\widetilde{\alpha }%
	_{g_{i}^{0}n}\Vert -\Vert \hat{\beta}_{in}-\widetilde{\alpha }%
	_{g_{i}^{0}n}\Vert \leq \Vert \hat{ \beta}_{in}-\widetilde{\alpha }_{\tilde{g%
		}_{i}n}\Vert \leq \Vert \hat{\beta}_{in}-\widetilde{\alpha }%
	_{g_{i}^{0}n}\Vert .
	\end{equation*}%
	It follows that $\Vert \hat{\beta}_{in}-\widetilde{\alpha }%
	_{g_{i}^{0}n}\Vert \geq \frac{1}{2}\Vert \widetilde{\alpha }_{\tilde{g}%
		_{i}n}-\widetilde{\alpha }_{g_{i}^{0}n}\Vert .$ By \eqref{eq:rbreve}, %
	\eqref{eq:alphahat1}, and the repeated use of the triangle inequality, we
	have 
	\begin{align*}
	c_{2n}+\tilde{R}_n& \geq \Vert \hat{\beta}_{in}-\beta _{g_{i}^{0}n}\Vert
	+\Vert \beta _{g_{i}^{0}n}-\widetilde{\alpha }_{g_{i}^{0}n}\Vert \text{ } \\
	& \geq \Vert \hat{\beta}_{in}-\widetilde{\alpha }_{g_{i}^{0}n}\Vert \geq 
	\frac{ 1}{2}\Vert \widetilde{\alpha }_{\tilde{g}_{i}n}-\widetilde{\alpha }%
	_{g_{i}^{0}n}\Vert \text{ } \\
	& =\frac{1}{2}\Vert (\beta _{\tilde{g}_{i}n}-\beta _{g_{i}^{0}n})+(%
	\widetilde{ \alpha }_{\tilde{g}_{i}n}-\beta _{\tilde{g}_{i}n})+(\beta
	_{g_{i}^{0}n}-\widetilde{ \alpha }_{g_{i}^{0}n})\Vert \\
	& \geq \frac{1}{2}\Vert \beta _{\tilde{g}_{i}n}-\beta _{g_{i}^{0}n}\Vert -%
	\tilde{R}_{n}\geq c_{1n}/2-\tilde{R}_{n}.
	\end{align*}
	
	This implies $1\{\tilde{g}_{i}\neq g_{i}^{0}\}\leq 1\{c_{2n}+2\tilde{R}%
	_{n}\geq c_{1n}/2\}.$ Noting that the RHS of the above display is
	independent of $i$, we have 
	\begin{align*}
	P(\sup_{i}1\{\tilde{g}_{i}\neq g_{i}^{0}\}>0\quad i.o.)& \leq P(c_{2n}+2%
	\tilde{R}_{n}\geq c_{1n}/2\quad i.o.) \\
	& =P(c_{2n}+6Kc_{1}^{-1}c_{2n}\geq c_{1n}/2\quad i.o.) \\
	& =0\text{ under Assumption \ref{ass:theta2}.2.}
	\end{align*}%
	This concludes the proof.\bigskip
\end{proof}

\begin{proof}[Proof of Corollary \protect\ref{cor:sc2}]
	By Theorems \ref{thm:id} and \ref{thm:main}, we have $c_{1n} = C_1^{-1/2}%
	\sqrt{2}$ and 
	\begin{equation*}
	c_{2n} = C^*\frac{\rho _{n}\log ^{1/2}(n)}{\mu _{n}^{1/2}\sigma _{Kn}^{2}}%
	\left(1+\rho_n + \left(\frac{1}{K} + \frac{\log(5)}{\log(n)}%
	\right)^{1/2}\rho_n^{1/2}\right).
	\end{equation*}
	Then, the result directly follows Theorem \ref{thm:strong2}.
\end{proof}

\section{Proofs of the results in Section \protect\ref{sec:ext}\label%
	{sec:31pf}}

In this appendix, we prove the main results in Section \ref{sec:ext}, viz.,
Theorems \ref{thm:id2}-\ref{thm:thetahat}, Lemma \ref{lem:dk3}, and
Corollary \textbf{\ref{cor:DC}. }The proof of Lemma \ref{lem:dk3} calls upon
Lemma\textbf{\ }\ref{lem:matrixBernstein} and that of Theorem \ref%
{thm:main_DC} calls upon Lemmas \ref{lem:Pij3}, \ref{lem:V1n3} and \ref%
{lem:B3} in Appendix \ref{sec:lem}. Theorems \ref{thm:id2} and \ref%
{thm:regularization} can be proved in the same manner as Theorems \ref%
{thm:id} and \ref{thm:main}, respectively, while Theorem \ref{thm:main} is a
special case of Theorem \ref{thm:main_DC}. Therefore, the key part of this
section is to prove Lemma \ref{lem:dk3} and Theorem \ref{thm:main_DC}.
\bigskip

\begin{proof}[\textbf{Proof of Theorem \protect\ref{thm:id2}}]
	Since $\mathcal{L}_{\tau }=n^{-1}ZB_{0}^{\tau }Z$, the proof follows that of
	Theorem \ref{thm:id} with $A,$ $B_{0},$ and $S_{n}$ replaced by $A_{\tau },$ 
	$B_{0}^{\tau },$ and $S_{n}^{\tau },$ respectively.\bigskip
\end{proof}

\begin{proof}[\textbf{Proof of Theorem \protect\ref{thm:regularization}}]
	The proof of part (i) is analogous to that of Theorem \ref{thm:main}. The
	main difference is that we need to use Theorem \ref{thm:id2} in place of
	Theorem \ref{thm:id}.
	
	Theorem \ref{thm:id2} and the first part of Theorem \ref{thm:regularization}
	verify Assumptions \ref{ass:theta}.1 and (ii) and Assumption \ref{ass:theta}
	(iii), respectively, with $\beta _{kn}=(K\pi _{kn})^{-1/2}[S_{n}^{\tau }(%
	\hat{O}_{n}^{\tau })^{T}]_{k\cdot }$ and $\hat{\beta}_{in}=(n/K)^{1/2}(\hat{u%
	}_{1i}^{\tau })^{T}.$ Assumption \ref{ass:nk} is maintained. Then part (ii)
	follows from Theorem \ref{thm:strong}.\bigskip
\end{proof}

To prove the results in Section \ref{sec:ext2}, we follow the notation
there. In particular, we consider the spectral decomposition of $\mathcal{L}%
_{\tau }^{\prime }:$ 
\begin{equation*}
\mathcal{L}_{\tau }^{\prime }=U_{1n}\Sigma _{n}U_{1n}^{T},
\end{equation*}%
where $\Sigma _{n}=\text{diag}(\sigma _{1n},\ldots ,\sigma _{Kn})$ is a $%
K\times K$ matrix that contains the eigenvalues of $\mathcal{L}_{\tau
}^{\prime }$ such that $|\sigma _{1n}|\geq |\sigma _{2n}|\geq \cdots \geq
|\sigma _{Kn}|>0 $ and $U_{1n}^{T}U_{1n}=I_{K}$. The sample normalized graph
Laplacian is denoted as $L_{\tau }^{\prime }$. We consider the spectral
decomposition 
\begin{equation*}
L_{\tau }^{\prime }=\hat{U}_{n}\hat{\Sigma}_{n}\hat{U}_{n}^{T}=\hat{U}_{1n}%
\hat{\Sigma}_{1n}\hat{U}_{1n}^{T}+\hat{U}_{2n}\hat{\Sigma}_{2n}\hat{U}%
_{2n}^{T},
\end{equation*}%
where $\hat{\Sigma}_{n}=\text{diag}(\hat{\sigma}_{1n},\ldots ,\hat{\sigma}%
_{nn})=\text{diag}(\hat{\Sigma}_{1n},\hat{\Sigma}_{2n})$ with $|\hat{\sigma}%
_{1n}|\geq |\hat{\sigma}_{2n}|\geq \cdots \geq |\hat{\sigma}_{nn}|\geq 0,$ $%
\hat{\Sigma}_{1n}=\text{diag}(\hat{\sigma}_{1n},\ldots ,\hat{\sigma}_{Kn})$, 
$\hat{\Sigma}_{2n}=\text{diag}(\hat{\sigma}_{K+1,n},\ldots ,\hat{\sigma}%
_{nn})$, and $\hat{U}_{n}=(\hat{U}_{1n},\hat{U}_{2n})$ is the corresponding
eigenvectors such that $\hat{U}_{1n}^{T}\hat{U}_{1n}=I_{K}$ and $\hat{U}%
_{2n}^{T}\hat{U}_{1n}=0.\medskip $

\begin{proof}[\textbf{Proof of Theorem \protect\ref{thm:id3}}]
	Let $g_{i}^{0}\in \{1,\ldots ,K\}$ denote node $i$'s membership. Similar to %
	\citet[][Lemma 3.2]{QR13}, we have by (\ref{eq:thetanormalization}) 
	\begin{equation}
	d_{i}=\sum_{j=1}^{n}P_{ij}=\theta _{i}\sum_{j=1}^{n}\theta
	_{j}B_{g_{i}^{0}g_{j}^{0}}=\theta _{i}\sum_{k=1}^{K}\sum_{j\in C_{k}}\theta
	_{j}B_{g_{i}^{0}k}=n\theta _{i}\sum_{k=1}^{K}\pi _{kn}B_{g_{i}^{0}k}=n\theta
	_{i}W_{g_{i}^{0}}.  \label{eq:d_i}
	\end{equation}%
	Therefore, 
	\begin{align*}
	\lbrack \mathcal{L}_{\tau }^{\prime }]_{ij}=P_{ij}((d_{i}+\tau )(d_{j}+\tau
	))^{-1/2}& =B_{g_{i}^{0}g_{j}^{0}}(\theta _{i}\theta _{j})((d_{i}+\tau
	)(d_{j}+\tau ))^{-1/2} \\
	& =B_{g_{i}^{0}g_{j}^{0}}(\theta _{i}^{\tau }\theta _{j}^{\tau
	})^{1/2}(\theta _{i}\theta _{j})^{1/2}(d_{i}d_{j})^{-1/2} \\
	& =n^{-1}B_{g_{i}^{0}g_{j}^{0}}(\theta _{i}^{\tau }\theta _{j}^{\tau
	})^{1/2}(W_{g_{i}^{0}}W_{g_{j}^{0}})^{-1/2} \\
	& =n^{-1}[\Theta _{\tau }^{1/2}Z\mathcal{D}_{B}^{-1/2}B\mathcal{D}%
	_{B}^{-1/2}Z^{T}\Theta _{\tau }^{1/2}]_{ij} \\
	& =n^{-1}[\Theta _{\tau }^{1/2}ZB_{0}Z^{T}\Theta _{\tau }^{1/2}]_{ij}.
	\end{align*}%
	That is, $\mathcal{L}_{\tau }^{\prime }=n^{-1}\Theta _{\tau
	}^{1/2}ZB_{0}Z^{T}\Theta _{\tau }^{1/2}.$ Then 
	\begin{equation*}
	(\mathcal{L}_{\tau }^{\prime })^{2}=n^{-1}\Theta _{\tau
	}^{1/2}ZB_{0}(Z^{T}\Theta _{\tau }Z/n)B_{0}Z^{T}\Theta _{\tau
	}^{1/2}=n^{-1}\Theta _{\tau }^{1/2}ZB_{0}\Pi _{n}^{\tau }B_{0}Z^{T}\Theta
	_{\tau }^{1/2},
	\end{equation*}%
	where $\Pi _{n}^{\tau }=Z^{T}\Theta _{\tau }Z/n=\text{diag}(\pi _{1n}^{\tau
	},\ldots ,\pi _{Kn}^{\tau }),$ and $\pi _{kn}^{\tau }=n_{k}^{\tau
	}/n=\sum_{i\in C_{k}}\theta _{i}^{\tau }/n$. By the spectral decomposition,
	we have 
	\begin{equation}
	(\Pi _{n}^{\tau })^{1/2}B_{0}\Pi _{n}^{\tau }B_{0}(\Pi _{n}^{\tau
	})^{1/2}=S_{n}^{\tau }\Omega _{n}(S_{n}^{\tau })^{T},  \label{eq:sigma3}
	\end{equation}%
	where $\Omega _{n}=\text{diag}(\omega _{n},\ldots ,\omega _{Kn})$ such that $%
	\omega _{n}\geq \omega _{2n}\geq \cdots \geq \omega _{Kn}>0$ and $S^\tau_{n}$
	is a $K\times K$ matrix such that $(S_{n}^{\tau })^{T}S_{n}^{\tau }=I_{K}.$
	Let $U^*_{1n}=\Theta _{\tau }^{1/2}Z(Z^{T}\Theta _{\tau
	}Z)^{-1/2}S_{n}^{\tau }$. Then, we have 
	\begin{equation*}
	U^*_{1n}\Omega _{n}U_{1n}^{\ast T}=(\mathcal{L}_{\tau }^{\prime
	})^{2}=U_{1n}\Sigma _{n}^{2}U_{1n}^{T}.
	\end{equation*}%
	In addition, $U_{1n}^{\ast T}U^\ast_{1n}=(S_{n}^{\tau })^{T}S_{n}^{\tau
	}=I_{K}$. Therefore the columns of $U^\ast_{1n}$ are the eigenvectors of $%
	\mathcal{L}_{\tau }^{\prime }$ associated with eigenvalues $\sigma
	_{n},\ldots ,\sigma _{Kn}$, up to sign normalization. W.l.o.g., we can take $%
	U_{1n}=U^\ast_{1n}$ to obtain the first result.
	
	Now we turn to the second result. If node $i$ is in cluster $C_{k_{1}}$,
	then 
	\begin{equation*}
	u_{i}^{T}=(\theta _{i}^{\tau })^{1/2}z_{i}^{T}(Z^{T}\Theta _{\tau
	}Z)^{-1/2}S_{n}^{\tau }=(\theta _{i}^{\tau })^{1/2}(n_{k_{1}}^{\tau
	})^{-1/2}[S_{n}^{\tau }]_{k_{1}\cdot },
	\end{equation*}%
	where $[S_{n}^{\tau }]_{k\cdot }$ denotes the $k$-th row of $S_{n}^{\tau }$.
	Therefore, 
	\begin{equation*}
	(n_{k_{1}}^{\tau })^{1/2}(\theta _{i}^{\tau })^{-1/2}\Vert u_{i}^{T}\Vert
	=\Vert \lbrack S_{n}^{\tau }]_{k_{1}\cdot }\Vert =1.
	\end{equation*}
	
	Last, we note that $\frac{u_{i}^{T}}{\Vert u_{i}^{T}\Vert }=[S_{n}^{\tau
	}]_{g_{i}^{0}\cdot }$. Therefore, if $z_{i}\neq z_{j}$, then $g_{i}^{0}\neq
	g_{j}^{0}$ and 
	\begin{equation*}
	\biggl\Vert\frac{u_{i}^{T}}{\Vert u_{i}^{T}\Vert }-\frac{u_{1j}^{T}}{\Vert
		u_{1j}^{T}\Vert }\biggr\Vert=\Vert \lbrack S_{n}^{\tau }]_{g_{i}^{0}\cdot
	}-[S_{n}^{\tau }]_{g_{j}^{0}\cdot }\Vert =\sqrt{2}.
	\end{equation*}%
	Similarly, if $z_{i}=z_{j}$, then $g_{i}^{0}=g_{j}^{0}$ and $\frac{u_{i}^{T}%
	}{\Vert u_{i}^{T}\Vert }=\frac{u_{1j}^{T}}{\Vert u_{1j}^{T}\Vert }.\medskip $
\end{proof}

Lemma \ref{lem:dk3} derives an upper bound for spectral norm of the gap
between the first $K$ columns of sample and population eigenvectors. By
Lemma \ref{lem:matrixBernstein}, we first derive the upper bound for
spectral norm of the gap between sample and population graph Laplacians.
Then, we use the Davis-Kahan theorem (Lemma \ref{lem:davis}) to establish
the bound for the eigenvectors.

\bigskip

\begin{proof}[\textbf{Proof of Lemma \protect\ref{lem:dk3}}]
	The proof is similar to that in \cite{JY16} and \cite{QR13}. Let $\tilde{L}%
	_{\tau }=\mathcal{D}_{\tau }^{-1/2}A\mathcal{D}_{\tau }^{-1/2}$. Then 
	\begin{equation*}
	\Vert \mathcal{L}_{\tau }^{\prime }-L_{\tau }^{\prime }\Vert \leq \Vert 
	\mathcal{L}_{\tau }^{\prime }-\tilde{L}_{\tau }\Vert +\Vert L_{\tau
	}^{\prime }-\tilde{L}_{\tau }\Vert \equiv I+II.
	\end{equation*}%
	Let $d_{i}^{\tau }=d_{i}+\tau $, $Y_{ij}=(d_{i}^{\tau }d_{j}^{\tau
	})^{-1/2}(A_{ij}-P_{ij})(e_{i}e_{j}^{T}+e_{j}e_{i}^{T})$ for $1\leq i<j\leq
	n $, and $Y_{ii}=-(d_{i}^{\tau })^{-1}P_{ii}e_{i}e_{i}^{T}$, where $e_{i}$
	is the $n\times 1$ vector with its $i$-th coordinate being 1 and the rest
	being 0. Then $\{Y_{ij}\}_{1\leq i<j\leq n}$ is a sequence of independent
	symmetric random matrices such that $\mathbb{E}Y_{ij}=0$, 
	\begin{equation*}
	\tilde{L}_{\tau }-\mathcal{L}_{\tau }^{\prime }+\text{diag}(\mathcal{L}%
	_{\tau }^{\prime })=\sum_{1\leq i<j\leq n}Y_{ij},\text{ and diag}(\mathcal{L}%
	_{\tau }^{\prime })=\sum_{i=1}^{n}(d_i^\tau)^{-1}P_{ii}e_ie_i^T.
	\end{equation*}%
	In addition, we note that $\sup_{1\leq i<j\leq n}\Vert Y_{ij}\Vert \leq 
	\sqrt{2}/\mu _{n}^{\tau }$ and 
	\begin{align*}
	\sigma ^{2}=\Vert \sum_{1\leq i<j\leq n}\mathbb{E}Y_{ij}^{2}\Vert & =\Vert 
	\text{diag}(\sum_{j\neq 1}p_{1j}(1-p_{1j})/(d_{1}^{\tau }d_{j}^{\tau
	}),\ldots ,\sum_{j\neq n}p_{nj}(1-p_{nj})/(d_{n}^{\tau }d_{j}^{\tau }))\Vert
	\\
	& \leq (\mu _{n}^{\tau })^{-1}\max_{1\leq i\leq
		n}\sum_{j=1}^{n}p_{ij}(1-p_{ij})/d_{i}^{\tau }\leq (\mu _{n}^{\tau })^{-1}.
	\end{align*}%
	By Lemma \ref{lem:matrixBernstein}, for $n$ sufficiently large and $C = 2.6$%
	, we have 
	\begin{align}
	& P(\Vert \tilde{L}_{\tau }-\mathcal{L}_{\tau }^{\prime }+\text{diag}(%
	\mathcal{\ L}_{\tau }^{\prime })\Vert \geq C(\log (n)/\mu _{n}^{\tau
	})^{1/2})  \notag \\
	& =P(\Vert \sum_{1\leq i<j\leq n}Y_{ij}\Vert \geq C(\log (n)/\mu _{n}^{\tau
	})^{1/2})  \notag \\
	& \leq 2n\exp \biggl(\frac{-C^2\log (n)/\mu _{n}^{\tau }}{3(\mu _{n}^{\tau
		})^{-1}+2C\sqrt{2}(\log (n)/\mu _{n}^{\tau })^{1/2}(\mu _{n}^{\tau })^{-1}}%
	\biggr)  \notag \\
	& \leq 2n^{-1.1},  \label{eq:WnDC}
	\end{align}
	where for the last inequality, we use the fact that $(\log(n)/\mu_n^%
	\tau)^{1/2} \leq 0.01$ and $2.6^2 > 2.1\times (3+2.6\sqrt{2}/50)$. This
	implies 
	\begin{equation*}
	\sum_{n=1}^{\infty }P(\Vert \tilde{L}_{\tau }-\mathcal{L}_{\tau }^{\prime }+%
	\text{diag}(\mathcal{L}_{\tau }^{\prime })\Vert \geq 2.6(\log (n)/\mu
	_{n}^{\tau })^{1/2})<\infty ,
	\end{equation*}%
	and thus, $\Vert \tilde{L}_{\tau }-\mathcal{L}_{\tau }^{\prime }+\text{diag}(%
	\mathcal{L}_{\tau }^{\prime })\Vert \leq 2.6(\log (n)/\mu _{n}^{\tau
	})^{1/2}\quad a.s.$ In addition, for $n$ sufficiently large, 
	\begin{equation*}
	\Vert \text{diag}(\mathcal{L}_{\tau }^{\prime })\Vert \leq (\mu _{n}^{\tau
	})^{-1}\leq 0.01(\log (n)/\mu _{n}^{\tau })^{1/2}.
	\end{equation*}%
	Therefore, 
	\begin{equation}  \label{eq:I}
	I\leq \Vert \tilde{L}_{\tau }-\mathcal{L}_{\tau }^{\prime }+\text{diag}(%
	\mathcal{L}_{\tau }^{\prime })\Vert +\Vert \text{diag}(\mathcal{L}_{\tau
	}^{\prime })\Vert \leq 2.61(\log n/\mu _{n}^{\tau })^{1/2}\quad a.s.
	\end{equation}
	
	Now we turn to $II$. Let $\hat{d}_{i}^{\tau }=\hat{d}_{i}+\tau $. By
	Bernstein inequality, for some $C=2.09$, we have, 
	\begin{align}  \label{eq:d}
	P(\sup_{i}|\hat{d}_{i}^{\tau }-d_{i}^{\tau }|/d_{i}^{\tau }\geq C(\log
	(n)/\mu _{n}^{\tau })^{1/2})& \leq 2\sum_{i=1}^{n}\exp \biggl(\frac{%
		-C^{2}(d_{i}^{\tau })^{2}\log (n)/\mu _{n}^{\tau }}{2d_{i}^{\tau }+2C(\log
		n/\mu _{n}^{\tau })^{1/2}d_{i}^{\tau }/3}\biggr)  \notag \\
	& \leq 2n^{-1.1},
	\end{align}%
	where the last inequality holds because $(\log(n)/\mu_n^\tau)^{1/2} \leq
	0.01 $ and $2.09^2 > 2.1\times (2+2\times 2.09/300)$. Therefore, $\sup_{i}|%
	\hat{d}_{i}^{\tau }-d_{i}^{\tau }|/d_{i}^{\tau }\leq 2.09(\log (n)/\mu
	_{n}^{\tau })^{1/2}\quad a.s.$, and thus, 
	\begin{equation*}
	\Vert \mathcal{D}_{\tau }^{-1/2}D_{\tau }^{1/2}-I\Vert =\max_{i}|(\hat{d}%
	_{i}^{\tau }/d_{i}^{\tau })^{1/2}-1|\leq \max_{i}|(\hat{d}_{i}^{\tau
	}/d_{i}^{\tau })-1|\leq 2.09(\log (n)/\mu _{n}^{\tau })^{1/2}\quad a.s.
	\end{equation*}%
	In addition, by \citet[][Lemma 1.7]{C97}$,\Vert L_{\tau }^{\prime }\Vert
	\leq \Vert L\Vert \leq 1$. Therefore, 
	\begin{align}  \label{eq:II}
	\Vert \tilde{L}_{\tau }-L_{\tau }^{\prime }\Vert & =\Vert L_{\tau }^{\prime
	}-\mathcal{D}_{\tau }^{-1/2}D_{\tau }^{1/2}L_{\tau }^{\prime }D_{\tau }^{1/2}%
	\mathcal{D}_{\tau }^{-1/2}\Vert  \notag \\
	& \leq \Vert \mathcal{D}_{\tau }^{-1/2}D_{\tau }^{1/2}L_{\tau }^{\prime }-%
	\mathcal{D}_{\tau }^{-1/2}D_{\tau }^{1/2}L_{\tau }^{\prime }D_{\tau }^{1/2}%
	\mathcal{D}_{\tau }^{-1/2}\Vert +\Vert L_{\tau }^{\prime }-\mathcal{D}_{\tau
	}^{-1/2}D_{\tau }^{1/2}L_{\tau }^{\prime }\Vert  \notag \\
	& \leq \Vert \mathcal{D}_{\tau }^{-1/2}D_{\tau }^{1/2}-I\Vert \Vert \mathcal{%
		\ D}_{\tau }^{-1/2}D_{\tau }^{1/2}\Vert +\Vert \mathcal{D}_{\tau
	}^{-1/2}D_{\tau }^{1/2}-I\Vert  \notag \\
	\leq & 2.09(\log (n)/\mu _{n}^{\tau })^{1/2}(1+2.09(\log (n)/\mu _{n}^{\tau
	})^{1/2}) + 2.09(\log (n)/\mu _{n}^{\tau })^{1/2}  \notag \\
	\leq & 4.39(\log (n)/\mu _{n}^{\tau })^{1/2} \quad a.s.
	\end{align}%
	Combining \eqref{eq:I} and \eqref{eq:II}, we can conclude the first part of
	the proof. Then, by Lemma \ref{lem:davis} and fact that $7(\log (n)/\mu
	_{n}^{\tau })^{1/2} \leq \frac{|\sigma_{Kn}|}{100}$, we have 
	\begin{equation*}
	\Vert \hat{U}_{1n}\hat{O}_{n}-U_{1n}\Vert \leq \frac{\sqrt{2}\Vert L_{\tau
		}^{\prime }-\mathcal{L}_{\tau }^{\prime }\Vert }{0.99|\sigma _{Kn}|}\leq
	10(\log (n)/\mu _{n}^{\tau })^{1/2}|\sigma _{Kn}|^{-1}\quad a.s.
	\end{equation*}
\end{proof}

\bigskip

\begin{proof}[\textbf{Proof of Theorem \protect\ref{thm:main_DC}}]
	We aim to show the result with $C^{\ast }=3528C_1c_1^{-1/2}$. First, by
	the Hoffman-Wielandt inequality and Lemma \ref{lem:dk3} 
	\begin{equation}
	\Vert \hat{\Sigma}_{n}-\Sigma _{n}\Vert \leq \Vert L_{\tau }^{\prime }-%
	\mathcal{L}_{\tau }^{\prime }\Vert \leq 7(\log (n)/\mu _{n}^{\tau
	})^{1/2}\quad a.s.  \label{eq:LL3}
	\end{equation}%
	Then, by Lemmas \ref{lem:B3} and \ref{lem:dk3}, 
	\begin{align*}
	17(\log (n)/\mu _{n}^{\tau })^{1/2}|\sigma _{Kn}|^{-1}& \geq \Vert \hat{%
		\Lambda}-\Lambda \Vert \\
	& =\Vert \hat{U}_{1n}\hat{\Sigma}_{n}\hat{O}_{n}-U_{1n}\Sigma _{n}\Vert \\
	& \geq \Vert \hat{U}_{1n}(\hat{O}_{n}\hat{\Sigma}_{n}-\hat{\Sigma}_{n}\hat{O}%
	_{n})\Vert -\Vert (\hat{U}_{1n}\hat{O}_{n}-U_{1n})\hat{\Sigma}_{n}\Vert
	-\Vert U_{1n}(\hat{\Sigma}_{n}-\Sigma _{n})\Vert \\
	& =\Vert \hat{O}_{n}\hat{\Sigma}_{n}-\hat{\Sigma}_{n}\hat{O}_{n}\Vert
	-17(\log (n)/\mu _{n}^{\tau })^{1/2}|\sigma _{Kn}|^{-1}\quad a.s.
	\end{align*}%
	Therefore, 
	\begin{equation}
	\Vert \hat{O}_{n}\hat{\Sigma}_{n}-\hat{\Sigma}_{n}\hat{O}_{n}\Vert \leq
	34(\log (n)/\mu _{n}^{\tau })^{1/2}|\sigma _{Kn}|^{-1}\quad a.s.
	\label{eq:Osigmadc}
	\end{equation}%
	In addition, 
	\begin{align*}
	(n_{g_{i}^{0}}^{\tau })^{1/2}(\theta _{i}^{\tau })^{-1/2}\Vert \hat{\Lambda}%
	_{i}-\Lambda _{i}\Vert & =(n_{g_{i}^{0}}^{\tau })^{1/2}(\theta _{i}^{\tau
	})^{-1/2}\Vert \hat{u}_{i}^{T}\hat{\Sigma}_{n}\hat{O}_{n}-u_{i}^{T}\Sigma
	_{n}\Vert \\
	& \geq (n_{g_{i}^{0}}^{\tau })^{1/2}(\theta _{i}^{\tau })^{-1/2}\Vert (\hat{u%
	}_{i}^{T}\hat{O}_{n}-u_{i}^{T})\hat{\Sigma}_{n}\Vert -(n_{g_{i}^{0}}^{\tau
	})^{1/2}(\theta _{i}^{\tau })^{-1/2}\Vert u_{i}^{T}(\hat{\Sigma}_{n}-\Sigma
	_{n})\Vert \\
	& \quad \ -(n_{g_{i}^{0}}^{\tau })^{1/2}(\theta _{i}^{\tau })^{-1/2}\Vert 
	\hat{u}_{i}^{T}(\hat{\Sigma}_{n}\hat{O}_{n}-\hat{O}_{n}\hat{\Sigma}_{n})\Vert
	\\
	& \equiv I_{i}-II_{i}-III_{i}.
	\end{align*}%
	Next, we bound the three terms on the RHS of the above display. By
	Assumption \ref{ass:rate3} and Lemma \ref{lem:dk3}, $|\hat{ \sigma}_{Kn}|
	\geq 0.999 |\sigma_{Kn}|$ $a.s.$, and thus, 
	\begin{equation*}
	I_{i}\geq 0.999|\sigma _{Kn}|(n_{g_{i}^{0}}^{\tau })^{1/2}(\theta _{i}^{\tau
	})^{-1/2}\Vert \hat{u}_{i}^{T}\hat{O}_{n}-u_{i}^{T}\Vert \quad a.s.
	\end{equation*}%
	By Theorem \ref{thm:id3} and (\ref{eq:LL3}), 
	\begin{equation*}
	\sup_{i}II_{i}\leq \sup_{i}(n_{g_{i}^{0}}^{\tau })^{1/2}(\theta _{i}^{\tau
	})^{-1/2}\Vert u_{i}^{T}\Vert \Vert \hat{\Sigma}_{n}-\Sigma _{n}\Vert \leq
	7(\log (n)/\mu _{n}^{\tau })^{1/2}\quad a.s.
	\end{equation*}%
	Denote $\Gamma _{n}=\sup_{i}(n_{g_{i}^{0}}^{\tau })^{1/2}(\theta _{i}^{\tau
	})^{-1/2}\Vert \hat{u}_{i}^{T}\hat{O}_{n}-u_{i}^{T}\Vert $. By %
	\eqref{eq:Osigmadc} and Theorem \ref{thm:id3}.2, 
	\begin{equation*}
	\sup_{i}III_{i}\leq 34(\log (n)/\mu _{n}^{\tau })^{1/2}|\sigma
	_{Kn}|^{-1}(\Gamma _{n}+1)\quad a.s.
	\end{equation*}%
	Therefore, we have 
	\begin{align}
	\sup_{i}(n_{g_{i}^{0}}^{\tau })^{1/2}(\theta _{i}^{\tau })^{-1/2}\Vert \hat{%
		\Lambda}_{i}-\Lambda _{i}\Vert \geq & (0.999|\sigma _{Kn}|-34(\log (n)/\mu
	_{n}^{\tau })^{1/2}|\sigma _{Kn}^{-1}|)\Gamma _{n}-41(\log (n)/\mu
	_{n}^{\tau })^{1/2}|\sigma _{Kn}^{-1}|  \notag \\
	\geq & 0.99|\sigma _{Kn}|\Gamma _{n}-41(\log (n)/\mu _{n}^{\tau
	})^{1/2}|\sigma _{Kn}^{-1}|,
	\end{align}%
	where we use the fact that $34(\log (n)/\mu _{n}^{\tau })^{1/2}|\sigma
	_{Kn}^{-2}|\leq 0.09$ under Assumption \ref{ass:rate3}.2.
	
	On the other hand, if $\Gamma _{n}\leq \delta _{n}^{(0)}$ a.s. for some
	deterministic sequence $\{\delta _{n}^{(0)}\}_{n\geq 1}$, then by Theorem %
	\ref{thm:id3}.2, 
	\begin{equation*}
	\sup_{i}(n_{g_{i}^{0}}^{\tau })^{1/2}(\theta _{i}^{\tau })^{-1/2}\Vert \hat{u%
	}_{i}\Vert \leq \delta _{n}^{(0)}+1\quad a.s.
	\end{equation*}%
	Applying Lemma \ref{lem:B3} with $\psi _{n}=\delta _{n}^{(0)}+1$, we have 
	\begin{align*}
	\hspace{2em}& \hspace{-2em}3450C_{1}c_{1}^{-1/2}\rho _{n}\log ^{1/2}(n)(\mu
	_{n}^{\tau })^{-1/2}|\sigma _{Kn}^{-1}|\biggl[\delta _{n}^{(0)}+1+\rho _{n}+%
	\frac{\left( \frac{1}{K}+\frac{\log (5)}{\log (n)}\right) ^{1/2}\rho
		_{n}^{1/2}\overline{\theta }^{1/4}}{\underline{\theta }^{1/4}}\biggr] \\
	& \geq 0.99|\sigma _{Kn}|\Gamma _{n}-41(\log (n)/\mu _{n}^{\tau
	})^{1/2}|\sigma _{Kn}^{-1}|.
	\end{align*}%
	By combining and rearranging terms and the fact that $\rho _{n}\geq 1$, we
	have, 
	\begin{equation}
	\left[ 3485C_{1}c_{1}^{-1/2}\log ^{1/2}(n)(\mu _{n}^{\tau })^{-1/2}\sigma
	_{Kn}^{-2}\rho _{n}\right] \delta _{n}^{(0)}+3527C_{1}c_{1}^{-1/2}\eta
	_{n}\geq \Gamma _{n},  \label{UB}
	\end{equation}%
	where 
	\begin{equation*}
	\eta _{n}=\biggl(\frac{\rho _{n}\log ^{1/2}(n)}{\left( \mu _{n}^{\tau
		}\right) ^{1/2}\sigma _{Kn}^{2}}\biggr)\biggl(\frac{\left( \frac{1}{K}+\frac{%
			\log (5)}{\log (n)}\right) ^{1/2}\rho _{n}^{1/2}\overline{\theta }^{1/4}}{%
		\underline{\theta }^{1/4}}+\rho _{n}+1\biggr).
	\end{equation*}
	
	In addition, for $n$ sufficiently large, Assumption \ref{ass:rate3}.2
	ensures that 
	\begin{equation*}
	3485C_{1}c_{1}^{-1/2}\log ^{1/2}(n)(\mu _{n}^{\tau })^{-1/2}\sigma
	_{Kn}^{-2}\rho _{n}\leq 0.001.
	\end{equation*}%
	This, in conjunction with (\ref{UB}), implies that 
	\begin{equation*}
	\Gamma_{n} \leq \delta _{n}^{(1)}\equiv 0.001\delta
	_{n}^{(0)}+3527C_{1}c_{1}^{-1/2}\eta _{n}.
	\end{equation*}%
	We iterate the above calculation $t$ times for some arbitrary integer $t$,
	and obtain that for $n\geq n_{1}$, 
	\begin{equation*}
	\Gamma _{n}\leq \delta _{n}^{(t)},\quad \delta _{n}^{(t)}=0.001\delta
	_{n}^{(t-1)}+3527C_{1}c_{1}^{-1/2}\eta _{n}.
	\end{equation*}%
	This implies 
	\begin{equation*}
	\delta _{n}^{(t)}=\left( 0.001\right) ^{t}\biggl[\delta
	_{n}^{(0)}-C_{1}c_{1}^{-1/2}\eta _{n}\biggr]+3527C_{1}c_{1}^{-1/2}\eta _{n}.
	\end{equation*}%
	In addition, because $\sup_{i}n_{g_{i}^{0}}^{\tau }(\theta _{i}^{\tau
	})^{-1}\Vert \hat{u}_{i}\Vert ^{2}\leq n\underline{\theta }^{-1}\Vert \hat{U}%
	_{1n}\Vert _{F}^{2}/K=n\underline{\theta }^{-1}$, we have 
	\begin{equation*}
	\sup_{i}(n_{g_{i}^{0}}^{\tau })^{1/2}(\theta _{i}^{\tau })^{-1/2}\Vert \hat{u%
	}_{i}\Vert \leq n^{1/2}\underline{\theta }^{-1/2}.
	\end{equation*}%
	Therefore, we can set $\delta _{n}^{(0)}=n^{1/2}\underline{\theta }^{-1/2}$
	and choose $n_{2}>n_{1}$ sufficiently large and $t=n$ such that for $n\geq
	n_{2}$, 
	\begin{equation*}
	\Gamma _{n}\leq \delta _{n}^{(n)}\leq 1000^{-n}n^{1/2}\underline{\theta }%
	^{-1/2}+3527C_{1}c_{1}^{-1/2}\eta _{n}\leq 3528C_{1}c_{1}^{-1/2}\eta _{n},
	\end{equation*}%
	where the last inequality holds because $\eta _{n}$ is either bounded away
	from zero or at most decays polynomially. This concludes the proof.\bigskip
\end{proof}

\begin{proof}[\textbf{Proof of Corollary \protect\ref{cor:DC}}]
	By the triangle inequality and Theorem \ref{thm:main_DC}, 
	\begin{align*}
	\sup_{i}\biggl\Vert\frac{\hat{u}_{i}^{T}}{\Vert \hat{u}_{i}^{T}\Vert }-\frac{%
		u_{i}^{T}\hat{O}_{n}^{T}}{\Vert u_{i}^{T}\Vert }\biggr\Vert& =\sup_{i}%
	\biggl\Vert\frac{\hat{u}_{i}^{T}\hat{O}_{n}}{\Vert \hat{u}_{i}^{T}\hat{O}%
		_{n}\Vert }-\frac{u_{i}^{T}}{\Vert u_{i}^{T}\Vert }\biggr\Vert \\
	& \leq \sup_{i}\biggl\Vert\frac{\hat{u}_{i}^{T}\hat{O}_{n}}{\Vert \hat{u}%
		_{i}^{T}\hat{O}_{n}\Vert }-\frac{\hat{u}_{i}^{T}\hat{O}_{n}}{\Vert
		u_{i}^{T}\Vert }\biggr\Vert+\sup_{i}n_{g_{i}^{0}}^{1/2}(\theta _{i}^{\tau
	})^{-1/2}\Vert \hat{u}_{i}^{T}\hat{O}_{n}-u_{i}^{T}\Vert \\
	& \leq 2\sup_{i}n_{g_{i}^{0}}^{1/2}(\theta _{i}^{\tau })^{-1/2}\Vert \hat{u}%
	_{i}^{T}\hat{O}_{n}-u_{i}^{T}\Vert \\
	& \leq 2C^*\biggl(\frac{\rho _{n}\log ^{1/2}(n)}{\left( \mu _{n}^{\tau
		}\right) ^{1/2}\sigma _{Kn}^{2}}\biggr)\biggl(\frac{\left(\frac{1}{K} + 
		\frac{\log(5)}{\log(n)}\right)^{1/2}\rho _{n}^{1/2}\overline{\theta }^{1/4}}{%
		\underline{\theta }^{1/4}}+\rho _{n}+1\biggr)\quad a.s.,
	\end{align*}%
	where $C^* = 3528C_1c_1^{-1/2}$.
	
	\medskip The second result follows Theorem \ref{thm:strong} with $\hat{\beta}%
	_{in}=\frac{\hat{u}_{i}^{T}}{\Vert \hat{u}_{i}^{T}\Vert }$, $\beta
	_{kn}=[S_{n}^{\tau }]_{k\cdot }$, $M=1$, $c_{1n}=\sqrt{2}$, and 
	\begin{equation*}
	c_{2n}=2C^{\ast }\biggl(\frac{\rho _{n}\log ^{1/2}(n)}{\left( \mu _{n}^{\tau
		}\right) ^{1/2}\sigma _{Kn}^{2}}\biggr)\biggl(\frac{\left( \frac{1}{K}+\frac{%
			\log (5)}{\log (n)}\right) ^{1/2}\rho _{n}^{1/2}\overline{\theta }^{1/4}}{%
		\underline{\theta }^{1/4}}+\rho _{n}+1\biggr),
	\end{equation*}%
	where $S_{n}^{\tau }$ is defined in Theorem \ref{thm:id3}. In addition,
	Assumption \ref{ass:theta}.3 holds because 
	\begin{align*}
	& (2c_{2n}c_{1}^{1/2}+16K^{3/4}M^{1/2}c_{2n}^{1/2})^{2} \\
	\leq & 16.02^{2}K^{3/2}c_{2n} \\
	\leq & 514K^{3/2}C^{\ast }\biggl(\frac{\rho _{n}\log ^{1/2}(n)}{\left( \mu
		_{n}^{\tau }\right) ^{1/2}\sigma _{Kn}^{2}}\biggr)\biggl(\frac{\left( \frac{1%
		}{K}+\frac{\log (5)}{\log (n)}\right) ^{1/2}\rho _{n}^{1/2}\overline{\theta }%
		^{1/4}}{\underline{\theta }^{1/4}}+\rho _{n}+1\biggr) \\
	\leq & 2c_{1}=c_{1}c_{1n}^{2},
	\end{align*}%
	where the first inequality holds because $c_{2n}c_{1}^{1/2}\leq
	0.01c_{2n}^{1/2}$ by Assumption \ref{ass:rate3} and the last inequality
	holds by Assumption \ref{ass:ratestrong4}.1.
	
	Similarly, the third result follows Theorem \ref{thm:strong2} with the same $%
	c_{1n}$ and $c_{2n}$ as above. The Assumption \ref{ass:ratestrong4}.2
	verifies Assumption \ref{ass:theta2}.2.
\end{proof}

\begin{proof}[\textbf{Proof of Theorem \protect\ref{thm:thetahat}}]
	Let $\varepsilon _{n}=C\log (n)/\underline{m}_{n}$, for some positive
	constant $C$ which is sufficiently large. 
	\begin{align*}
	& P\biggl(\sup_{1\leq i\leq n}|\hat{\theta}_{i}-\theta _{i}|\geq \varepsilon
	_{n}\quad i.o.\biggr) \\
	\leq & P\biggl(\sup_{1\leq i\leq n}|\hat{\theta}_{i}-\theta _{i}|\geq
	\varepsilon _{n}\quad i.o.,\text{ }\sup_{1\leq i\leq n}1\{\hat{g}_{i}\neq
	g_{i}^{0}\}=0\biggr)+P\biggl(\sup_{1\leq i\leq n}1\{\hat{g}_{i}\neq
	g_{i}^{0}\}>0\quad i.o.\biggr) \\
	\leq & P\biggl(\sup_{1\leq i\leq
		n}|n_{g_{i}^{0}}(\sum\nolimits_{j=1}^{n}A_{ij})/(\sum\nolimits_{i^{\prime
		}:g_{i^{\prime }}^{0}=g_{i}^{0}}\sum\nolimits_{j=1}^{n}A_{i^{\prime
		}j})-\theta _{i}|\geq \varepsilon _{n}\quad i.o.\biggr).
	\end{align*}%
	where the last inequality holds by Assumption \ref{ass:adaptive_rate}.2. In
	order to show the RHS of the above equation is zero, it suffices to show 
	\begin{equation}
	\sum_{n=1}^{\infty }\sum_{i=1}^{n}P\biggl(|n_{g_{i}^{0}}(\sum%
	\nolimits_{j=1}^{n}A_{ij})/(\sum\nolimits_{i^{\prime }:g_{i^{\prime
		}}^{0}=g_{i}^{0}}\sum\nolimits_{j=1}^{n}A_{i^{\prime }j})-\theta _{i}|\geq
	\varepsilon _{n}\biggr)<\infty .  \label{eq:p1}
	\end{equation}%
	For the simplicity of notation, from now on, we assume $g_{i}^{0}=k$. Then,
	we have 
	\begin{equation*}
	|n_{g_{i}^{0}}(\sum\nolimits_{j=1}^{n}A_{ij})/(\sum\nolimits_{i^{\prime
		}:g_{i^{\prime }}^{0}=g_{i}^{0}}\sum\nolimits_{j=1}^{n}A_{i^{\prime
		}j})-\theta _{i}|=\frac{\sum_{j=1}^{n}(A_{ij}n_{k}-\sum_{i^{\prime }\in
			C_{k}}A_{i^{\prime }j}\theta _{i})}{\sum_{j=1}^{n}\sum_{i^{\prime }\in
			C_{k}}A_{i^{\prime }j}}.
	\end{equation*}%
	For the denominator, note that $\mathbb{E}\sum_{j=1}^{n}\sum_{i^{\prime }\in
		C_{k}}A_{i^{\prime }j}=m_{k}n_{k}$, $\mathbb{E}\sum_{j=1}^{n}\sum_{i^{\prime
		}\in C_{k}}A_{i^{\prime }j}^{2}\leq m_{k}n_{k}$. Then, by Bernstein
	inequality, for any $\lambda >0$, 
	\begin{equation*}
	P\biggl(|\frac{\sum_{j=1}^{n}\sum_{i^{\prime }\in C_{k}}A_{i^{\prime }j}}{%
		m_{k}n_{k}}-1|\geq \lambda \biggr)\leq 2\exp \biggl(-\frac{\frac{1}{2}%
		\lambda ^{2}m_{k}^{2}n_{k}^{2}}{m_{k}n_{k}+\frac{1}{3}\lambda m_{k}n_{k}}%
	\biggr)=2\exp (-C_{\lambda }m_{k}n_{k}),
	\end{equation*}%
	where $C_{\lambda }=\frac{3\lambda ^{2}}{6+2\lambda }$. Similarly, for the
	numerator, we note that $|A_{ij}n_{k}-\sum_{i^{\prime }\in
		C_{k}}A_{i^{\prime }j}\theta _{i}|\leq n_{k}(\theta _{i}+1)$ and $%
	\sum_{j=1}^{n}\mathbb{E}(A_{ij}n_{k}-\sum_{i^{\prime }\in C_{k}}A_{i^{\prime
		}j}\theta _{i})^{2}\leq n_{k}^{2}-\theta _{i}^{2}m_{k}n_{k}$. Then, by
	Assumption \ref{ass:adaptive_rate}.1 and Bernstein inequality, 
	\begin{align*}
	P(|\frac{\sum_{j=1}^{n}(A_{ij}n_{k}-\sum_{i^{\prime }\in C_{k}}A_{i^{\prime
			}j}\theta _{i})}{m_{k}n_{k}}|\geq \varepsilon _{n})\leq & 2\exp \biggl(-%
	\frac{\frac{1}{2}\varepsilon _{n}^{2}m_{k}^{2}n_{k}^{2}}{n_{k}^{2}-\theta
		_{i}^{2}m_{k}n_{k}+\frac{1}{3}\varepsilon _{n}m_{k}n_{k}^{2}(\theta _{i}+1)}%
	\biggr) \\
	\leq & C\exp (-C^{\prime }\varepsilon _{n}m_{k}).
	\end{align*}%
	Therefore, 
	\begin{align*}
	& P\biggl(|\frac{\sum_{j=1}^{n}(A_{ij}n_{k}-\sum_{i^{\prime }\in
			C_{k}}A_{i^{\prime }j}\theta _{i})}{\sum_{j=1}^{n}\sum_{i^{\prime }\in
			C_{k}}A_{i^{\prime }j}}|\geq \varepsilon _{n}\biggr) \\
	\leq & P\biggl(|\frac{\sum_{j=1}^{n}(A_{ij}n_{k}-\sum_{i^{\prime }\in
			C_{k}}A_{i^{\prime }j}\theta _{i})}{m_{k}n_{k}}|\geq \varepsilon
	_{n}(1-\lambda )\biggr)+2\exp (-C_{\lambda }m_{k}n_{k}) \\
	\leq & C\exp (-C^{\prime }\varepsilon _{n}(1-\lambda )\underline{m}%
	_{n})+2\exp (-C_{\lambda }\underline{m}_{n}n_{k}).
	\end{align*}%
	By construction, $\varepsilon _{n}\underline{m}_{n}=C\log (n)$ for $C$
	sufficiently large. Therefore, \eqref{eq:p1} holds, which concludes the
	proof.
\end{proof}

\section{Some technical lemmas}

\label{sec:lem} In this appendix we collect some technical lemmas that are
used in the proofs of the main results in the paper.

We first state a version of Davis-Kahan $\sin \Theta $ theorem that is
closely related to the results in \cite{DK70}, \cite{YWS15} and \cite%
{abbe2017}.

\begin{lem}
	\label{lem:davis} Let $A$ and $A^{\ast }$ be two $n\times n$ matrices with
	spectral decompositions given by 
	\begin{equation*}
	A=V\Sigma V^{T}\quad \text{and}\quad A^{\ast }=V^{\ast }\Sigma ^{\ast
	}(V^{\ast })^{T},
	\end{equation*}%
	where $\Sigma =\text{diag}(\sigma _{1},\sigma _{2},\cdots ,\sigma _{n})$, $%
	\Sigma ^{\ast }=\text{diag}(\sigma _{1}^{\ast },\sigma _{2}^{\ast },\cdots
	,\sigma _{n}^{\ast })$, $|\sigma _{1}|\geq \cdots \geq |\sigma _{n}|\geq 0$, 
	$|\sigma _{1}^{\ast }|\geq \cdots \geq |\sigma _{n}^{\ast }|\geq 0,$ and $V$
	and $V^{\ast }$ are the associated eigenvectors. Suppose that $A^{\ast }$
	has rank $K.$ Let $V_{1}$ and $V_{1}^{\ast }$ be the first $K$ columns of $V$
	and $V^{\ast }$, respectively. Suppose there exists some rate $\gamma
	_{n}\downarrow 0$ such that $|\sigma _{K}^{\ast }|-\gamma _{n}>0$ and $\Vert
	A-A^{\ast }\Vert \leq \gamma _{n}\mbox{ }a.s.$ Let $\Omega =\text{diag}(\cos
	(\theta _{1}),\cdots ,\cos (\theta _{K})),$ where $\theta _{k}\in (0,\pi /2)$%
	, $k=1,\cdots ,K$, denote the principal angles between the column spaces of $%
	V$ and $V^{\ast }$ such that $\theta _{1}\leq \cdots \leq \theta _{K}$.
	
	Then 
	\begin{equation*}
	\Vert V_{1}O-V_{1}^{\ast }\Vert \leq \frac{\sqrt{2}\Vert (A-A^{\ast
		})V_{1}^{\ast }\Vert }{|\sigma _{K}^{\ast }|-\gamma _{n}}\mbox{ }a.s.,
	\end{equation*}%
	where $O=O_{1}O_{2}^{T}$ and $V_{1}^{T}V_{1}^{\ast }$ has the singular value
	decomposition $O_{1}\Omega O_{2}^{T}$ so that $O_{1}$ and $O_{2}$ are $%
	K\times K$ orthogonal matrices such that $O_{1}^{T}V_{1}^{T}V_{1}^{\ast
	}O_{2}=\Omega .$
\end{lem}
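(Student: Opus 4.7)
\textbf{Proof plan for Lemma \ref{lem:davis}.} The plan is to decompose the bound into two standard pieces: first convert the two-to-infinity-style quantity $\|V_{1}O - V_{1}^{\ast}\|$ into the principal-angle quantity $\|\sin\Theta\|$, and then apply a Sylvester/Davis--Kahan argument to bound $\|\sin\Theta\|$ by $\|(A-A^{\ast})V_{1}^{\ast}\|/(|\sigma_{K}^{\ast}|-\gamma_{n})$.

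\textbf{Step 1 (angle conversion).} Using $V_{1}^{T}V_{1}=I_{K}$, $(V_{1}^{\ast})^{T}V_{1}^{\ast}=I_{K}$, and the SVD $V_{1}^{T}V_{1}^{\ast}=O_{1}\Omega O_{2}^{T}$, a direct expansion gives
\begin{equation*}
(V_{1}O - V_{1}^{\ast})^{T}(V_{1}O - V_{1}^{\ast}) = 2I_{K} - 2\,O_{2}\Omega O_{2}^{T},
\end{equation*}
whose operator norm equals $2(1-\cos\theta_{K})$ since $\theta_{K}$ is the largest principal angle. Using the elementary inequality $1-\cos\theta \le \sin^{2}\theta$ for $\theta\in(0,\pi/2)$, I would conclude $\|V_{1}O-V_{1}^{\ast}\|\le\sqrt{2}\,\sin\theta_{K}=\sqrt{2}\,\|\sin\Theta\|$.

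\textbf{Step 2 (Davis--Kahan via a Sylvester equation).} Let $V_{2}$ be the eigenvectors of $A$ complementary to $V_{1}$, and let $\Sigma_{2}=\mathrm{diag}(\sigma_{K+1},\ldots,\sigma_{n})$. Since $A=V_{1}\Sigma_{1}V_{1}^{T}+V_{2}\Sigma_{2}V_{2}^{T}$ and $A^{\ast}V_{1}^{\ast}=V_{1}^{\ast}\Sigma_{1}^{\ast}$, the difference $V_{2}^{T}(A-A^{\ast})V_{1}^{\ast}$ satisfies the Sylvester-type identity
\begin{equation*}
\Sigma_{2}\,(V_{2}^{T}V_{1}^{\ast}) - (V_{2}^{T}V_{1}^{\ast})\,\Sigma_{1}^{\ast} = V_{2}^{T}(A-A^{\ast})V_{1}^{\ast}.
\end{equation*}
By the CS decomposition, the singular values of $V_{2}^{T}V_{1}^{\ast}$ are exactly $\sin\theta_{1},\ldots,\sin\theta_{K}$, so $\|V_{2}^{T}V_{1}^{\ast}\|=\sin\theta_{K}=\|\sin\Theta\|$. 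The standard Sylvester-operator bound then gives $\|\sin\Theta\|\le \|V_{2}^{T}(A-A^{\ast})V_{1}^{\ast}\|/\delta \le \|(A-A^{\ast})V_{1}^{\ast}\|/\delta$, where $\delta$ is the spectral separation between the diagonal entries of $\Sigma_{2}$ and $\Sigma_{1}^{\ast}$.

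\textbf{Step 3 (gap and conclusion).} To produce the gap $\delta\ge |\sigma_{K}^{\ast}|-\gamma_{n}$, I would use Weyl's inequality for singular values: since $A^{\ast}$ has rank $K$, its singular values beyond index $K$ vanish, so $|\sigma_{j}|\le \|A-A^{\ast}\|\le \gamma_{n}$ a.s.\ for every $j>K$. Thus every diagonal entry of $\Sigma_{2}$ lies in $[-\gamma_{n},\gamma_{n}]$ while every diagonal entry of $\Sigma_{1}^{\ast}$ has absolute value at least $|\sigma_{K}^{\ast}|$; a case check on signs yields $|\sigma_{j}-\sigma_{i}^{\ast}|\ge |\sigma_{K}^{\ast}|-\gamma_{n}>0$ in all cases. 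Combining with Steps 1--2 gives the stated almost-sure inequality. The only delicate point is the signed-eigenvalue bookkeeping in Step 3, since $A$ and $A^{\ast}$ are ordered by absolute value rather than by signed magnitude; once one invokes the singular-value version of Weyl, however, this obstacle disappears.
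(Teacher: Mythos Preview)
Your proposal is correct and follows essentially the same route as the paper: the angle conversion in Step~1 matches the paper's use of \cite{YWS15} (the expansion $(V_1O-V_1^\ast)^T(V_1O-V_1^\ast)=2(I_K-O_2\Omega O_2^T)$ and the bound $1-\cos\theta_K\le\sin^2\theta_K$), and Steps~2--3 reproduce the Davis--Kahan $\sin\Theta$ bound together with Weyl's inequality for $|\sigma_{K+1}|\le\gamma_n$ that the paper invokes by citation. The only difference is cosmetic---you derive the $\sin\Theta$ bound from the Sylvester identity $\Sigma_2(V_2^TV_1^\ast)-(V_2^TV_1^\ast)\Sigma_1^\ast=V_2^T(A-A^\ast)V_1^\ast$ and the CS decomposition, whereas the paper simply cites \cite{DK70} and \cite{abbe2017}; your gap estimate $|\sigma_j-\sigma_i^\ast|\ge|\sigma_K^\ast|-\gamma_n$ follows immediately from the reverse triangle inequality, so the ``case check on signs'' you flag as delicate is in fact automatic.
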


\begin{proof}[\textbf{Proof of Lemma \protect\ref{lem:davis}}]
	By the proof of \citet[Theorem 2]{YWS15}, 
	\begin{eqnarray*}
		\Vert V_{1}O-V_{1}^{\ast }\Vert ^{2} &=&\left\Vert (V_{1}O-V_{1}^{\ast
		})^{T}(V_{1}O-V_{1}^{\ast })\right\Vert =2\left\Vert I_{K}-O_{2}\Omega
		O_{2}^{T}\right\Vert \\
		&\leq &2[1-\cos (\theta _{K})]\leq 2(1-\left[ \cos (\theta _{K})\right]
		^{2})\leq 2\left[ \sin (\theta _{K})\right] ^{2}.
	\end{eqnarray*}%
	In addition, by the $\sin \Theta $ theorem in \cite{DK70} (see also Appendix
	A.1 in Abbe et al. (2017)), 
	\begin{equation*}
	\sin (\theta _{K})\leq \frac{\Vert (A-A^{\ast })V_{1}^{\ast }\Vert }{\Delta
		_{K}},
	\end{equation*}%
	where $\Delta _{K}=(\left\vert \sigma _{K}^{\ast }\right\vert -\left\vert
	\sigma _{K+1}\right\vert )\vee 0$. In addition, by Weyl's inequality, 
	\begin{equation*}
	|\sigma _{K+1}|=\left\vert \sigma _{K+1}-\sigma _{K+1}^{\ast }\right\vert
	\leq \gamma _{n}\mbox{ }a.s.
	\end{equation*}%
	Therefore, 
	\begin{equation*}
	\Vert V_{1}O-V_{1}^{\ast }\Vert \leq \frac{\sqrt{2}\Vert (A-A^{\ast
		})V_{1}^{\ast }\Vert }{|\sigma _{K}^{\ast }|-\gamma _{n}}\mbox{ }a.s.
	\end{equation*}
\end{proof}

The following lemma states a version of Bernstein inequality for random
matrix that is used in the proof of Lemma \ref{lem:dk3}.

\begin{lem}
	\label{lem:matrixBernstein} Consider an independent sequence $(Y_{k})_{k\geq
		1}$ of real symmetric $d\times d$ random matrices that satisfy $\mathbb{E}%
	Y_{k}=0$ and $\Vert Y_{k}\Vert \leq R$ for each index $k$. Then for all $%
	t\geq 0$ and $\sigma ^{2}=\Vert \sum_{k\geq 1}\mathbb{E}Y_{k}^{2}\Vert $, 
	\begin{equation*}
	P(\Vert \sum_{k\geq 1}Y_{k}\Vert \geq t)\leq d\exp \biggl(\frac{-t^{2}}{%
		3\sigma ^{2}+2Rt}\biggr).
	\end{equation*}
\end{lem}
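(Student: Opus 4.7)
The plan is to prove this matrix Bernstein inequality via the Ahlswede--Winter / Tropp Laplace transform method. Since $S \equiv \sum_{k \geq 1} Y_k$ is symmetric, I would first reduce to a one-sided eigenvalue bound by writing $\|S\| = \max(\lambda_{\max}(S), \lambda_{\max}(-S))$ and observing that the sequence $\{-Y_k\}$ satisfies the same hypotheses with the same $R$ and $\sigma^2$; a union bound over the two tails would account for any prefactor of $2$ absorbed into the constants in the denominator. For the one-sided bound, the exponential Markov inequality gives, for any $\theta > 0$,
\begin{equation*}
P(\lambda_{\max}(S) \geq t) \leq e^{-\theta t}\, \mathbb{E}\operatorname{tr} \exp(\theta S),
\end{equation*}
using $\operatorname{tr} \exp(\theta S) \geq e^{\theta \lambda_{\max}(S)}$.

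The central technical step is to bound the matrix moment generating function $\mathbb{E}\operatorname{tr}\exp(\theta S)$. I would invoke Lieb's concavity theorem, which by a standard telescoping argument over the independent summands yields the matrix cumulant subadditivity
\begin{equation*}
\mathbb{E}\operatorname{tr}\exp\bigl(\theta \textstyle\sum_{k}Y_k\bigr) \leq \operatorname{tr}\exp\bigl(\textstyle\sum_{k}\log\mathbb{E} e^{\theta Y_k}\bigr).
\end{equation*}
Next, I would control each matrix cumulant using $\|Y_k\| \leq R$ and $\mathbb{E} Y_k = 0$: since the scalar function $f(x) = (e^{\theta x} - 1 - \theta x)/x^2$ is increasing, it is bounded on $[-R,R]$ by $g(\theta) \equiv (e^{\theta R} - 1 - \theta R)/R^2$, so that by functional calculus $e^{\theta Y_k} \preceq I + \theta Y_k + g(\theta) Y_k^2$. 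Taking expectations gives $\mathbb{E} e^{\theta Y_k} \preceq I + g(\theta)\, \mathbb{E} Y_k^2$, and applying the operator inequality $\log(I + X) \preceq X$ yields $\log \mathbb{E} e^{\theta Y_k} \preceq g(\theta)\, \mathbb{E} Y_k^2$.

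Combining these pieces with $\operatorname{tr}\exp(M) \leq d \cdot \exp(\lambda_{\max}(M))$ and the definition of $\sigma^2$ produces
\begin{equation*}
P(\lambda_{\max}(S) \geq t) \leq d \exp\bigl(-\theta t + g(\theta)\sigma^2\bigr).
\end{equation*}
Finally, I would optimize over $\theta$: using the elementary bound $g(\theta) \leq \theta^2/(2 - 2\theta R/3)$ valid for $\theta R < 3$, the choice $\theta = t/(\sigma^2 + Rt/3)$ leads (after absorbing constants) to the stated form $d \exp(-t^2/(3\sigma^2 + 2Rt))$. The main obstacle is the matrix cumulant subadditivity step, which rests on the non-trivial Lieb concavity theorem; rather than prove this from scratch, I would cite Tropp's user-friendly tail bounds for matrix sums. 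The scalar-to-matrix Taylor comparison and the optimization over $\theta$ are then routine but require care to produce the exact constants $3$ and $2$ appearing in the denominator.
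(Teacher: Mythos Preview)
The paper does not prove this lemma: its proof is a one-line citation to Corollary 5.2 of Mackey, Jordan, Chen, Farrell, and Tropp (2014), who obtain exactly this bound via the method of exchangeable pairs rather than the Laplace transform. Your route via the Lieb--Tropp matrix moment generating function method is the standard alternative and is essentially correct; the cumulant subadditivity from Lieb's theorem, the Taylor comparison $e^{\theta Y_k} \preceq I + \theta Y_k + g(\theta) Y_k^2$, and the optimization over $\theta$ are all exactly as in Tropp's user-friendly framework. The only loose end is the two-sided union bound: the one-sided optimization gives $P(\lambda_{\max}(S) \geq t) \leq d\exp\bigl(-t^2/(2\sigma^2 + 2Rt/3)\bigr)$, and your claim that the extra factor of $2$ can be absorbed into the looser denominator $3\sigma^2 + 2Rt$ fails as $t \to 0^+$. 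This is harmless because the stated bound exceeds $1$ for small $t$ anyway, but producing the prefactor $d$ (rather than $2d$) with the exact constants requires slightly more bookkeeping than you indicate---or, as you yourself suggest at the end, simply citing an established matrix Bernstein inequality, which is precisely what the paper does.
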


\begin{proof}[\textbf{Proof of Lemma \protect\ref{lem:matrixBernstein}}]
	See Corollary 5.2 in \cite{mackey14}.
\end{proof}

To prove Theorem \ref{thm:main_DC} in Section \ref{sec:ext2}, we need the
following three lemmas.

\begin{lem}
	\label{lem:Pij3}If Assumption \ref{ass:id3} holds, then $P_{ij}\leq \rho
	_{n}n^{-1}(\theta _{i}\theta _{j})^{1/2}(d_{i}d_{j})^{1/2}.$
\end{lem}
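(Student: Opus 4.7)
The plan is to chain together three observations: the definition of $P_{ij}$ under the DC-SBM, the elementwise bound on $B_0$ from Assumption \ref{ass:id3}.1, and the identity $d_i = n\theta_i W_{g_i^0}$ that was derived via the normalization \eqref{eq:thetanormalization} in the proof of Theorem \ref{thm:id3}. No nontrivial inequality or probabilistic estimate should be needed — this is essentially an algebraic manipulation, so there is no real obstacle.

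First I would write $P_{ij} = \theta_i \theta_j B_{g_i^0 g_j^0}$ directly from the DC-SBM specification. Next, I would unpack the normalization $B_0 = \mathcal{D}_B^{-1/2} B \mathcal{D}_B^{-1/2}$, which gives
\begin{equation*}
B_{k_1 k_2} = (W_{k_1} W_{k_2})^{1/2} [B_0]_{k_1 k_2}.
\end{equation*}
Applying Assumption \ref{ass:id3}.1, which states that $[B_0]_{k_1 k_2} \leq \rho_n$ elementwise, then yields $B_{k_1 k_2} \leq \rho_n (W_{k_1} W_{k_2})^{1/2}$.

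Finally, I would substitute in the identity $W_{g_i^0} = d_i/(n\theta_i)$, which follows from the computation $d_i = \sum_j P_{ij} = \theta_i \sum_{k=1}^K \sum_{j \in C_k} \theta_j B_{g_i^0 k} = n\theta_i \sum_k \pi_{kn} B_{g_i^0 k} = n\theta_i W_{g_i^0}$, using the normalization $\sum_{i \in C_k} \theta_i = n_k$ from \eqref{eq:thetanormalization}. Combining these pieces,
\begin{equation*}
P_{ij} \leq \theta_i \theta_j \rho_n \left(\frac{d_i}{n\theta_i} \cdot \frac{d_j}{n\theta_j}\right)^{1/2} = \rho_n n^{-1} (\theta_i \theta_j)^{1/2} (d_i d_j)^{1/2},
\end{equation*}
which is the claimed bound. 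The derivation of $d_i = n\theta_i W_{g_i^0}$ can either be cited from the proof of Theorem \ref{thm:id3} or reproduced in one line, since it relies on exactly the same normalization convention.
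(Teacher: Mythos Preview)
Your proposal is correct and follows essentially the same route as the paper's proof: both express $P_{ij}=\theta_i\theta_j B_{g_i^0 g_j^0}$, rewrite $B_{k_1k_2}=(W_{k_1}W_{k_2})^{1/2}[B_0]_{k_1k_2}$, invoke the elementwise bound $[B_0]_{k_1k_2}\le\rho_n$ from Assumption~\ref{ass:id3}.1, and substitute the identity $d_i=n\theta_i W_{g_i^0}$ from \eqref{eq:d_i}.
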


\begin{proof}
	Consider the case in which nodes $i$ and $j$ are in $C_{k_{1}}$ and $%
	C_{k_{2}}$, respectively. Then by the definition of $B_{0}$ and %
	\eqref{eq:d_i} 
	\begin{align*}
	P_{ij}=\theta _{i}\theta _{j}B_{k_{1}k_{2}}&= n^{-1}\theta _{i}\theta
	_{j}(nW_{k_{1}})^{1/2}[B_{0}]_{k_{1}k_{2}}(nW_{k_{2}})^{1/2} \\
	&= n^{-1}(\theta _{i}\theta
	_{j})^{1/2}[B_{0}]_{k_{1}k_{2}}(d_{i}d_{j})^{1/2}\leq \rho_n n^{-1}(\theta
	_{i}\theta _{j})^{1/2}(d_{i}d_{j})^{1/2}.
	\end{align*}
\end{proof}

\begin{lem}
	\label{lem:V1n3}Let $V_{n}^{(i)}$, $i=1,\cdots ,n,$ be $n\times K$ random
	matrices. Suppose $V_{n}^{(i)}$ and $[A]_{i\cdot }-[P]_{i\cdot }$ are
	independent for $i=1,\cdots ,n$ and there exist two deterministic sequences $%
	\{\phi _{1n}\}_{n\geq 1}$ and $\{\phi _{2n}\}_{n\geq 1}$ such that $%
	\sup_{i}\Vert V_{n}^{(i)}\Vert \leq \phi _{1n}\quad $and$\quad \sup_{i}\Vert
	V_{n}^{(i)}\Vert _{2\rightarrow \infty }\leq \phi _{2n}\mbox{ }a.s.$ Suppose
	that Assumptions \ref{ass:id3}--\ref{ass:rate3} hold. Then 
	\begin{equation*}
	\hspace{2em}\hspace{-2em}\sup_{i}\biggl((n_{g_{i}^{0}}^{\tau })^{1/2}(\theta
	_{i}^{\tau })^{-1/2}(d_{i}^{\tau })^{-1/2}\Vert ([A]_{i\cdot }-[P]_{i\cdot })%
	\mathcal{D}_{\tau }^{-1/2}V_{n}^{(i)}\Vert \biggr)\leq
	6C_{1}^{1/2}r_{n}\quad a.s.,
	\end{equation*}%
	where $r_{n}=\biggl[\frac{\phi _{2n}(\log (n)+\log (5)K)n^{1/2}}{\mu
		_{n}^{\tau }(K\underline{\theta })^{1/2}}\vee \biggl(\frac{(\log (n)+\log
		(5)K)\rho _{n}\phi _{1n}^{2}\overline{\theta }^{1/2}}{\mu _{n}^{\tau }K%
		\underline{\theta }^{1/2}}\biggr)^{1/2}\biggr]$.
\end{lem}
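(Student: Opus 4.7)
The plan is to reduce the spectral norm to a supremum over the unit sphere $S^{K-1}$, discretize via a $1/2$-net, apply Bernstein's inequality in each fixed direction (crucially using that $V_n^{(i)}$ is independent of the $i$-th row of $A-P$), and glue the pieces with a union bound and Borel-Cantelli.

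Concretely, I first choose a $1/2$-net $\mathcal{N}\subset S^{K-1}$ of cardinality at most $5^K$, so that $\|M\|\leq 2\sup_{f\in\mathcal{N}}\|Mf\|$ for any $n\times K$ matrix $M$. It therefore suffices to bound
\[
\Xi_{i,f}\;:=\;\Big|\sum_{j=1}^n (A_{ij}-P_{ij})(d_j^\tau)^{-1/2}[V_n^{(i)}f]_j\Big|
\]
uniformly over $i\in\{1,\ldots,n\}$ and $f\in\mathcal{N}$. Conditional on $V_n^{(i)}$, $\Xi_{i,f}$ is a sum of independent mean-zero random variables whose absolute value is at most $M_{*}\leq(\mu_n^\tau)^{-1/2}\phi_{2n}$ (from $\|V_n^{(i)}\|_{2\to\infty}\leq\phi_{2n}$), with conditional variance, combining Lemma~\ref{lem:Pij3} with $\sqrt{\theta_j d_j}/d_j^\tau\leq(\overline{\theta}/\mu_n^\tau)^{1/2}$ and $\|V_n^{(i)}f\|\leq\phi_{1n}$, satisfying
\[
\sigma_{*}^2\;\leq\;\rho_n n^{-1}(\theta_i d_i)^{1/2}\overline{\theta}^{1/2}(\mu_n^\tau)^{-1/2}\phi_{1n}^2.
\]

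Next I apply Bernstein's inequality at confidence $1-2n^{-3}5^{-K}$, giving $\Xi_{i,f}\lesssim\sigma_{*}\sqrt{\log n+K\log 5}+M_{*}(\log n+K\log 5)$. The pre-factor $(n_{g_i^0}^\tau)^{1/2}(\theta_i^\tau)^{-1/2}(d_i^\tau)^{-1/2}$ collapses to $(n_{g_i^0}^\tau/(\theta_i d_i))^{1/2}$, which Assumption~\ref{ass:nk3} bounds by $(C_1 n/(K\theta_i d_i^\tau))^{1/2}\leq(C_1 n/(K\underline{\theta}\mu_n^\tau))^{1/2}$. Multiplying the $M_{*}$-contribution by this pre-factor immediately yields $C_1^{1/2}\phi_{2n}(\log n+K\log 5)n^{1/2}/((K\underline{\theta})^{1/2}\mu_n^\tau)$, the first component of $r_n$. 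For the $\sigma_{*}$-contribution, I invoke the identity $\theta_i^\tau d_i^\tau=\theta_i d_i$ to rewrite $(\theta_i d_i)^{1/2}=(\theta_i^\tau d_i^\tau)^{1/2}$, then apply $\theta_i^\tau/\theta_i=d_i/d_i^\tau\leq 1$, $d_i^\tau\geq\mu_n^\tau$, and $\theta_i\geq\underline{\theta}$ to obtain $(\text{pre-factor})^2\sigma_{*}^2\leq C_1\rho_n\phi_{1n}^2\overline{\theta}^{1/2}/(K\underline{\theta}^{1/2}\mu_n^\tau)$, whose square root times $\sqrt{\log n+K\log 5}$ is the second component of $r_n$. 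Summing the two contributions, absorbing the factor $2$ from the net and the Bernstein constants, gives $\leq 6C_1^{1/2}r_n$ on the good event. A union bound over $i$ and $f\in\mathcal{N}$ yields total failure probability $\leq 2n^{-2}$, which is summable, so Borel-Cantelli delivers the stated almost-sure bound.

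The main obstacle will be the variance step. The naive Bernstein variance from Lemma~\ref{lem:Pij3} carries an extra $(\theta_i d_i)^{1/2}$ factor that does not a priori align with the $\overline{\theta}^{1/2}/\underline{\theta}^{1/2}$ scaling in $r_n$; the resolution is the identity $\theta_i d_i=\theta_i^\tau d_i^\tau$, which, combined with $d_i^\tau\geq\mu_n^\tau$ and $\theta_i\geq\underline{\theta}$, absorbs exactly this factor into the pre-factor and produces the $\underline{\theta}^{-1/2}$ appearing in $r_n$. Beyond that, the proof is routine bookkeeping of Bernstein constants and verifying that the chosen $\delta$ gives a summable failure probability.
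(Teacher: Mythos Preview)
Your proposal is correct and follows essentially the same approach as the paper: reduce to a $1/2$-net on $S^{K-1}$ of size $5^K$, condition on $V_n^{(i)}$ and use its independence from $[A]_{i\cdot}-[P]_{i\cdot}$ to apply Bernstein with the $M_*$ and $\sigma_*^2$ bounds you derived (the paper's variance bound is exactly $C_1\rho_n\overline{\theta}^{1/2}\underline{\theta}^{-1/2}K^{-1}(\mu_n^\tau)^{-1}\phi_{1n}^2$, matching yours), then union bound and Borel--Cantelli. One small bookkeeping point you glossed over: the $j=i$ term is $-P_{ii}$ (since $A_{ii}=0$), hence not mean-zero, so the paper splits it off separately and bounds it by $C_1^{1/2}\phi_{2n}n^{1/2}(K\underline{\theta})^{-1/2}(\mu_n^\tau)^{-1}\leq Cr_n/100$ before applying Bernstein to $\sum_{j\neq i}$; this is trivially absorbed but should be noted.
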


\begin{proof}
	Let $C=3C_{1}^{1/2}$. Define 
	\begin{eqnarray*}
		\mathcal{E}_{1n} &=&\biggl\{\sup_{i}\biggl((n_{g_{i}^{0}}^{\tau
		})^{1/2}(\theta _{i}^{\tau })^{-1/2}(d_{i}^{\tau })^{-1/2}\Vert ([A]_{i\cdot
		}-[P]_{i\cdot })\mathcal{D}_{\tau }^{-1/2}V_{n}^{(i)}\Vert \biggr)\geq
		2Cr_{n}\biggr\},\text{ and} \\
		\mathcal{E}_{2n} &=&\{\sup_{i}\Vert V_{n}^{(i)}\Vert \leq \phi _{1n}\quad 
		\text{and}\quad \sup_{i}\Vert V_{n}^{(i)}\Vert _{2\rightarrow \infty }\leq
		\phi _{2n}\}.
	\end{eqnarray*}%
	It suffices to show that $P(\mathcal{E}_{1n}\mbox{ }i.o.)=0.$ By the
	assumptions in Lemma \ref{lem:V1n3}, we have 
	\begin{equation}
	P(\mathcal{E}_{2n}^{c}\mbox{ }i.o.)=0.  \label{eq:E2n}
	\end{equation}%
	It follows that 
	\begin{align*}
	P(\mathcal{E}_{1n},i.o.)=& P(\cap _{k=1}^{\infty }\cup _{n\geq k}\mathcal{E}%
	_{1n}) \\
	=& P(\cap _{k=1}^{\infty }\cup _{n\geq k}(\mathcal{E}_{1n}\cap \mathcal{E}%
	_{2n}))+P(\cap _{k=1}^{\infty }\cup _{n\geq k}(\mathcal{E}_{1n}\cap \mathcal{%
		E}_{2n}^{c})) \\
	\leq & P(\cap _{k=1}^{\infty }\cup _{n\geq k}(\mathcal{E}_{1n}\cap \mathcal{E%
	}_{2n}))+P(\cap _{k=1}^{\infty }\cup _{n\geq k}\mathcal{E}_{2n}^{c}) \\
	=& P(\cap _{k=1}^{\infty }\cup _{n\geq k}(\mathcal{E}_{1n}\cap \mathcal{E}%
	_{2n})),
	\end{align*}%
	where the last step is due to \eqref{eq:E2n}. Therefore, we only need to
	show that 
	\begin{align*}
	& P(\mathcal{E}_{1n}\cap \mathcal{E}_{2n}\mbox{ }i.o.) \\
	=& P\biggl(\sup_{i}\biggl((n_{g_{i}^{0}}^{\tau })^{1/2}(\theta _{i}^{\tau
	})^{-1/2}(d_{i}^{\tau })^{-1/2}\Vert ([A]_{i\cdot }-[P]_{i\cdot })\mathcal{D}%
	_{\tau }^{-1/2}V_{n}^{(i)}\Vert \biggr)\geq 2Cr_{n}\cap \mathcal{E}_{2n}%
	\mbox{ }i.o.\biggr)=0.
	\end{align*}%
	By the Borel-Cantelli lemma and union bound, it suffices to show that 
	\begin{equation*}
	\sum_{n\geq 1}\sum_{i=1}^{n}P\biggl((n_{g_{i}^{0}}^{\tau })^{1/2}(\theta
	_{i}^{\tau })^{-1/2}(d_{i}^{\tau })^{-1/2}\Vert ([A]_{i\cdot }-[P]_{i\cdot })%
	\mathcal{D}_{\tau }^{-1/2}V_{n}^{(i)}\Vert \geq 2Cr_{n}\cap \mathcal{E}_{2n}%
	\biggr)<\infty .
	\end{equation*}
	
	Now, let $S^{K-1}=\{g\in \Re ^{K}:\Vert g\Vert =1\}$ and $\mathcal{F}$ be a $%
	1/2$-net of $S^{K-1}$. By \citet[Lemma 4.4.1]{Vy18}, $|\mathcal{F}|\leq
	5^{K} $. Then, 
	\begin{align}
	& \sum_{i=1}^{n}P\biggl((n_{g_{i}^{0}}^{\tau })^{1/2}(\theta _{i}^{\tau
	})^{-1/2}(d_{i}^{\tau })^{-1/2}\Vert ([A]_{i\cdot }-[P]_{i\cdot })\mathcal{D}%
	_{\tau }^{-1/2}V_{n}^{(i)}\Vert \geq 2Cr_{n},\mathcal{E}_{2n}\biggr)  \notag
	\\
	\leq & \sum_{i=1}^{n}5^{K}\sup_{f\in \mathcal{F}}P\biggl((n_{g_{i}^{0}}^{%
		\tau })^{1/2}(\theta _{i}^{\tau })^{-1/2}(d_{i}^{\tau })^{-1/2}|([A]_{i\cdot
	}-[P]_{i\cdot })\mathcal{D}_{\tau }^{-1/2}V_{n}^{(i)}f)|\geq Cr_{n},\mathcal{%
		E}_{2n}\biggr)  \notag \\
	\equiv & I_{n},  \label{eq:I+IIDC}
	\end{align}%
	where the first inequality holds due the union bound and 
	\citet[][Corollary
	4.2.13, Lemma 4.4.1]{Vy18}. Let 
	\begin{equation*}
	\mathcal{H}=\{h\in \Re ^{n}:\Vert h\Vert \leq \phi _{1n}\text{ and }%
	\sup_{j}|h_{j}|\leq \phi _{2n}\},
	\end{equation*}%
	where $h_{j}$ is the $j$-th element of $h$. Note that for any $f\in S^{K-1}$
	, $\Vert V_{n}^{(i)}f\Vert =\Vert V_{n}^{(i)}\Vert \leq \phi _{1n}$ and $%
	|[V_{n}^{(i)}f]_{j\cdot }|\leq \Vert \lbrack V_{n}^{(i)}]_{j\cdot }\Vert
	\leq \phi _{2n}$ a.s. Thus, under $\mathcal{E}_{2}$, $\{V_{n}^{(i)}f:f\in
	S^{K-1}\}\subset \mathcal{H}$. For any $h\in \mathcal{H}$, 
	\begin{equation*}
	(n_{g_{i}^{0}}^{\tau })^{1/2}(\theta _{i}^{\tau
	})^{-1/2}|(A_{ij}-P_{ij})(d_{i}^{\tau }d_{j}^{\tau })^{-1/2}h_{j}|\leq
	C_{1}^{1/2}\phi _{2n}n^{1/2}(\underline{\theta }K)^{-1/2}(\mu _{n}^{\tau
	})^{-1}.
	\end{equation*}%
	In addition, by Lemma \ref{lem:Pij3}, 
	\begin{equation*}
	\sum_{j\neq i}n_{g_{i}^{0}}^{\tau }(\theta _{i}^{\tau })^{-1}\mathbb{E}%
	(A_{ij}-P_{ij})^{2}(d_{i}^{\tau }d_{j}^{\tau })^{-1}h_{j}^{2}\leq
	\sum_{j=1}^{n}n_{g_{i}^{0}}^{\tau }(\theta _{i}^{\tau
	})^{-1}P_{ij}(d_{i}^{\tau }d_{j}^{\tau })^{-1}h_{j}^{2}\leq C_{1}\rho _{n}%
	\overline{\theta }^{1/2}\underline{\theta }^{-1/2}K^{-1}(\mu _{n}^{\tau
	})^{-1}\phi _{1n}^{2}
	\end{equation*}%
	and for $n$ sufficiently large, 
	\begin{equation}
	(n_{g_{i}^{0}}^{\tau })^{1/2}(\theta _{i}^{\tau
	})^{-1/2}|A_{ii}-P_{ii}|(d_{i}^{\tau })^{-1}|h_{i}|\leq C_{1}^{1/2}\phi
	_{2n}n^{1/2}(K\underline{\theta })^{-1/2}(\mu _{n}^{\tau })^{-1}\leq
	Cr_{n}/100.  \label{eq:ii}
	\end{equation}%
	Then, by the Bernstein inequality in Lemma \ref{lem:matrixBernstein}, 
	\begin{align}
	I_{n}& \leq n5^{K}\sup_{i=1,\cdots ,n,f\in \mathcal{S}^{K-1}}P\biggl(%
	(n_{g_{i}^{0}}^{\tau })^{1/2}(\theta _{i}^{\tau })^{-1/2}(d_{i}^{\tau
	})^{-1/2}|([A]_{i\cdot }-[P]_{i\cdot })\mathcal{D}_{\tau
	}^{-1/2}V_{n}^{(i)}f|\geq Cr_{n},\mathcal{E}_{2n}\biggr)  \notag \\
	& \leq n5^{K}\sup_{i=1,\cdots ,n,h\in \mathcal{H}}P\biggl(%
	(n_{g_{i}^{0}}^{\tau })^{1/2}(\theta _{i}^{\tau })^{-1/2}|\sum_{j\neq
		i}(A_{ij}-P_{ij})(d_{i}d_{j})^{-1/2}h_{j}|\geq 0.99Cr_{n}|V_{n}^{(i)}f=h%
	\biggr)  \notag \\
	& \leq n5^{K}\sup_{i=1,\cdots ,n,h\in \mathcal{H}}P\biggl(%
	(n_{g_{i}^{0}}^{\tau })^{1/2}(\theta _{i}^{\tau })^{-1/2}|\sum_{j\neq
		i}(A_{ij}-P_{ij})(d_{i}d_{j})^{-1/2}h_{j}|\geq 0.99Cr_{n}\biggr)  \notag \\
	& \leq 2n5^{K}\exp \biggl(\frac{-(0.99C)^{2}r_{n}^{2}}{\frac{%
			1.98CC_{1}^{1/2}r_{n}\phi _{2n}n^{1/2}}{3\mu _{n}^{\tau }(K\underline{\theta 
			})^{1/2}}+\frac{2C_{1}\rho _{n}\phi _{1n}^{2}\overline{\theta }^{1/2}}{K\mu
			_{n}^{\tau }\underline{\theta }^{1/2}}}\biggr)  \notag \\
	& \leq 2n5^{K}\exp \biggl(\frac{-(0.99C)^{2}r_{n}^{2}(\log (n)+\log (5)K)}{%
		(1.98CC_{1}^{1/2}/3+2C_{1})r_{n}^{2}}\biggr)\leq 2(5n)^{-1.1},
	\label{eq:IIDC}
	\end{align}%
	where the second inequality holds by the Bayes rule and \eqref{eq:ii}, the
	third inequality holds because we assume that $([A]_{i\cdot }-[P]_{i\cdot })$
	and $V_{n}^{(i)}$ are independent, the fourth inequality holds by the
	Bernstein inequality, and the fifth inequality holds because of the
	definition of $r_{n}$, and the sixth inequality holds because $(0.99\times
	3)^{2}>2.1\times (1.98+2)$ and we have set $C=3C_{1}^{1/2}.$
	
	Combining \eqref{eq:I+IIDC} and \eqref{eq:IIDC}, we have 
	\begin{equation*}
	\sum_{n=1}^{\infty }\sum_{i=1}^{n}P\biggl((n_{g_{i}^{0}}^{\tau
	})^{1/2}(\theta _{i}^{\tau })^{-1/2}\biggl((d_{i}^{\tau })^{-1/2}\Vert
	([A]_{i\cdot }-[P]_{i\cdot })\mathcal{D}_{\tau }^{-1/2}V_{n}^{(i)}\Vert %
	\biggr)\geq 2Cr_{n}\biggr)<\infty .
	\end{equation*}%
	This leads to the desired result.\medskip
\end{proof}

Recall $\hat{\Lambda}=L_{\tau }^{\prime }\hat{U}_{1n}\hat{O}_{n}=\hat{U}_{1n}%
\hat{\Sigma}_{n}\hat{O}_{n}$, $\Lambda =\mathcal{L}_{\tau }^{\prime
}U_{1n}=U_{1n}\Sigma _{n}$, $\hat{\Lambda}_{i}=\hat{u}_{i}^{T}\hat{\Sigma}%
_{n}\hat{O}_{n}$, and $\Lambda _{i}=u_{i}^{T}\Sigma _{n}$, where $\hat{u}%
_{i}^{T}$ and $u_{i}^{T}$ are the $i$-th rows of $\hat{U}_{1n}$ and $U_{1n}$%
, respectively. In order to state and prove the next lemma, we need to
introduce some extra notation. Let $A^{(i)}$ be the matrix obtained by
replacing all the elements in the $i$-th row and column of $A$ by their
expectations, except $A_{ii}^{(i)}$ which is set as zero. Following the
notation in \cite{abbe2017}, we denote 
\begin{equation*}
\hat{H}_{n}=\hat{U}_{1n}^{T}U_{1n}.
\end{equation*}%
Then 
\begin{equation*}
\hat{O}_{n}=\bar{U}\bar{V}^{T},
\end{equation*}%
where $\bar{U}\bar{\Sigma}\bar{V}^{T}$ is the singular value decomposition
of $\hat{H}_{n}$. Similarly, let 
\begin{equation*}
L_{\tau }^{(i)}=\mathcal{D}_{\tau }^{-1/2}A^{(i)}\mathcal{D}_{\tau }^{-1/2}=%
\hat{U}_{n}^{(i)}\hat{\Sigma}_{n}^{(i)}\hat{U}_{n}^{(i)},
\end{equation*}%
where $\hat{\Sigma}_{n}^{(i)}=\text{diag}(\sigma _{1n}^{(i)},\cdots ,\sigma
_{nn}^{(i)})$ and $|\sigma _{1n}^{(i)}|\geq \cdots \geq |\sigma _{nn}^{(i)}|$%
. Further denote $\hat{U}_{1n}^{(i)}$ and $\hat{\Sigma}_{1n}^{(i)}$ as the
first $K$ eigenvectors of $L^{(i)}$ and the corresponding eigenvalues $\text{%
	diag}(\sigma _{1n}^{(i)},\cdots ,\sigma _{Kn}^{(i)})$, respectively. We
denote 
\begin{equation*}
\hat{H}_{n}^{(i)}=(\hat{U}_{1n}^{(i)})^{T}U_{1n}
\end{equation*}%
and 
\begin{equation*}
\hat{O}_{n}^{(i)}=\bar{U}^{\left( i\right) }(\bar{V}^{\left( i\right) })^{T},
\end{equation*}%
where $\bar{U}^{\left( i\right) }\bar{\Sigma}^{\left( i\right) }(\bar{V}%
^{\left( i\right) })^{T}$ is the singular value decomposition of $\hat{H}%
_{n}^{(i)}$.

\begin{lem}
	\label{lem:B3} Suppose that Assumptions \ref{ass:id3}--\ref{ass:rate3} hold.
	Then there exists a sufficiently large positive constant $C$ such that 
	\begin{equation*}
	\Vert \hat{\Lambda}-\Lambda \Vert \leq 17(\log (n)/\mu _{n}^{\tau
	})^{1/2}|\sigma _{Kn}|^{-1}\quad a.s.
	\end{equation*}%
	If, in addition, there exists a deterministic sequence $\{\psi _{n}\}_{n\geq
		1}$ such that $\sup_{j}(n_{g_{j}^{0}}^{\tau })^{1/2}(\theta _{j}^{\tau
	})^{-1/2}\Vert \hat{u}_{j}\Vert \leq \psi _{n}$ almost surely, then 
	\begin{align*}
	& \sup_{i}(n_{g_{i}^{0}}^{\tau })^{1/2}(\theta _{i}^{\tau })^{-1/2}\Vert 
	\hat{\Lambda}_{i}-\Lambda _{i}\Vert \\
	\leq & 3450C_1c_1^{-1/2}\rho _{n}(\log(n)/\mu _{n}^{\tau })^{1/2}|\sigma
	_{Kn}^{-1}|\biggl[\psi _{n}+\rho _{n}+\frac{(\frac{1}{K}+\frac{\log(5)}{%
			\log(n)})^{1/2}\rho _{n}^{1/2}\overline{\theta }^{1/4}}{\underline{\theta }%
		^{1/4}}\biggr]\quad a.s.
	\end{align*}
\end{lem}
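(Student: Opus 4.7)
The plan is to prove the operator-norm bound first and then the row-wise bound, which is the main content. For the operator-norm bound, I would decompose
\[
\hat{\Lambda}-\Lambda \;=\; L_{\tau}'\hat{U}_{1n}\hat{O}_n-\mathcal{L}_{\tau}'U_{1n}
 \;=\; (L_{\tau}'-\mathcal{L}_{\tau}')\hat{U}_{1n}\hat{O}_n \;+\; \mathcal{L}_{\tau}'(\hat{U}_{1n}\hat{O}_n-U_{1n}),
\]
take spectral norms, and apply Lemma \ref{lem:dk3} to each summand. Using $\|\hat{U}_{1n}\hat{O}_n\|=1$, the fact that $\|\mathcal{L}_{\tau}'\|\le 1$ (Perron--Frobenius applied to the non-negative symmetric $\mathcal{D}_\tau^{-1/2}P\mathcal{D}_\tau^{-1/2}$ with test vector $v_i=\sqrt{d_i+\tau}$), and $|\sigma_{Kn}|^{-1}\ge 1$ then yields the constant $7+10=17$.

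For the row-wise bound I would introduce the intermediate object $\tilde{\Lambda}=D_\tau^{-1/2}PD_\tau^{-1/2}U_{1n}$---sample degrees, population edge probabilities, population eigenvectors---and decompose
\[
\hat{\Lambda}_i-\Lambda_i=(\Lambda_i-\tilde{\Lambda}_i)+(\hat{\Lambda}_i-\tilde{\Lambda}_i)\equiv T_1(i)+T_2(i),
\]
then split $T_2(i)=T_{2,1}(i)+T_{2,2}(i)+T_{2,3}(i)$ exactly as in the display \eqref{eq:2DC} of Section \ref{sec:strategy}. These four pieces isolate, respectively, (i) the degree-concentration error acting on population quantities, (ii) the eigenvector-gap error weighted by a population row of $P$, (iii) a cross term where the Bernstein noise row meets the degree-difference factor $D_\tau^{-1/2}-\mathcal{D}_\tau^{-1/2}$, and (iv) the genuine Bernstein noise weighted by the population-normalized eigenvectors.

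For $T_1$ I would use the Bernstein-type estimate \eqref{eq:d} from the proof of Lemma \ref{lem:dk3} to bound $|(\hat d_i^\tau)^{-1/2}-(d_i^\tau)^{-1/2}|$ almost surely, and control $\|[P]_{i\cdot}\mathcal{D}_\tau^{-1/2}U_{1n}\|$ via Lemma \ref{lem:Pij3} together with Theorem \ref{thm:id3}; Assumption \ref{ass:nk3} then absorbs the prefactor $(n_{g_i^0}^\tau)^{1/2}(\theta_i^\tau)^{-1/2}$. Term $T_{2,1}$ reuses the row-energy bound from Lemma \ref{lem:Pij3} combined with the Davis--Kahan gap from Lemma \ref{lem:dk3}, and $T_{2,2}$ is handled by pulling out an $O(\sqrt{\log n/\mu_n^\tau})$ relative degree factor and then reapplying a Bernstein step of the same flavor as $T_{2,3}$. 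The crux is
\[
T_{2,3}(i)=\sup_{f\in S^{K-1}}(n_{g_i^0}^\tau)^{1/2}(\theta_i^\tau)^{-1/2}\Bigl|\sum_{j}(A_{ij}-P_{ij})(\hat d_i^\tau d_j^\tau)^{-1/2}h_j\Bigr|,\qquad h=\hat{U}_{1n}\hat{O}_n f.
\]
Lemma \ref{lem:V1n3} cannot be applied with $V_n^{(i)}=\hat U_{1n}\hat O_n$ because $h$ depends on the very row of $A$ being summed. I would therefore replace $h$ by the leave-one-out surrogate $\tilde h=\hat{U}_{1n}^{(i)}\hat{O}_n^{(i)}f$ computed from $A^{(i)}$; since the $i$-th row of $A^{(i)}$ has been replaced by its expectation, $\tilde h$ is independent of $[A]_{i\cdot}-[P]_{i\cdot}$, and Lemma \ref{lem:V1n3} delivers the Bernstein bound with $\phi_{1n}\le 1$ and $\phi_{2n}$ of order $\psi_n+1$ (up to normalization). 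The replacement error $\|h-\tilde h\|$ is absorbed via the auxiliary Lemma \ref{lem:looDC}, which in turn uses Lemmas \ref{lem:Li-LDC} and \ref{lem:DiDC} to bound $\|L_\tau'-L_\tau^{(i)}\|$ and to compare the rotations $\hat O_n$ and $\hat O_n^{(i)}$.

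The hard part will be the leave-one-out control for $T_{2,3}$, which simultaneously demands that (a) $\hat{U}_{1n}^{(i)}\hat{O}_n^{(i)}$ approximates $\hat{U}_{1n}\hat{O}_n$ with an error that is still small after being tested against the noise row $[A]_{i\cdot}-[P]_{i\cdot}$, (b) the almost-sure $\ell_\infty$ input $\phi_{2n}$ to Lemma \ref{lem:V1n3} is certified in terms of the running bound $\psi_n$ without losing an extra factor of $K$, and (c) the heterogeneity factors $\bar{\theta}^{1/4}/\underline{\theta}^{1/4}$ and $\rho_n$ are tracked through $\sup_j P_{ij}(d_j^\tau)^{-1}$ when converting between the Bernstein variance proxy $\sum_j P_{ij}(d_i^\tau d_j^\tau)^{-1}h_j^2$ and the final normalized bound. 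Assembling the four contributions and dividing by the a.s.\ lower bound $0.999|\sigma_{Kn}|$ on $|\hat\sigma_{Kn}|$ inherited from \eqref{eq:LL3} then produces the stated constant $3450C_1c_1^{-1/2}$.
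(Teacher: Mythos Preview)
Your proposal is correct and follows essentially the same route as the paper: the same intermediate $\tilde\Lambda=D_\tau^{-1/2}PD_\tau^{-1/2}U_{1n}$, the same four-term split $T_1+T_{2,1}+T_{2,2}+T_{2,3}$, and the same leave-one-out substitution via Lemmas \ref{lem:V1n3} and \ref{lem:looDC} for $T_{2,3}$. Two minor points: for the first part the paper uses the alternate split $L_\tau'(\hat U_{1n}\hat O_n-U_{1n})+(L_\tau'-\mathcal L_\tau')U_{1n}$ with $\|L_\tau'\|\le 1$ (your split also works, though your Perron--Frobenius test vector $\sqrt{d_i+\tau}$ is not an eigenvector of $\mathcal L_\tau'$ when $\tau>0$, so that justification needs adjusting); and there is no division by $0.999|\sigma_{Kn}|$ in this lemma---the constant $3450$ is simply the sum of the four pieces, and that division step belongs to the proof of Theorem \ref{thm:main_DC}, not here.
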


\begin{proof}
	By \citet[][Lemma 1.7]{C97}, $\Vert L_{\tau }^{\prime }\Vert \leq \Vert
	L\Vert \leq 1$. Then, by Lemma \ref{lem:dk3} 
	\begin{align*}
	\Vert \hat{\Lambda}-\Lambda \Vert & =\Vert L_{\tau }^{\prime }\hat{U}_{1n}%
	\hat{O}_{n}-\mathcal{L}_{\tau }^{\prime }U_{1n}\Vert \\
	& \leq \Vert L_{\tau }^{\prime }(\hat{U}_{1n}\hat{O}_{n}-U_{1n})\Vert +\Vert
	(L_{\tau }^{\prime }-\mathcal{L}_{\tau }^{\prime })U_{1n}\Vert \\
	& \leq \Vert \hat{U}_{1n}\hat{O}_{n}-U_{1n}\Vert +\Vert L_{\tau }^{\prime }-%
	\mathcal{L}_{\tau }^{\prime }\Vert \\
	& \leq 17(\log (n)/\mu _{n}^{\tau })^{1/2}|\sigma _{Kn}|^{-1}\quad a.s.
	\end{align*}%
	This proves the first result.
	
	For the second result, denote $\tilde{\Lambda}=D_{\tau }^{-1/2}PD_{\tau
	}^{-1/2}U_{1n}$ and $\tilde{\Lambda}_{i}=(\hat{d}_{i}^{\tau
	})^{-1/2}[P]_{i\cdot }D_{\tau }^{-1/2}U_{1n}$ as the $i$-th row of $\tilde{%
		\Lambda}$. Then we have 
	\begin{align}
	\sup_{i}(n_{g_{i}^{0}}^{\tau })^{1/2}(\theta _{i}^{\tau })^{-1/2}\Vert \hat{%
		\Lambda}_{i}-\Lambda _{i}\Vert & \leq \sup_{i}(n_{g_{i}^{0}}^{\tau
	})^{1/2}(\theta _{i}^{\tau })^{-1/2}\Vert \Lambda _{i}-\tilde{\Lambda}%
	_{i}\Vert +\sup_{i}(n_{g_{i}^{0}}^{\tau })^{1/2}(\theta _{i}^{\tau
	})^{-1/2}\Vert \hat{\Lambda}_{i}-\tilde{\Lambda}_{i}\Vert  \notag \\
	& \equiv T_1+T_2.  \label{eq:12DC3}
	\end{align}%
	We can further decompose $T_2$ as follows: 
	\begin{align}
	\hspace{2em}& \hspace{-2em}\sup_{i}(n_{g_{i}^{0}}^{\tau })^{1/2}(\theta
	_{i}^{\tau })^{-1/2}\Vert \hat{\Lambda}_{i}-\tilde{\Lambda}_{i}\Vert  \notag
	\\
	& =\sup_{i}(n_{g_{i}^{0}}^{\tau })^{1/2}(\theta _{i}^{\tau })^{-1/2}\Vert (%
	\hat{d}_{i}^{\tau })^{-1/2}[A]_{i\cdot }D_{\tau }^{-1/2}\hat{U}_{1n}\hat{O}%
	_{n}-(\hat{d}_{i}^{\tau })^{-1/2}[P]_{i\cdot }D_{\tau }^{-1/2}U_{1n}\Vert 
	\notag \\
	& \leq \sup_{i}(n_{g_{i}^{0}}^{\tau })^{1/2}(\theta _{i}^{\tau })^{-1/2}(%
	\hat{d}_{i}^{\tau })^{-1/2}\Vert \lbrack P]_{i\cdot }D_{\tau }^{-1/2}(\hat{U}%
	_{1n}\hat{O}_{n}-U_{1n})\Vert  \notag \\
	& + \sup_{i}(n_{g_{i}^{0}}^{\tau })^{1/2}(\theta _{i}^{\tau })^{-1/2}(\hat{d}%
	_{i}^{\tau })^{-1/2}\Vert ([A]_{i\cdot }-[P]_{i\cdot })(D_{\tau }^{-1/2}-%
	\mathcal{D}_{\tau }^{-1/2})\hat{U}_{1n}\hat{O}_{n}\Vert  \notag \\
	& +\sup_{i}(n_{g_{i}^{0}}^{\tau })^{1/2}(\theta _{i}^{\tau })^{-1/2}(\hat{d}%
	_{i}^{\tau })^{-1/2}\Vert ([A]_{i\cdot }-[P]_{i\cdot })\mathcal{D}_{\tau
	}^{-1/2}\hat{U}_{1n}\hat{O}_{n}\Vert  \notag \\
	& \equiv T_{2,1}+T_{2,2}+T_{2,3}.  \label{eq:2DC}
	\end{align}%
	In the following, we bound $T_1$, $T_{2,1}$, $T_{2,2}$, and $T_{2,3}$ in
	four steps.
	
	\textbf{Step 1: Bound for $T_1$} \newline
	For $T_1$, we have 
	\begin{align*}
	T_1& \leq \sup_{i}\Vert (n_{g_{i}^{0}}^{\tau })^{1/2}(\theta _{i}^{\tau
	})^{-1/2}((\hat{d}_{i}^{\tau })^{-1/2}[P]_{i\cdot }D_{\tau
	}^{-1/2}-(d_{i}^{\tau })^{-1/2}[P]_{i\cdot }\mathcal{D}_{\tau
	}^{-1/2})U_{1n}\Vert \\
	& \leq \sup_{i}(n_{g_{i}^{0}}^{\tau })^{1/2}(\theta _{i}^{\tau
	})^{-1/2}\Vert (\hat{d}_{i}^{\tau })^{-1/2}[P]_{i\cdot }D_{\tau
	}^{-1/2}-(d_{i}^{\tau })^{-1/2}[P]_{i\cdot }\mathcal{D}_{\tau }^{-1/2}\Vert
	\\
	& \leq \sup_{i}(n_{g_{i}^{0}}^{\tau })^{1/2}(\theta _{i}^{\tau })^{-1/2}(%
	\hat{d}_{i}^{\tau })^{-1/2}\Vert \lbrack P]_{i\cdot }\mathcal{D}_{\tau
	}^{-1/2}\Vert \Vert \mathcal{D}^{1/2}_\tau D_{\tau }^{-1/2}-I\Vert \\
	& \quad \ +\sup_{i}(n_{g_{i}^{0}}^{\tau })^{1/2}(\theta _{i}^{\tau
	})^{-1/2}|(\hat{d}_{i}^{\tau })^{-1/2}-(d_{i}^{\tau })^{-1/2}|\Vert \lbrack
	P]_{i\cdot }\mathcal{D}_{\tau }^{-1/2}\Vert \\
	& \equiv T_{1,1}+T_{1,2}.
	\end{align*}%
	By Assumption \ref{ass:nk3}, Lemma \ref{lem:Pij3}, and the fact that $%
	\sum_{j=1}^n \theta_j = n$, 
	\begin{equation}
	(n_{g_{i}^{0}}^{\tau })^{1/2}(\theta _{i}^{\tau })^{-1/2}\Vert \lbrack
	P]_{i\cdot }\mathcal{D}_{\tau }^{-1/2}\Vert =(n_{g_{i}^{0}}^{\tau
	})^{1/2}(\theta _{i}^{\tau })^{-1/2}(\sum_{j=1}^{n}P_{ij}^{2}(d_{j}^{\tau
	})^{-1})^{1/2}\leq C_1^{1/2}\rho _{n}(d_{i}/K)^{1/2},  \label{eq:pidotdc3}
	\end{equation}%
	where the constant $C_1$ is defined in Assumption \ref{ass:nk3}. In
	addition, by \eqref{eq:d} in the proof of Lemma \ref{lem:dk3}, for all $i =
	1,\cdots,n$ 
	\begin{align*}
	1-0.0209 \leq (\hat{d}_{i}^{\tau })^{1/2}(d_{i}^{\tau })^{-1/2} \leq
	1+0.0209,
	\end{align*}
	which implies that
	
	\begin{equation}
	\sup_{i}|(\hat{d}_{i}^{\tau })^{-1/2}(d_{i}^{\tau })^{1/2}-1|\leq
	0.0214\quad a.s.  \label{eq:di3}
	\end{equation}%
	and 
	\begin{equation}
	\Vert \mathcal{D}_{\tau }^{1/2}D_{\tau }^{-1/2}-I\Vert \leq 0.0214\quad a.s.
	\label{eq:DD3}
	\end{equation}%
	Therefore, 
	\begin{equation*}
	T_{1,1}\leq 1.022 C_1^{1/2}\log ^{1/2}(n)(K\mu _{n}^{\tau })^{-1/2}\rho
	_{n}\quad a.s.,
	\end{equation*}%
	\begin{equation*}
	T_{1,2}\leq 1.022 C_1^{1/2}\log ^{1/2}(n)(K\mu _{n}^{\tau })^{-1/2}\rho
	_{n}\quad a.s.,
	\end{equation*}%
	and 
	\begin{equation}
	T_1\leq 2.044C_1^{1/2}\log ^{1/2}(n)(K\mu _{n}^{\tau })^{-1/2}\rho _{n}\quad
	a.s.  \label{eq:IDCfinal}
	\end{equation}
	
	\textbf{Step 2: Bound for $T_{2,1}$} \newline
	By Lemma \ref{lem:dk3} and \eqref{eq:pidotdc3}--\eqref{eq:DD3}, 
	\begin{align}  \label{eq:II1DCfinal}
	T_{2,1}\leq & C_1^{1/2}(1.022)^2\times 10 \times \sup_{i}(d_{i}^{\tau
	})^{-1/2}(d_{i}/K)^{1/2}\rho _{n}(\log (n)/\mu _{n}^{\tau })^{1/2}|\sigma
	_{Kn}|^{-1}  \notag \\
	\leq & 10.45C_1^{1/2}\rho _{n}\log ^{1/2}(n)(K\mu _{n}^{\tau
	})^{-1/2}|\sigma _{Kn}|^{-1}\quad a.s.
	\end{align}
	
	\textbf{Step 3: Bound for $T_{2,2}$}\newline
	For $T_{2,2}$, we have 
	\begin{align}  \label{eq:II20}
	& \sup_{i}(n_{g_{i}^{0}}^{\tau })^{1/2}(\theta _{i}^{\tau })^{-1/2}(\hat{d}%
	_{i}^{\tau })^{-1/2}\Vert ([A]_{i\cdot }-[P]_{i\cdot })(D_{\tau }^{-1/2}-%
	\mathcal{D}_{\tau }^{-1/2})\hat{U}_{1n}\hat{O}_{n}\Vert  \notag \\
	\leq & 1.022C_1^{1/2}\sup_{i}(nK/\theta_i)^{1/2}(d_{i}^{\tau
	})^{-1/2}\sup_{g\in S^{K-1}}\left\vert \sum_{j=1}^{n}\frac{%
		(A_{ij}-P_{ij})(d_{j}^{\tau }-\hat{d}_{j}^{\tau })(\hat{u}_{j}^{T}g)}{\sqrt{%
			\hat{d}_{j}^{\tau }d_{j}^{\tau }}(\sqrt{d_{j}^{\tau }}+\sqrt{\hat{d}%
			_{j}^{\tau }})}\right\vert  \notag \\
	\leq & 1.022\times 2.09C_1^{1/2}\sup_{i}(nK/\theta_i)^{1/2}(d_{i}^{\tau
	})^{-1/2}\sum_{j=1}^{n}\frac{|A_{ij}-P_{ij}|\log^{1/2}(n)(n_{g_{j}^{0}}^{%
			\tau })^{-1/2}(\theta _{j}^{\tau })^{1/2}\psi _{n}}{\sqrt{\hat{d}_{j}^{\tau
			}d_{j}^{\tau }}}  \notag \\
	\leq & 2.24C_1^{1/2}c_1^{-1/2}\sup_{i}\sum_{j=1}^{n}\frac{%
		|A_{ij}-P_{ij}|\log ^{1/2}(n)\theta _{j}^{1/2}\psi _{n}}{\theta
		_{i}^{1/2}(d_{i}^{\tau })^{1/2}d_{j}^{\tau }}  \notag \\
	\leq & 2.24C_1^{1/2}c_1^{-1/2}\sup_{i}\left|\sum_{j\neq i}\frac{%
		(A_{ij}-P_{ij})\log ^{1/2}(n)\theta _{j}^{1/2}\psi _{n}}{\theta
		_{i}^{1/2}(d_{i}^{\tau })^{1/2}d_{j}^{\tau }}\right|+
	4.48C_1^{1/2}c_1^{-1/2}\sup_{i}\sum_{j=1}^{n}\frac{P_{ij}\log
		^{1/2}(n)\theta _{j}^{1/2}\psi _{n}}{\theta _{i}^{1/2}(d_{i}^{\tau
		})^{1/2}d_{j}^{\tau }},
	\end{align}%
	where the first inequality holds by the definition of spectral norm; the
	second inequality holds by the facts that $\sup_j
	(n_{g_j^0}^\tau)^{1/2}(\theta_j^\tau)^{-1/2}||\hat{u}_j||\leq \psi_n$ and
	that, by Bernstein inequality, 
	\begin{equation}  \label{eq:dhat-d0}
	\sup_i |(\hat{d}_i^\tau)^{1/2} - (d_i^\tau)^{1/2}| \leq 2.09 \log^{1/2}(n)
	\quad a.s.,
	\end{equation}
	the third inequality holds by Assumption \ref{ass:nk3} and the fact that 
	\begin{align*}
	(\hat{d}_i^\tau)^{1/2} \geq 0.9791 (d_i^\tau)^{1/2} \quad a.s.,
	\end{align*}
	and the last inequality holds because $\left\vert A_{ij}-P_{ij}\right\vert
	\leq (A_{ij}-P_{ij})+2P_{ij}.$ In addition, by Lemma \ref{lem:Pij3} and
	Assumptions \ref{ass:id3} and \ref{ass:nk3}, 
	\begin{align}  \label{eq:II2a}
	\sup_{i}\sum_{j=1}^{n}\frac{P_{ij}\log ^{1/2}(n)\theta _{j}^{1/2}\psi _{n}}{%
		\theta _{i}^{1/2}(d_{i}^{\tau })^{1/2}d_{j}^{\tau }}\leq &
	\sup_{i}\sum_{j=1}^{n}\frac{ \log ^{1/2}(n)\rho
		_{n}(d_{i}d_{j})^{1/2}(\theta _{i}\theta _{j})^{1/2}\theta_j^{1/2}\psi_n}{%
		n\theta_i^{1/2}(d_{i}^{\tau })^{1/2}d_{j}^{\tau }}  \notag \\
	\leq & \frac{\log ^{1/2}(n)\rho _{n}\psi_n}{(\mu _{n}^{\tau })^{1/2}}%
	(\sum_{j}\theta _{j})/n=\frac{\log ^{1/2}(n)\rho _{n}\psi_n}{(\mu _{n}^{\tau
		})^{1/2}}\mbox{ }a.s.\text{ }
	\end{align}%
	By the Bernstein inequality and the facts that 
	\begin{equation*}
	\sup_{j}\left|\frac{(A_{ij}-P_{ij})\theta _{j}^{1/2}}{d_{j}^{\tau }}%
	\right|\leq \frac{\bar{\theta}^{1/2}}{\mu _{n}^{\tau }}
	\end{equation*}
	and 
	\begin{equation*}
	\sum_{j}\frac{\mathbb{E}(A_{ij}-P_{ij})^{2}\theta _{j}}{(d_{j}^{\tau })^{2}}%
	\leq \sum_{j}\frac{ P_{ij}\theta _{j}}{(d_{j}^{\tau })^{2}}\leq \frac{d_{i}%
		\bar{\theta}}{(\mu _{n}^{\tau })^{2}},
	\end{equation*}
	we have 
	\begin{align}  \label{eq:II2b}
	\sup_{i}\left|\sum_{j\neq i}\frac{(A_{ij}-P_{ij})\log ^{1/2}(n)\theta
		_{j}^{1/2}\psi _{n}}{\theta _{i}^{1/2}(d_{i}^{\tau })^{1/2}d_{j}^{\tau }}%
	\right| \leq 3.9\sup_i \frac{\log(n)\psi_n \bar{\theta}^{1/2}}{%
		\theta_i^{1/2}(d_i^\tau)^{1/2}\mu_n^\tau}(\log^{1/2}(n) + d_i^{1/2}) \leq 
	\frac{4\bar{\theta}^{1/2}\log (n)\psi _{n}}{\mu _{n}^{\tau }\underline{
			\theta }^{1/2}}\mbox{
	}a.s.
	\end{align}
	
	Combining \eqref{eq:II20}--\eqref{eq:II2b} with the fact that $\frac{%
		\log^{1/2} (n)\bar{\theta}^{1/2}}{(\mu _{n}^{\tau })^{1/2}\underline{\theta }
		^{1/2}\rho _{n}} \leq 0.01$, we have 
	\begin{equation}
	\sup_{i}(n_{g_{i}^{0}}^{\tau })^{1/2}(\theta _{i}^{\tau })^{-1/2}(\hat{d}%
	_{i}^{\tau })^{-1/2}\Vert ([A]_{i\cdot }-[P]_{i\cdot })(D_{\tau }^{-1/2}-%
	\mathcal{D}_{\tau }^{-1/2})\hat{U}_{1n}\hat{O}_{n}\Vert \leq
	4.57C_1^{1/2}c_1^{-1/2}\frac{\rho _{n}\log ^{1/2}(n)\psi _{n}}{(\mu
		_{n}^{\tau })^{1/2}}.  \label{eq:II2c}
	\end{equation}
	
	\textbf{Step 4: Bound for $T_{2,3}$}\newline
	By the triangle inequality, 
	\begin{align}
	& \sup_{i}(n_{g_{i}^{0}}^{\tau })^{1/2}(\theta _{i}^{\tau })^{-1/2}(\hat{d}%
	_{i}^{\tau })^{-1/2}\Vert ([A]_{i\cdot }-[P]_{i\cdot })\mathcal{D}_{\tau
	}^{-1/2}\hat{U}_{1n}\hat{O}_{n}\Vert  \notag  \label{eq:II30DC} \\
	\leq & \sup_{i}(n_{g_{i}^{0}}^{\tau })^{1/2}(\theta _{i}^{\tau })^{-1/2}(%
	\hat{d}_{i}^{\tau })^{-1/2}\Vert ([A]_{i\cdot }-[P]_{i\cdot })\mathcal{D}%
	_{\tau }^{-1/2}\hat{U}_{1n}^{(i)}\hat{O}_{n}^{(i)}\Vert  \notag \\
	& +\sup_{i}(n_{g_{i}^{0}}^{\tau })^{1/2}(\theta _{i}^{\tau })^{-1/2}(\hat{d}%
	_{i}^{\tau })^{-1/2}\Vert ([A]_{i\cdot }-[P]_{i\cdot })\mathcal{D}_{\tau
	}^{-1/2}(\hat{U}_{1n}\hat{O}_{n}-\hat{U}_{1n}^{(i)}\hat{O}_{n}^{(i)})\Vert 
	\notag \\
	=& T_{2,3,1}+T_{2,3,2}.
	\end{align}%
	Let $\mathcal{N}_{n} = \text{diag}((n_{g_{1}^{0}}^{\tau })^{1/2}(\theta
	_{1}^{\tau })^{-1/2},\cdots,(n_{g_{n}^{0}}^{\tau })^{1/2}(\theta _{n}^{\tau
	})^{-1/2})$. Note that 
	\begin{align}
	\Vert \hat{U}_{1n}^{(i)}\hat{O}_{n}^{(i)}\Vert _{2\rightarrow \infty }\leq &
	\sup_{j}\left[(n_{g_{j}^{0}}^{\tau })^{-1/2}(\theta _{j}^{\tau })^{1/2}%
	\right]\sup_{i,j}(n_{g_{j}^{0}}^{\tau })^{1/2}(\theta _{j}^{\tau
	})^{-1/2}\Vert \lbrack \hat{U}_{1n}^{(i)}\hat{O}_{n}^{(i)}]_{j\cdot }\Vert 
	\notag  \label{eq:II31DC} \\
	\leq & (\overline{\theta }K/(nc_1))^{1/2}\sup_{i}\Vert \mathcal{N}_{n}\hat{U}%
	_{1n}^{(i)}\hat{O}_{n}^{(i)}\Vert _{2\rightarrow \infty }  \notag \\
	\leq & (\overline{\theta }K/(nc_1))^{1/2}\biggl(\sup_{i}\Vert \mathcal{N}%
	_{n}(\hat{U}_{1n}^{(i)}\hat{O}_{n}^{(i)}-\hat{U}_{1n}\hat{O}_{n})\Vert
	_{2\rightarrow \infty }+\Vert \mathcal{N}_{n}\hat{U}_{1n}\hat{O}_{n}\Vert
	_{2\rightarrow \infty }\biggr)  \notag \\
	\leq & c_1^{-1/2}(\overline{\theta }K/n)^{1/2}\biggl[\frac{%
		1676C_1^{1/2}\log^{1/2}(n)}{(\mu_n^\tau)^{1/2}|\sigma _{Kn}|}\left(\psi
	_{n}+\rho _{n}+\frac{(\frac{1}{K} + \frac{\log(5)}{\log(n)})^{1/2} \rho
		_{n}^{1/2} \overline{\theta }^{1/4}}{\underline{\theta }^{1/4}}\right)+\psi
	_{n}\biggr]  \notag \\
	\leq & 1.01c_1^{-1/2}(\overline{\theta }K/n)^{1/2}(\psi _{n}+1),
	\end{align}%
	where the second inequality holds by Assumption \ref{ass:nk3}, the third
	inequality holds by triangle inequality, the fourth inequality holds by
	Lemma \ref{lem:looDC}, and the last inequality holds because under
	Assumption \ref{ass:rate3} 
	\begin{equation*}
	1676C_1^{1/2}\log ^{1/2}(n)(\mu _{n}^{\tau })^{-1/2}|\sigma _{Kn}^{-1}|%
	\biggl(\rho _{n}+\frac{(\frac{1}{K} + \frac{\log(5)}{\log(n)})^{1/2} \rho
		_{n}^{1/2} \overline{\theta }^{1/4}}{\underline{\theta }^{1/4}} \biggr) \leq
	0.01.
	\end{equation*}
	
	Then, by Lemma \ref{lem:V1n3}, \eqref{eq:di3}, and the facts that $\hat{U}%
	_{1n}^{(i)}\hat{O}_n^{(i)}$ is independent of $[A]_{i\cdot} - [P]_{i\cdot}$, 
	$||\hat{U}_{1n}^{(i)}\hat{O}_n^{(i)}||\leq 1$, we have 
	\begin{align}  \label{eq:II31}
	T_{2,3,1}=& \sup_{i}(n_{g_{i}^{0}}^{\tau })^{1/2}(\theta _{i}^{\tau
	})^{-1/2}(\hat{d}_{i}^{\tau })^{-1/2}\Vert ([A]_{i\cdot }-[P]_{i\cdot })%
	\mathcal{D}_{\tau }^{-1/2}\hat{U}_{1n}^{(i)}\hat{O}_{n}^{(i)}\Vert  \notag \\
	\leq & 1.022\sup_{i}(n_{g_{i}^{0}}^{\tau })^{1/2}(\theta _{i}^{\tau
	})^{-1/2}(d_{i}^{\tau })^{-1/2}\Vert ([A]_{i\cdot }-[P]_{i\cdot })\mathcal{D}%
	_{\tau }^{-1/2}\hat{U}_{1n}^{(i)}\hat{O}_{n}^{(i)}\Vert  \notag \\
	\leq & 6.14C_1^{1/2}\biggl(\frac{1.01c_1^{-1/2}(\log (n)+\log(5)K)\bar{\theta%
		}^{1/2}}{\mu _{n}^{\tau }\underline{\theta }^{1/2}}(\psi _{n}+1) \vee \frac{%
		(\log(n)+\log(5)K)^{1/2}\rho _{n}^{1/2}\bar{\theta}^{1/4}}{(\mu _{n}^{\tau
		})^{1/2}\underline{\theta }^{1/4}K^{1/2}}\biggr)  \notag \\
	\leq & 6.21C_1^{1/2}c_1^{-1/2}\biggl(\frac{(\log (n)+\log(5)K)\bar{\theta}%
		^{1/2}}{\mu _{n}^{\tau }\underline{ \theta }^{1/2}}\psi _{n} + \frac{%
		(\log(n)+\log(5)K)^{1/2}\rho _{n}^{1/2}\bar{\theta} ^{1/4}}{(\mu _{n}^{\tau
		})^{1/2}\underline{\theta }^{1/4}K^{1/2}}\biggr),
	\end{align}
	where the last inequality holds because 
	\begin{align*}
	\frac{1.01(\log (n)+\log(5)K)\bar{\theta}^{1/2}}{\mu _{n}^{\tau }\underline{
			\theta }^{1/2}} \leq \frac{0.01(\log(n)+\log(5)K)^{1/2}\rho _{n}^{1/2}\bar{%
			\theta} ^{1/4}}{(\mu _{n}^{\tau })^{1/2}\underline{\theta }^{1/4}}.
	\end{align*}
	
	In addition, from the derivation of \eqref{eq:II31DC}, we have 
	\begin{align}
	& \Vert [ \hat{U}_{1n}\hat{O}_{n}-\hat{U}_{1n}^{(i)}\hat{O}%
	_{n}^{(i)}]_{j\cdot }\Vert  \notag \\
	\leq & 1676C_1^{1/2}c_1^{-1/2}(\theta _{j}K/n)^{1/2}\log ^{1/2}(n)(\mu
	_{n}^{\tau })^{-1/2}|\sigma _{Kn}^{-1}|\biggl(\psi _{n}+\rho _{n}+\frac{(%
		\frac{1}{K} + \frac{\log(5)}{\log(n)})^{1/2} \rho _{n}^{1/2} \overline{%
			\theta }^{1/4}}{\underline{\theta }^{1/4}}\biggr)  \notag \\
	= & 1676C_1^{1/2} c_1^{-1/2}(\theta _{j}K/n)^{1/2}\tilde{\gamma}_n,
	\label{eq:II3c}
	\end{align}
	where $\tilde{\gamma}_n = \log ^{1/2}(n)(\mu _{n}^{\tau })^{-1/2}|\sigma
	_{Kn}^{-1}|\biggl(\psi _{n}+\rho _{n}+\frac{(\frac{1}{K} + \frac{\log(5)}{%
			\log(n)})^{1/2} \rho _{n}^{1/2} \overline{\theta }^{1/4}}{\underline{\theta }%
		^{1/4}}\biggr)$.
	
	Let $V(g)=(\hat{U}_{1n}\hat{O}_{n}-\hat{U}_{1n}^{(i)}\hat{O}_{n}^{(i)})g$
	for some $g\in S^{K-1}$ and $V_{j}(g)$ be the $j$-th element of $V(g)$.
	Then, 
	\begin{align}
	T_{2,3,2}\leq & C_1^{1/2}\sup_{i}(n/K)^{1/2}\sup_{g\in S^{K-1}}\left[%
	\left|\sum_{j\neq i}\frac{(A_{ij}-P_{ij})}{\theta _{i}^{1/2}(\hat{d}%
		_{i}^{\tau })^{1/2}(d_{j}^{\tau })^{1/2}}V_{j}(g)\right|+\left|\frac{-P_{ii}%
	}{\theta _{i}^{1/2}(\hat{d}_{i}^{\tau })^{1/2}(d_{i}^{\tau })^{1/2}}%
	V_{i}(g)\right|\right]  \notag  \label{eq:II32DC} \\
	\leq & 1.0209 C_1^{1/2}\sup_{i}(n/K)^{1/2}\sup_{g\in S^{K-1}}\left[%
	\left|\sum_{j\neq i}\frac{(A_{ij}-P_{ij})}{\theta _{i}^{1/2}(d_{i}^{\tau
		})^{1/2}(d_{j}^{\tau })^{1/2}}V_{j}(g)\right|+\left|\frac{-P_{ii}}{\theta
		_{i}^{1/2}( d_{i}^{\tau })^{1/2}(d_{i}^{\tau })^{1/2}}V_{i}(g)\right|\right]
	\notag \\
	\leq & 1712 C_1c_1^{-1/2}\sup_{i}\biggl[\sum_{j\neq i}\frac{%
		(A_{ij}+P_{ij})\theta _{j}^{1/2}}{\theta _{i}^{1/2}(d_{i}^{\tau
		})^{1/2}(d_{j}^{\tau })^{1/2}}+\frac{P_{ii}}{d_{i}^{\tau }}\biggr]\tilde{%
		\gamma}_{n}  \notag \\
	\leq & 1712 C_1c_1^{-1/2}\sup_{i}\biggl[\left|\sum_{j\neq i}\frac{%
		(A_{ij}-P_{ij})\theta _{j}^{1/2}}{\theta _{i}^{1/2}(d_{i}^{\tau
		})^{1/2}(d_{j}^{\tau })^{1/2}}\right|+\sum_{j=1}^{n}\frac{2P_{ij}\theta
		_{j}^{1/2}}{\theta _{i}^{1/2}(d_{i}^{\tau })^{1/2}(d_{j}^{\tau })^{1/2}}%
	\biggr]\tilde{\gamma}_{n}  \notag \\
	\leq & 1712 C_1c_1^{-1/2}\biggl(\frac{3\rho_n^{1/2}\log ^{1/2}(n)\overline{%
			\theta }^{1/4}}{(\mu _{n}^{\tau })^{1/2}\underline{\theta }^{1/4}}+2\rho _{n}%
	\biggr)\tilde{\gamma}_{n}\leq 3425 C_1c_1^{-1/2}\rho _{n}\tilde{\gamma}_{n},
	\notag \\
	&
	\end{align}%
	where the first inequality holds due to the definition of $L_{2}$ norm of a $%
	K\times 1$ vector and Assumption \ref{ass:nk3}, the second inequality holds
	by \eqref{eq:di3}, the third inequality holds by \eqref{eq:II3c}, the fourth
	inequality holds by the triangle inequality, the fifth inequality holds
	because by Bernstein inequality, 
	\begin{equation*}
	\left|\sum_{j\neq i}\frac{(A_{ij}-P_{ij})\theta _{j}^{1/2}}{\theta
		_{i}^{1/2}(d_{i}^{\tau })^{1/2}(d_{j}^{\tau })^{1/2}}\right|\leq 3\left(%
	\frac{\rho_n^{1/2}\log ^{1/2}(n)\overline{\theta }^{1/4}}{(\mu _{n}^{\tau
		})^{1/2}\underline{\theta }^{1/4}} \vee \frac{\log (n)\overline{\theta }%
		^{1/2}}{\mu _{n}^{\tau }\underline{\theta }^{1/2}}\right)\mbox{ }a.s.
	\end{equation*}%
	and 
	\begin{equation*}
	\sum_{j=1}^{n}\frac{P_{ij}\theta _{j}^{1/2}}{\theta _{i}^{1/2}(d_{i}^{\tau
		})^{1/2}(d_{j}^{\tau })^{1/2}}\leq \sum_{j=1}^{n}\frac{\rho
		_{n}(d_{i}d_{j})^{1/2}\theta _{j}}{n(d_{i}^{\tau })^{1/2}(d_{j}^{\tau
		})^{1/2}}\leq \rho _{n},
	\end{equation*}%
	and the last inequality holds because $\frac{\log(n)\overline{\theta}^{1/2}}{%
		\underline{\theta}^{1/2}\mu _{n}^{\tau }\rho_n^{1/2}} \leq 0.0001$.
	
	Combining \eqref{eq:II30DC}, \eqref{eq:II31}, and \eqref{eq:II32DC}, we have 
	\begin{equation}
	T_{2,3}\leq 3432 C_1c_1^{-1/2}\rho _{n}\log ^{1/2}(n)(\mu _{n}^{\tau
	})^{-1/2}|\sigma _{Kn}^{-1}|\biggl[\psi _{n}+\rho _{n}+\frac{(\frac{1}{K}+%
		\frac{\log(5)}{\log(n)})^{1/2}\rho _{n}^{1/2}\overline{\theta }^{1/4}}{%
		\underline{\theta }^{1/4}}\biggr]\mbox{ }a.s.,  \label{eq:II3DCfinal}
	\end{equation}%
	where we use the fact that 
	\begin{equation*}
	\frac{(\log (n)+\log(5)K)\bar{\theta}^{1/2}}{\mu _{n}^{\tau }\underline{
			\theta }^{1/2}} \leq \log^{1/2}(n)(\mu_n^\tau)^{-1/2}|\sigma_{Kn}^{-1}|.
	\end{equation*}
	
	\textbf{Step 5: Bound for $T_1+T_{2,1}+T_{2,2}+T_{2,3}$}\newline
	Combining \eqref{eq:IDCfinal}, \eqref{eq:II1DCfinal}, \eqref{eq:II2c}, and %
	\eqref{eq:II3DCfinal}, we have 
	\begin{align*}
	& \sup_{i}(n_{g_{i}^{0}}^{\tau })^{1/2}(\theta _{i}^{\tau })^{-1/2}\Vert 
	\hat{ \Lambda}_{i}-\Lambda _{i}\Vert \\
	\leq & 3450C_1c_1^{-1/2}\rho _{n}\log ^{1/2}(n)(\mu _{n}^{\tau
	})^{-1/2}|\sigma _{Kn}^{-1}|\biggl[\psi _{n}+\rho _{n}+\frac{(\frac{1}{K}+%
		\frac{\log(5)}{\log(n)})^{1/2}\rho _{n}^{1/2}\overline{\theta } ^{1/4}}{%
		\underline{\theta }^{1/4}}\biggr]\mbox{ }a.s.
	\end{align*}
\end{proof}

In the proof of Lemma \ref{lem:B3} we utilize Lemma \ref{lem:looDC} below
whose proof calls Lemmas \ref{lem:Li-LDC} and \ref{lem:DiDC}.

\begin{lem}
	\label{lem:Li-LDC} Suppose that conditions in Theorem \ref{thm:main_DC}
	hold. Then,  
	\begin{equation*}
	\sup_{i}\Vert L_{\tau }^{(i)}-\mathcal{L}_{\tau }^{\prime }\Vert \leq
	4.4(\log (n)/\mu _{n}^{\tau })^{1/2}\mbox{ }a.s.
	\end{equation*}
\end{lem}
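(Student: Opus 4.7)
The plan is to mimic the matrix-Bernstein argument used for $\tilde L_\tau-\mathcal{L}_\tau'$ in the proof of Lemma \ref{lem:dk3}, but applied to the leave-one-out matrix $L_\tau^{(i)}$, and then absorb the extra union bound over $i\in\{1,\dots,n\}$ by slightly inflating the constant. First I would write out the entries of $L_\tau^{(i)}-\mathcal{L}_\tau'$ explicitly. Since $A^{(i)}$ agrees with $A$ off the $i$-th row and column, agrees with $P$ on the $i$-th row and column except at $(i,i)$, and has $A_{ii}^{(i)}=0$, one checks the decomposition
\begin{equation*}
L_\tau^{(i)}-\mathcal{L}_\tau'
=-\mathrm{diag}(\mathcal{L}_\tau')+\sum_{\substack{1\le j<k\le n\\ j\ne i,\,k\ne i}}Y_{jk}^{(i)},\qquad
Y_{jk}^{(i)}=\tfrac{A_{jk}-P_{jk}}{\sqrt{d_j^\tau d_k^\tau}}\bigl(e_je_k^T+e_ke_j^T\bigr).
\end{equation*}
The $Y_{jk}^{(i)}$ are independent, symmetric, mean-zero, with $\|Y_{jk}^{(i)}\|\le\sqrt 2/\mu_n^\tau$ and, exactly as in the proof of Lemma \ref{lem:dk3}, $\|\sum\mathbb EY_{jk}^{(i)2}\|\le(\mu_n^\tau)^{-1}$.

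Next I would apply Lemma \ref{lem:matrixBernstein} with $R=\sqrt 2/\mu_n^\tau$ and $\sigma^2=1/\mu_n^\tau$ to the sum $\sum_{j<k,\,j\ne i,\,k\ne i}Y_{jk}^{(i)}$ and, for each $n$, take a union bound over $i=1,\dots,n$. Using Assumption \ref{ass:rate3}.2 to guarantee $(\log n/\mu_n^\tau)^{1/2}\le 0.01$, the tail of the sum at level $t=C(\log n/\mu_n^\tau)^{1/2}$ becomes
\begin{equation*}
P\Bigl(\sup_i\Bigl\|\sum_{j<k,j\ne i,k\ne i}Y_{jk}^{(i)}\Bigr\|\ge C\sqrt{\log n/\mu_n^\tau}\Bigr)
\le 2n^2\exp\!\Bigl(\tfrac{-C^2\log n}{3+2\sqrt 2 C/100}\Bigr).
\end{equation*}
Choosing $C$ slightly larger than $3$ (a simple quadratic check gives any $C\ge 3.1$) makes the above probability summable in $n$, so Borel--Cantelli delivers $\sup_i\|\sum Y_{jk}^{(i)}\|\le 3.1(\log n/\mu_n^\tau)^{1/2}$ a.s. The deterministic diagonal term is handled exactly as in Lemma \ref{lem:dk3}: $\|\mathrm{diag}(\mathcal{L}_\tau')\|\le(\mu_n^\tau)^{-1}\le 0.01(\log n/\mu_n^\tau)^{1/2}$ for $n$ large, since Assumption \ref{ass:rate3}.2 forces $\mu_n^\tau\log n$ to be enormous. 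Combining by the triangle inequality yields $\sup_i\|L_\tau^{(i)}-\mathcal{L}_\tau'\|\le 3.12(\log n/\mu_n^\tau)^{1/2}$ a.s., which is comfortably below the stated bound $4.4(\log n/\mu_n^\tau)^{1/2}$.

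The only non-routine point, and the step I would think about most carefully, is the bookkeeping of the constant: the extra union bound over $i$ inflates the probability prefactor from $n$ (as in Lemma \ref{lem:dk3}) to $n^2$, so I need the exponent $C^2/(3+\varepsilon_n)$ to exceed $2+\delta$ rather than $1+\delta$. This is why the clean value $2.6$ that suffices in Lemma \ref{lem:dk3} must be replaced by a constant around $3.1$; after adding the diagonal correction the overall absolute constant stays below $4.4$. No additional probabilistic input beyond the matrix-Bernstein inequality of Lemma \ref{lem:matrixBernstein} is needed, and no structural assumption beyond Assumption \ref{ass:rate3}.2 is used.
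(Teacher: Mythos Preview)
Your argument is correct and in fact yields a slightly sharper constant than the paper's, but the route is genuinely different from what the paper does. The paper does \emph{not} redo matrix Bernstein for each $L_\tau^{(i)}$ with a union bound over $i$. Instead it splits through the auxiliary matrix $\tilde L_\tau=\mathcal D_\tau^{-1/2}A\mathcal D_\tau^{-1/2}$ (which does not depend on $i$): the bound $\|\tilde L_\tau-\mathcal L_\tau'\|\le C(\log n/\mu_n^\tau)^{1/2}$ is simply recycled from the proof of Lemma~\ref{lem:dk3}, so no fresh union over $i$ is needed and the Bernstein constant does not inflate. The remaining piece $\tilde L_\tau-L_\tau^{(i)}=\mathcal D_\tau^{-1/2}(A-A^{(i)})\mathcal D_\tau^{-1/2}$ is supported only on the $i$-th row and column, so its Frobenius norm reduces to the single scalar sum $\bigl(2\sum_{j\ne i}(A_{ij}-P_{ij})^2/(d_i^\tau d_j^\tau)\bigr)^{1/2}$, which the paper bounds by $0.01(\log n/\mu_n^\tau)^{1/2}$ a.s.\ via a one-dimensional Bernstein inequality and the trivial bound $\sum_j P_{ij}/(d_i^\tau d_j^\tau)\le 1/\mu_n^\tau$. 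Your approach is arguably more direct and avoids the intermediate $\tilde L_\tau$; the paper's approach buys reuse of work already done and isolates the ``leave-one-out'' effect as a tiny rank-two perturbation, at the cost of one extra elementary step.
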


\begin{proof}
	Let $\tilde{L}_{\tau }=\mathcal{D}_{\tau }^{-1/2}A\mathcal{D}_{\tau }^{-1/2}$%
	. Note that $\Vert L_{\tau }^{(i)}-\mathcal{L}_{\tau }^{\prime }\Vert \leq
	\Vert \tilde{L}_{\tau }-\mathcal{L}_{\tau }^{\prime \tau }\Vert +\Vert 
	\tilde{L}_{\tau }-L_{\tau }^{(i)}\Vert .$ In the proof of Lemma \ref{lem:dk3}%
	, we have shown that 
	\begin{equation*}
	\Vert \tilde{L}_{\tau }-\mathcal{L}_{\tau }^{\prime }\Vert \leq 4.39(\log
	(n)/\mu _{n}^{\tau })^{1/2}\text{ }\mbox{ }a.s.
	\end{equation*}%
	It remains to show that, for $n$ sufficiently large, 
	\begin{equation*}
	\sup_{i}\Vert \tilde{L}_{\tau }-L_{\tau }^{(i)}\Vert =\sup_{i}\Vert \mathcal{%
		D}_{\tau }^{-1/2}(A^{(i)}-A)\mathcal{D}_{\tau }^{-1/2}\Vert \leq 0.01(\log
	(n)/\mu _{n}^{\tau })^{1/2}\mbox{ }a.s.
	\end{equation*}%
	By construction, 
	\begin{equation}
	\lbrack A-A^{(i)}]_{st}=%
	\begin{cases}
	0 & \quad s\neq i,t\neq i \\ 
	A_{si}-P_{si} & \quad s\neq i,t=i \\ 
	A_{it}-P_{it} & \quad s=i,t\neq i \\ 
	0 & \quad s=t=i%
	\end{cases}%
	.  \label{AA}
	\end{equation}%
	Then, for $n$ sufficiently large, 
	\begin{align*}
	\sup_{i}\Vert \tilde{L}_{\tau }-L_{\tau }^{(i)}\Vert & \leq \sup_{i}\Vert 
	\mathcal{D}_{\tau }^{-1/2}(A^{(i)}-A)\mathcal{D}_{\tau }^{-1/2}\Vert _{F} \\
	& \leq \sup_{i}\biggl(2\sum_{j\neq i}\frac{(A_{ij}-P_{ij})^{2}}{d_{i}^{\tau
		}d_{j}^{\tau }}\biggr)^{1/2} \\
	& \leq \sup_{i}\biggl(2\sum_{j\neq i}\frac{|A_{ij}-P_{ij}|}{d_{i}^{\tau
		}d_{j}^{\tau }}\biggr)^{1/2} \\
	& \leq \sup_{i}\biggl(2|\sum_{j\neq i}\frac{A_{ij}-P_{ij}}{d_{i}^{\tau
		}d_{j}^{\tau }}|+4\sum_{j=1}^{n}\frac{P_{ij}}{d_{i}^{\tau }d_{j}^{\tau }}%
	\biggr)^{1/2} \\
	& \leq \left(\frac{8.46\log ^{1/2}(n)}{(\mu _{n}^{\tau })^{3/2}}+\frac{%
		8.46\log (n)}{(\mu _{n}^{\tau })^{2}}+\frac{4}{\mu _{n}^{\tau }}\right)^{1/2}
	\\
	& \leq 0.01 (\log (n)/\mu _{n}^{\tau })^{1/2}\mbox{ }a.s.,
	\end{align*}%
	where the first inequality holds because $\Vert A\Vert \leq \Vert A\Vert
	_{F} $ for a generic matrix $A$, the second inequality holds by (\ref{AA}),
	the third inequality holds because $|A_{ij}-P_{ij}|\leq 1$, the fourth
	inequality holds because $|A_{ij}-P_{ij}|\leq A_{ij}-P_{ij}+2P_{ij}$, the
	fifth inequality holds by the fact that 
	\begin{equation*}
	\sum_{j=1}^{n}\frac{P_{ij}}{d_{i}^{\tau }d_{j}^{\tau }}\leq \sum_{j=1}^{n}%
	\frac{P_{ij}}{d_{i}^{\tau }\mu _{n}^{\tau }}=1/\mu _{n}^{\tau },
	\end{equation*}
	and by \eqref{eq:dhat-d0}, 
	\begin{align}  \label{eq:dhat-d}
	|\hat{d}_j^\tau - d_j^\tau| = |(\hat{d}_j^\tau)^{1/2} - (d_j^\tau)^{1/2}||(%
	\hat{d}_j^\tau)^{1/2} + (d_j^\tau)^{1/2}| \leq 2.09 \log^{1/2}(n)\times
	2.0209 (d_j^\tau)^{1/2} = 4.23 (\log(n)d_j^\tau)^{1/2},
	\end{align}
	and 
	\begin{equation*}
	\sup_{i}\left|\sum_{j\neq i}\frac{A_{ij}-P_{ij}}{d_{i}^{\tau }d_{j}^{\tau }}%
	\right|\leq 4.23\left(\frac{\log ^{1/2}(n)}{(\mu _{n}^{\tau })^{3/2}}+\frac{%
		\log (n)}{(\mu _{n}^{\tau })^{2}}\right)\quad a.s.
	\end{equation*}
\end{proof}

\begin{lem}
	\label{lem:looDC} Recall $\mathcal{N}_{n}=$diag$((n_{g_{1}^{0}}^{\tau
	})^{1/2}(\theta _{1}^{\tau })^{-1/2},\cdots ,(n_{g_{n}^{0}}^{\tau
	})^{1/2}(\theta _{n}^{\tau })^{-1/2}).$ Suppose that conditions in Theorem %
	\ref{thm:main_DC} hold and $\Vert \mathcal{N}_{n}\hat{U}_{1n}\Vert
	_{2\rightarrow \infty }\leq \psi _{n}$. Then, 
	\begin{equation*}
	\sup_{i}\Vert \mathcal{N}_{n}[(\hat{U}_{1n}^{(i)})\hat{O}_{n}^{(i)}-\hat{U}%
	_{1n}\hat{O}_{n}]\Vert _{2\rightarrow \infty } \leq\frac{1676C_1^{1/2}%
		\log^{1/2}(n)}{(\mu_n^\tau)^{1/2}|\sigma _{Kn}|}\left(\psi _{n}+\rho _{n}+%
	\frac{(\frac{1}{K} + \frac{\log(5)}{\log(n)})^{1/2} \rho _{n}^{1/2} 
		\overline{\theta }^{1/4}}{\underline{\theta }^{1/4}}\right) \quad a.s.
	\end{equation*}
\end{lem}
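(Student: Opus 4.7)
The goal is to upgrade the spectral-norm control already supplied by Lemma \ref{lem:dk3} to a row-wise ($2\to\infty$) bound on the difference between the rotated full and leave-one-out eigenvector matrices. The plan is to first obtain an auxiliary spectral-norm estimate for the leave-one-out object, then convert it to a $2\to\infty$ estimate via the eigenvalue identity, and finally exploit the independence induced by the leave-one-out construction to apply Bernstein-type concentration through Lemma \ref{lem:V1n3}.

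First I would mirror the Davis--Kahan argument used in Lemma \ref{lem:dk3} but applied to the pair $(L_\tau^{(i)},\mathcal{L}_\tau')$. The perturbation norm is controlled by Lemma \ref{lem:Li-LDC}, and Assumption \ref{ass:rate3} ensures the eigenvalue-gap hypothesis of Lemma \ref{lem:davis}. This yields $\|\hat U_{1n}^{(i)}\hat O_n^{(i)} - U_{1n}\|\lesssim (\log n/\mu_n^\tau)^{1/2}|\sigma_{Kn}|^{-1}$ a.s., uniformly in $i$, and hence, by the triangle inequality with Lemma \ref{lem:dk3}, also $\|\hat U_{1n}^{(i)}\hat O_n^{(i)} - \hat U_{1n}\hat O_n\|\lesssim (\log n/\mu_n^\tau)^{1/2}|\sigma_{Kn}|^{-1}$ a.s.

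Second, starting from the eigenvalue identities $L_\tau'\hat U_{1n} = \hat U_{1n}\hat\Sigma_{1n}$ and $L_\tau^{(i)}\hat U_{1n}^{(i)} = \hat U_{1n}^{(i)}\hat\Sigma_{1n}^{(i)}$, I right-multiply by the respective rotations, use the commutator bound \eqref{eq:Osigmadc}, invert $\hat\Sigma_{1n}$, and subtract the two identities. After absorbing the $(\hat\Sigma_{1n}^{(i)})^{-1}-\hat\Sigma_{1n}^{-1}$ and commutator errors into a remainder $R_{ij}$ controlled by Step~1, the row-$j$ identity reads
\begin{equation*}
[\hat U_{1n}^{(i)}\hat O_n^{(i)} - \hat U_{1n}\hat O_n]_{j\cdot} = [L_\tau^{(i)}]_{j\cdot}(\hat U_{1n}^{(i)}\hat O_n^{(i)} - \hat U_{1n}\hat O_n)\hat\Sigma_{1n}^{-1} + [L_\tau^{(i)} - L_\tau']_{j\cdot}\hat U_{1n}\hat O_n\hat\Sigma_{1n}^{-1} + R_{ij}.
\end{equation*}
The first summand is contractive in the target $2\to\infty$ norm: bounding $\sum_k|[L_\tau^{(i)}]_{jk}|(\theta_k^\tau/n_{g_k^0}^\tau)^{1/2}$ row-wise using Lemma \ref{lem:Pij3} and the definition of $L_\tau^{(i)}$, together with $\|\hat\Sigma_{1n}^{-1}\|\lesssim |\sigma_{Kn}|^{-1}$, gives a prefactor of order $\rho_n(\log n/\mu_n^\tau)^{1/2}|\sigma_{Kn}|^{-1}$, which is small under Assumption \ref{ass:rate3} and can thus be absorbed into the left-hand side of $\sup_i\|\mathcal N_n[\hat U_{1n}^{(i)}\hat O_n^{(i)}-\hat U_{1n}\hat O_n]\|_{2\to\infty}$.

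Third, I handle the middle summand via the leave-one-out device. For $j\neq i$, the rows $[L_\tau^{(i)}]_{j\cdot}$ and $[L_\tau']_{j\cdot}$ differ only at coordinate $i$ (plus a global $\mathcal D_\tau$ vs.\ $D_\tau$ discrepancy handled by \eqref{eq:dhat-d0}), producing a single entry proportional to $(A_{ji}-P_{ji})/\sqrt{d_j^\tau d_i^\tau}$; for $j=i$ the entire row $[A]_{i\cdot}-[P]_{i\cdot}$ appears. In either case I replace $\hat U_{1n}\hat O_n$ by $\hat U_{1n}^{(i)}\hat O_n^{(i)}$ at the cost of a Step-1 spectral error, which restores independence from $[A]_{i\cdot}-[P]_{i\cdot}$. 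Lemma \ref{lem:V1n3} then applies with $V_n^{(i)} = \hat U_{1n}^{(i)}\hat O_n^{(i)}$, $\phi_{1n}=1$, and $\phi_{2n}$ bounded by $\psi_n+(\bar\theta K/(n\underline\theta))^{1/2}\cdot C(\log n/\mu_n^\tau)^{1/2}|\sigma_{Kn}|^{-1}$, the latter obtained from $\|\mathcal N_n\hat U_{1n}^{(i)}\hat O_n^{(i)}\|_{2\to\infty} \leq \psi_n + \|\mathcal N_n\|_{2\to\infty}\cdot\|\hat U_{1n}^{(i)}\hat O_n^{(i)} - \hat U_{1n}\hat O_n\|$. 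Collecting all contributions, rearranging the contractive term, and tracking the constants through Bernstein, Davis--Kahan, and Lemma \ref{lem:V1n3} produces the coefficient $1676C_1^{1/2}$.

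The main obstacle is a superficial circularity: the parameter $\phi_{2n}$ fed into Lemma \ref{lem:V1n3} is itself a $2\to\infty$ norm of the leave-one-out eigenvector, not a deterministic constant. This is short-circuited by using only the spectral-norm control from Step~1 in place of a sharp $2\to\infty$ bound on $\hat U_{1n}^{(i)}\hat O_n^{(i)}-\hat U_{1n}\hat O_n$, at the cost of a harmless factor $(\bar\theta K/(n\underline\theta))^{1/2}$. The remaining delicate point is ensuring that the remainder $R_{ij}$ arising from the commutator and from the mismatch between $\hat\Sigma_{1n}$ and $\hat\Sigma_{1n}^{(i)}$ does not swallow the contractive gain; this requires invoking Assumption \ref{ass:rate3} so that both $(\log n/\mu_n^\tau)^{1/2}|\sigma_{Kn}|^{-1}$ and the rescaled prefactor in front of the contractive term are comfortably below one.
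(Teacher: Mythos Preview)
Your Step 2 contraction does not go through. To absorb the term $[L_\tau^{(i)}]_{j\cdot}(\hat U_{1n}^{(i)}\hat O_n^{(i)}-\hat U_{1n}\hat O_n)\hat\Sigma_{1n}^{-1}$ into the left-hand side of the $\mathcal N_n$-weighted $2\to\infty$ norm, you need the row-sum $(n_{g_j^0}^\tau/\theta_j^\tau)^{1/2}\sum_k|[L_\tau^{(i)}]_{jk}|(\theta_k^\tau/n_{g_k^0}^\tau)^{1/2}$ to be small after multiplying by $\|\hat\Sigma_{1n}^{-1}\|\approx|\sigma_{Kn}|^{-1}$. But Lemma~\ref{lem:Pij3} and a direct computation give this row-sum of order $\rho_n$, \emph{not} $\rho_n(\log n/\mu_n^\tau)^{1/2}$: there is no $\gamma_n=(\log n/\mu_n^\tau)^{1/2}$ factor hiding in $\sum_k A_{jk}^{(i)}\theta_k^{1/2}/[\theta_j^{1/2}(d_j^\tau d_k^\tau)^{1/2}]$. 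The resulting ``contraction'' coefficient is therefore $\rho_n/|\sigma_{Kn}|$, which is typically large (in the two-block example after Assumption~\ref{ass:rate} it equals $2a/(a-b)>2$), so the term cannot be moved to the left.

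The paper sidesteps this by replacing $\hat O_n$ with $\hat H_n=\hat U_{1n}^TU_{1n}$ (and $\hat O_n^{(i)}$ with $\hat H_n^{(i)}$) before any eigenvalue inversion. Two things then happen. First, the self-referential contribution carries the factor $\|\hat O_n^{(i)}-\hat H_n^{(i)}\|=O(\gamma_n^2/\sigma_{Kn}^2)$ from \cite{abbe2017}'s Lemma~3, together with a commutator term of size $O(\gamma_n/|\sigma_{Kn}|)$; both are small under Assumption~\ref{ass:rate3}, so the rearrangement is legitimate. Second, and crucially, the algebraic identity $\hat U_{1n}\hat\Sigma_{1n}\hat H_n=\hat U_{1n}\hat\Sigma_{1n}\hat U_{1n}^TU_{1n}\approx L_\tau'U_{1n}$ converts the main term into $(L_\tau'-L_\tau^{(i)})U_{1n}$, i.e.\ the Laplacian difference acting on the \emph{deterministic} population eigenvector. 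Lemma~\ref{lem:V1n3} is then applied with $V_n^{(i)}=U_{1n}$, whose $2\to\infty$ norm is known exactly from Theorem~\ref{thm:id3}, so no circularity arises at that step. Your Step~3 device---swapping $\hat U_{1n}\hat O_n$ for $\hat U_{1n}^{(i)}\hat O_n^{(i)}$ to restore independence---is the right instinct, but it cannot rescue Step~2: the expansion $L_\tau^{(i)}(\text{difference})\hat\Sigma_{1n}^{-1}$ is where the non-contractive coefficient enters, before any concentration is invoked.
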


\begin{proof}
	Recall the definitions of $\hat{H}_n$, $\hat{O}_n$, $\hat{H}^{(i)}_n$, and $%
	\hat{O}^{(i)}_n$ before Lemma \ref{lem:B3}. Let $\gamma
	_{n}=(\log(n)/\mu_n^\tau)^{1/2}$ and recall that $\Vert L_{\tau }^{\prime }-%
	\mathcal{L}_{\tau }^{\prime }\Vert \leq 7\gamma_n$ a.s. By Lemma 3 in %
	\citet{abbe2017}\footnote{%
		Note that in the notation of \cite{abbe2017}, $(H,\text{sgn}(H))=(\hat{H}
		_{n},\hat{O}_{n})$ (or $(\hat{H}_{n}^{(i)},\hat{O}_{n}^{(i)})$), $U^{\ast
		}=U_{1n}$, $E=L_{\tau }^{\prime }-\mathcal{L}_{\tau }^{\prime }$ (or $%
		L_{\tau }^{(i)}-\mathcal{L}_{\tau }^{\prime }$), and $\bar{\gamma}=7\gamma
		_{n}/(|\sigma _{Kn}|-7\gamma _{n})$ for some absolute constant $c>0$.} and
	Lemma \ref{lem:dk3}, we have 
	\begin{equation}  \label{eq:OH}
	\Vert \hat{O}_{n}-\hat{H}_{n}\Vert ^{1/2} \leq \frac{7r_n/(|\sigma_{Kn}| -
		7r_n)}{1-7r_n/(|\sigma_{Kn}| - 7r_n)} \leq
	7.01(\log(n)/\mu_n^\tau)^{1/2}|\sigma_{Kn}^{-1}| \quad a.s.,
	\end{equation}%
	where we use the fact that 
	\begin{align*}
	(\log(n)/\mu_n^\tau)^{1/2}|\sigma_{Kn}|^{-1} \leq 0.0001.
	\end{align*}
	Similarly, by Lemma \ref{lem:Li-LDC}, we have 
	\begin{equation}  \label{eq:OHi}
	\sup_{i}\Vert \hat{O}_{n}^{(i)}-\hat{H}_{n}^{(i)}\Vert ^{1/2} \leq
	4.41(\log(n)/\mu_n^\tau)^{1/2}|\sigma_{Kn}^{-1}|\mbox{ }a.s.
	\end{equation}%
	Then%
	\begin{align*}
	& \Vert \mathcal{N}_{n}(\hat{U}_{1n}\hat{O}_{n}-\hat{U}_{1n}^{(i)}\hat{O}%
	_{n}^{(i)})\Vert _{2\rightarrow \infty } \\
	\leq & \Vert \mathcal{N}_{n}\hat{U}_{1n}(\hat{O}_{n}-\hat{H}_{n})\Vert
	_{2\rightarrow \infty }+\Vert \mathcal{N}_{n}\hat{U}_{1n}^{(i)}(\hat{O}%
	_{n}^{(i)}-\hat{H}_{n}^{(i)})\Vert _{2\rightarrow \infty }+\Vert \mathcal{N}%
	_{n}(\hat{U}_{1n}\hat{H}_{n}-\hat{U}_{1n}^{(i)}\hat{H}_{n}^{(i)})\Vert
	_{2\rightarrow \infty } \\
	\leq & \Vert \mathcal{N}_{n}\hat{U}_{1n}\Vert _{2\rightarrow \infty }\Vert 
	\hat{O}_{n}-\hat{H}_{n}\Vert +\Vert \mathcal{N}_{n}\hat{U}_{1n}^{(i)}\Vert
	_{2\rightarrow \infty }\Vert \hat{O}_{n}^{(i)}-\hat{H}_{n}^{(i)}\Vert +\Vert 
	\mathcal{N}_{n}(\hat{U}_{1n}\hat{H}_{n}-\hat{U}_{1n}^{(i)}\hat{H}%
	_{n}^{(i)})\Vert _{2\rightarrow \infty } \\
	\leq & 49.15\psi _{n}\gamma _{n}^{2}\sigma _{Kn}^{-2}+19.45\Vert \mathcal{N}%
	_{n}\hat{U}_{1n}^{(i)}\hat{O}_{n}^{(i)}\Vert _{2\rightarrow \infty }\gamma
	_{n}^{2}\sigma _{Kn}^{-2}+\Vert \mathcal{N}_{n}(\hat{U}_{1n}\hat{H}_{n}-\hat{%
		U}_{1n}^{(i)}\hat{H}_{n}^{(i)})\Vert _{2\rightarrow \infty } \\
	\leq & 68.6\psi _{n}\gamma _{n}^{2}\sigma _{Kn}^{-2}+19.45\Vert \mathcal{N}%
	_{n}(\hat{U}_{1n}\hat{O}_{n}-\hat{U}_{1n}^{(i)}\hat{O}_{n}^{(i)})\Vert
	_{2\rightarrow \infty }\gamma _{n}^{2}\sigma _{Kn}^{-2}+\Vert \mathcal{N}%
	_{n}(\hat{U}_{1n}\hat{H}_{n}-\hat{U}_{1n}^{(i)}\hat{H}_{n}^{(i)})\Vert
	_{2\rightarrow \infty },
	\end{align*}%
	where the first inequality holds by the triangle inequality, the second
	inequality holds by the fact that $\Vert AB\Vert _{2\rightarrow \infty }\leq
	\Vert A\Vert _{2\rightarrow \infty }\Vert B\Vert ,$ the third inequality
	holds by \eqref{eq:OH}, \eqref{eq:OHi}, and the assumption that $\Vert 
	\mathcal{N}_{n}\hat{U}_{1n}\Vert _{2\rightarrow \infty }\leq \psi _{n}$, and
	the last inequality holds by the triangle inequality and another use of $%
	\Vert \mathcal{N}_{n}\hat{U}_{1n}\Vert _{2\rightarrow \infty }\leq \psi _{n}$%
	. By rearranging terms and the fact that $\gamma_n|\sigma _{Kn}^{-1}| \leq
	0.0001$, we have 
	\begin{equation}
	\Vert \mathcal{N}_{n}(\hat{U}_{1n}\hat{O}_{n}-\hat{U}_{1n}^{(i)}\hat{O}%
	_{n}^{(i)})\Vert _{2\rightarrow \infty }\leq 68.74\biggl[\log(n)
	(\mu_n^\tau)^{-1}\sigma _{Kn}^{-2}\psi _{n}+\Vert \mathcal{N}_{n}(\hat{U}%
	_{1n}\hat{H}_{n}-\hat{U}_{1n}^{(i)}\hat{H}_{n}^{(i)})\Vert _{2\rightarrow
		\infty }\biggr].  \label{eq:Lem7a}
	\end{equation}%
	In addition, by Lemma 3 in \citet{abbe2017},\footnote{%
		Note that in the notation of \cite{abbe2017}, $H=\hat{H}_{n}$ (or $\hat{H}%
		_{n}^{(i)}$), $E=L_{\tau }^{\prime }-\mathcal{L}_{\tau }^{\prime }$ (or $%
		L_{\tau }^{(i)}-\mathcal{L}_{\tau }^{\prime }$), and $\Lambda =\widehat{%
			\Sigma }_{1n}$ (or $\widehat{\Sigma }_{1n}^{(i)}$) for some absolute
		constant $c>0$.} Lemma \ref{lem:dk3}, and Lemma \ref{lem:Li-LDC}, we have 
	\begin{equation*}
	\Vert \hat{\Sigma}_{1n}\hat{H}_{n}-\hat{H}_{n}\hat{\Sigma}_{1n}\Vert \leq
	2\Vert L_{\tau }^{\prime }-\mathcal{L}_{\tau }^{\prime }\Vert \leq 14
	\gamma_n \mbox{ }a.s.
	\end{equation*}%
	and 
	\begin{equation*}
	\sup_{i}\Vert \hat{\Sigma}_{1n}^{(i)}\hat{H}_{n}^{(i)}-\hat{H}_{n}^{(i)}\hat{%
		\Sigma}_{1n}^{(i)}\Vert \leq 2\Vert L_{\tau }^{(i)}-\mathcal{L}_{\tau
	}^{\prime }\Vert \leq 8.8\gamma_n\mbox{ }a.s.
	\end{equation*}%
	Therefore, 
	\begin{align}
	& \Vert \mathcal{N}_{n}(\hat{U}_{1n}\hat{H}_{n}-\hat{U}_{1n}^{(i)}\hat{H}%
	_{n}^{(i)})\Vert _{2\rightarrow \infty }  \notag \\
	\leq & \Vert \mathcal{N}_{n}\hat{U}_{1n}(\hat{\Sigma}_{1n}\hat{H}_{n}-\hat{H}%
	_{n}\hat{\Sigma}_{1n})\hat{\Sigma}_{1n}^{-1}\Vert _{2\rightarrow \infty
	}+\Vert \mathcal{N}_{n}\hat{U}_{n}^{(i)}(\hat{\Sigma}_{1n}^{(i)}\hat{H}%
	_{n}^{(i)}-\hat{H}_{n}^{(i)}\hat{\Sigma}_{1n}^{(i)})(\hat{\Sigma}%
	_{1n}^{(i)})^{-1}\Vert _{2\rightarrow \infty }  \notag \\
	& +\Vert \mathcal{N}_{n}(\hat{U}_{1n}\hat{\Sigma}_{1n}\hat{H}_{n}\hat{\Sigma}%
	_{1n}^{-1}-\hat{U}_{1n}^{(i)}\hat{\Sigma}_{1n}^{(i)}\hat{H}_{n}^{(i)}(\hat{%
		\Sigma}_{1n}^{(i)})^{-1})\Vert _{2\rightarrow \infty }  \notag \\
	\leq & \Vert \mathcal{N}_{n}\hat{U}_{1n}\Vert _{2\rightarrow \infty }\Vert 
	\hat{\Sigma}_{1n}\hat{H}_{n}-\hat{H}_{n}\hat{\Sigma}_{1n}\Vert \Vert \hat{%
		\Sigma}_{1n}^{-1}\Vert +\Vert \mathcal{N}_{n}\hat{U}_{n}^{(i)}\Vert
	_{2\rightarrow \infty }\Vert \hat{\Sigma}_{1n}^{(i)}\hat{H}_{n}^{(i)}-\hat{H}%
	_{n}^{(i)}\hat{\Sigma}_{1n}^{(i)}\Vert \Vert (\hat{\Sigma}%
	_{1n}^{(i)})^{-1}\Vert  \notag \\
	& +\Vert \mathcal{N}_{n}(L_{\tau }^{\prime }U_{1n}\hat{\Sigma}%
	_{1n}^{-1}-L_{\tau }^{(i)}U_{1n}(\hat{\Sigma}_{1n}^{(i)})^{-1})\Vert
	_{2\rightarrow \infty }  \notag \\
	\leq & \{22.8\psi _{n}+8.8\Vert \mathcal{N}_{n}(\hat{U}_{1n}\hat{O}_{n}-\hat{%
		U}_{1n}^{(i)}\hat{O}_{n}^{(i)})\Vert _{2\rightarrow \infty
	}\}\gamma_n|\sigma _{Kn}^{-1}|  \notag \\
	& +\Vert \mathcal{N}_{n}L_{\tau }^{\prime }U_{1n}(\hat{\Sigma}_{1n}^{-1}-(%
	\hat{\Sigma}_{1n}^{(i)})^{-1})\Vert _{2\rightarrow \infty } +\Vert \mathcal{N%
	}_{n}(L_{\tau }^{(i)}-L_{\tau }^{\prime })U_{1n}(\hat{\Sigma}%
	_{1n}^{(i)})^{-1}\Vert _{2\rightarrow \infty }  \notag \\
	\leq & (22.8\psi _{n}+8.8\Vert \mathcal{N}_{n}(\hat{U}_{1n}\hat{O}_{n}-\hat{U%
	}_{1n}^{(i)}\hat{O}_{n}^{(i)})\Vert _{2\rightarrow \infty })\gamma_n|\sigma
	_{Kn}^{-1}|  \notag \\
	& +5.01\gamma_n|\sigma _{Kn}^{-1}|+1.01\Vert \mathcal{N}_{n}(L_{\tau
	}^{(i)}-L_{\tau }^{\prime })U_{1n}\Vert _{2\rightarrow \infty }|\sigma
	_{Kn}^{-1}|,  \label{eq:Lem7b}
	\end{align}%
	where the first inequality holds by the triangle inequality, the second
	inequality holds by the fact that $\Vert AB\Vert _{2\rightarrow \infty }\leq
	\Vert A\Vert _{2\rightarrow \infty }\Vert B\Vert $, $\hat{H}_{n}=\hat{U}%
	_{1n}^{T}U_{1n}$, and $\hat{H}_{n}^{(i)}=(\hat{U}_{n}^{(i)})^{T}U_{1n}$, the
	third inequality holds by the fact that $\Vert \mathcal{N}_{n}\hat{U}%
	_{1n}\Vert _{2\rightarrow \infty }\leq \psi _{n}$ and 
	\begin{align*}
	\Vert \mathcal{N}_{n}\hat{U}_{1n}^{(i)}\Vert _{2\rightarrow \infty }=& \Vert 
	\mathcal{N}_{n}\hat{U}_{1n}^{(i)}\hat{O}_{n}^{(i)}\Vert _{2\rightarrow
		\infty }\leq \Vert \mathcal{N}_{n}(\hat{U}_{1n}\hat{O}_{n}-\hat{U}_{1n}^{(i)}%
	\hat{O}_{n}^{(i)})\Vert _{2\rightarrow \infty }+\Vert \mathcal{N}_{n}\hat{U}%
	_{1n}\hat{O}_{n}\Vert _{2\rightarrow \infty } \\
	=& \psi _{n}+\Vert \mathcal{N}_{n}(\hat{U}_{1n}\hat{O}_{n}-\hat{U}_{1n}^{(i)}%
	\hat{O}_{n}^{(i)})\Vert _{2\rightarrow \infty }\mbox{ }a.s.,
	\end{align*}%
	and the last inequality holds by Lemma \ref{lem:DiDC}(iii) below. Finally,
	we bound the term $\Vert \mathcal{N}_{n}(L_{\tau }^{(i)}-L_{\tau }^{\prime
	})U_{1n}\Vert _{2\rightarrow \infty }. $ We have 
	\begin{align}
	& \Vert \mathcal{N}_{n}(L_{\tau }^{(i)}-L_{\tau }^{\prime })U_{1n}\Vert
	_{2\rightarrow \infty }  \notag \\
	=& \Vert \mathcal{N}_{n}(\mathcal{D}_{\tau }^{-1/2}A^{(i)}\mathcal{D}_{\tau
	}^{-1/2}-D_{\tau }^{-1/2}AD_{\tau }^{-1/2})U_{1n}\Vert _{2\rightarrow \infty
	}  \notag \\
	\leq & \Vert \mathcal{N}_{n}\mathcal{D}_{\tau }^{-1/2}(A^{(i)}-A)\mathcal{D}%
	_{\tau }^{-1/2}U_{1n}\Vert _{2\rightarrow \infty }+\Vert \mathcal{N}_{n}(%
	\mathcal{D}_{\tau }^{-1/2}-D_{\tau }^{-1/2})A\mathcal{D}_{\tau
	}^{-1/2}U_{1n}\Vert _{2\rightarrow \infty }  \notag \\
	& +\Vert \mathcal{N}_{n}D_{\tau }^{-1/2}A(\mathcal{D}_{\tau }^{-1/2}-D_{\tau
	}^{-1/2})U_{1n}\Vert _{2\rightarrow \infty }  \notag \\
	\leq & \Vert \mathcal{N}_{n}\mathcal{D}_{\tau }^{-1/2}(A^{(i)}-A)\mathcal{D}%
	_{\tau }^{-1/2}U_{1n}\Vert _{2\rightarrow \infty }+4.5\rho _{n}\gamma _{n}%
	\mbox{ }a.s.  \label{eq:Lem7c}
	\end{align}%
	where the first inequality hold by the triangle inequality and the second
	inequality holds by Lemma \ref{lem:DiDC}. In addition, by (\ref{AA}) we have 
	\begin{equation}
	\Vert \lbrack \mathcal{N}_{n}\mathcal{D}_{\tau }^{-1/2}(A^{(i)}-A)\mathcal{D}%
	_{\tau }^{-1/2}U_{1n}]_{s\cdot }\Vert =%
	\begin{cases}
	\Vert (n_{g_{s}^{0}}^{\tau })^{1/2}(\theta _{s}^{\tau })^{-1/2}(d_{s}^{\tau
	}d_{i}^{\tau })^{-1/2}(A_{si}-P_{si})u_{i}\Vert & \quad s\neq i \\ 
	\Vert (n_{g_{s}^{0}}^{\tau })^{1/2}(\theta _{s}^{\tau })^{-1/2}(d_{i}^{\tau
	})^{-1/2}([A]_{i\cdot }-[P]_{i\cdot })\mathcal{D}_{\tau }^{-1/2}U_{1n}\Vert
	& \quad s=i,%
	\end{cases}
	\label{eq:Lem7d}
	\end{equation}%
	where $u_i^T$ is the i's row of $U_{1n}$. By Assumption \ref{ass:nk3} and
	the fact that $||(n_{g_{i}^{0}}^{\tau })^{1/2}(\theta _{i}^{\tau
	})^{-1/2}u_{i} || = 1$, 
	\begin{equation}
	\sup_{i}\Vert (n_{g_{s}^{0}}^{\tau })^{1/2}(\theta _{s}^{\tau
	})^{-1/2}(d_{s}^{\tau }d_{i}^{\tau })^{-1/2}(A_{si}-P_{si})u_{1i}\Vert \leq
	C_1^{1/2}c_1^{-1/2}\overline{\theta }^{1/2}\underline{\theta }%
	^{-1/2}(\mu^\tau _{n})^{-1}\mbox{ }a.s..  \label{eq:Lem7e}
	\end{equation}%
	By Lemma \ref{lem:V1n3} and the facts that 
	\begin{equation*}
	\Vert U_{1n}\Vert _{2\rightarrow \infty }\leq c_1^{-1/2}\overline{\theta }%
	^{1/2}(K/n)^{1/2}\quad \text{and}\quad \Vert U_{1n}\Vert =1\mbox{ }a.s.,
	\end{equation*}%
	we have 
	\begin{align}
	&\sup_{i}(n_{g_{s}^{0}}^{\tau })^{1/2}(\theta _{s}^{\tau
	})^{-1/2}(d_{i}^{\tau })^{-1/2}\Vert ([A]_{i\cdot }-[P]_{i\cdot })\mathcal{D}%
	_{\tau }^{-1/2}U_{1n}\Vert  \notag \\
	\leq & 6C_1^{1/2}\biggl(\frac{c_1^{-1/2}\overline{\theta }^{1/2}(\log(n) +
		\log(5)K)}{\mu _{n}^{\tau }\underline{\theta }^{1/2}} \vee \frac{%
		(\log(n)+\log(5)K)^{1/2}\rho _{n}^{1/2}\overline{\theta }^{1/4}}{(\mu
		_{n}^{\tau })^{1/2}\underline{\theta }^{1/4}K^{1/2}}\biggr)\mbox{ }a.s.
	\label{eq:Lem7f}
	\end{align}%
	Combining (\ref{eq:Lem7d})--(\ref{eq:Lem7f}) with the fact that 
	\begin{align*}
	& \frac{(\log(n)+\log(5)K)^{1/2}\rho _{n}^{1/2}\overline{ \theta }^{1/4}}{%
		(\mu _{n}^{\tau })^{1/2}\underline{\theta }^{1/4}K^{1/2}} \geq \frac{%
		c_1^{-1/2}\overline{\theta }^{1/2}(\log(n) + \log(5)K)}{\mu _{n}^{\tau } 
		\underline{\theta }^{1/2}} \\
	\iff & \frac{\overline{\theta}^{1/2}(\log(n) + \log(5)K)K}{%
		c_1\mu_n^\tau \underline{\theta}^{1/2} \rho_n} \leq 1
	\end{align*}
	under Assumption \ref{ass:rate3} (as $\rho _{n}\geq 1$), we have 
	\begin{equation}
	\Vert \mathcal{N}_{n}\mathcal{D}_{\tau }^{-1/2}(A^{(i)}-A)\mathcal{D}_{\tau
	}^{-1/2}U_{1n}\Vert _{2\rightarrow \infty }\leq \frac{6C_1^{1/2}(\log(n)+%
		\log(5)K)^{1/2}\rho _{n}^{1/2}\overline{ \theta }^{1/4}}{(\mu _{n}^{\tau
		})^{1/2}\underline{\theta }^{1/4}K^{1/2}}\mbox{ }a.s.  \label{eq:Lem7g}
	\end{equation}%
	Substituting (\ref{eq:Lem7g}) into (\ref{eq:Lem7c}), we have 
	\begin{equation}
	\Vert \mathcal{N}_{n}(L_{\tau }^{(i)}-L_{\tau }^{\prime })U_{1n}\Vert
	_{2\rightarrow \infty }\leq 6C_1^{1/2} \gamma_n\left(\rho _{n}+ \frac{(\frac{%
			1}{K} + \frac{\log(5)}{\log(n)})^{1/2} \rho _{n}^{1/2}\overline{\theta }%
		^{1/4}}{\underline{\theta }^{1/4}}\right)\mbox{ }a.s.  \label{eq:Lem7h}
	\end{equation}%
	Combining (\ref{eq:Lem7h}) with (\ref{eq:Lem7a})-(\ref{eq:Lem7b}), we have 
	\begin{align*}
	& \Vert \mathcal{N}_{n}(\hat{U}_{1n}\hat{O}_{n}-\hat{U}_{1n}^{(i)}\hat{O}%
	_{n}^{(i)})\Vert _{2\rightarrow \infty } \\
	\leq & 68.74\biggl[(22.9\psi _{n}+8.8\Vert \mathcal{N}_{n}(\hat{U}_{1n}\hat{O%
	}_{n}-\hat{U}_{1n}^{(i)}\hat{O}_{n}^{(i)})\Vert _{2\rightarrow \infty })
	\gamma_n|\sigma _{Kn}^{-1}|+ 5.01\gamma_n|\sigma _{Kn}^{-1}| \\
	& +6.06C_1^{1/2}\left(\rho _{n}+ \frac{(\frac{1}{K} + \frac{\log(5)}{\log(n)}%
		)^{1/2} \rho _{n}^{1/2} \overline{\theta }^{1/4}}{\underline{\theta }^{1/4}}%
	\right)\gamma_n|\sigma _{Kn}^{-1}|\biggr] \\
	\leq & 604.92 \gamma_n|\sigma _{Kn}^{-1}|\Vert \hat{U}_{1n}\hat{O}_{n}-\hat{U%
	}_{1n}^{(i)}\hat{O}_{n}^{(i)}\Vert _{2\rightarrow \infty } \\
	& + 1574.15C_1^{1/2}\gamma_n|\sigma _{Kn}^{-1}|\left(\psi _{n}+\rho _{n}+%
	\frac{(\frac{1}{K} + \frac{\log(5)}{\log(n)})^{1/2} \rho _{n}^{1/2} 
		\overline{\theta }^{1/4}}{\underline{\theta }^{1/4}}\right)\mbox{ }a.s.
	\end{align*}%
	By rearranging terms and the fact that $\gamma _{n}|\sigma _{Kn}^{-1}| \leq
	0.0001$, we have, 
	\begin{equation*}
	\Vert \mathcal{N}_{n}(\hat{U}_{1n}\hat{O}_{n}-\hat{U}_{1n}^{(i)}\hat{O}%
	_{n}^{(i)})\Vert _{2\rightarrow \infty }\leq 1676C_1^{1/2}\gamma_n|\sigma
	_{Kn}^{-1}|\left(\psi _{n}+\rho _{n}+\frac{(\frac{1}{K} + \frac{\log(5)}{%
			\log(n)})^{1/2} \rho _{n}^{1/2} \overline{\theta }^{1/4}}{\underline{\theta }%
		^{1/4}}\right)\mbox{ }a.s.
	\end{equation*}
\end{proof}

\begin{lem}
	\label{lem:DiDC} Let $\gamma _{n}=(\log (n)/\mu _{n}^{\tau })^{1/2}$.
	Suppose that conditions in Theorem \ref{thm:main_DC} hold. Then, almost
	surely,
	
	(i) $\sup_{i}(n_{g_{i}^{0}}^{\tau })^{1/2}(\theta _{i}^{\tau })^{-1/2}(\hat{d%
	}_{i}^{\tau })^{-1/2}\Vert \lbrack A]_{i\cdot }(D_{\tau }^{-1/2}-\mathcal{D}%
	_{\tau }^{-1/2})U_{1n}\Vert \leq 2.25C_1^{1/2}c_1^{-1/2}\rho _{n}\gamma_n,$
	
	(ii) $\sup_{i}(n_{g_{i}^{0}}^{\tau })^{1/2}(\theta _{i}^{\tau })^{-1/2}\Vert
	((\hat{d}_{i}^{\tau })^{-1/2}-(d_{i}^{\tau })^{-1/2})[A]_{i\cdot }\mathcal{D}%
	_{\tau }^{-1/2})U_{1n}\Vert \leq 2.25C_1^{1/2}c_1^{-1/2}\rho _{n}\gamma_n,$ 
	
	(iii) $\Vert \mathcal{N}_{n}L_{\tau }^{\prime }U_{1n}(\hat{\Sigma}%
	_{1n}^{-1}-(\hat{\Sigma}_{1n}^{(i)})^{-1})\Vert _{2\rightarrow \infty }\leq
	5.01\gamma_n|\sigma _{Kn}^{-1}|.$
\end{lem}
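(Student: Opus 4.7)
\textbf{Proof proposal for Lemma \ref{lem:DiDC}.}

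Parts (i) and (ii) share a common strategy: in each case the offending term contains a difference of two inverse square roots ($D_\tau^{-1/2}-\mathcal{D}_\tau^{-1/2}$ for (i), the scalar $(\hat d_i^\tau)^{-1/2}-(d_i^\tau)^{-1/2}$ for (ii)), and this difference is already controlled in the proof of Lemma \ref{lem:dk3}. The plan is to \emph{factor the random piece out} so that what remains is a norm bound of order $\gamma_n$ times a ``size'' factor that is handled by Lemma \ref{lem:Pij3} / eq.~\eqref{eq:pidotdc3}. Concretely, for (i) I would write
\begin{equation*}
[A]_{i\cdot}\bigl(D_\tau^{-1/2}-\mathcal{D}_\tau^{-1/2}\bigr)U_{1n}
= [A]_{i\cdot}\mathcal{D}_\tau^{-1/2}\bigl(\mathcal{D}_\tau^{1/2}D_\tau^{-1/2}-I\bigr)U_{1n},
\end{equation*}
submultiplicativity gives an upper bound of $\Vert [A]_{i\cdot}\mathcal{D}_\tau^{-1/2}\Vert\cdot \Vert\mathcal{D}_\tau^{1/2}D_\tau^{-1/2}-I\Vert\cdot \Vert U_{1n}\Vert$, and the middle factor is at most $2.09\gamma_n$ a.s.\ by the computation leading to (A.7) in the proof of Lemma \ref{lem:dk3}, while $\Vert U_{1n}\Vert=1$. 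To bound the scalar quantity $(n_{g_i^0}^\tau)^{1/2}(\theta_i^\tau)^{-1/2}(\hat d_i^\tau)^{-1/2}\Vert [A]_{i\cdot}\mathcal{D}_\tau^{-1/2}\Vert$, I split $A=P+(A-P)$: for the $P$-piece I use $(n_{g_i^0}^\tau)^{1/2}(\theta_i^\tau)^{-1/2}\Vert[P]_{i\cdot}\mathcal{D}_\tau^{-1/2}\Vert\le C_1^{1/2}\rho_n(d_i/K)^{1/2}$ from \eqref{eq:pidotdc3} combined with $d_i\le d_i^\tau$ and $(\hat d_i^\tau)^{-1/2}\le 1.0214(d_i^\tau)^{-1/2}$ from \eqref{eq:di3}; for the $(A-P)$-piece I use $\Vert[A-P]_{i\cdot}\mathcal{D}_\tau^{-1/2}\Vert^2\le \sum_j A_{ij}/d_j^\tau+\sum_j P_{ij}/d_j^\tau$ (the first sum being $\hat d_i/\mu_n^\tau\le 1.05\,d_i^\tau/\mu_n^\tau$ a.s.\ by Bernstein on $\hat d_i-d_i$, and the second absorbed by the $P$-computation). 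After assembling the constants this yields the $2.25 C_1^{1/2}c_1^{-1/2}\rho_n\gamma_n$ bound.

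For (ii), the factorization is
\begin{equation*}
\bigl((\hat d_i^\tau)^{-1/2}-(d_i^\tau)^{-1/2}\bigr)[A]_{i\cdot}\mathcal{D}_\tau^{-1/2}U_{1n}
= \bigl((d_i^\tau)^{1/2}(\hat d_i^\tau)^{-1/2}-1\bigr)(d_i^\tau)^{-1/2}[A]_{i\cdot}\mathcal{D}_\tau^{-1/2}U_{1n},
\end{equation*}
with $|(d_i^\tau)^{1/2}(\hat d_i^\tau)^{-1/2}-1|\le 0.0214$ a.s.\ and, more importantly, $\le$ a constant times $\gamma_n$ from the Bernstein bound in the proof of Lemma \ref{lem:dk3} (cf.~\eqref{eq:di3}). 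The residual factor $(n_{g_i^0}^\tau)^{1/2}(\theta_i^\tau)^{-1/2}(d_i^\tau)^{-1/2}\Vert[A]_{i\cdot}\mathcal{D}_\tau^{-1/2}U_{1n}\Vert$ is then bounded by the same $A=P+(A-P)$ split as in (i), using $\Vert U_{1n}\Vert=1$ to reduce to $\Vert[A]_{i\cdot}\mathcal{D}_\tau^{-1/2}\Vert$.

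Part (iii) is the main obstacle and requires a slightly different argument. I would exploit the algebraic identity
\begin{equation*}
\hat{\Sigma}_{1n}^{-1}-(\hat{\Sigma}_{1n}^{(i)})^{-1}
= \hat{\Sigma}_{1n}^{-1}\bigl(\hat{\Sigma}_{1n}^{(i)}-\hat{\Sigma}_{1n}\bigr)(\hat{\Sigma}_{1n}^{(i)})^{-1}.
\end{equation*}
By the Hoffman--Wielandt / Weyl inequality, $\Vert \hat{\Sigma}_{1n}^{(i)}-\hat{\Sigma}_{1n}\Vert\le\Vert L_\tau^{(i)}-L_\tau^\prime\Vert\le\Vert L_\tau^{(i)}-\mathcal L_\tau^\prime\Vert+\Vert\mathcal L_\tau^\prime-L_\tau^\prime\Vert\le(4.4+7)\gamma_n$ a.s.\ by Lemmas \ref{lem:Li-LDC} and \ref{lem:dk3}. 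Under Assumption \ref{ass:rate3} both $|\hat\sigma_{Kn}|$ and $|\hat\sigma_{Kn}^{(i)}|$ exceed $0.999|\sigma_{Kn}|$ a.s., so the operator norm of the difference of inverses is at most a small constant times $\gamma_n/|\sigma_{Kn}|^2$. For the two-to-infinity factor I write $L_\tau^\prime U_{1n}=U_{1n}\Sigma_n+(L_\tau^\prime-\mathcal L_\tau^\prime)U_{1n}$, so
\begin{equation*}
\Vert\mathcal N_n L_\tau^\prime U_{1n}\Vert_{2\to\infty}
\le |\sigma_{1n}|\Vert\mathcal N_n U_{1n}\Vert_{2\to\infty}
+ \Vert\mathcal N_n(L_\tau^\prime-\mathcal L_\tau^\prime)U_{1n}\Vert_{2\to\infty},
\end{equation*}
where the first term is at most $|\sigma_{1n}|\le 1$ by Theorem \ref{thm:id3}(2). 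The delicate point here is that I must not circularly invoke Lemma \ref{lem:looDC}; instead the residual $\Vert\mathcal N_n(L_\tau^\prime-\mathcal L_\tau^\prime)U_{1n}\Vert_{2\to\infty}$ should be bounded by the same kind of row-by-row Bernstein/Lemma \ref{lem:V1n3} argument already used to produce \eqref{eq:Lem7h} (taking $V_n^{(i)}=U_{1n}$, which is \emph{deterministic} and hence trivially independent of $[A]_{i\cdot}-[P]_{i\cdot}$, so no leave-one-out is needed). Collecting the constants and using $\gamma_n\le 10^{-4}|\sigma_{Kn}|$ from Assumption \ref{ass:rate3} yields the $5.01\gamma_n|\sigma_{Kn}|^{-1}$ bound. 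The constant bookkeeping in (iii) is the main source of tedium; all the probabilistic inputs are already in place from earlier lemmas.
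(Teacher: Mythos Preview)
Your plan for (iii) is essentially the paper's, but with one slip that costs a factor of $|\sigma_{Kn}|^{-1}$: by first bounding $\Vert\mathcal N_nL_\tau'U_{1n}\Vert_{2\to\infty}\le 1+\cdots$ and then multiplying by $\Vert\hat\Sigma_{1n}^{-1}-(\hat\Sigma_{1n}^{(i)})^{-1}\Vert\lesssim\gamma_n/\sigma_{Kn}^2$, you end up with a bound of order $\gamma_n/\sigma_{Kn}^2$, not $\gamma_n/|\sigma_{Kn}|$; since $|\sigma_{Kn}|$ is allowed to vanish, this does not yield the stated $5.01\gamma_n|\sigma_{Kn}|^{-1}$. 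The remedy is to keep $\Sigma_{1n}$ grouped with the diagonal difference and observe that for diagonal matrices
\[
\Vert\Sigma_{1n}(\hat\Sigma_{1n}^{-1}-(\hat\Sigma_{1n}^{(i)})^{-1})\Vert
=\max_k\Bigl|\frac{\sigma_{kn}(\hat\sigma_{kn}^{(i)}-\hat\sigma_{kn})}{\hat\sigma_{kn}\hat\sigma_{kn}^{(i)}}\Bigr|
\lesssim\frac{\gamma_n}{|\sigma_{Kn}|},
\]
because $|\sigma_{kn}/\hat\sigma_{kn}|\approx 1$ cancels one of the two small denominators. This is exactly how the paper obtains \eqref{eq:sigma12}.

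The more serious gap is in (i) (and by symmetry (ii)). Your submultiplicative step $\Vert[A]_{i\cdot}\mathcal D_\tau^{-1/2}\Vert\cdot\Vert\mathcal D_\tau^{1/2}D_\tau^{-1/2}-I\Vert$ is too coarse on the $(A-P)$ piece. Take the standard SBM ($\theta_i\equiv 1$, $\tau=0$) with $K$ and $\rho_n$ bounded and $\mu_n\asymp\log n$. Then $\Vert[A-P]_{i\cdot}\mathcal D_\tau^{-1/2}\Vert^2\approx\sum_j A_{ij}/d_j\approx d_i/\mu_n\approx 1$, so
\[
(n_{g_i^0}^\tau)^{1/2}(\theta_i^\tau)^{-1/2}(\hat d_i^\tau)^{-1/2}\Vert[A-P]_{i\cdot}\mathcal D_\tau^{-1/2}\Vert
\asymp (n/\log n)^{1/2},
\]
which after multiplication by $\Vert\mathcal D_\tau^{1/2}D_\tau^{-1/2}-I\Vert=O(\gamma_n)=O(1)$ diverges, while the target $\rho_n\gamma_n$ is bounded. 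An $\ell_2$ norm of $[A-P]_{i\cdot}\mathcal D_\tau^{-1/2}$ simply cannot exploit sign cancellation. The paper instead keeps both the diagonal difference and $U_{1n}$ inside the sum, using the \emph{entrywise} bounds $|(D_\tau^{-1/2}-\mathcal D_\tau^{-1/2})_{jj}|\lesssim\log^{1/2}(n)/d_j^\tau$ (note the full power of $d_j^\tau$) and $|[U_{1n}g]_j|\le(\theta_j^\tau/n_{g_j^0}^\tau)^{1/2}\lesssim(K\theta_j/n)^{1/2}$. This produces the weighted sum $\sum_j A_{ij}\theta_j^{1/2}\log^{1/2}(n)/\bigl(\theta_i^{1/2}(d_i^\tau)^{1/2}d_j^\tau\bigr)$, whose $P$-part is $\lesssim\rho_n(\mu_n^\tau)^{-1/2}$ by Lemma~\ref{lem:Pij3} and whose $(A-P)$-part is handled by a scalar Bernstein bound and is of strictly smaller order. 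The extra factor of $(d_j^\tau)^{-1/2}$ gained by retaining the entrywise structure is precisely what your operator-norm factorization throws away.
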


\begin{proof}
	We prove (i) and (iii) as (ii) can be proved in the same manner as (i). In
	fact, (i) and (ii) still hold if $[A]_{i\cdot}$ is replaced by $[P]_{i\cdot}$
	or $([A]_{i\cdot}- [P]_{i\cdot})$ as the proof of (i) suggests that the
	dominant term is given by $[P]_{i\cdot}$. To show (i), let $S^{K-1}=\{g\in
	\Re ^{K},\Vert g\Vert =1\}$. Then, 
	\begin{align*}
	& \sup_{i}(n_{g_{i}^{0}}^{\tau })^{1/2}(\theta _{i}^{\tau })^{-1/2}(\hat{d}%
	_{i}^{\tau })^{-1/2} \Vert [A]_{i\cdot }(D_{\tau }^{-1/2}-\mathcal{D}_{\tau
	}^{-1/2})U_{1n}\Vert \\
	\leq & \sup_{i}\sup_{g\in S^{K-1}}(n_{g_{i}^{0}}^{\tau })^{1/2}(\theta
	_{i}^{\tau })^{-1/2}(\hat{d}_{i}^{\tau })^{-1/2} |[A]_{i\cdot }(D_{\tau
	}^{-1/2}-\mathcal{D}_{\tau }^{-1/2})\mathcal{N}_{n}^{-1}\mathcal{N}%
	_{n}U_{1n}g|
	\end{align*}%
	Let $V(g)=\mathcal{N}_{n}U_{1n}g$, which is an $n\times 1$ vector and $%
	V_{j}(g)$ be the $j$-th element of $V(g)$. Then, we have 
	\begin{equation*}
	\sup_{g\in S^{K-1}}\sup_{j}|V_{j}(g)|\leq \Vert \mathcal{N}_{n}U_{1n}\Vert
	_{2\rightarrow \infty } = 1.
	\end{equation*}%
	Note that 
	\begin{equation*}
	\frac{(A_{ij}-P_{ij})\theta _{j}^{1/2}}{\theta _{i}^{1/2}(d_{i}^{\tau
		})^{1/2}d_{j}^{\tau }} \leq \frac{\bar{\theta}^{1/2}}{\underline{\theta}
		^{1/2}(\mu_n^\tau)^{3/2}}
	\end{equation*}
	and 
	\begin{equation*}
	\sum_{j \neq i}\mathbb{E} \biggl( \frac{(A_{ij}-P_{ij})\theta _{j}^{1/2}}{%
		\theta _{i}^{1/2}(d_{i}^{\tau })^{1/2}d_{j}^{\tau }}\biggr)^2 \leq \sum_{j
		\neq i}\frac{P_{ij}\theta_j}{ \theta_i d_i^\tau (d_j^\tau)^2} \leq \frac{%
		\bar{\theta}}{\underline{\theta}(\mu_n^\tau)^2}.
	\end{equation*}
	Then, by Bernstein inequality, 
	\begin{align}  \label{eq:11}
	\sup_{i}\biggl|\sum_{j\neq i}\frac{(A_{ij}-P_{ij})\theta _{j}^{1/2}}{\theta
		_{i}^{1/2}(d_{i}^{\tau })^{1/2}d_{j}^{\tau }}\biggr|\leq 3\biggl[\frac{ 
		\overline{\theta }^{1/2}\log ^{1/2}(n)}{\underline{\theta }^{1/2}\mu
		_{n}^{\tau }} \vee \frac{\overline{\theta }^{1/2}\log (n)}{\underline{\theta 
		} ^{1/2}(\mu _{n}^{\tau })^{3/2}}\biggr] = 3\frac{ \overline{\theta }%
		^{1/2}\log ^{1/2}(n)}{\underline{\theta }^{1/2}\mu _{n}^{\tau }}\mbox{ }a.s.,
	\end{align}
	where the last equality holds because $\log(n)/\mu_n^\tau \leq 1$.
	
	Therefore, 
	\begin{align*}
	& (n_{g_{i}^{0}}^{\tau })^{1/2}(\theta _{i}^{\tau })^{-1/2}(\hat{d}%
	_{i}^{\tau })^{-1/2}|[A]_{i\cdot }(D_{\tau }^{-1/2}-\mathcal{D}_{\tau
	}^{-1/2})\mathcal{N}_{n}^{-1}V(g)| \\
	\leq & C_1^{1/2}c_1^{-1/2}\theta _{i}^{-1/2}(\hat{d}_{i}^{\tau })^{-1/2}%
	\biggl|\sum_{j\neq i}\frac{A_{ij}V_{j}(g)\theta _{j}^{1/2}(d_{j}^{\tau }-%
		\hat{d}_{j}^{\tau })}{(\hat{d}_{j}^{\tau }d_{j}^{\tau })^{1/2}((\hat{d}%
		_{j}^{\tau })^{1/2}+(d_{j}^{\tau })^{1/2})}\biggr| \\
	\leq & C_1^{1/2}c_1^{-1/2} \frac{1.022\times 4.23}{0.9791\times 1.9791}%
	\sum_{j\neq i}\frac{A_{ij}\theta _{j}^{1/2}\log ^{1/2}(n)}{\theta
		_{i}^{1/2}(d_{i}^{\tau })^{1/2}d_{j}^{\tau }} \\
	\leq & 2.24C_1^{1/2}c_1^{-1/2}\left[\left|\sum_{j\neq i}\frac{%
		(A_{ij}-P_{ij})\theta _{j}^{1/2}}{\theta _{i}^{1/2}(d_{i}^{\tau
		})^{1/2}d_{j}^{\tau }}\right|+\sum_{j=1}^{n}\frac{P_{ij}\theta _{j}^{1/2}}{%
		\theta _{i}^{1/2}(d_{i}^{\tau })^{1/2}d_{j}^{\tau }}\right]\log ^{1/2}(n) \\
	\leq & 2.24C_1^{1/2}c_1^{-1/2}\biggl[3\frac{\overline{\theta }^{1/2}\log
		^{1/2}(n)}{\underline{\theta }^{1/2}\mu _{n}^{\tau }}+\rho _{n}(\mu
	_{n}^{\tau })^{-1/2}\biggr]\log ^{1/2}(n) \\
	\leq & 2.25C_1^{1/2}c_1^{-1/2}\rho _{n}\mu _{n}^{-1/2}\log ^{1/2}(n)\mbox{ }%
	a.s.,
	\end{align*}%
	where the first inequality holds Assumption \ref{ass:nk3}, the second
	inequality holds by \eqref{eq:di3}, \eqref{eq:dhat-d}, and the fact that $%
	\sup_{g\in S^{K-1}}\sup_{j}|V_{j}(g)|\leq 1$, the third inequality is due to
	the triangle inequality, the fourth inequality is due to \eqref{eq:11} and
	the fact that 
	\begin{equation*}
	\sum_{j\neq i}\frac{P_{ij}\theta _{j}^{1/2}}{\theta _{i}^{1/2}(d_{i}^{\tau
		})^{1/2}d_{j}^{\tau }}\leq \sum_{j\neq i}\frac{\rho _{n}\theta _{j}}{%
		n(d_{j}^{\tau })^{1/2}}\leq \rho _{n}(\mu _{n}^{\tau })^{-1/2},
	\end{equation*}%
	and the last inequality holds because 
	\begin{align*}
	3\frac{\overline{\theta }^{1/2}\log ^{1/2}(n)}{\underline{ \theta }^{1/2}\mu
		_{n}^{\tau }} \leq 0.001\rho _{n}(\mu _{n}^{\tau })^{-1/2}.
	\end{align*}
	
	Next, we show (iii). First note that, by Lemmas \ref{lem:dk3} and \ref%
	{lem:Li-LDC}, and the facts that $\Vert \mathcal{N}_{n}U_{1n}\Vert
	_{2\rightarrow \infty }=1$ and $\log^{1/2}(n)(\mu_n^\tau)^{-1/2} \leq
	0.01|\sigma_{Kn}|$, we have 
	\begin{equation}  \label{eq:sigma1}
	\Vert \hat{\Sigma}_{1n}^{-1}-(\hat{\Sigma}_{1n}^{(i)})^{-1}\Vert \leq
	\sup_{i}\sup_{k=1,\cdots ,K}|\frac{\hat{\sigma}_{kn}-\hat{\sigma}_{kn}^{(i)}%
	}{\hat{\sigma}_{kn}\hat{\sigma}_{kn}^{(i)}}|\leq 5\gamma _{n}\sigma
	_{Kn}^{-2}\mbox{ }a.s.
	\end{equation}%
	and 
	\begin{equation}  \label{eq:sigma12}
	\Vert \Sigma _{1n}(\hat{\Sigma}_{1n}^{-1}-(\hat{\Sigma}_{1n}^{(i)})^{-1})%
	\Vert \leq \sup_{i}\sup_{k=1,\cdots ,K}|\frac{\sigma _{kn}(\hat{\sigma}%
		_{kn}- \hat{\sigma}_{kn}^{(i)})}{\hat{\sigma}_{kn}\hat{\sigma}_{kn}^{(i)}}%
	|\leq 5\gamma _{n}|\sigma _{Kn}^{-1}|\mbox{ }a.s..
	\end{equation}
	
	Note that 
	\begin{align}
	& \Vert \mathcal{N}_{n}L_{\tau }^{\prime }U_{1n}(\hat{\Sigma}_{1n}^{-1}-(%
	\hat{\Sigma}_{1n}^{(i)})^{-1})\Vert _{2\rightarrow \infty }  \notag
	\label{eq:LLU0DC} \\
	\leq & \Vert \mathcal{N}_{n}(L_{\tau }^{\prime }-\mathcal{L}_{\tau }^{\prime
	})U_{1n}(\hat{\Sigma}_{1n}^{-1}-(\hat{\Sigma}_{1n}^{(i)})^{-1})\Vert
	_{2\rightarrow \infty }+\Vert \mathcal{N}_{n}\mathcal{L}_{\tau }^{\prime
	}U_{1n}(\hat{\Sigma}_{1n}^{-1}-(\hat{\Sigma}_{1n}^{(i)})^{-1})\Vert
	_{2\rightarrow \infty }  \notag \\
	\leq & \Vert \mathcal{N}_{n}(L_{\tau }^{\prime }-\mathcal{L}_{\tau }^{\prime
	})U_{1n}(\hat{\Sigma}_{1n}^{-1}-(\hat{\Sigma}_{1n}^{(i)})^{-1})\Vert
	_{2\rightarrow \infty }+\Vert \mathcal{N}_{n}U_{1n}\Sigma _{1n}(\hat{\Sigma}%
	_{1n}^{-1}-(\hat{\Sigma}_{1n}^{(i)})^{-1})\Vert _{2\rightarrow \infty } 
	\notag \\
	\leq & \Vert \mathcal{N}_{n}(L_{\tau }^{\prime }-\mathcal{L}_{\tau }^{\prime
	})U_{1n}\Vert _{2\rightarrow \infty }\Vert \hat{\Sigma}_{1n}^{-1}-(\hat{%
		\Sigma}_{1n}^{(i)})^{-1}\Vert +\Vert \mathcal{N}_{n}U_{1n}\Vert
	_{2\rightarrow \infty }\Vert \Sigma _{1n}(\hat{\Sigma}_{1n}^{-1}-(\hat{\Sigma%
	}_{1n}^{(i)})^{-1})\Vert  \notag \\
	\leq & 5\Vert \mathcal{N}_{n}(L_{\tau }^{\prime }-\mathcal{L}_{\tau
	}^{\prime })U_{1n}\Vert _{2\rightarrow \infty }\gamma _{n}\sigma
	_{Kn}^{-2}+5\gamma _{n}|\sigma _{Kn}^{-1}|\mbox{ }a.s.,
	\end{align}%
	where the first inequality holds by the triangle inequality, the second
	inequality holds because $\mathcal{L}_{\tau }^{\prime }U_{1n}=U_{1n}\Sigma
	_{1n}$, the third inequality holds by the fact that $\Vert AB\Vert
	_{2\rightarrow \infty }\leq \Vert A\Vert _{2\rightarrow \infty }\Vert B\Vert 
	$, and the last inequality is due to \eqref{eq:sigma1} and \eqref{eq:sigma12}%
	.
	
	It remains to bound $\Vert \mathcal{N}_{n}(L_\tau^{\prime }-\mathcal{L}%
	_\tau^{\prime })U_{1n}\Vert _{2\rightarrow \infty }$. Note we have 
	\begin{align}
	& \Vert \mathcal{N}_{n}(L_{\tau }^{\prime }-\mathcal{L}_{\tau }^{\prime
	})U_{1n}\Vert _{2\rightarrow \infty }  \notag  \label{eq:IVDC} \\
	=& \Vert \mathcal{N}_{n}(D_{\tau }^{-1/2}AD_{\tau }^{-1/2}-\mathcal{D}_{\tau
	}^{-1/2}P\mathcal{D}_{\tau }^{-1/2})U_{1n}\Vert _{2\rightarrow \infty } 
	\notag \\
	\leq & \Vert \mathcal{N}_{n}(D_{\tau }^{-1/2}-\mathcal{D}_{\tau }^{-1/2})P%
	\mathcal{D}_{\tau }^{-1/2}U_{1n}\Vert _{2\rightarrow \infty }+\Vert \mathcal{%
		N}_{n}D_{\tau }^{-1/2}A(D_{\tau }^{-1/2}-\mathcal{D}_{\tau
	}^{-1/2})U_{1n}\Vert _{2\rightarrow \infty }  \notag \\
	& +\Vert \mathcal{N}_{n}D_{\tau }^{-1/2}(A-P)\mathcal{D}_{\tau
	}^{-1/2}U_{1n}\Vert _{2\rightarrow \infty }  \notag \\
	=:& T_{1}+T_{2}+T_{3}.
	\end{align}%
	By \eqref{eq:di3} and \eqref{eq:dhat-d}, we have 
	\begin{align}
	T_{1}\leq & \sup_{i}(n_{g_{i}^{0}}^{\tau })^{1/2}(\theta _{i}^{\tau
	})^{-1/2}\left|\frac{\hat{d}_{i}^{\tau }-d_{i}^{\tau }}{(\hat{d}_{i}^{\tau
		}d_{i}^{\tau })^{1/2}((\hat{d}_{i}^{\tau })^{1/2}+(d_{i}^{\tau })^{1/2})}%
	\right|\sup_{g\in S^{K-1}}\sum_{j=1}^{n}\frac{P_{ij}|u_{1j}^{T}g|}{%
		(d_{j}^{\tau })^{1/2}}  \notag  \label{eq:IV1DC} \\
	\leq & 2.25C_1^{1/2}c_1^{-1/2}\sum_{j=1}^{n}\frac{P_{ij}\theta _{j}^{1/2}}{%
		\theta _{i}^{1/2}d_{i}^{\tau }(d_{j}^{\tau })^{1/2}}\log ^{1/2}(n)  \notag \\
	\leq & 2.25C_1^{1/2}c_1^{-1/2}\gamma _{n}\rho_n\mbox{ }a.s..
	\end{align}%
	For $T_{2}$, by Lemma \ref{lem:DiDC}(i), we have 
	\begin{align}  \label{eq:IV2DC}
	T_{2}\leq 2.25C_1^{1/2}c_1^{-1/2}\rho _{n}\gamma _{n}\mbox{ }a.s.
	\end{align}
	
	For $T_{3}$, we have 
	\begin{align}
	T_{3}=& \sup_{i}\Vert (n_{g_{i}^{0}}^{\tau })^{1/2}(\theta _{i}^{\tau
	})^{-1/2}(\hat{d}_{i}^{\tau })^{-1/2}([A]_{i\cdot }-[P]_{i\cdot })\mathcal{D}%
	_{\tau }^{-1/2}U_{1n}\Vert  \notag  \label{eq:IV3DC} \\
	\leq & 1.03\sup_{i}\Vert (n_{g_{i}^{0}}^{\tau })^{1/2}(\theta _{i}^{\tau
	})^{-1/2}(d_{i}^{\tau })^{-1/2}([A]_{i\cdot }-[P]_{i\cdot })\mathcal{D}%
	_{\tau }^{-1/2}U_{1n}\Vert  \notag \\
	\leq & 6.18C_1^{1/2}\biggl(\frac{c_1^{-1/2}(\log (n)+\log(5)K)\overline{%
			\theta }^{1/2}}{\mu _{n}^{\tau }\underline{\theta }^{1/2}} \vee \frac{%
		(\log(n)+\log(5)K)^{1/2}\rho _{n}^{1/2}\overline{\theta }^{1/4}}{K^{1/2}(\mu
		_{n}^{\tau })^{1/2}\underline{\theta }^{1/4}}\biggr)  \notag \\
	= & 6.18C_1^{1/2}c_1^{-1/2}\frac{(\log(n)+\log(5)K)^{1/2}\rho _{n}^{1/2}%
		\overline{\theta }^{1/4}}{(\mu _{n}^{\tau })^{1/2}\underline{\theta }%
		^{1/4}K^{1/2}}\mbox{ }a.s.,
	\end{align}%
	where the first inequality holds because of \eqref{eq:di3}, the second
	inequality holds by Lemma \ref{lem:V1n3} and the fact that 
	\begin{equation*}
	\Vert U_{1n}\Vert _{2\rightarrow \infty }=\sup_{i}(\theta _{i}^{\tau
	}/n_{g_{i}^{0}}^{\tau })^{1/2}\leq c_1^{-1/2}(\overline{\theta }K/n)^{1/2},
	\end{equation*}%
	and the last equality holds because 
	\begin{align*}
	\frac{c_1^{-1}\overline{\theta}^{1/2}(\log(n) + \log(5)K)K}{\mu_n^\tau 
		\underline{\theta}^{1/2} \rho_n} \leq 1.
	\end{align*}
	
	Combining \eqref{eq:IVDC}--\eqref{eq:IV3DC}, we have 
	\begin{equation}
	\Vert \mathcal{N}_{n}(L_{\tau }^{\prime }-\mathcal{L}_{\tau }^{\prime
	})U_{1n}\Vert _{2\rightarrow \infty }\leq 6.18C_1^{1/2}c_1^{-1/2}\gamma
	_{n}\left(\rho _{n}+\frac{\rho _{n}^{1/2}\overline{\theta }^{1/4}(\frac{1}{K}
		+ \frac{\log(5)}{\log(n)})^{1/2}}{\underline{\theta }^{1/4}}\right).
	\label{eq:LLUDC}
	\end{equation}%
	Substituting \eqref{eq:LLUDC} into \eqref{eq:LLU0DC}, we have 
	\begin{equation*}
	\Vert LU_{1n}(\hat{\Sigma}_{1n}^{-1}-(\hat{\Sigma}_{1n}^{(i)})^{-1})\Vert
	_{2\rightarrow \infty }\leq 5.01\gamma _{n}|\sigma _{Kn}^{-1}|\mbox{ }a.s.,
	\end{equation*}%
	where we use the fact that 
	\begin{align*}
	5\times 6.18C_1^{1/2}c_1^{-1/2}\frac{\log ^{1/2}(n)\left(\rho _{n}+\frac{%
			\rho _{n}^{1/2}\overline{\theta }^{1/4}(\frac{1}{K} + \frac{\log(5)}{\log(n)}%
			)^{1/2}}{\underline{\theta }^{1/4}}\right)}{(\mu _{n}^{\tau })^{1/2}|\sigma
		_{Kn}|} \leq 0.01
	\end{align*}
	under Assumption \ref{ass:rate3}.
\end{proof}

\section{Additional simulation results\label{sec:addsim}}

In this section, we report some additional simulation results for DGPs 1-4
studied in the paper.

Table \ref{table:1} reports the classification results based on the
eigenvectors corresponding to the largest $K$ eigenvalues of $%
L=D^{-1/2}AD^{-1/2}$. Given an adjacency matrix $A$, $D$ is not invertible
when there exists a node which has degree 0. We also report the percentage
of replications which generate $A$ with strictly positive degrees for each
node in the table, denoted as Ratio. For these realizations, we report the
classification results. In Table \ref{table:1}, \textquotedblleft
CCP\textquotedblright\ indicates the Correct Classification Proportion
criterion; \textquotedblleft NMI\textquotedblright\ means the Normalized
Mutual Information criterion, and \textquotedblleft
kmeans\textquotedblright\ correspond to the classification methods K-means
with default options (Matlab \textquotedblleft kmedoids\textquotedblright ).
We summarize some important findings from Table \ref{table:1}. First, we
have a fair large probability to obtain zero degree for some nodes in DGPs
1--4 because we allow the minimum degree to diverge to infinity at a very
slow rate, namely at rate-$\log (n)$ in DGPs 1 and 3 and rate-$\log
^{5/6}(n) $ in DGPs 2 and 4. Second, the performance of the spectral
classification based on $L$ is not as satisfactory as that based on its
regularized version studied in the paper. This is especially true when $n/K$
is small.

\linespread{1.2}%
\begin{table}[H]%
	\caption{\label{table:1} Classification results based on $L = D^{-1/2} A
		D^{-1/2}$} \centering%
	\begin{tabular}{ccclcc}
		\hline\hline
		DGP & $K$ \  & $n/K$ \  & Ratio & CCP & NMI \\ \hline
		1 & 2 & 50 & \multicolumn{1}{c}{0.646} & 0.9805 & 0.8827 \\ 
		& 2 & 200 & \multicolumn{1}{c}{0.638} & 0.9927 & 0.9476 \\ \hline
		2 & 3 & 50 & \multicolumn{1}{c}{0.364} & 0.9751 & 0.9073 \\ 
		& 3 & 200 & \multicolumn{1}{c}{0.166} & 0.9906 & 0.9585 \\ \hline
		3 & 2 & 50 & \multicolumn{1}{c}{0.104} & 0.9651 & 0.7523 \\ 
		& 2 & 200 & \multicolumn{1}{c}{0.000} & -- & -- \\ \hline
		4 & 3 & 50 & \multicolumn{1}{c}{0.038} & 0.9543 & 0.7458 \\ 
		& 3 & 200 & \multicolumn{1}{c}{0.000} & -- & -- \\ \hline
	\end{tabular}
\end{table}%
\linespread{1.25}%

Figures \ref{fig:dgp_1_1}--\ref{fig:dgp_4_2} report the classification
results based on $L_{\tau }^{\prime }=D_{\tau }^{-1/2}AD_{\tau }^{-1/2}$ and 
$L_{\tau }=D_{\tau }^{-1/2}A_{\tau }D_{\tau }^{-1/2}$ for DGPs 1--2 and DGPs
3--4, respectively. As in the paper, the left column uses the CCP criterion
and the right column uses the NMI criterion to evaluate the classification
performance. The $x$-axis marks the $\tau $ values, i.e., $[10^{-4},(\tau
_{\max })^{0},(\tau _{\max })^{1/18},\ldots ,(\tau _{\max })^{18/18}],$
where $\tau _{\max }$ is the expected average degree. There are two curves
in each subplot. As marked in the legend and explained in the paper, they
represent classification results by using different classification methods.
In each subplot, the green dashed line is the pseudo $\tau $ value as
defined in \cite{JY16}. We summarize some findings from Figures \ref%
{fig:dgp_1_1}--\ref{fig:dgp_4_2}. First, the spectral classification results
first improve and then deteriorate as $\tau $ increases. Second, as Figures %
\ref{fig:dgp_1_1} and \ref{fig:dgp_2_1} suggest, the spectral clustering
based on $L_{\tau }^{\prime }=D_{\tau }^{-1/2}AD_{\tau }^{-1/2}$ with $\tau =%
\bar{d}$ or $\tau ^{\mathrm{JY}}$ is slightly worse than the UPL method.
Third, as Figures \ref{fig:dgp_3_2} and \ref{fig:dgp_4_2} suggest, the
method of \cite{JY16} tends to select too large a regularization parameter,
but still yields classification results that are much better than those of
CPL.

\begin{figure}[]
	\centering
	\includegraphics[scale = 0.5]{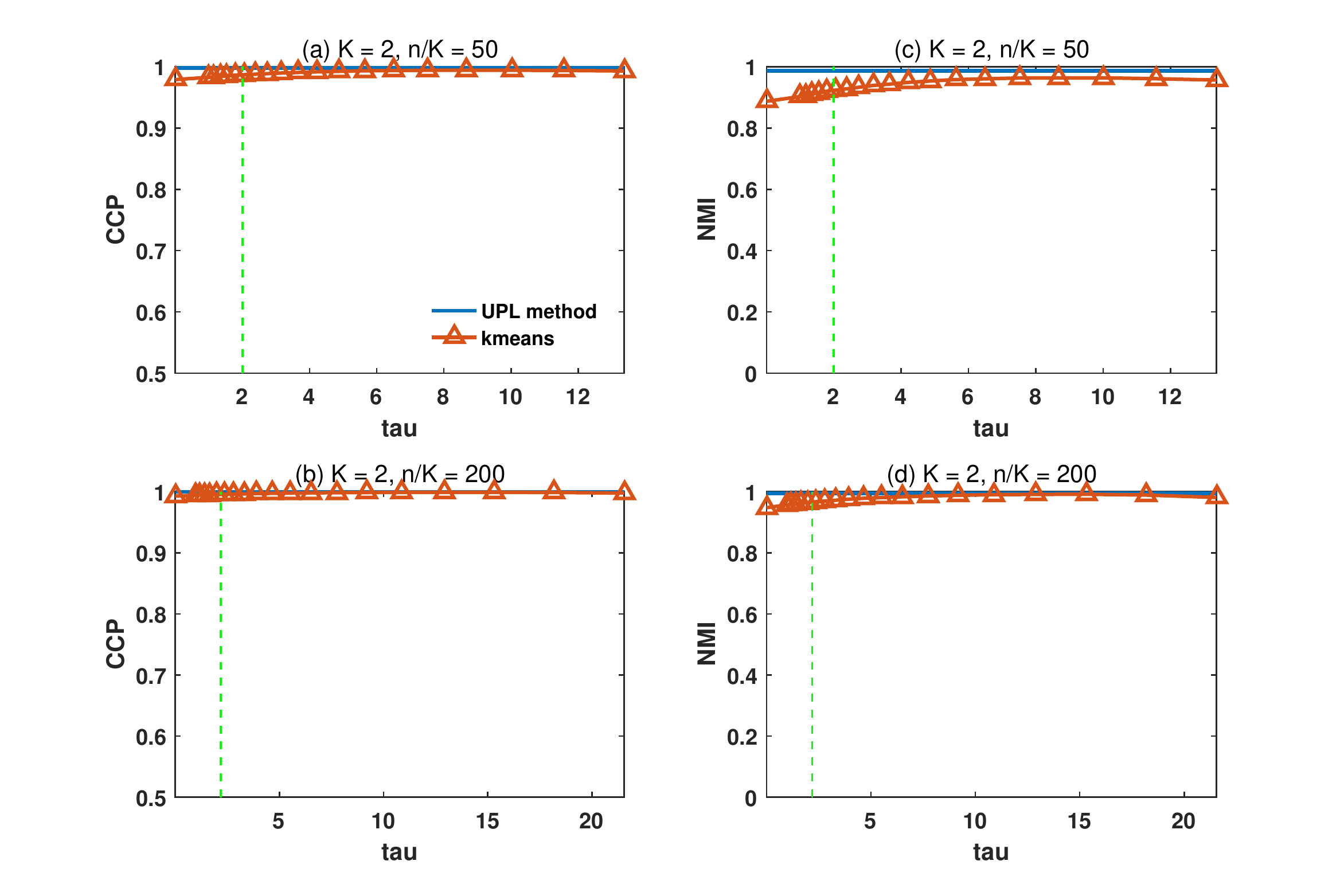}
	\caption{Classification results for CPL and K-means for DGP 1 ($K=2$) based
		on $L_{\protect\tau }^{\prime }=D_{ \protect\tau }^{-1/2}AD_{\protect\tau %
		}^{-1/2}$. The $x$-axis marks the $\protect\tau $ values and the $y$-axis is
		either CCP (left column) or NMI (right column). The green dashed vertical
		line in each subplot indicated the estimated $\protect\tau ^{\mathrm{JY}}$
		value by using the method of \protect\cite{JY16}. The first and second rows
		correspond to $n/K=50$ and 200, respectively.}
	\label{fig:dgp_1_1}
\end{figure}


\begin{figure}[]
	\centering
	\includegraphics[scale = 0.5]{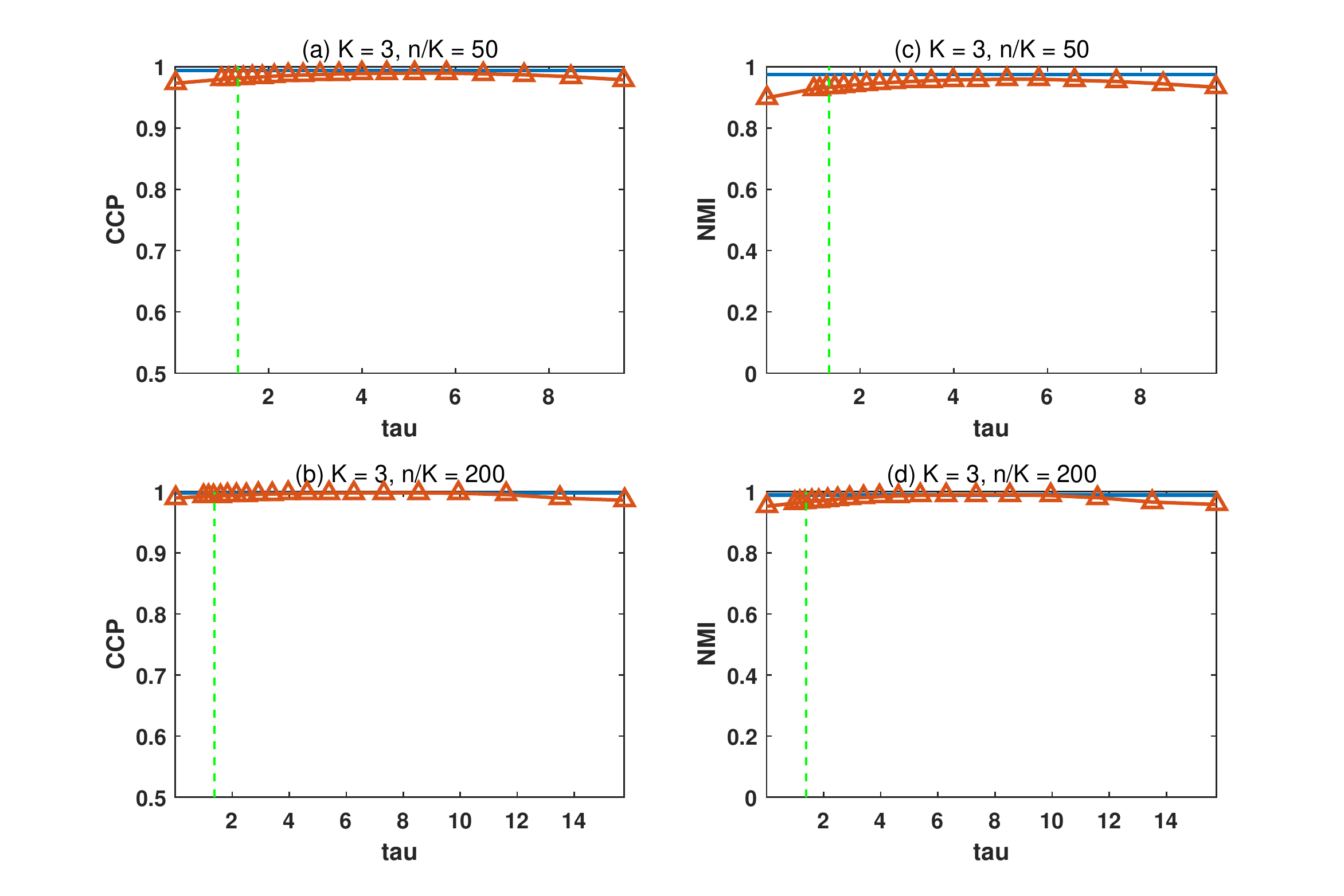}
	\caption{Classification results for DGP 2 ($K=3$) based on $L_{\protect\tau %
		}^{\prime }=D_{\protect\tau }^{-1/2}AD_{ \protect\tau }^{-1/2}$. (See Figure 
		\protect\ref{fig:dgp_1_1} for explanations.)}
	\label{fig:dgp_2_1}
\end{figure}


\begin{figure}[]
	\centering
	\includegraphics[scale = 0.5]{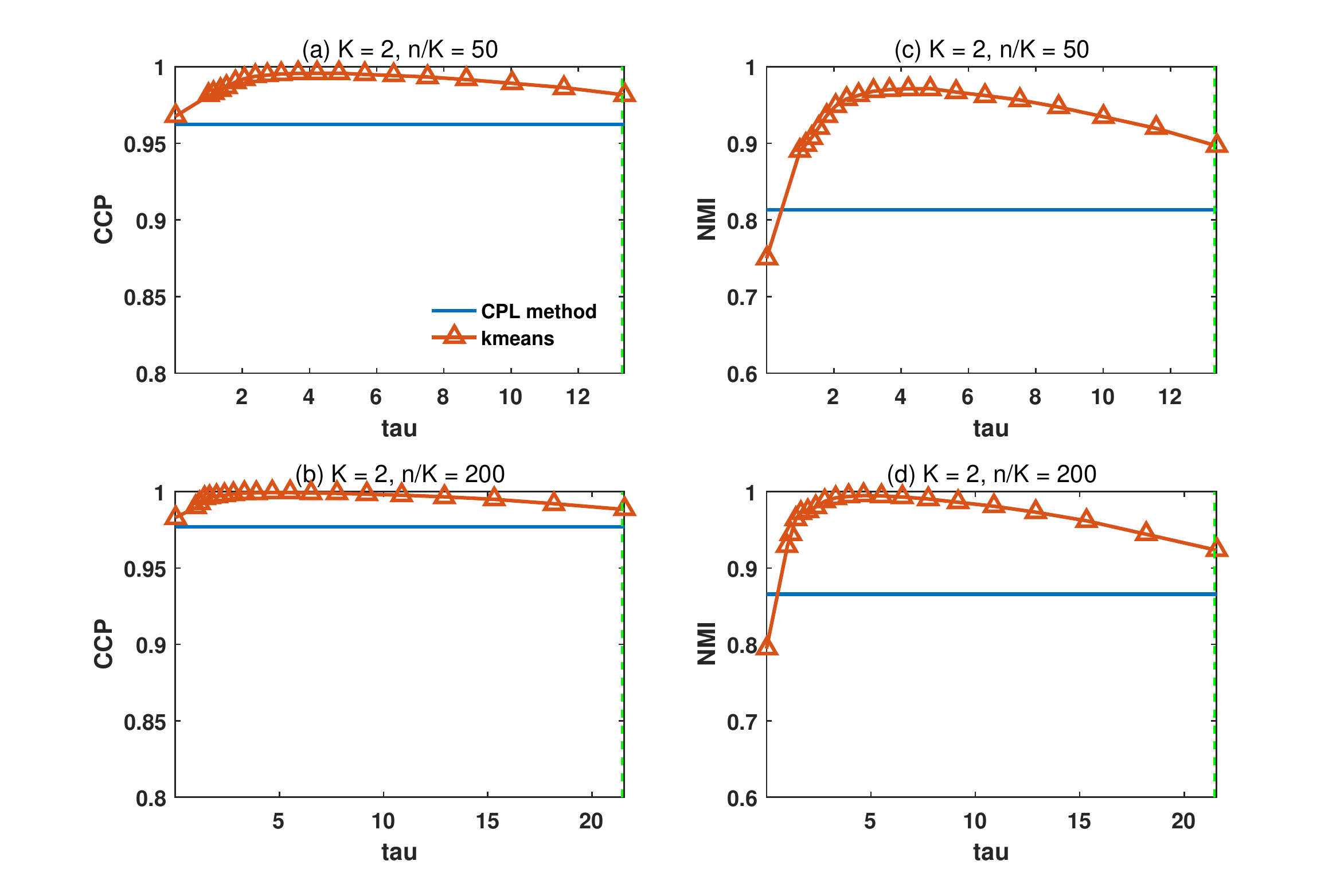}
	\caption{Classification results for DGP 3 ($K=2$, degree-corrected) based on 
		$L_{\protect\tau }=D_{\protect\tau }^{-1/2}A_{\protect\tau }D_{\protect\tau %
		}^{-1/2}$. (See Figure \protect\ref{fig:dgp_1_1} for explanations.)}
	\label{fig:dgp_3_2}
\end{figure}


\begin{figure}[]
	\centering
	\includegraphics[scale = 0.5]{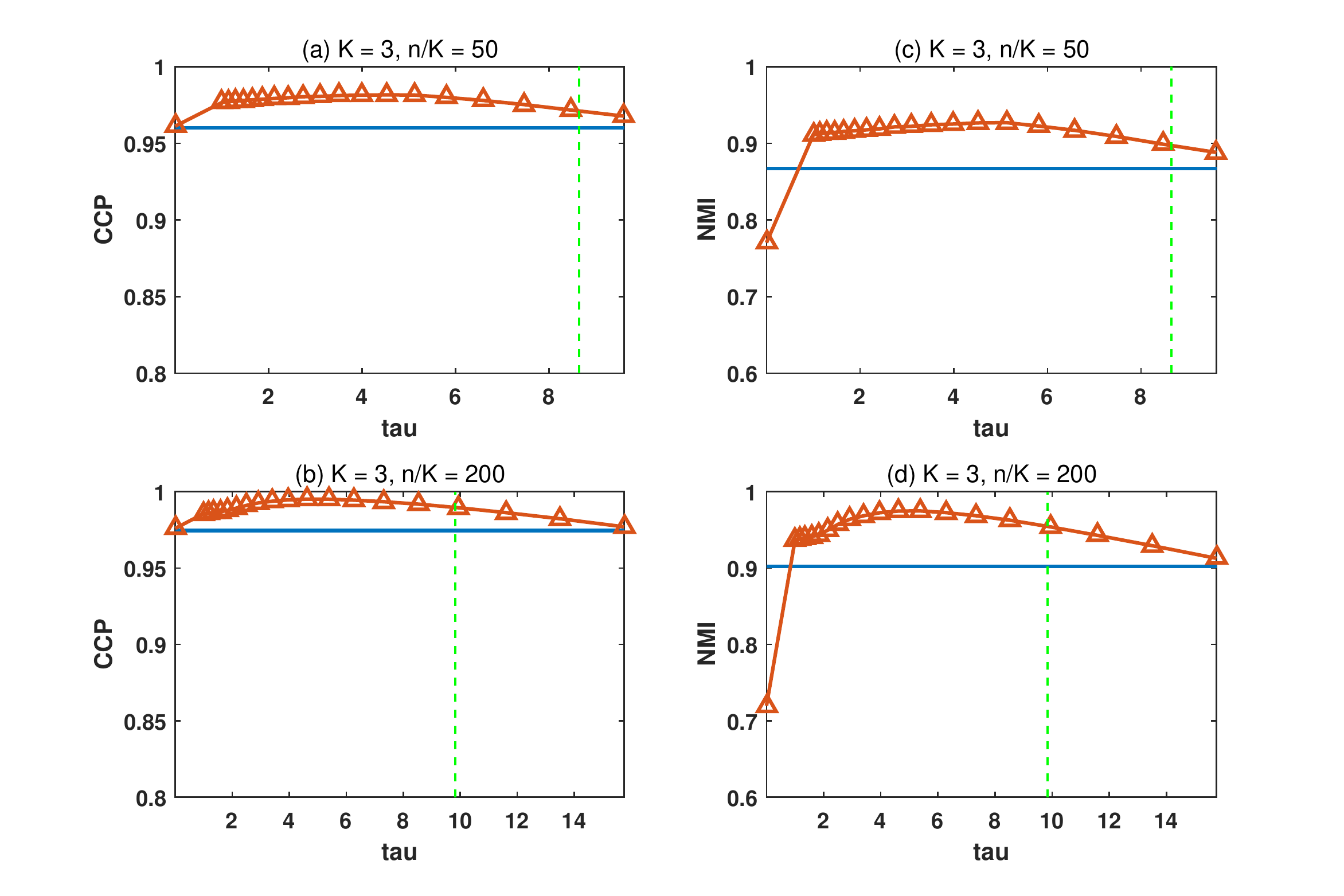}
	\caption{Classification results for DGP 4 ($K=3$, degree-corrected) based on 
		$L_{\protect\tau }=D_{\protect\tau }^{-1/2}A_{\protect\tau }D_{\protect\tau %
		}^{-1/2}$. (See Figure \protect\ref{fig:dgp_1_1} for explanations.)}
	\label{fig:dgp_4_2}
\end{figure}

\bibliographystyle{elsarticle-harv}
\bibliography{CD-1}

\end{document}